\theoremstyle{definition} 
\newtheorem{thm}{Theorem}[section]
\newtheorem{prop}[thm]{Proposition}
\newtheorem{lemma}[thm]{Lemma}
\newtheorem{cor}[thm]{Corollary}
\newtheorem{definition}[thm]{Definition}
\newtheorem{remark}[thm]{Remark}
\newtheorem{example}[thm]{Example}
\numberwithin{equation}{section}
\def\cH{\mathcal{H}}
\def\cK{\mathcal{K}}
\def\cB{\mathcal{B}}
\def\cA{\mathcal{A}}
\def\cE{\mathcal{E}}
\def\BH{\cB(\cH)}
\def\BK{\cB(\cK)}
\def\<{\langle}
\def\>{\rangle}
\def\bR{\mathbb{R}}
\def\eps{\varepsilon}
\def\bC{\mathbb{C}}
\def\BS{\mathrm{BS}}
\def\a{\mathbf{a}}
\def\b{\mathbf{b}}
\def\cM{\mathcal{M}}
\def\ffi{\varphi}
\def\bw{\mathbf{w}}
\def\bx{\mathbf{x}}
\def\by{\mathbf{y}}
\def\bbz{\mathbf{z}}
\def\bv{\mathbf{v}}
\def\bu{\mathbf{u}}
\def\HS{\mathrm{HS}}
\def\DH{\mathcal{D}(\cH)}
\def\BKM{\mathrm{BKM}}
\def\A{\mathcal{A}}
\def\B{\mathcal{B}}
\def\E{\mathcal{E}}
\def\F{\mathcal{F}}
\def\I{\mathcal{I}}
\def\K{\mathcal{K}}
\def\M{\mathcal{M}}
\def\P{\mathcal{P}}
\def\X{\mathcal{X}}
\def\Y{\mathcal{Y}}
\def\E{\mathcal{E}}
\def\F{\mathcal{F}}
\def\K{\mathcal{K}}
\def\M{\mathcal{M}}
\def\norm#1{\left\Vert #1 \right\Vert}
\def\vfi{\varphi}
\def\hil{{\mathcal H}}
\def\kil{{\mathcal K}}
\def\A{{\mathcal A}}
\def\I{\mathcal{I}}
\def\M{\mathcal{M}}
\def\P{\mathcal{P}}
\def\X{{\mathcal X}}
\def\Y{{\mathcal Y}}
\def\half{\frac{1}{2}}
\def\iff{\Longleftrightarrow}
\def\imp{\Longrightarrow}
\def\ep{\varepsilon}
\def\bN{\mathbb{N}}
\def\bC{\mathbb{C}}
\def\bR{\mathbb{R}}
\def\bM{\mathbb{M}}
\def\bz{\left(}
\def\jz{\right)}
\def\inv{^{-1}}
\def\map{\Phi}
\def\mapp{\Psi}
\def\what{\widehat}
\def\rho{\varrho}
\def\down{^{\downarrow}}
\def\sa{\mathrm{sa}}
\def\diag{\mathrm{diag}}
\def\ol{\overline}
\def\sa{\mathrm{sa}}
\def\egy{\mathbf{\mathrm{1}}}
\def\meas{\mathrm{meas}}
\def\pro{\mathrm{pr}}
\def\vN{\mathrm{vN}}
\def\ONB{\mathrm{ONB}}
\def\wtilde{\widetilde}
\def\crn{[\rho/\sigma]}
\def\povm{\mathrm{POVM}}
\newcommand{\ki}{\emph}
\newcommand{\s}{\mbox{ }}
\newcommand{\ds}{\mbox{ }\mbox{ }}
\newcommand{\inner}[2]{\left\langle #1 , #2\right\rangle}
\newcommand{\abs}[1]{\left| #1 \right|}
\newcommand{\diad}[2]{|#1\rangle\langle #2|}
\newcommand{\pr}[1]{\diad{#1}{#1}}
\newcommand{\fix}[1]{\F_{#1}}
\newcommand{\maxdiv}[1]{\widehat S_{#1}}
\newcommand{\per}[1]{P_{#1}}
\newcommand{\conn}[1]{\tau_{#1}}
\DeclareMathOperator{\id}{id}
\DeclareMathOperator{\supp}{supp}
\DeclareMathOperator{\ran}{ran}
\DeclareMathOperator{\Tr}{Tr}
\DeclareMathOperator{\spec}{spec}
\DeclareMathOperator{\Sp}{spec}
\renewcommand\theenumi{(\roman{enumi})}
\renewcommand{\thefootnote}{\alph{footnote}}
\begin{document}
\allowdisplaybreaks

\centerline{\LARGE Different quantum $f$-divergences}
\medskip

\centerline{\LARGE  and the reversibility of quantum operations}
\bigskip
\bigskip
\centerline{\Large
Fumio Hiai$^{1,}$\footnote{{\it E-mail address:} hiai.fumio@gmail.com} and
Mil\'an Mosonyi$^{2,}$\footnote{{\it E-mail address:} milan.mosonyi@gmail.com}}

\medskip
\begin{center}
$^1$\,Tohoku University (Emeritus), \\
Hakusan 3-8-16-303, Abiko 270-1154, Japan
\end{center}

%

\begin{center}
$^2$\,Mathematical Institute, Budapest University of Technology and Economics,\\
Egry J.~u.~1, 1111 Budapest, Hungary
\end{center}

\medskip
\begin{abstract}
The concept of classical $f$-divergences gives a unified framework to construct and study measures of 
dissimilarity of probability distributions; special cases include the relative entropy and the R\'enyi divergences. Various quantum versions of this concept, and more narrowly, the concept of R\'enyi divergences, have been introduced in the literature with applications in quantum information theory; most notably Petz' quasi-entropies (standard $f$-divergences), Matsumoto's maximal $f$-divergences, measured $f$-divergences, 
and sandwiched and $\alpha$-$z$-R\'enyi divergences. 

In this paper we give a systematic overview of the various concepts of quantum $f$-divergences, 
with a main focus on their monotonicity under quantum operations, and the implications of the preservation 
of a quantum $f$-divergence by a quantum operation. In particular, we compare the standard 
and the maximal 
$f$-divergences regarding their ability to detect the reversibility of quantum operations. We also show 
that these two quantum $f$-divergences are strictly different for non-commuting operators unless $f$ is a polynomial, and obtain some analogous partial results for the relation between the measured and the standard $f$-divergences.

We also study the monotonicity of the $\alpha$-$z$-R\'enyi divergences under the special class 
of bistochastic maps that leave one of the arguments of the R\'enyi divergence invariant, and determine domains of the parameters $\alpha,z$ where monotonicity holds, and where the preservation of the 
$\alpha$-$z$-R\'enyi divergence implies the reversibility of the quantum operation.

\bigskip\noindent
{\it Keywords and phrases:}
Quantum $f$-divergences, sandwiched R\'enyi divergences, $\alpha$-$z$-R\'enyi divergences, maximal $f$-divergences, measured $f$-divergences,
monotonicity inequality, reversibility of quantum operations.

\bigskip\noindent
{\it Mathematics Subject Classification 2010:} 81P45, 81P16, 94A17
\end{abstract}

\newpage
{\baselineskip=14pt
\tableofcontents
}

\section{Introduction}
\renewcommand{\thefootnote}{\arabic{footnote}}
\setcounter{footnote}{0}

Quantum divergences give measures of dissimilarity of quantum states (or, more generally, positive semidefinite operators on a Hilbert space). 
While from a purely mathematical point of view, any norm on the space of operators would do this job, 
for information theoretic applications it is often more beneficial to consider other types of divergences, that are more naturally linked
to the given problems. Undisputably the most important such divergence is \ki{Umegaki's
relative entropy} \cite{Umegaki}, defined for two positive operators $\rho,\sigma$ as\footnote{In the Introduction we assume all positive operators to be invertible for simplicity; the precise definitions for not necessarily invertible positive semidefinite operators will be given later in the paper.}
\begin{align}\label{Umegaki intro}
S(\rho\|\sigma):=\Tr \rho(\log \rho-\log \sigma).
\end{align}
The operational significance of this quantity was established in \cite{HP,ON}, as an optimal error exponent in the hypothesis testing problem of Stein's lemma. 
Moreover, the relative entropy serves as a parent quantity to many other measures of information and correlation, like
the von Neumann entropy, the conditional entropy and the coherent information, the mutual information, the Holevo capacity, and more, each of which quantifies an optimal achievable rate in a 
certain quantum information theoretic problem; see, e.g., \cite{Wilde_book}.

The relative entropy and its derived quantities mentioned above appear in the so-called first order versions of coding theorems, typically as the optimal exponent of some operational quantity 
(e.g., the coding rate or the compression rate) under the assumption that a certain error probability vanishes in the asymptotic treatment of the problem. 
In a more detailed analysis of these problems, one can 
try to give a quantitative description of the interplay between the relevant error probability and the operational quantity of interest (e.g., the coding rate) 
by fixing the asymptotic rate of one and optimzing the rate of the other. As it turns out, in every case when such a quantification has been found, it is given in terms of
two different families of divergences:
the \ki{(conventional) R\'enyi divergences}
\begin{align}\label{conventional Renyi intro}
D_\alpha(\rho\|\sigma):={1\over\alpha-1}\log\frac{\Tr \rho^\alpha \sigma^{1-\alpha}}{\Tr\rho},
\end{align}
or the recently discovered \ki{sandwiched R\'enyi divergences} \cite{Renyi_new,WWY13}
\begin{equation}\label{F-1.1}
D_\alpha^*(\rho\|\sigma):={1\over\alpha-1}\log
{\Tr(\sigma^{1-\alpha\over2\alpha}\rho \sigma^{1-\alpha\over2\alpha})^\alpha\over\Tr \rho};
\end{equation}
see, e.g., \cite{ANSzV,CMW,Hayashibook,Hayashi,HT14,MO13,MO_cq,Nagaoka_Hoeffding}.
Both families are defined for any $\alpha>0,\,\alpha\ne 1$, and 
the values for $\alpha\in\{0,1,+\infty\}$ can be obtained by taking the respective limit in $\alpha$.
In particular, the limit for $\alpha\to 1$ gives $\frac{1}{\Tr\rho}S(\rho\|\sigma)$.
It is important to note that these two families coincide for commuting $\rho$ and $\sigma$.
A two-parameter unification of these two families is given by the so-called 
$\alpha$-$z$-R\'enyi divergences, introduced in 
\cite{AD,JOPP} as
\begin{equation}\label{F-1.2}
D_{\alpha,z}(\rho\|\sigma):={1\over\alpha-1}\log
{\Tr(\sigma^{1-\alpha\over2z}\rho^{\alpha\over z}\sigma^{1-\alpha\over2z})^z\over\Tr \rho},
\qquad\alpha,z>0,\ \alpha\ne1.
\end{equation}
The previous two families are embedded as $D_{\alpha,1}=D_{\alpha}$
and $D_{\alpha,\alpha}=D_\alpha^*$ for every $\alpha$.

In the classical case, both the relative entropy and the R\'enyi divergences can be expressed as 
$f$-divergences, introduced by Csisz\'ar \cite{Csiszar_fdiv} and Ali and Silvey \cite{AS} for two probability distributions $p,q$ on a finite set $\X$ and a convex function $f:\,(0,+\infty)\to\bR$ as 
\begin{align}\label{classical f-div intro}
S_f(p\|q):=\sum_{x\in\X}q(x)f\bz\frac{p(x)}{q(x)}\jz.
\end{align}
The relative entropy corresponds to $f(t):=\eta(t):=t\log t$, while the R\'enyi divergences can be expressed as $D_{\alpha}(p\|q)=\frac{1}{\alpha-1}\log S_{f_{\alpha}}(p\|q)$, $f_{\alpha}(t):=\mathrm{sign}(\alpha-1)t^{\alpha}$. Moreover, various other divergences for probability distributions can be cast in this form; among others, the variational distance and the $\chi^2$-divergence.
An advantage of this general formulation is that important properties of the various divergences, like joint convexity and monotonicity under stochastic maps, can be derived from \eqref{classical f-div intro} and the convexity of $f$, thus providing a unified framework to study the different divergences.

Motivated by the success of the classical $f$-divergences, various quantum generalizations of the concept have been put forward in the literature. The closest in properties to the classical version are probably 
the \ki{standard $f$-divergences}, that are a special case of Petz' quasi-entropies \cite{P85,P86} (see also \cite{HMPB}), 
and are defined as
\begin{align}\label{standard fdiv intro}
S_f(\rho\|\sigma):=\Tr \sigma^{1/2}f(L_\rho R_{\sigma^{-1}})(\sigma^{1/2}),
\end{align}
where $L_\rho$ and $R_{\sigma^{-1}}$ are the left and the right multiplication operators by 
$\rho$ and $\sigma\inv$, respectively.
The choices $f=\eta$ and $f=f_{\alpha}$ give rise to the Umegaki relative entropy \eqref{Umegaki intro}
and the conventional R\'enyi divergences \eqref{conventional Renyi intro}, just as in the classical case. 
An alternative version, that coincides with the above for commuting $\rho$ and $\sigma$, has been 
introduced by Petz and Ruskai in \cite{PR} as 
\begin{align*}
\maxdiv{f}(\rho\|\sigma):=\Tr \sigma f(\sigma^{-1/2}\rho \sigma^{-1/2}).
\end{align*}
It has been shown recently by Matsumoto \cite{Ma} that this notion of quantum $f$-divergence is maximal among the monotone quantum $f$-divergences, and, moreover, it can be expressed 
in the form of a natural optimization of the $f$-divergences of classical distribution functions that can be mapped into the given quantum operators (see Section \ref{sec:fdiv intro} for details). Hence, following Matsumoto's terminology, we will refer to them as \ki{maximal $f$-divergences}.

The relative entropy and the standard and the sandwiched R\'enyi divergences take strictly positive values on pairs of unequal quantum states, 
supporting their interpretation as measures of distinguishability; for the standard $f$-divergences the same holds for every strictly convex $f$ with 
the normalization $f(1)=0$ \cite[Proposition A.4]{HMPB}. For any measure $D$ of distinguishability of states, it is natural to assume that 
stochastic operations do not increase the distinguishability, i.e., the monotonicity inequality 
\begin{align}\label{mon ineq}
D(\map(\rho)\|\map(\sigma))\le D(\rho\|\sigma)
\end{align}
holds for any states (or, more generally, positive operators) $\rho,\sigma$, and quantum operation $\map$.
For physical applications, the latter is usually defined as a completely positive and trace-preserving (CPTP) map,
although from a purely mathematical point it is also interesting to study monotonicity under maps with weaker 
positivity properties \cite{HMPB,M-HR,P85,P86}. The monotonicity inequality is also called the data-processing inequality 
in information theory, and it is often considered as a primary requirement for a quantum quantity to be called a divergence.
It is well-known that the standard R\'enyi divergences satisfy monotonicity exactly when 
$\alpha\in[0,2]$ \cite{HMPB,LR,P86,TCR}, 
and the sandwiched R\'enyi divergences when $\alpha\in[1/2,+\infty]$ \cite{Beigi,CFL,FL13,Hi3,Renyi_new,WWY13};
this gives a further insight into why one needs two separate families of R\'enyi divergences in the quantum case.
Domains of the parameters $\alpha,z$ where the $\alpha$-$z$-R\'enyi divergences satisfy monotonicity have been determined in 
\cite{CFL,Hi3} (see also \cite[Theorem 1]{AD}), but a complete characterization of all $\alpha,z$ values for which monotonicity holds is still missing.

As with any inequality, it is natural to ask when the monotonicity inequality \eqref{mon ineq} holds as an equality, i.e., 
when does a quantum operation preserve the distinguishability of two states (as measured by a certain quantum divergence).
It is clear that this is the case for any monotone divergence whenever $\map$ is reversible on $\{\rho,\sigma\}$ in the sense that there exists
a quantum operation $\mapp$ such that $\mapp(\map(\rho))=\rho$ and $\mapp(\map(\sigma))=\sigma$. 
It is a highly non-trivial observation with far-reaching consequences that for a large class of divergences the converse is also true. 
This line of research was initiated by Petz \cite{Petz_sufficiency2,Petz_sufficiency}, who showed this converse for the relative entropy and the standard R\'enyi divergence with parameter $1/2$,
and determined a canonical reversion map. His results were later extended to standard R\'enyi divergences
with other parameter values \cite{JP,JP_survey}, and more general standard $f$-divergences in \cite{HMPB,Je}.
Various other, mainly algebraic, characterizations of the preservation of the relative entropy 
were given, e.g., in \cite{mon_revisited,Ruskai_sufficiency}.
In \cite{HJPW}, a structural characterization of the equality case of the strong subadditivity of entropy (a special case of the monotonicity of the relative entropy) was presented, 
which was used to give a constructive description of quantum Markov states. 
This was later extended in \cite{MP} to a structural characterization of triples $(\map,\rho,\sigma)$
such that $\map$ is reversible on $\{\rho,\sigma\}$.
Also, the equality case in the
joint convexity (another special instance of monotonicity) of various quasi-entropies was clarified in \cite{JR}.
The above characterizations are all related to quantum $f$-divergences of the form \eqref{standard fdiv intro},
in particular, mainly to the standard R\'enyi relative entropies \eqref{conventional Renyi intro}.
Very recently, an algebraic characterization of the preservation of the sandwiched R\'enyi divergences \eqref{F-1.1}
with parameter values $\alpha>1/2$
was given in \cite{LRD}, based on the variational formula of \cite{FL13}. 
Moreover, in \cite{Jencova_rev16} it was shown that the preservation of a sandwiched R\'enyi divergence
with $\alpha>1$ implies reversibility. This was based on the complex interpolation method in non-commutative $L_p$ spaces, following the approach of \cite{Beigi}.

In this paper we give a systematic overview of the various concepts of quantum $f$-divergences, 
with a main focus on their monotonicity under quantum operations, and the implications of the preservation 
of a quantum $f$-divergence by a quantum operation. 
After summarizing the necessary preliminaries in Section \ref{sec:preliminaries}, we give a detailed overview 
of the standard and the maximal $f$-divergences in Section \ref{sec:fdiv}. Unlike in previous works, we define these $f$-divergences
for operator convex functions on $(0,+\infty)$ that need not have a finite limit from the right at $0$, and establish the relevant continuity properties to make sense of the definition.
In the introduction of the maximal $f$-divergences in Section \ref{sec:maximal f-div}, we deviate from Matsumoto's treatment in that we take the notion of the operator perspective as our starting point.
To define the maximal 
$f$-divergences for not necessarily invertible operators, we establish the extension of the operator perspective for certain settings with non-invertible operators in Propositions \ref{prop:persp extension} and 
\ref{prop:persp extension2}, that seems to be new and probably interesting in itself. 
It is easy to see, as we show in Proposition \ref{P-3.8}, that even with this more general definition, the 
standard $f$-divergences are monotone under the same class of positive trace-preserving maps 
as considered before in \cite{HMPB}, while the maximal $f$-divergences are monotone under arbitrary positive maps, as follows from standard facts in matrix analysis.

We summarize the known characterizations for the preservation of the standard $f$-divergen\-ces by positive trace-preserving maps in Theorems \ref{T-3.12} and \ref{thm:decomposition}.
Theorem \ref{T-3.12} contains a slight extension as compared to previous results, as we show that ordinary positivity of the reversion map 
(as opposed to a stronger positivity criterion in \cite[Theorem 5.1]{HMPB}) is sufficient for 
the preservation of any $f$-divergence; this is possible due to the recent developments in this direction in 
\cite{Beigi,M-HR}.
In Theorem \ref{T-3.21}, we give a slight extension of Matsumoto's prior results on the characterization of the preservation of the maximal $f$-divergences by quantum operations. In particular, we remove a technical restriction on the function $f$ in \cite[Lemma 12]{Ma}, and show that the preservation of any maximal 
$f$-divergence with a non-linear operator convex function $f$ implies the preservation of any other maximal 
$f$-divergence. In particular, the choice $f_2(t)=t^2$ implies that the preservation of a 
maximal 
$f$-divergence with any non-linear operator convex function $f$ is equivalent to the preservation 
of the standard $f$-divergence $S_{f_2}$ (as $S_{f_2}=\what S_{f_2}$), which in turn is known not to imply 
reversibility, as was shown in \cite[Remark 5.4]{HMPB}. Hence, we conclude that the preservation of the 
maximal $f$-divergences has strictly weaker consequences than the preservation of the standard 
$f$-divergences. We discuss this difference in more detail in Section \ref{sec:reversibility comparison}. In particular, we give (in Example \ref{E-4.4}) a simple explicit construction  
for a channel $\map$ and two states $\rho,\sigma$ on $\bC^3$ such that $\map$ 
preserves all the maximal $f$-divergences of $\rho$ and $\sigma$, but does not preserve any of their standard $f$-divergences whenever $f$ satisfies some mild technical condition. 
On the other hand, we show in Proposition \ref{P-4.6} that for unital qubit channels, preservation of the maximal $f$-divergences is equivalent to the preservation of the standard $f$-divergences,
and we show in Proposition \ref{C-3.22} that the same holds whenever the outputs of the channel commute with each other.

Section \ref{sec:comparison} is devoted to the comparison of three different notions of quantum $f$-divergences: the standard $f$-divergence, the maximal $f$-divergence and the measured (minimal) $f$-divergence. 
In Section \ref{sec:maxdiv vs mod div} we use Matsumoto's reverse tests and the characterization of the preservation of standard $f$-divergences to show
that for non-commuting states, their maximal $f$-divergences are strictly larger than their standard $f$-divergences
for all operator convex functions with a large enough support of their representing measure in a canonical integral representation (given in \cite[Theorem 8.1]{HMPB}).
Moreover, for qubit operators this condition can be dropped, as we show in Proposition \ref{P-4.5}.
Section \ref{sec:reversibility comparison} is devoted to the comparison of the standard and the maximal 
$f$-divergences regarding their ability to detect the reversibility of quantum operations, as explained above.
Finally, in Section \ref{sec:meas fdiv}, we discuss the measured $f$-divergences, and show that for any pair of non-commuting operators, their measured $f$-divergence is strictly smaller than their standard $f$-divergence,
provided again some technical conditions on the size of the support of the representing measure of $f$ are satisfied.
We also review, and give a slight extension of recent results on the ordering of the standard, the sandwiched, the measured, and the regularized measured R\'enyi divergences,
in Proposition \ref{prop:Renyi relations}. We close this section by a Pinsker inequality on the projectively measured $f$-divergences, given in Proposition \ref{Pinsker}. 

In the last section, Section \ref{sec:Renyi rev}, we consider the behaviour of the $\alpha$-$z$-R\'enyi divergences under bistochastic maps that 
leave one of the arguments of the R\'enyi divergence invariant, and determine domains of $\alpha,z$ values where monotonicity holds, and where the preservation of the 
$\alpha$-$z$-R\'enyi divergence implies the reversibility of the quantum operation.
This setup contains dephasing maps, i.e., (block-)diagonalization of one operator in a basis in which the other operator is already (block-)diagonal, or, more generally, conditional expectations onto a subalgebra that contains one of the arguments of the R\'enyi divergence. A particular example is the pinching by the 
eigenprojectors of the second argument of the R\'enyi divergence; the behaviour of the sandwiched R\'enyi divergences ($z=\alpha$ case) under these maps played an important role in establishing their operational 
significance in quantum state discrimination \cite{MO13}.
The $\alpha,z$ values where we establish monotonicity contain domains where the monotonicity of the $\alpha$-$z$-R\'enyi divergences is either not known or does not hold for general maps. The analysis 
of the implications of the preservation of the $\alpha$-$z$-R\'enyi divergences is completely new, as this has only been carried out so far for the standard R\'enyi divergences \cite{HMPB,JP,JP_survey,Petz_sufficiency}, and, very recently, for the sandwiched R\'enyi divergences for a part of the parameter range where they are monotone \cite{Jencova_rev16}.

We give supplementary material and some longer proofs in Appendices \ref{sec:persp properties}--\ref{sec:extension proof}.

\section{Preliminaries}
\label{sec:preliminaries}

\subsection{Notations}

Throughout the paper, $\hil,\kil$ will denote finite-dimensional Hilbert spaces. For any 
finite-dimensional Hilbert space $\hil$, $\BH$ will denote the algebra of
linear operators on $\cH$, and $\B(\hil)_{\sa}$ the real subspace of self-adjoint operators in $\B(\hil)$.
The identity operator on $\cH$ is denoted by $I_\cH$ (or simply $I$). 
The spectrum of an operator $X\in\B(\hil)$ is denoted by $\Sp(X)$.

We write $\BH_+$
for the set of positive linear operators on $\cH$. We write $\rho>0$ when $\rho\in\BH_+$ is invertible,
and denote the set of invertible positive operators by $\B(\hil)_{++}$.
For $\rho\in\BH_+$ with spectral decomposition $\rho=\sum_{a\in\Sp(\rho)}aP_a$, we define its real powers by 
$\rho^t:=\sum_{a\in\Sp(\rho),\,a>0}a^tP_a$, $t\in\bR$. In particular,
$\rho^{-1}$ stands for the generalized inverse of $\rho$, and $\rho^0$ is the support
projection of $\rho$, i.e., the projection onto the support of $\rho$. 

The usual trace functional on
$\BH$ is denoted by $\Tr$. We always consider $\BH$ as the Hilbert space with the
{\it Hilbert-Schmidt inner product}
$$
\<X,Y\>_{\HS}:=\Tr X^*Y,\qquad X,Y\in\BH.
$$
For a linear operator $\rho\in\BH$, the {\it left multiplication} $L_\rho$ and the {\it right
multiplication} $R_\rho$ are the linear operators on $\BH$ defined by
$$
L_\rho X:=\rho X,\quad R_\rho X:=X\rho,\qquad X\in\BH.
$$
If $\rho,\sigma\in\BH_+$, then both $L_\rho$ and $R_\rho$ are positive operators on the Hilbert space $\BH$,
which are commuting, i.e., $L_\rho R_\sigma=R_\sigma L_\rho$.

\subsection{Operator convex and operator monotone functions}
\label{sec:op convex}

In the rest of the paper, unless otherwise stated, we always assume that $f:\,(0,+\infty)\to\bR$ is a continuous function such that the limits 
\begin{align*}
f(0^+):=\lim_{x\searrow 0}f(x)\ds\ds\text{and}\ds\ds
f'(+\infty):=\lim_{x\to+\infty}\frac{f(x)}{x}
\end{align*}
exist in $\bR\cup\{\pm\infty\}$, and they are not both infinity with opposite signs. These assumptions are obviously satisfied when $f$ is convex, in which case 
the limits exist in $(-\infty,+\infty]$, and if $f$ is a differentiable convex function then in fact $f'(+\infty)=\lim_{x\to+\infty}f'(x)$.

A function $f:(0,+\infty)\to\bR$ is called an {\it operator convex} function if the operator
inequality
$$
f(tA+(1-t)B)\le tf(A)+(1-t)f(B),\qquad0\le t\le1
$$
holds for every $A,B\in\BH_{++}$ of any (even infinite-dimensional) $\cH$, where $f(A)$ etc.\ are
defined via usual functional calculus. Also, a function $h:(0,+\infty)\to\bR$ is said to be
{\it operator monotone} if $A\le B$ implies $h(A)\le h(B)$ for every $A,B\in\BH_{++}$ of any
$\cH$. For the general theory of operator monotone and operator convex functions, see, e.g.,
\cite{Bhatia,Hiai_book}. For the rest of the paper, we will mainly follow the convention that
$h$ denotes an operator monotone function, and $f$ an operator convex, or at least convex,
function.


Operator monotone and operator convex functions can be decomposed to simpler functions via
integral representations, a few of which  we recall here for later use.
Every non-negative operator monotone function  $h$ on $(0,\infty)$ can
be uniquely written as
\begin{align}\label{F-2.2}
h(x)=a+bx+\int_{(0,+\infty)}{x(1+s)\over x+s}\,d\nu_h(s),\qquad x\in(0,+\infty),
\end{align}
with $a=h(0^+)$, $b=h'(+\infty)=\lim_{x\to+\infty}h(x)/x$, and a finite positive measure
$\nu_h$ on $(0,+\infty)$ (see \cite[Theorem 2.7.11]{Hiai_book}). 

When $f:\,(0,+\infty)\to\bR$ is operator convex,
it can be written \cite{LR} (see also \cite[(5.2)]{FHR} for a more general form) as
\begin{equation}\label{F-2.3}
f(x)=f(1)+f'(1)(x-1)+c(x-1)^2
+\int_{[0,+\infty)}{(x-1)^2\over x+s}\,d\lambda(s),\quad x\in(0,+\infty),
\end{equation}
with $c\ge0$ and a positive measure $\lambda$ on $[0,+\infty)$ satisfying
$\int_{[0,+\infty)}(1+s)^{-1}\,d\lambda(s)<+\infty$. When $f(0^+)<+\infty$, and hence $f$ extends
by continuity to an operator convex function on $[0,+\infty)$, an alternative integral
representation can be obtained \cite[Theorem 8.1]{HMPB} as
\begin{equation}\label{F-2.4}
f(x)=f(0^+)+ax+bx^2+\int_{(0,+\infty)}\biggl({x\over1+s}-{x\over x+s}\biggr)\,d\mu_f(s),
\qquad x\in(0,+\infty),
\end{equation}
with $a\in\bR$, $b\ge0$ and a positive measure $\mu_f$ on $(0,+\infty)$ satisfying
$\int_{(0,+\infty)}(1+s)^{-2}\,d\mu_f(s)<+\infty$. In the more restrictive case when
$f(0^+)<+\infty$ and $f'(+\infty)<+\infty$, yet another integral representation was given in
\cite[Theorem 8.4]{HMPB} as
\begin{equation}\label{F-2.5}
f(x)=f(0^+)+f'(+\infty)x-\int_{(0,+\infty)}{x(1+s)\over x+s}\,d\nu(s)
\end{equation}
with a finite positive measure $\nu$ on $(0,+\infty)$. Note that the coefficients $c,a,b$
and the representing measures $\lambda,\mu_f,\nu$ are uniquely determined by $f$ in each of the
above integral representations. We make the dependence of $\mu$ on $f$ explicit in \eqref{F-2.4}
for the convenience of later references. Moreover, the representing measures in the above are
explicitly related to each other. Indeed, for $f$ with expression \eqref{F-2.3}, $f(0^+)<+\infty$
if and only if $\int_{[0,+\infty)}s^{-1}\,d\lambda(s)<+\infty$ (in particular,
$\lambda(\{0\})=0$), and in this case, the relation $(1+s)^{-2}\,d\mu_f(s)=s^{-1}\,d\lambda(s)$
holds (the proof of this is left to the reader). Also, for $f$ with expression \eqref{F-2.4}
(hence $f(0^+)<+\infty)$, $f'(+\infty)<+\infty$ if and only if $b=0$ and
$\int_{(0,+\infty)}(1+s)^{-1}\,d\mu_f<+\infty$, and in this case, $d\nu(s)=(1+s)^{-1}\,d\mu_f(s)$
(see the proof of \cite[Theorem 8.4]{HMPB}). Thus, the support of the representing measure for
$f$ is independent of the possible choice of the above integral expressions.

\subsection{Non-commutative perspectives and operator connections}

For any function $\vfi:(0,+\infty)\to\bR$, 
its \ki{perspective} $P_{\vfi}:\,(0,+\infty)\times (0,+\infty)\to\bR$ is defined by
\begin{align*}
P_{\vfi}(x,y):=y\vfi\bz\frac{x}{y}\jz,\ds\ds\ds x,y\in (0,+\infty).
\end{align*}
By definition, $\vfi(x)=P_{\vfi}(x,1)$ for all $x\in(0,+\infty)$, and
the \ki{transpose} $\widetilde\vfi$ of $\vfi$ is defined as
\begin{align*}
\widetilde\vfi(y):=P_{\vfi}(1,y)=y\vfi\bz\frac{1}{y}\jz,\ds\ds\ds y\in (0,+\infty).
\end{align*}
Thus, $\vfi$ and $\widetilde\vfi$ can be considered as marginals of the two-variable function $P_f$.

When $f$ is as at the beginning of the previous section, we can extend $P_f$ to $[0,+\infty)\times[0,+\infty)$ by
\begin{align}\label{persp ext}
P_f(x,y):=\lim_{\eps\searrow0}(y+\eps)f\bz\frac{x+\eps}{y+\eps}\jz
=\begin{cases}
yf(xy^{-1}), & \text{if $x,y>0$}, \\
yf(0^+), & \text{if $x=0$}, \\
xf'(+\infty), & \text{if $y=0$},
\end{cases}
\end{align}
with the convention $0\cdot\infty:=0$.
It is straightforward to see that
\begin{equation}\label{F-2.1}
\widetilde f(0^+)=f'(+\infty),\qquad\widetilde f'(+\infty)=f(0^+).
\end{equation}

It is well-known that the transpose $\widetilde h$ of a non-negative operator monotone function $h$ on
$(0,+\infty)$ is operator monotone again. Similarly, the transpose $\widetilde f$ of an operator
convex function $f$ on $(0,+\infty)$ is operator convex again. For these assertions, see
Propositions \ref{P-A.1} and \ref{P-A.2} of Appendix \ref{sec:persp properties}.

For a function $\ffi$ on $(0,+\infty)$, its
\ki{non-commutative} (or \ki{operator}) \ki{perspective} $\per{\vfi}$ is defined as the two-variable operator function
\begin{equation}\label{F-2.6}
\per{\vfi}:\,(A,B)\in\BH_{++}\times\BH_{++}\longmapsto B^{1/2}\ffi(B^{-1/2}AB^{-1/2})B^{1/2}
\end{equation}
for every finite-dimensional Hilbert space $\hil$. The following simple observation will
be useful:

\begin{lemma}\label{lemma:transpose perspective}
Let $\vfi:\,(0,+\infty)\to\bR$ be any function and $\widetilde \vfi$ be the transpose of $\vfi$.
For every $A,B\in\B(\hil)_{++}$, 
\begin{align*}
\per{\widetilde \vfi}(A,B)=\per{\vfi}(B,A).
\end{align*}
\end{lemma}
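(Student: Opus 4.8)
The plan is to unfold both sides of the claimed identity using the definition of the non-commutative perspective in \eqref{F-2.6} together with the definition of the transpose $\widetilde\vfi(y)=y\,\vfi(1/y)$, and to verify that the two resulting operator expressions agree. First I would write out the left-hand side directly from \eqref{F-2.6}:
\begin{align*}
\per{\widetilde\vfi}(A,B)=B^{1/2}\,\widetilde\vfi\bz B^{-1/2}AB^{-1/2}\jz\,B^{1/2}.
\end{align*}
The key step is then to apply the functional-calculus identity $\widetilde\vfi(X)=X^{1/2}\vfi(X^{-1})X^{1/2}$, valid for any invertible positive operator $X$, which follows at the level of scalars from $\widetilde\vfi(y)=y\,\vfi(1/y)$ by the spectral theorem. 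Applying this with $X=B^{-1/2}AB^{-1/2}$ converts the left-hand side into an expression involving $\vfi$ evaluated at $(B^{-1/2}AB^{-1/2})^{-1}=B^{1/2}A^{-1}B^{1/2}$.

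Next I would simplify the surrounding factors. Setting $X=B^{-1/2}AB^{-1/2}$, the left-hand side becomes $B^{1/2}X^{1/2}\vfi(X^{-1})X^{1/2}B^{1/2}$. Writing $Y:=X^{1/2}B^{1/2}$, this equals $Y^*\vfi(X^{-1})Y$, and I would compute $X^{-1}=B^{1/2}A^{-1}B^{1/2}$, which is similar to $A^{-1/2}B A^{-1/2}$ via the invertible operator $A^{-1/2}B^{1/2}$. The goal is to recognize the result as $A^{1/2}\vfi(A^{-1/2}BA^{-1/2})A^{1/2}$, which is precisely $\per{\vfi}(B,A)$ by \eqref{F-2.6}. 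The bookkeeping here is a routine manipulation of the conjugating factors, using that $\vfi(S^{-1}TS)=S^{-1}\vfi(T)S$ whenever $S$ is invertible and $T$ positive, so that the functional calculus intertwines correctly with the similarity.

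The main (and only mild) obstacle is keeping careful track of the non-commuting square-root factors $A^{1/2}$ and $B^{1/2}$ and ensuring the similarity transformations relating $B^{1/2}A^{-1}B^{1/2}$ and $A^{-1/2}BA^{-1/2}$ are applied consistently with the functional calculus; everything is well-defined because $A,B\in\B(\hil)_{++}$ are invertible, so all inverses, square roots, and the functional calculus of $\vfi$ make sense. An alternative, perhaps cleaner, route would be to verify the identity first for $\vfi$ a power function $\vfi(t)=t^k$ (where it reduces to an algebraic identity among the operators $A,B$ and their inverses) and then extend to general $\vfi$ by noting that both sides are determined by the values of $\vfi$ on the spectrum of $B^{-1/2}AB^{-1/2}$; however, the direct functional-calculus computation via the scalar relation $\widetilde\vfi(y)=y\,\vfi(1/y)$ is the most transparent and is what I would present.
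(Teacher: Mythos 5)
Your proof is correct and is essentially the paper's own argument: both expand $\per{\widetilde\vfi}(A,B)$ from the definition, implement the scalar relation $\widetilde\vfi(y)=y\vfi(1/y)$ through functional calculus, and then move the conjugating factors across $\vfi$ to arrive at $A^{1/2}\vfi(A^{-1/2}BA^{-1/2})A^{1/2}=\per{\vfi}(B,A)$, the only difference being that the paper packages the key commutation step as the intertwining $\vfi(XX^*)X=X\vfi(X^*X)$ with $X=B^{1/2}A^{-1/2}$, while you package it as the similarity rule $\vfi(S^{-1}TS)=S^{-1}\vfi(T)S$, which for invertible operators is the same fact since $XX^*=X(X^*X)X^{-1}$. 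One small caveat: your blanket statement of the similarity rule ``whenever $S$ is invertible and $T$ positive'' needs the additional hypothesis that $S^{-1}TS$ is itself positive (or at least normal), since otherwise $\vfi(S^{-1}TS)$ is not defined by functional calculus for a general function $\vfi$; this holds in your application because $S^{-1}TS=B^{1/2}A^{-1}B^{1/2}>0$, so no gap results.
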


\begin{proof}
By definition,
\begin{align*}
\per{\widetilde \vfi}(A,B)
&=B^{1/2}\widetilde \vfi(B^{-1/2}AB^{-1/2})B^{1/2}\\
&=B^{1/2}(B^{-1/2}AB^{-1/2})\vfi(B^{1/2}A\inv B^{1/2})B^{1/2}\\
&=AB^{-1/2}\vfi(XX^*)XA^{1/2}=AB^{-1/2}X\vfi(X^*X)A^{1/2}\\
&=A^{1/2}\vfi(A^{-1/2}BA^{-1/2})A^{1/2}=\per{\vfi}(B,A),
\end{align*}
where $X:=B^{1/2}A^{-1/2}$.
\end{proof}

The following are basic properties of operator perspectives. The proof of (1) is due to
\cite{ENG,Effros,EH}. We give a small extension of the next lemma in Appendix
\ref{sec:persp properties}.

\begin{lemma}\label{lemma:persp properties}
Let $\vfi:\,(0,+\infty)\to\bR$.
\begin{enumerate}
\item[(1)]\label{Effros} 
$\per{\vfi}$ is jointly operator convex 
on $\BH_{++}\times\BH_{++}$ for every finite-dimensional Hilbert space $\hil$ if and only if $\vfi$ is operator convex. 

\item[(2)]\label{KuboAndo}
$\per{\vfi}$ is monotone non-decreasing in both of its arguments 
on $\BH_{++}\times\BH_{++}$ for every finite-dimensional Hilbert space $\hil$ if and only $\vfi$ is a non-negative operator monotone function.
\end{enumerate}
\end{lemma}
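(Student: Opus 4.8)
The plan is to handle the two equivalences separately, disposing in each case of the elementary ``only if'' implication by restricting one argument to the identity, and reducing the substantive ``if'' implication to known building blocks via the integral representations of Section~\ref{sec:op convex}. Throughout I would exploit that for fixed $A,B$ the map $\vfi\mapsto\per{\vfi}(A,B)=B^{1/2}\vfi(B^{-1/2}AB^{-1/2})B^{1/2}$ is linear and respects integration against positive measures (by the spectral theorem, as $B^{-1/2}AB^{-1/2}$ has finite spectrum), so that decompositions of $\vfi$ transfer to $\per{\vfi}$.

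For (1), the ``only if'' direction is immediate: setting $B=I$ gives $\per{\vfi}(A,I)=\vfi(A)$, so if $\per{\vfi}$ is jointly operator convex then $A\mapsto\per{\vfi}(A,I)=\vfi(A)$ is operator convex. For the converse I would insert the representation \eqref{F-2.3}, reducing joint operator convexity of $\per{\vfi}$ to that of the perspectives of the building blocks $1$, $x-1$, $(x-1)^2$ and $(x-1)^2/(x+s)$. The first two give affine maps; a direct expansion shows $\per{(x-1)^2}(A,B)=AB^{-1}A-2A+B$, whose joint operator convexity follows from the standard fact that $(A,B)\mapsto AB^{-1}A$ is jointly operator convex (via the Schur complement characterization of positivity of $\bigl(\begin{smallmatrix}C&A\\A&B\end{smallmatrix}\bigr)$). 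The genuinely technical point is the joint convexity of the perspective of the kernel $(x-1)^2/(x+s)$, uniformly in $s$; this is precisely the content established in \cite{ENG,Effros,EH}, which I would invoke here, and which I expect to be the main obstacle of the whole lemma.

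For (2), the ``only if'' direction again starts from restrictions. Setting $B=I$ shows $\vfi$ is operator monotone, and setting $A=I$ gives $\per{\vfi}(I,B)=\widetilde\vfi(B)$ (equivalently, via Lemma~\ref{lemma:transpose perspective}), so $\widetilde\vfi$ is operator monotone as well. It remains to extract non-negativity, and this is the one step requiring a genuine idea rather than a restriction. Since $\vfi$ is operator monotone it is in particular non-decreasing, so $\vfi(x)\ge\vfi(0^+)=:a$ for all $x>0$. If $a<0$ (including $a=-\infty$), then $\widetilde\vfi(y)=y\vfi(1/y)\sim ay\to-\infty$ as $y\to+\infty$, contradicting that the finite function $\widetilde\vfi$ is non-decreasing; hence $a\ge0$ and $\vfi\ge0$.

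Finally, for the ``if'' direction of (2) I would use linearity of the perspective together with the representation \eqref{F-2.2} of a non-negative operator monotone function, reducing monotonicity in both arguments to the building blocks $a$, $bx$ and $x(1+s)/(x+s)$. The constant yields $\per{a}(A,B)=aB$ and the linear term yields $\per{bx}(A,B)=bA$, both evidently non-decreasing in both variables; the remaining kernel has perspective equal, up to the order of arguments, to the operator connection associated with the non-negative operator monotone function $x(1+s)/(x+s)$, which is jointly monotone by the Kubo--Ando theorem (alternatively one checks this directly, the kernel being a positive multiple of a weighted parallel sum). Summing and integrating monotone maps preserves monotonicity, giving the claim. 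Thus the only deep ingredient is the joint-convexity estimate for $(x-1)^2/(x+s)$ in part (1); the rest is bookkeeping through the integral representations and the transpose identity of Lemma~\ref{lemma:transpose perspective}.
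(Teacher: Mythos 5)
Your proposal is correct, but it is organized differently from the paper's treatment, and the comparison is instructive. For part (1), the paper does not prove the hard implication at all: it simply cites \cite{ENG,Effros,EH} for ``$\vfi$ operator convex $\Rightarrow$ $\per{\vfi}$ jointly operator convex,'' and the appendix (Proposition \ref{P-A.1}) only adds the trace-convexity variants and the transpose symmetry. Your decomposition via \eqref{F-2.3} nearly reproves that cited theorem, but then you invoke the same references for the kernel $g_s(x)=(x-1)^2/(x+s)$, which makes the decomposition redundant (if you may cite the theorem for $g_s$, you may cite it for $f$). In fact you are one observation away from a self-contained proof: the perspective of $g_s$ simplifies to
\begin{align*}
\per{g_s}(A,B)=B^{1/2}(X-I)(X+sI)^{-1}(X-I)B^{1/2}=(A-B)(A+sB)^{-1}(A-B),
\qquad X:=B^{-1/2}AB^{-1/2},
\end{align*}
i.e.\ it is $C\mapsto C D^{-1} C$ composed with the affine maps $C=A-B$, $D=A+sB$, so the same Schur-complement fact you used for $(x-1)^2$ handles every kernel, including $s=0$. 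For part (2), your ``only if'' argument for non-negativity (if $\vfi(0^+)<0$ then $\widetilde\vfi(y)=y\vfi(1/y)\to-\infty$, contradicting that $\widetilde\vfi$ is non-decreasing) is shorter and more elementary than the paper's route, which derives the same fact in Lemma \ref{L-A.3} from the integral representation of operator monotone functions. Conversely, for the ``if'' direction of (2) the paper's proof (Proposition \ref{P-A.2}) is slicker than yours: instead of decomposing $h$ via \eqref{F-2.2} and integrating parallel sums, it uses the known fact that $h\ge0$ operator monotone implies $\widetilde h$ operator monotone, and then the two-step chain
\begin{align*}
\per{h}(A_1,B_1)\le \per{h}(A_2,B_1)=\per{\widetilde h}(B_1,A_2)\le \per{\widetilde h}(B_2,A_2)=\per{h}(A_2,B_2),
\end{align*}
where the equalities are Lemma \ref{lemma:transpose perspective} and the inequalities are functional calculus monotonicity in one variable at a time --- no representation theory or Kubo--Ando machinery needed. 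Both routes are valid; yours trades the transpose trick for integral representations, the paper's trades integral representations for the transpose trick.
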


Assume that $h$ is a non-negative operator monotone function on $(0,+\infty)$, extended by
continuity to $[0,\infty)$.  Then $(A,B)\mapsto \per{h}(B,A)$ gives an \ki{operator connection},
that we denote by $\conn{h}$, i.e., $A\,\conn{h}\,B=\per{h}(B,A)$ (notice the reversed order of
$A$ and $B$). The general theory of operator connections was developed in an axiomatic way by
Kubo and Ando \cite{KA}. The operator connection $\tau_h$ is extended to pairs of not necessarily
invertible positive operators as
\begin{equation}\label{def-connect}
A\,\tau_h\,B:=\lim_{\eps\searrow0}(A+\eps I)\,\tau_h\,(B+\eps I),\ds\ds\ds A,B\in\B(\hil)_+,
\end{equation}
and it is called an \ki{operator mean} when $h$ further satisfies $h(1)=1$. A main result of
\cite{KA} says that the correspondence $h\leftrightarrow\tau_h$ is an order isomorphism between
the non-negative operator monotone functions and the operator connections. Although
$(A,B)\mapsto A\,\tau_h\,B$ is continuous for decreasing sequences in $\BH_{+}$, it is not
necessarily so for general sequences. Nevertheless, we have the following slightly more general convergence property (whenever $\cH$ is a finite-dimensional Hilbert space). This is
easily seen from the joint monotonicity and the definition \eqref{def-connect} of $\tau_h$.

\begin{lemma}\label{lemma:op mon extension}
Let $h:\,(0,+\infty)\to\bR$ be a non-negative operator monotone function. For any
$A,B\in\B(\hil)_{+}$, and any sequences $A_n,B_n\in\B(\hil)_+$ such that $A\le A_n\to A$ and
$B\le B_n\to B$, the sequence $A_n\,\tau_h\,B_n=\per{h}(B_n,A_n)$ converges to $A\,\tau_h\,B$.
\end{lemma}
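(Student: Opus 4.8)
The plan is to prove the convergence by a sandwich (squeeze) argument, using the joint monotonicity of $\tau_h$ together with the defining limit \eqref{def-connect}, and exploiting in an essential way the hypothesis that the approximating operators dominate their limits. First I would record that the joint monotonicity of $\per{h}$ (equivalently of $\tau_h$) on $\B(\hil)_{++}\times\B(\hil)_{++}$ from Lemma \ref{lemma:persp properties}(2) extends to all of $\B(\hil)_+\times\B(\hil)_+$: if $A\le A'$ and $B\le B'$ in $\B(\hil)_+$, then $A+\eps I\le A'+\eps I$ and $B+\eps I\le B'+\eps I$ are invertible, so $(A+\eps I)\,\tau_h\,(B+\eps I)\le(A'+\eps I)\,\tau_h\,(B'+\eps I)$, and letting $\eps\searrow0$ in \eqref{def-connect} gives $A\,\tau_h\,B\le A'\,\tau_h\,B'$.

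Next I would establish the two-sided bound. The domination hypothesis $A\le A_n$ and $B\le B_n$ yields, by the extended monotonicity, the lower bound $A\,\tau_h\,B\le A_n\,\tau_h\,B_n$ for every $n$. For the upper bound, fix $\eps>0$; since $A_n\to A$ and $B_n\to B$ in the (norm) topology of the finite-dimensional space $\B(\hil)$, we have $A_n\le A+\eps I$ and $B_n\le B+\eps I$ for all $n$ large enough, whence $A_n\,\tau_h\,B_n\le(A+\eps I)\,\tau_h\,(B+\eps I)$. Combining the two gives, for all sufficiently large $n$,
\[
0\le A_n\,\tau_h\,B_n-A\,\tau_h\,B\le(A+\eps I)\,\tau_h\,(B+\eps I)-A\,\tau_h\,B.
\]
Since the operator norm is monotone on positive operators and, by \eqref{def-connect}, the right-hand side tends to $0$ as $\eps\searrow0$, the squeeze yields $\norm{A_n\,\tau_h\,B_n-A\,\tau_h\,B}\to0$, i.e.\ the claimed convergence.

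The content of the statement — and where the real care is needed — lies entirely in the hypothesis $A\le A_n$, $B\le B_n$: it is exactly what furnishes the lower bound $A\,\tau_h\,B\le A_n\,\tau_h\,B_n$, and hence the positivity of the difference that makes the squeeze work. This is the point the text flags when it notes that $\tau_h$ is continuous for decreasing sequences but not for general ones; without domination the lower bound fails and the connection may be genuinely discontinuous. No further analytic input beyond the extended monotonicity, the norm-monotonicity of positive operators, and the monotone convergence built into \eqref{def-connect} is required, so I expect no serious obstacle once the bounds are set up correctly.
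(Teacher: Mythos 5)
Your proof is correct and is precisely the argument the paper has in mind: the text dispenses with a written proof, stating only that the lemma ``is easily seen from the joint monotonicity and the definition \eqref{def-connect} of $\tau_h$,'' and your squeeze between the lower bound $A\,\tau_h\,B\le A_n\,\tau_h\,B_n$ (from domination and extended monotonicity) and the upper bound $A_n\,\tau_h\,B_n\le(A+\eps I)\,\tau_h\,(B+\eps I)$ (from norm convergence) is exactly how those two ingredients combine. You also correctly identify the role of the hypothesis $A\le A_n$, $B\le B_n$, which is what the paper's preceding remark about failure of continuity for general sequences is pointing at.
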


When $h$ is a non-negative operator monotone function on $(0,+\infty)$, it admits a unique
integral representation, given in \eqref{F-2.2}, which in turn yields
\begin{align}\label{F-2.8}
A\,\tau_h\,B=aA+bB+\int_{(0,+\infty)}A\,\tau_{h_s}\,B\,d\nu_h(s),\qquad A,B\in\BH_+,
\end{align}
where $h_s(x):=x(1+s)/(x+s)$. In other notation, $A\,\tau_{h_s}\,B={1+s\over s}\{(sA):B\}$,
where $A:B$ is the parallel sum of $A,B\in\BH_+$ (see \cite{KA}).
We say that the operator connection $\conn{h}$ is non-linear if $h$ is non-linear (i.e., the
measure $\nu_h$ is non-zero). 

When $f$ is an operator convex function on $(0,+\infty)$, the extension of its perspective to 
$\B(\hil)_+\times\B(\hil)_+$ is a non-trivial problem, that we will discuss in detail in
Section \ref{sec:maximal f-div}.

\subsection{Monotone metrics}
Let $\DH$ denote the set of invertible density operators on $\cH$, which is a smooth
Riemannian manifold whose tangent space at any foot point is identified with
$$
\BH_{\sa}^0:=\{X\in\BH_{\sa}:\Tr X=0\}.
$$
Let $\kappa:(0,+\infty)\to(0,+\infty)$ be an operator monotone decreasing function such that
$x\kappa(x)=\kappa(x^{-1})$, $x>0$. Since $h(x):=\kappa(x^{-1})=x\kappa(x)$, $x>0$, is
operator monotone, the integral expression \eqref{F-2.2} of $h$ gives that of $\kappa$ as
\begin{equation}\label{F-2.10}
\kappa(x)={a\over x}+b+\int_{(0,+\infty)}{1+s\over x+s}\,d\nu_h(s)
=b+\int_{[0,+\infty)}{1+s\over x+s}\,\nu_\kappa(s),
\end{equation}
where $\nu_\kappa:=\nu_h+a\delta_0$. Associated with the function $\kappa$, a Riemannian
metric on $\DH$ is defined by
$$
\<X,\Omega_\sigma^\kappa(Y)\>_\HS,\qquad X,Y\in\BH_{\sa}^0,\ \sigma\in\DH,
$$
where
\begin{equation}\label{F-2.11}
\Omega_\sigma^\kappa:=R_{\sigma^{-1}}\kappa(L_\sigma R_{\sigma^{-1}}).
\end{equation}
This class of Riemannian metrics are called {\it monotone metrics} since the class was characterized by Petz \cite{Petz_monotone} with the monotonicity property
$$
\bigl\<\Phi(X),\Omega_{\Phi(\sigma)}^\kappa(\Phi(X))\bigr\>_\HS
\le\<X,\Omega_\sigma^\kappa(X)\>_\HS,\qquad X\in\BH_{\sa}^0,\ \sigma\in\DH,
$$
for every trace-preserving map $\Phi:\,\BH\to\BK$ such that $\Phi^*$ is a Schwarz contraction.
See also \cite{HR} for monotone Riemannian metrics. The description of
$\Omega_\sigma^\kappa$ in \eqref{F-2.11} is from \cite{HR}, that coincides with
$\bigl[f(L_\sigma R_\sigma^{-1})R_\sigma\bigr]^{-1}$ in Petz' representation in
\cite[Theorem 5]{Petz_monotone} for an operator monotone function $f(x)=1/\kappa(x)$, $x>0$,
and the condition $x\kappa(x)=\kappa(x^{-1})$, $x>0$, is equivalent to $f=\wtilde f$.

\subsection{Positive maps}

For a linear map $\map:\,\B(\hil)\to\B(\kil)$, where $\hil$ and $\kil$ are finite-dimensional
Hilbert spaces, the {\it adjoint} map $\map^*:\BK\to\BH$ is defined in terms of the
Hilbert-Schmidt inner products as
\begin{align*}
\<\Phi(X),Y\>_{\HS}=\<X,\Phi^*(Y)\>_{\HS},\qquad X\in\BH,\ Y\in\BK.
\end{align*}
The map $\Phi$ is said to be {\it positive} if $\Phi(A)\in\BK_+$ for all $A\in\BH_+$, and
\ki{$n$-positive}, for some $n\in\bN$, if 
$\id_n\otimes\map:\,\B(\bC^n)\otimes\B(\hil)\to\B(\bC^n)\otimes\B(\kil)$ is positive, where
$\id_n$ is the identity map on $\B(\bC^n)$. 
A map $\map$ is said to be \ki{completely positive} if it is $n$-positive for all $n\in\bN$. It
is easy to see that $\map$ is $n$-positive if and only if $\map^*$ is $n$-positive, and 
$\map$ is \ki{trace-preserving} (i.e., $\Tr\map(X)=\Tr X,\,X\in\B(\hil)$) if and only if $\map^*$
is \ki{unital} (i.e., $\map^*(I_{\kil})=I_{\hil}$).
A trace-preserving completely positive (CPTP) map is called a \ki{quantum channel} (or simply a channel).
We say that a positive map $\map$ is \ki{bistochastic} if it is both unital and trace-preserving.
The following is from \cite[Theorem 2.1]{Choi_Schwarz}:

\begin{lemma}\label{Choi inequality}
Let $\map:\,\B(\hil)\to\B(\kil)$ be a unital positive linear map, let $A\in\B(\hil)$ be
self-adjoint, and $f$ be an operator convex function defined on an interval containing
$\spec(A)$. Then 
\begin{align*}
f\bz\map(A)\jz\le\map\bz f(A)\jz.
\end{align*}
\end{lemma}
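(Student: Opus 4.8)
The plan is to reduce the assertion to a Jensen-type inequality for operator convex functions against a resolution of the identity, and then to the Hansen--Pedersen--Jensen inequality, whose proof is local to the relevant interval and therefore does not require $f$ to be operator convex on all of $(0,+\infty)$.

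First I would diagonalize. Since $A=A^*$ acts on a finite-dimensional space, write $A=\sum_i\lambda_i P_i$ with $\lambda_i\in\spec(A)\subseteq J$ and $\{P_i\}$ mutually orthogonal projections summing to $I$; then $f(A)=\sum_i f(\lambda_i)P_i$. Applying the positive unital map $\map$ and setting $Q_i:=\map(P_i)$, positivity gives $Q_i\ge0$ and unitality gives $\sum_i Q_i=\map(I)=I$, so $\{Q_i\}$ is a resolution of the identity. In this notation $\map(A)=\sum_i\lambda_i Q_i$ and $\map(f(A))=\sum_i f(\lambda_i)Q_i$, so the claim becomes
\begin{align*}
f\Bigl(\sum_i\lambda_i Q_i\Bigr)\le\sum_i f(\lambda_i)Q_i.
\end{align*}
Writing $C_i:=Q_i^{1/2}$, one has $\sum_i C_i^*C_i=I$ and, since the $\lambda_i$ are scalars, $\lambda_i Q_i=C_i^*(\lambda_i I)C_i$ and $f(\lambda_i)Q_i=C_i^*f(\lambda_i I)C_i$; thus the inequality is exactly an instance of the Hansen--Pedersen--Jensen inequality $f(\sum_i C_i^*A_iC_i)\le\sum_i C_i^*f(A_i)C_i$ with $A_i=\lambda_i I$ and $\sum_i C_i^*C_i=I$.

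Next I would establish this Jensen inequality. Stacking the $C_i$ into a single column $V:=(C_i)_i$ gives an isometry ($V^*V=\sum_i C_i^*C_i=I$) and reduces the statement to $f(V^*\wt A V)\le V^*f(\wt A)V$ for the block-diagonal self-adjoint operator $\wt A:=\bigoplus_i(\lambda_i I)$, whose spectrum lies in $J$. This is the Jensen operator inequality for isometries. Its proof dilates $V$ to a unitary on a larger space and averages: choosing a second unitary so that the off-diagonal corners cancel, one obtains $V^*\wt A V$ as a midpoint $\tfrac12(U_1^*\widehat A U_1+U_2^*\widehat A U_2)_{11}$ of two unitary conjugates of an extension $\widehat A$ with spectrum in $J$, and then unitary invariance of the functional calculus together with operator midpoint convexity $f(\tfrac12(X+Y))\le\tfrac12(f(X)+f(Y))$ yields the inequality. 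Throughout, since $\min\spec(A)\cdot I\le A\le\max\spec(A)\cdot I$, positivity and unitality of $\map$ give $\spec(\map(A))\subseteq[\min\spec(A),\max\spec(A)]\subseteq J$, so $f$ is only ever evaluated on operators with spectrum in $J$.

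I expect the genuine content to be concentrated in this last step, and in particular in the essential use of operator convexity rather than mere convexity: ordinary convexity of $f$ would only yield the corresponding trace inequality, and it is the midpoint operator convexity (equivalently, the full Hansen--Pedersen--Jensen inequality) that upgrades this to an operator inequality. A second subtlety worth flagging is that $f$ is assumed operator convex only on the interval $J$, so one cannot simply invoke the global integral representation \eqref{F-2.3} on $(0,+\infty)$; the dilation argument is the right tool precisely because it is local to $J$. An equally clean alternative route avoids the explicit dilation: restricting $\map$ to the commutative $C^*$-algebra $C^*(A,I)$ makes it completely positive (positive maps with commutative domain are automatically completely positive), so Stinespring's theorem gives $\map(X)=V^*\pi(X)V$ on this subalgebra with $\pi$ a unital $*$-homomorphism and $V$ an isometry; then $\map(f(A))=V^*f(\pi(A))V$ and the same isometry Jensen inequality finishes the proof.
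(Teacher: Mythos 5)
Your proof is correct, but there is no proof in the paper to compare it with: the paper states this lemma as a direct quotation of \cite[Theorem 2.1]{Choi_Schwarz} (the Davis--Choi--Jensen inequality) and gives no argument. What you have supplied is a complete, self-contained proof of the cited result, and it is essentially the standard Hansen--Pedersen argument. Your reduction is exact: diagonalizing $A=\sum_i\lambda_iP_i$ and setting $Q_i:=\map(P_i)$ turns the claim into $f\bz\sum_i\lambda_iQ_i\jz\le\sum_i f(\lambda_i)Q_i$ for a resolution of the identity with $\lambda_i\in\spec(A)$; stacking $C_i=Q_i^{1/2}$ into an isometry $V$ reduces this to the Jensen inequality $f(V^*\widetilde AV)\le V^*f(\widetilde A)V$ for $\widetilde A=\bigoplus_i\lambda_iI$; and the dilation/midpoint argument for the isometry case (extend $V$ to unitaries $U_\pm=[V,\pm W]$ so that the off-diagonal corners cancel in the average, then use $f(U^*XU)=U^*f(X)U$ together with midpoint operator convexity, and compress to the $(1,1)$ corner) is the standard proof and is sound, including in the local form you need when $f$ is operator convex only on an interval $J$. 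Your two side remarks are also on target: the containment $\spec(\map(A))\subseteq[\min\spec(A),\max\spec(A)]\subseteq J$ is exactly what makes $f(\map(A))$ well defined, and operator convexity (rather than mere convexity) is what upgrades the trace inequality to an operator inequality --- compare Lemma \ref{lemma:bistochastic strictly convex} in the paper, where ordinary convexity yields only a trace inequality, and only for bistochastic maps. The alternative route via Stinespring is equally valid (positive maps on the commutative algebra $C^*(A,I)$ are automatically completely positive), though note that it does not bypass the key step: it still terminates in the same isometry Jensen inequality, so the dilation argument is the irreducible core either way.
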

\medskip

The \ki{multiplicative domain} $\M_{\map}$ of a linear map $\map:\,\B(\hil)\to\B(\kil)$ is
defined as
\begin{align}\label{multdom}
\M_{\map}:=\left\{X\in\B(\hil):\,\map(XY)=\map(X)\map(Y),\s \map(YX)=\map(Y)\map(X),\,Y\in\B(\hil)\right\}.
\end{align}
Obviously, $\M_{\map}$ is an algebra, and if $\map$ is positive then it is also closed under the adjoint, and the restriction of $\map$ onto $\M_{\map}$ is a $^*$-homomorphism. In particular,
we have the following:
\begin{lemma}\label{lemma:mult domain fcalculus}
For any unital positive map $\map$ and any normal element $A$ in $\M_{\map}$, $\map(A)$ is also
normal, and for any function $\ffi$ on $\spec(A)\cup\spec(\map(A))$, we have
\begin{align*}
\ffi(\map(A))=\map(\ffi(A)).
\end{align*}
\end{lemma}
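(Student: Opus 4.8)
The plan is to exploit the fact, recorded just above the statement, that the restriction of $\map$ to its multiplicative domain $\M_\map$ is a unital $^*$-homomorphism, together with the fact that $\M_\map$ is closed under taking adjoints. First I would establish normality of $\map(A)$. Since $A\in\M_\map$ and $\M_\map$ is $^*$-closed, also $A^*\in\M_\map$, so $\map(A^*)=\map(A)^*$ and $\map$ is multiplicative on any product of $A$ and $A^*$. Using normality of $A$, i.e.\ $AA^*=A^*A$, I would then compute
\begin{align*}
\map(A)\map(A)^*=\map(A)\map(A^*)=\map(AA^*)=\map(A^*A)=\map(A^*)\map(A)=\map(A)^*\map(A),
\end{align*}
which shows that $\map(A)$ is normal, so that $\ffi(\map(A))$ is well defined through the functional calculus.

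For the functional calculus identity, the key idea is to reduce an arbitrary function $\ffi$ to a polynomial in the two commuting variables $z$ and $\bar z$. Since $\hil$ and $\kil$ are finite-dimensional, the set $K:=\spec(A)\cup\spec(\map(A))$ is a finite subset of $\bC$, and I would invoke (two-variable Lagrange) interpolation to produce a polynomial $p(z,\bar z)$ with $p(\lambda,\bar\lambda)=\ffi(\lambda)$ for every $\lambda\in K$. For any normal operator $N$ with $\spec(N)\subseteq K$, writing its spectral decomposition $N=\sum_\lambda\lambda P_\lambda$ and using that $N$ and $N^*$ commute, one obtains $p(N,N^*)=\sum_\lambda p(\lambda,\bar\lambda)P_\lambda=\sum_\lambda\ffi(\lambda)P_\lambda=\ffi(N)$. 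Applying this to the normal operators $A$ and $\map(A)$ (both with spectrum contained in $K$) yields $\ffi(A)=p(A,A^*)$ and $\ffi(\map(A))=p(\map(A),\map(A)^*)$.

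Finally, since $A,A^*\in\M_\map$ and $\map|_{\M_\map}$ is a unital $^*$-homomorphism (unitality being exactly what accounts for the constant term of $p$, i.e.\ $\map(I_\hil)=I_\kil$), I would push $\map$ through the polynomial:
\begin{align*}
\map\bz\ffi(A)\jz=\map\bz p(A,A^*)\jz=p\bz\map(A),\map(A^*)\jz=p\bz\map(A),\map(A)^*\jz=\ffi(\map(A)).
\end{align*}
The only genuinely technical point is the reduction to a single interpolating polynomial in $z$ and $\bar z$ and the verification that $p(N,N^*)=\ffi(N)$ on the relevant finite spectra; everything else is a direct consequence of the $^*$-homomorphism property on $\M_\map$. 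Because we work in finite dimensions this interpolation step is elementary, so I do not expect any serious obstacle.
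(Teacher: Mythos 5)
Your proposal is correct and is essentially the paper's (implicit) argument: the paper states the lemma as an immediate consequence of the fact that the restriction of $\map$ to $\M_{\map}$ is a unital $^*$-homomorphism, and your normality computation plus polynomial interpolation simply makes that implication explicit in finite dimensions. One minor simplification you could make: since the functional calculus of a normal operator only evaluates the function on its spectrum, a one-variable Lagrange polynomial $p(z)$ interpolating $\ffi$ on the finite set $\spec(A)\cup\spec(\map(A))$ already satisfies $p(N)=\ffi(N)$ for both $N=A$ and $N=\map(A)$, so the conjugate variable $\bar z$ (and hence the two-variable interpolation step you flag as the technical point) is not actually needed — the adjoint enters only in verifying that $\map(A)$ is normal.
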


We say that a linear map $\map:\,\B(\hil)\to\B(\kil)$ is a \ki{Schwarz contraction} if it satisfies the \ki{Schwarz inequality}
\begin{align*}
\map(X)^*\map(X)\le\map(X^*X),\ds\ds\ds X\in\B(\hil).
\end{align*}
Obviously, every Schwarz contraction is positive, and it is known that every unital $2$-positive map is a Schwarz contraction, while the converse is not true.
If $\map$ is a Schwarz contraction, then its multiplicative domain can be characterized as 
\begin{align}\label{multdom2}
\M_{\map}=\left\{X\in\B(\hil):\,\map(XX^*)=\map(X)\map(X)^*,\s
\map(X^*X)=\map(X)^*\map(X)\right\};
\end{align}
see \cite[Lemma 3.9]{HMPB} for a proof.

The \ki{fixed point set} $\fix{\map}$ of a linear map $\map:\,\B(\hil)\to\B(\hil)$ is defined as 
\begin{align*}
\fix{\map}:=\left\{X\in\B(\hil):\,\map(X)=X\right\}.
\end{align*}
The same proof as that  of, e.g., \cite[Lemma 3.4]{BJKW} or \cite[Theorem 1\,(i)]{Je} yields the
following:
\begin{lemma}\label{lemma:fix-mult}
Let $\map:\,\B(\hil)\to\B(\hil)$ be a Schwarz contraction. If $\fix{\map^*}$ contains an element
of $\B(\hil)_{++}$, then $\fix{\map}$ is a $C^*$-subalgebra of $\M_{\map}$.
\end{lemma}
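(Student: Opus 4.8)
The plan is to prove the statement in two stages: first establish the inclusion $\fix{\map}\subseteq\M_{\map}$, and then observe that once this inclusion is in hand, the fixed-point set is automatically stable under the algebraic operations. The hypothesis that $\fix{\map^*}$ contains a positive invertible operator enters only once, precisely to upgrade the Schwarz inequality into an equality on $\fix{\map}$.

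First I would fix $P\in\B(\hil)_{++}$ with $\map^*(P)=P$ and extract a faithful weighting from it. Unfolding the definition of the adjoint with respect to the Hilbert--Schmidt inner product, $\langle\map(Y),P\rangle_\HS=\langle Y,\map^*(P)\rangle_\HS=\langle Y,P\rangle_\HS$, i.e.\ $\Tr(\map(Y)^*P)=\Tr(Y^*P)$ for all $Y$. Since $\map$ is a Schwarz contraction it is positive, hence Hermiticity-preserving, so $\map(Y)^*=\map(Y^*)$; specializing to self-adjoint $Y$ (and recalling $P=P^*$) this gives the key identity $\Tr(P\,\map(Y))=\Tr(PY)$ for every self-adjoint $Y\in\B(\hil)$.

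Next, fix $X\in\fix{\map}$, so $\map(X)=X$. Applying the Schwarz inequality to $X$ and, using $\map(X^*)=\map(X)^*$, to $X^*$, I obtain the two nonnegative defect operators $D_1:=\map(X^*X)-\map(X)^*\map(X)\ge0$ and $D_2:=\map(XX^*)-\map(X)\map(X)^*\ge0$, which, since $\map(X)=X$, read $D_1=\map(X^*X)-X^*X$ and $D_2=\map(XX^*)-XX^*$. Feeding $Y=X^*X$ and $Y=XX^*$ into the weighting identity of the previous step yields $\Tr(PD_1)=\Tr(PD_2)=0$. As $P>0$ and $D_i\ge0$, faithfulness forces $D_1=D_2=0$: writing $\Tr(PD_i)=\Tr\bigl((D_i^{1/2}P^{1/2})^*(D_i^{1/2}P^{1/2})\bigr)$ and using invertibility of $P^{1/2}$ gives $D_i=0$. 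By the characterization \eqref{multdom2} of the multiplicative domain of a Schwarz contraction, this is exactly the statement $X\in\M_{\map}$, so $\fix{\map}\subseteq\M_{\map}$. To finish, I would invoke the recorded structure of $\M_{\map}$: it is a $^*$-closed algebra on which $\map$ restricts to a $^*$-homomorphism. Closure of $\fix{\map}$ under the adjoint is immediate from $\map(X^*)=\map(X)^*=X^*$; closure under products follows since for $X,Y\in\fix{\map}\subseteq\M_{\map}$ one has $\map(XY)=\map(X)\map(Y)=XY$. Being the kernel of the linear map $\map-\id$, the set $\fix{\map}$ is a linear subspace, hence norm-closed in finite dimensions, so it is a $C^*$-subalgebra of $\M_{\map}$.

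The main obstacle, and the genuine content of the argument, is the passage from the Schwarz inequality to the equalities $D_1=D_2=0$. This is precisely where the invertible fixed point of $\map^*$ is indispensable: a merely positive (possibly singular) invariant $P$ would still give $\Tr(PD_i)=0$, but could not force the defect operators to vanish. Once this step is secured, the remainder is routine bookkeeping with the definitions of the multiplicative domain and the fixed-point set.
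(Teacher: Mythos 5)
Your proof is correct and is exactly the standard argument the paper invokes by citation (to \cite[Lemma 3.4]{BJKW} and \cite[Theorem 1\,(i)]{Je} rather than spelling it out): the invertible fixed point $P$ of $\map^*$ serves as a faithful $\map$-invariant weighting, the Schwarz inequality together with $\Tr(PD_i)=0$ forces the defect operators to vanish, and the characterization \eqref{multdom2} of $\M_{\map}$ then places $\fix{\map}$ inside the multiplicative domain, after which the $^*$-algebra structure is routine. Nothing to add.
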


\begin{remark}
In general, $\fix{\map}$ need not be an algebra, and there is no inclusion between $\fix{\map}$ and $\M_{\Phi}$ in either direction. 
We give some examples illustrating these in Appendix \ref{sec:examples} and Example \ref{E-4.3}.
\end{remark}

\section{The standard and the maximal $f$-divergences}
\label{sec:fdiv}

\subsection{Introduction to $f$-divergences}
\label{sec:fdiv intro}

Given two probability density functions (or, more generally, positive functions) $\rho,\sigma$
on a finite set $\X$, their $f$-divergence $S_f(\rho\|\sigma)$, corresponding to a convex
function $f:\,(0,+\infty)\to\bR$, was defined by Csisz\'ar \cite{Csiszar_fdiv} as 
\begin{align}\label{cl fdiv def1}
S_f(\rho\|\sigma):=\sum_{x\in\X}\sigma(x)f\bz\frac{\rho(x)}{\sigma(x)}\jz.
\end{align}
(For simplicity, in this section we assume that both $\rho$ and $\sigma$ are strictly positive, whether they denote functions or operators.) Most divergence measures used in classical information theory can be written in this form; for instance, $f(t):=t\log t$ yields the relative entropy (Kullback-Leibler divergence), 
$f_{\alpha}(t):=\mathrm{sgn}(\alpha-1)t^{\alpha},\,\alpha\in(0,+\infty)\setminus\{1\}$, correspond to the R\'enyi divergences, and 
$f(t):=|t-1|$ gives the variational distance. All $f$-divergences are easily seen to be jointly convex in their variables, and monotone non-increasing under the joint action of a stochastic map on their arguments. Moreover, when $f$ is strictly convex, a stochastic map preserves the $f$-divergence of 
$\rho$ and $\sigma$ if and only if it is reversible on $\{\rho,\sigma\}$, i.e., there exists a stochastic map $\Psi$ such that 
$\Psi(\map(\rho))=\rho$ and $\Psi(\map(\sigma))=\sigma$ (see, e.g., \cite[Proposition A.3]{HMPB}).

To motivate the definition of the different quantum $f$-divergences, let us recall the GNS representation theorem, that says that for every 
positive linear functional $\sigma$ on a $C^*$-algebra $\A$, there exists a Hilbert space $\hil_{\sigma}$, a vector $\Omega_{\sigma}\in\hil$, and 
a representation $\pi_{\sigma}$ of $\A$ on $\hil$ such that $\sigma(a)=\inner{\Omega_{\sigma}}{\pi_{\sigma}(a)\Omega_{\sigma}}$ for all $a\in\A$.
In the classical case described above, $\rho$ and $\sigma$ define positive linear functionals on the commutative $C^*$-algebra $\bC^{\X}$,
which we denote by the same symbols, and GNS representations can be given by choosing $\hil=l^2(\X)$ (with respect to the counting measure), 
$\Omega_{\rho}=(\sqrt{\rho(x)})_{x\in\X},\,\Omega_{\sigma}=(\sqrt{\sigma(x)})_{x\in\X}$, and 
$\pi(a):=M_a:\,b\mapsto ab$ (with pointwise multiplication) for any $a,b\in\bC^{\X}$. Then the  operator
$S:=M_{\rho^{1/2}\sigma^{-1/2}}$ changes the representing vector of $\sigma$ to that of $\rho$, i.e., 
$S\Omega_{\sigma}=\Omega_{\rho}$, and we have 
\begin{align*}
S_f(\rho\|\sigma)=\inner{\Omega_{\sigma}}{f(\Delta_{\rho/\sigma})\Omega_{\sigma}},
\end{align*}
where $\Delta_{\rho/\sigma}:=SS^*=S^*S=M_{\rho/\sigma}$ is the Radon-Nikodym derivative. This reformulation of \eqref{cl fdiv def1} will be useful to extend the notion of $f$-divergences to the quantum setting.

In the general finite-dimensional case, when $\A\subset\B(\hil)$ for some finite-dimensional Hilbert space $\hil$, positive linear functionals can be identified with positive elements of $\A$ through $\rho(a)=\Tr D_{\rho }a$, where $D_{\rho}$ is the density operator of $\rho$. For the rest, we will 
use the same notation $\rho$ also for its density operator. Given two positive operators $\rho,\sigma\in\A$ (we assume again for simplicity that they are both invertible), the GNS representations can be given by choosing $\hil:=(\A,\inner{.}{.}_{\HS})$, $\Omega_{\rho}:=\rho^{1/2}$, $\Omega_{\sigma}:=\sigma^{1/2}$, and $\pi(a):=L_a:\,b\mapsto ab$, $a,b\in\A$.
The question is now how to define the Radon-Nikodym derivative, i.e., the non-commutative analogues of the operators $S$ and $\Delta_{\rho/\sigma}$.
One option is to choose $S:=L_{\rho^{1/2}}R_{\sigma^{-1/2}}$, so that $\Delta_{\rho/\sigma}:=SS^*=S^*S=L_{\rho}R_{\sigma\inv}$ becomes the \ki{relative modular operator}. The corresponding quantum $f$-divergence is 
\begin{align}\label{qe def}
S_f(\rho\|\sigma)&:=\Tr\sigma^{1/2}f\bz L_{\rho}R_{\sigma\inv}\jz\sigma^{1/2}
=\inner{I}{\per{f}\bz L_{\rho},R_{\sigma}\jz I}_{\HS},
\end{align}
that was defined and investigated by Petz  (in a more general form) under the name quasi-entropy \cite{P85,P86}. Note that the choice
$S:=L_{\sigma^{-1/2}}R_{\rho^{1/2}}$ results in the same expression.
Petz' analysis was extended in \cite{HMPB}, and we give further extensions in Section \ref{Petz f div} below.

Another option is to choose $S:=R_{\sigma^{-1/2}\rho^{1/2}}$, and $\Delta_{\rho/\sigma}:=SS^*=R_{\sigma^{-1/2}\rho\sigma^{-1/2}}$ (the so-called \ki{commutant Radon-Nikodym derivative}), resulting in the $f$-divergence
\begin{align}\label{maxdiv def}
\maxdiv{f}(\rho\|\sigma)&:=\Tr\sigma^{1/2}f\bz \sigma^{-1/2}\rho\sigma^{-1/2}\jz\sigma^{1/2}
=\inner{I}{\per{f}(\rho,\sigma)I}_{\HS}.
\end{align} 
A special case of this, corresponding to the function $f(t):=t\log t$, has been studied by Belavkin and Staszewski \cite{BS} as a quantum extension of the Kullback-Leibler divergence. The above general form was introduced in \cite{PR}. Matsumoto \cite{Ma} showed that this $f$-divergence is maximal among the monotone quantum $f$-divergences, and analyzed the preservation of this $f$-divergence by quantum operations. We will review and extend some of his results in Sections \ref{sec:maximal f-div} and \ref{sec:comparison}.
Note that the definitions $S:=L_{\rho^{1/2}\sigma^{-1/2}},\,\Delta_{\rho/\sigma}:=S^*S$;
$S:=R_{\rho^{1/2}\sigma^{-1/2}},\,\Delta_{\rho/\sigma}:=S^*S$; and
$S:=L_{\sigma^{-1/2}\rho^{1/2}},\,\Delta_{\rho/\sigma}:=SS^*$ 
all result in the same $f$-divergence (although with the latter two $S\Omega_{\sigma}=\Omega_{\rho}$ does not hold).

Another natural definition would be to choose $S:=R_{\sigma^{-1/2}\rho^{1/2}}$ and
$\Delta_{\rho/\sigma}:=S^*S$, leading to the $f$-divergence
\begin{align}\label{bad fdiv def}
\widetilde S_f(\rho\|\sigma):=\Tr\sigma^{1/2}f\bz \rho^{1/2}\sigma^{-1}\rho^{1/2}\jz\sigma^{1/2}.
\end{align}
In general, however, $\widetilde S_f$, unlike the other two versions $S_f$ and  $\widehat S_f$ above, 
is not monotone under CPTP maps, nor it is jointly convex in its arguments, 
as we show in Appendix \ref{sec:counterex}.
Thus, $\widetilde S_f$ is not a proper quantum divergence for general operator convex functions $f$, and hence we don't consider this version further in the paper.

A different and more operational approach is to define quantum $f$-divergences directly from classical ones.
There seems to be two natural ways to do so, namely, to consider the \ki{maximal $f$-divergence}, introduced by Matsumoto \cite{Ma} as
\begin{align}\label{maxdiv def2}
S_f^{\max}(\rho\|\sigma):=\inf\{&S_f(p\|q):\,p,q\in\B(\kil)_+\text{ are commuting, $\dim\kil<+\infty$, and }\\
&\ds\map(p)=\rho,\,\map(q)=\sigma\ \text{for some CPTP map}\ \map:\,\B(\kil)\to\B(\hil)\}
\nonumber
\end{align} 
(denoted by $D_f^{\max}$ in \cite{Ma}) and the \ki{measured} (or \ki{minimal}) \ki{$f$-divergence}
\begin{align}\label{meas fdiv def}
S_f^{\min}(\rho\|\sigma):=S_f^{\meas}(\rho\|\sigma):=
\sup\{&S_f(\map(\rho)\|\map(\sigma)):\,\map:\,\B(\hil)\to\B(\kil)\ \text{is CPTP}, \\
&\ds\dim\kil<+\infty,\ \text{and $\ran\map$ is commutative}\}. \nonumber
\end{align}
For a given (convex) function $f:\,(0,+\infty)\to\bR$, we say that a functional
$S_f^q$ is a quantum $f$-divergence if $S_f^q$ assigns a number in $(-\infty,+\infty]$ to any
pair $(\rho,\sigma)\in\B(\hil)_+\times\B(\hil)_+$ for any finite-dimensional Hilbert space,
such that
if $\rho$ and $\sigma$ commute then 
$S_f^q(\rho\|\sigma)=S_f(\{\rho(x)\}_{x\in\X}\|\{\sigma(x)\}_{x\in\X})$, where
$\{\rho(x)\}_{x\in\X}$ and $\{\sigma(x)\}_{x\in\X}$ are the diagonal elements of $\rho$ and
$\sigma$ in  an orthonormal basis in which both of them are diagonal. 
We say that $S_f^q$ is monotone if it is monotone non-increasing under the action of CPTP maps
on both arguments of $S_f^q$. It is clear from the above definitions that 
\begin{align}\label{min-max properties}
S_f^{\min}(\rho\|\sigma)\le S_f^q(\rho\|\sigma)\le S_f^{\max}(\rho\|\sigma).
\end{align}
for any monotone quantum $f$-divergence $S_f^q$, which explains the names ``maximal'' and ``minimal''
for the definitions in \eqref{maxdiv def2} and \eqref{meas fdiv def}.

Matsumoto has shown that $S_f^{\max}(\rho\|\sigma)=\maxdiv{f}(\rho\|\sigma)$ for operator convex function
$f$ on $[0,+\infty)$, and for $\rho,\sigma$ such that $\rho^0\le\sigma^0$. 
For $S_f^{\meas}(\rho\|\sigma)$, no explicit general formula is known. We will analyze the relation of the 
$f$-divergences $\maxdiv{f}= S_f^{\max}$, $S_f$, and $S_f^{\meas}$ in Section \ref{sec:comparison}.

\subsection{Standard $f$-divergences}
\label{Petz f div}

Petz originally introduced his quasi-entropies \cite{P85,P86} by a more general formula than
\eqref{qe def}, as
\begin{align*}
S_f^K(\rho\|\sigma):=\<K\sigma^{1/2},f(L_\rho R_{\sigma\inv})(K\sigma^{1/2})\>_\HS
=\Tr\sigma^{1/2}K^* f\bz L_{\rho}R_{\sigma\inv}\jz(K\sigma^{1/2}),
\end{align*}
with $K$ an arbitrary operator, and $\sigma$ invertible. He proved the monotonicity
$$
S_f^K(\Phi(\rho)\|\Phi(\sigma))\le S_f^{\Phi^*(K)}(\rho\|\sigma)
$$
of these quantities under the joint action of the dual of unital Schwarz contractions for
operator monotone decreasing $f$ on $[0,+\infty)$ with $f(0)\le0$, and under the restriction onto a subalgebra for operator convex $f$.
His definition and results were extended in the $K=I$ case in \cite{HMPB}, in particular, for general positive operators $\rho,\sigma$. 

Below we give some further extensions, by only requiring the function $f$ to be defined on $(0,+\infty)$ (as opposed to $[0,+\infty)$ in \cite{HMPB}), while allowing the operators $\rho$ and $\sigma$ to have arbitrary supports.
Recall our convention stated in the first paragraph of Section \ref{sec:op convex}, that $f:\,(0,+\infty)\to\bR$ is a continuous function such that the limits 
$f(0^+):=\lim_{x\searrow 0}f(x)$ and $f'(+\infty):=\lim_{x\to+\infty}\frac{f(x)}{x}$ exist and their non-negative linear combinations make sense.

\begin{definition}\label{D-3.1}\rm
For $\rho,\sigma\in\BH_+$ let $\rho=\sum_{a\in\Sp(\rho)}aP_a$ and $\sigma=\sum_{b\in\Sp(\sigma)}bQ_b$ be the spectral
decompositions. When $\rho,\sigma>0$, we have
$$
f(L_\rho R_{\sigma^{-1}})=\sum_{a\in\Sp(\rho)}\sum_{b\in\Sp(\sigma)}f(ab^{-1})L_{P_a}R_{Q_b},
$$
and we define the (standard) {\it $f$-divergence} of $\rho$ and $\sigma$ as
\begin{align}\label{fdiv def}
S_f(\rho\|\sigma):=\bigl\<\sigma^{1/2},f(L_\rho R_{\sigma^{-1}})\sigma^{1/2}\bigr\>_\HS
=\Tr \sigma^{1/2}f(L_\rho R_{\sigma^{-1}})(\sigma^{1/2}).
\end{align}
We extend $S_f(\rho\|\sigma)$ to general $\rho,\sigma\in\BH_+$ as
\begin{equation}\label{F-3.6}
S_f(\rho\|\sigma):=\lim_{\eps\searrow0}S_f(\rho+\eps I\|\sigma+\eps I).
\end{equation}
\end{definition}

\begin{prop}\label{P-3.2}
For every $\rho,\sigma\in\BH_+$ the limit in \eqref{F-3.6} exists, and we have
\begin{align}
S_f(\rho\|\sigma)
&=\sum_{a,b}P_f(a,b)\Tr P_aQ_b\label{standard fdiv1}\\
&=\sum_{a,b}P_f(a\Tr P_aQ_b,b\Tr P_aQ_b)\label{standard fdiv2}\\
&=\sum_{a>0}\sum_{b>0}bf(ab^{-1})\Tr P_aQ_b+f(0^+)\Tr(I-\rho^0)\sigma+f'(+\infty)\Tr \rho(I-\sigma^0)\label{F-3.7}
\end{align}
with the convention $(+\infty)0=0$. In particular, \eqref{F-3.6} coincides with \eqref{fdiv def} for invertible $\rho,\sigma$.
\end{prop}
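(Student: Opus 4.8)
The plan is to evaluate $S_f$ first on the invertible regularizations $\rho_\eps:=\rho+\eps I$ and $\sigma_\eps:=\sigma+\eps I$ and then to pass to the limit $\eps\searrow0$ termwise. Since $\rho_\eps=\sum_a(a+\eps)P_a$ and $\sigma_\eps=\sum_b(b+\eps)Q_b$ carry the same spectral projections as $\rho,\sigma$, the commuting operators $L_{\rho_\eps}$ and $R_{\sigma_\eps^{-1}}$ are simultaneously diagonalized, so $f(L_{\rho_\eps}R_{\sigma_\eps^{-1}})=\sum_{a,b}f\bz(a+\eps)/(b+\eps)\jz L_{P_a}R_{Q_b}$. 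Applying this to $\sigma_\eps^{1/2}$, using $\sigma_\eps^{1/2}Q_b=(b+\eps)^{1/2}Q_b$ together with the cyclicity of the trace and $Q_bQ_c=\delta_{bc}Q_b$, I expect the clean identity
\[
S_f(\rho_\eps\|\sigma_\eps)=\sum_{a,b}(b+\eps)f\left(\frac{a+\eps}{b+\eps}\right)\Tr P_aQ_b .
\]
This is a finite sum, and for invertible $\rho,\sigma$ (where all $a,b>0$) the right-hand side is continuous at $\eps=0$; hence the limit \eqref{F-3.6} equals its value there, which is exactly \eqref{fdiv def}. This settles the last assertion of the proposition.

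Next I would read off the existence of the limit and formula \eqref{standard fdiv1} directly. The coefficient $(b+\eps)f\bz(a+\eps)/(b+\eps)\jz$ is precisely the quantity whose $\eps\searrow0$ limit defines the extended perspective $P_f(a,b)$ in \eqref{persp ext}; since the sum is finite, each summand converges, so $S_f(\rho_\eps\|\sigma_\eps)\to\sum_{a,b}P_f(a,b)\Tr P_aQ_b$. Here care is needed at the boundary: when $\Tr P_aQ_b>0$ the summand tends to $P_f(a,b)\Tr P_aQ_b$ (possibly $\pm\infty$ if $a=0$ or $b=0$ and $f(0^+)$ or $f'(+\infty)$ is infinite), while when $\Tr P_aQ_b=0$ the summand vanishes for every $\eps>0$, matching the convention $(+\infty)\cdot0=0$. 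The standing hypothesis that $f(0^+)$ and $f'(+\infty)$ are not both infinite of opposite signs rules out any $\infty-\infty$ ambiguity, so the limit exists in $\bR\cup\{\pm\infty\}$.

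Formula \eqref{standard fdiv2} then follows from \eqref{standard fdiv1} by positive homogeneity of the perspective: checking each of the three branches of \eqref{persp ext} gives $P_f(\lambda x,\lambda y)=\lambda P_f(x,y)$ for $\lambda>0$, so with $\lambda=\Tr P_aQ_b$ one gets $P_f(a\Tr P_aQ_b,\,b\Tr P_aQ_b)=P_f(a,b)\Tr P_aQ_b$ (the case $\Tr P_aQ_b=0$ again absorbed by the convention). Finally, \eqref{F-3.7} is obtained by splitting \eqref{standard fdiv1} along the branches of \eqref{persp ext}: the terms with $a,b>0$ contribute $bf(ab^{-1})\Tr P_aQ_b$; the terms with $a=0,b>0$ sum, via $\sum_{b>0}bQ_b=\sigma$ and $P_0=I-\rho^0$, to $f(0^+)\Tr(I-\rho^0)\sigma$; the terms with $a>0,b=0$ sum, via $\sum_{a>0}aP_a=\rho$ and $Q_0=I-\sigma^0$, to $f'(+\infty)\Tr\rho(I-\sigma^0)$; and the single $a=b=0$ term vanishes.

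The only genuinely delicate point is the termwise passage to the limit in the boundary cases $a=0$ or $b=0$, namely confirming that $(b+\eps)f\bz(a+\eps)/(b+\eps)\jz$ tends to $bf(0^+)$ (resp.\ $af'(+\infty)$), which is exactly where the assumed existence of the limits $f(0^+)$ and $f'(+\infty)$ enters, and in keeping the products involving $0$ and $\pm\infty$ mutually coherent. Since all of this is already encoded in the definition \eqref{persp ext} of $P_f$, once the spectral computation of the displayed $\eps$-formula is in hand the remainder of the argument is bookkeeping.
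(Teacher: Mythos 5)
Your proposal is correct and follows essentially the same route as the paper's proof: regularize to $\rho+\eps I$, $\sigma+\eps I$, compute $S_f(\rho+\eps I\|\sigma+\eps I)=\sum_{a,b}(b+\eps)f\bigl((a+\eps)(b+\eps)^{-1}\bigr)\Tr P_aQ_b$ via the shared spectral projections, and pass to the limit termwise using the extended perspective \eqref{persp ext}, then split by branches to obtain \eqref{F-3.7}. Your extra bookkeeping (homogeneity of $P_f$ for \eqref{standard fdiv2}, the role of the standing hypothesis in excluding $\infty-\infty$, and the vanishing of summands with $\Tr P_aQ_b=0$) just makes explicit what the paper leaves as "trivial" or implicit.
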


\begin{proof}
Since $\rho+\eps I=\sum_a(a+\eps)P_a$ and $\sigma+\eps I=\sum_b(b+\eps)Q_b$, one has
$$
f(L_{\rho+\eps I}R_{(\sigma+\eps I)^{-1}})=\sum_{a,b}f((a+\eps)(b+\eps)^{-1})L_{P_a}R_{Q_b}
$$
so that
$$
S_f(\rho+\eps I\|\sigma+\eps I)=\sum_{a,b}(b+\eps)f((a+\eps)(b+\eps)^{-1})\Tr P_aQ_b.
$$
Using \eqref{persp ext},
one finds that
\begin{align*}
&\lim_{\eps\searrow0}S_f(\rho+\eps I\|\sigma+\eps I) \\
&\quad=\sum_{a,b}P_f(a,b)\Tr P_aQ_b\\
&\quad=\sum_{a,b>0}bf(ab^{-1})\Tr P_aQ_b+\sum_{b>0}bf(0^+)\Tr P_0Q_b
+\sum_{a>0}af'(+\infty)\Tr P_aQ_0 \\
&\quad=\sum_{a,b>0}bf(ab^{-1})\Tr P_aQ_b+f(0^+)\Tr(I-\rho^0)\sigma+f'(+\infty)\Tr \rho(I-\sigma^0),
\end{align*}
giving \eqref{standard fdiv1} and \eqref{F-3.7}. The equality of \eqref{standard fdiv1} and \eqref{standard fdiv2} is trivial.
\end{proof}

\begin{remark}
Note that the expression in \eqref{standard fdiv2} is the classical $f$-divergence \cite{Csiszar_fdiv} of the functions
$p(a,b):=a\Tr P_aQ_b$ and $q(a,b):=b\Tr P_aQ_b$, defined on $(\spec\rho)\times(\spec\sigma)$ (see \cite{HMPB} and 
\cite{NSz} for further details).
\end{remark}

\begin{cor}\label{cor:fdiv infty}
$S_f(\rho\|\sigma)=+\infty$ if and only if one of the following conditions holds:
\begin{enumerate}
\item
$f(0^+)=+\infty$ and $\sigma^0\nleq\rho^0$;
\item
$f'(+\infty)=+\infty$ and $\rho^0\nleq\sigma^0$.
\end{enumerate}
In all other cases, $S_f(\rho\|\sigma)$ is a finite number.
\end{cor}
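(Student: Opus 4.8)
The plan is to read off both directions directly from the explicit formula \eqref{F-3.7} of Proposition \ref{P-3.2},
\[
S_f(\rho\|\sigma)=\sum_{a>0,\,b>0}bf(ab^{-1})\Tr P_aQ_b+f(0^+)\Tr(I-\rho^0)\sigma+f'(+\infty)\Tr\rho(I-\sigma^0),
\]
and to analyse the three summands separately. The first summand is a finite sum, taken over the finitely many eigenvalues $a\in\Sp(\rho)$ and $b\in\Sp(\sigma)$, and each term $bf(ab^{-1})\Tr P_aQ_b$ is a finite real number because $a,b>0$ and $f$ is finite and continuous on $(0,+\infty)$; hence this summand is always a finite real number, and any infinite value of $S_f(\rho\|\sigma)$ must come from the two remaining terms.

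The heart of the matter is to decide when the two boundary coefficients are nonzero, so I would first record that $\Tr(I-\rho^0)\sigma$ and $\Tr\rho(I-\sigma^0)$ are nonnegative and compute exactly when they vanish. Writing $\Tr(I-\rho^0)\sigma=\Tr\sigma^{1/2}(I-\rho^0)\sigma^{1/2}=\|(I-\rho^0)\sigma^{1/2}\|_{\HS}^2$, this quantity vanishes if and only if $(I-\rho^0)\sigma^{1/2}=0$, i.e.\ $\ran\sigma^{1/2}=\supp\sigma\subseteq\supp\rho=\ran\rho^0$, which is precisely $\sigma^0\le\rho^0$. Thus $\Tr(I-\rho^0)\sigma>0$ if and only if $\sigma^0\nleq\rho^0$, and symmetrically $\Tr\rho(I-\sigma^0)>0$ if and only if $\rho^0\nleq\sigma^0$. (Equivalently, these coefficients equal $\sum_{b>0}b\Tr P_0Q_b$ and $\sum_{a>0}a\Tr P_aQ_0$, which can also be read off from the intermediate line in the proof of Proposition \ref{P-3.2}.)

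It then remains to combine these facts with the convention $(+\infty)0=0$. The second term $f(0^+)\Tr(I-\rho^0)\sigma$ equals $+\infty$ precisely when $f(0^+)=+\infty$ and $\sigma^0\nleq\rho^0$, and is a finite real number otherwise; likewise the third term equals $+\infty$ precisely when $f'(+\infty)=+\infty$ and $\rho^0\nleq\sigma^0$. Since by our standing assumption $f(0^+)$ and $f'(+\infty)$ are never infinities of opposite sign (and, when $f$ is convex, both lie in $(-\infty,+\infty]$), no indeterminate form $\infty-\infty$ can arise; hence $S_f(\rho\|\sigma)=+\infty$ if and only if at least one of conditions (i), (ii) holds, while in every other case all three summands are finite, so $S_f(\rho\|\sigma)$ is finite. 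The only genuine step is the support computation of the previous paragraph, together with the bookkeeping needed to exclude the indeterminate form $\infty-\infty$; everything else is immediate from Proposition \ref{P-3.2}.
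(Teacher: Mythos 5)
Your proposal is correct and follows exactly the route the paper intends: the corollary is stated as an immediate consequence of Proposition \ref{P-3.2}, i.e.\ one reads off from formula \eqref{F-3.7} that the sum over positive eigenvalues is always finite, that $\Tr(I-\rho^0)\sigma>0$ precisely when $\sigma^0\nleq\rho^0$ (and symmetrically for the other boundary term), and that the convention $(+\infty)0=0$ together with the standing sign restriction on $f(0^+)$, $f'(+\infty)$ rules out any $\infty-\infty$ ambiguity. Your explicit Hilbert--Schmidt-norm verification of the support equivalence and the finiteness bookkeeping are just the details the paper leaves implicit.
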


\begin{example}\label{E-3.4}
The most relevant examples for applications are given by 
\begin{align*}
f_{\alpha}(x):=s(\alpha)x^{\alpha}\ds\text{for}\ds\alpha\in(0,+\infty),\ds\ds\ds\text{and}
\ds\ds\ds\eta(x):=x\log x,\ds\ds\ds\,x\ge 0,
\end{align*}
where $s(\alpha):=-1$ for
$0<\alpha<1$ and $s(\alpha):=1$ for $\alpha\ge1$. They give rise to 
\begin{align*}
S_{f_{\alpha}}(\rho\|\sigma)=
\begin{cases}
s(\alpha)\Tr\rho^{\alpha}\sigma^{1-\alpha},&\alpha\in(0,1]\text{ or }\rho^0\le\sigma^0,\\
+\infty,&\text{otherwise},
\end{cases}
\end{align*}
\begin{align}\label{Umegaki relentr}
S(\rho\|\sigma):=S_{\eta}(\rho\|\sigma)=\begin{cases}
\Tr\rho(\log\rho-\log\sigma),&\rho^0\le\sigma^0,\\
+\infty,&\text{otherwise},
\end{cases}
\end{align}
where $S(\rho\|\sigma)$ is the \ki{Umegaki relative entropy} \cite{Umegaki};
see \eqref{Umegaki intro}.
The quantities $S_{f_{\alpha}}$ define the standard \ki{R\'enyi divergences} as
\begin{align}\label{Renyi div as fdiv}
D_{\alpha}(\rho\|\sigma)
:=\frac{1}{\alpha-1}\log\bigl(s(\alpha)S_{f_{\alpha}}(\rho\|\sigma)\bigr)-\frac{1}{\alpha-1}\log\Tr\rho,
\qquad\alpha\in(0,+\infty)\setminus\{1\};
\end{align}
see \eqref{conventional Renyi intro}.
It is easy to see (by simply computing its second derivative)
that $\alpha\mapsto\log\bz s(\alpha)S_{f_{\alpha}}(\rho\|\sigma)\jz$ is convex, and hence $\alpha\mapsto D_{\alpha}(\rho\|\sigma)$ is 
increasing for any fixed $\rho,\sigma$; moreover,
\begin{align}\label{Renyi limit}
\lim_{\alpha\to 1}D_{\alpha}(\rho\|\sigma)=\sup_{\alpha\in(0,1)}D_{\alpha}(\rho\|\sigma)=\frac{1}{\Tr\rho}\,S(\rho\|\sigma).
\end{align}
(Although the function $f_\alpha$ is operator convex on $[0,\infty)$ only for
$0<\alpha\le2$, we shall use $S_{f_\alpha}$ for all $\alpha>0$. See also Example \ref{E-4.3}
below.)
\end{example}

\begin{remark}
In \cite{HMPB}, we assumed that $f$ is defined on $[0,+\infty)$, and we
defined $S_f(\rho\|\sigma)$ first for an invertible $\sigma$ as in \eqref{fdiv def}, and extended to non-invertible $\sigma$ as
$S_f(\rho\|\sigma):=\lim_{\ep\searrow 0}S_f(\rho\|\sigma+\ep I)$, which is slightly different from the above \eqref{F-3.6}.
However, when
$f(0^+)<+\infty$ so that $f$ can be extended to a continuous function on
$[0,+\infty)$, we see by expression \eqref{F-3.7} that the present definition is the same as
that in \cite[Definition 2.1]{HMPB}.
The extension of $S_f(\rho\|\sigma)$ to functions $f$ without the assumption $f(0^+)<+\infty$ 
is relevant, for instance, to the following symmetry property.
\end{remark} 

\begin{prop}\label{P-3.4}
Let $\widetilde f$ be the transpose of $f$. Then for every $\rho,\sigma\in\BH_+$,
$$
S_{\widetilde f}(\rho\|\sigma)=S_f(\sigma\|\rho).
$$
\end{prop}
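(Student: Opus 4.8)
The plan is to prove the symmetry property $S_{\widetilde f}(\rho\|\sigma)=S_f(\sigma\|\rho)$ by reducing everything to the invertible case via the limiting definition \eqref{F-3.6}, and there to use the formula \eqref{standard fdiv1} together with the defining relation of the transpose. First I would record the two relevant identities already available in the excerpt: the definition $\widetilde f(y)=y f(1/y)$, and the key observation \eqref{F-2.1} that $\widetilde f(0^+)=f'(+\infty)$ and $\widetilde f{}'(+\infty)=f(0^+)$. The latter is exactly what guarantees that the boundary terms in the two-variable expressions will match after swapping $\rho$ and $\sigma$.

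The cleanest route is through the extended perspective $P_f$ on $[0,+\infty)\times[0,+\infty)$. By Proposition \ref{P-3.2}, for every $\rho,\sigma\in\BH_+$ we have
\begin{align*}
S_f(\rho\|\sigma)=\sum_{a,b}P_f(a,b)\,\Tr P_aQ_b,
\end{align*}
where $\rho=\sum_a aP_a$ and $\sigma=\sum_b bQ_b$ are the spectral decompositions. The main step is then to verify the scalar identity
\begin{align*}
P_{\widetilde f}(x,y)=P_f(y,x),\qquad x,y\in[0,+\infty),
\end{align*}
which is the scalar analogue of Lemma \ref{lemma:transpose perspective}. For $x,y>0$ this is immediate from $\widetilde f(t)=t f(1/t)$, since $P_{\widetilde f}(x,y)=y\,\widetilde f(x/y)=y\cdot(x/y)f(y/x)=x f(y/x)=P_f(y,x)$. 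The boundary cases are where one must be careful: by \eqref{persp ext}, $P_{\widetilde f}(0,y)=y\,\widetilde f(0^+)=y f'(+\infty)=P_f(y,0)$, and $P_{\widetilde f}(x,0)=x\,\widetilde f{}'(+\infty)=x f(0^+)=P_f(0,x)$, using precisely \eqref{F-2.1}. Thus the identity holds on all of $[0,+\infty)^2$.

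With this scalar identity in hand, the conclusion follows by a direct manipulation. Writing out $S_{\widetilde f}(\rho\|\sigma)$ via \eqref{standard fdiv1} applied to $\widetilde f$, and using $\Tr P_aQ_b=\Tr Q_bP_a$, I would compute
\begin{align*}
S_{\widetilde f}(\rho\|\sigma)
=\sum_{a,b}P_{\widetilde f}(a,b)\,\Tr P_aQ_b
=\sum_{a,b}P_f(b,a)\,\Tr Q_bP_a
=S_f(\sigma\|\rho),
\end{align*}
where in the last step the roles of the spectral projections of $\rho$ and $\sigma$ are interchanged so that the double sum is exactly the formula \eqref{standard fdiv1} for $S_f(\sigma\|\rho)$. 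The one point requiring a sentence of justification is that Proposition \ref{P-3.2} is applicable to $\widetilde f$ at all: this needs $\widetilde f$ to satisfy the standing assumptions of Section \ref{sec:op convex}, i.e.\ that its limits $\widetilde f(0^+)$ and $\widetilde f{}'(+\infty)$ exist in $\bR\cup\{\pm\infty\}$ and are not both infinite with opposite signs. This is guaranteed by \eqref{F-2.1}, since those limits are just $f'(+\infty)$ and $f(0^+)$, which inherit the corresponding hypotheses on $f$. I do not expect any serious obstacle here; the only delicate part is bookkeeping the boundary terms, which is handled cleanly once the scalar perspective identity $P_{\widetilde f}(x,y)=P_f(y,x)$ is established on the closed quadrant.
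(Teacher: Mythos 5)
Your proof is correct and follows essentially the same route as the paper: the paper's own proof invokes the explicit formula \eqref{F-3.7} (equivalent to \eqref{standard fdiv1}) together with \eqref{F-2.1} and the scalar identity $b\widetilde f(ab^{-1})=af(ba^{-1})$, which is exactly your identity $P_{\widetilde f}(x,y)=P_f(y,x)$ restricted to $x,y>0$, with your boundary cases corresponding to the paper's use of \eqref{F-2.1} on the terms involving $f(0^+)$ and $f'(+\infty)$. Your extra check that $\widetilde f$ satisfies the standing assumptions of Section \ref{sec:op convex} (so that Proposition \ref{P-3.2} applies to it) is a point the paper leaves implicit, and it is handled correctly.
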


\begin{proof}
The assertion follows immediately from expression \eqref{F-3.7} together with \eqref{F-2.1},
since $b\widetilde f(ab^{-1})=af(ba^{-1})$ for $a,b>0$.
\end{proof}

The next proposition shows that the continuity property that is incorporated in definition \eqref{F-3.6} can be extended to the case where the perturbation is not a constant multiple of the identity, but an arbitrary positive operator. This becomes important, for instance, when one studies the behavior of the $f$-divergences under the action of stochastic maps, in which case one might need to evaluate expressions like 
\begin{align*}
\lim_{\ep\searrow 0}S_f\bz \map(\rho+\ep I)\|\map(\sigma+\ep I)\jz=\lim_{\ep\searrow 0}S_f\bz \map(\rho)+\ep \map(I)\|\map(\sigma)+\ep \map(I)\jz,
\end{align*}
which does not reduce to \eqref{F-3.6} unless $\map$ is unital.

\begin{prop}\label{prop:standard fdiv cont}
Let $\rho,\sigma\in\BH_+$.
\begin{enumerate}
\item\label{fdiv cont i} 
Assume that both $f(0^+)$ and $f'(+\infty)$ are finite. Then
$$
S_f(\rho\|\sigma)=\lim_{n\to\infty}S_f(\rho_n\|\sigma_n)
$$
for any choice of sequences $\rho_n,\sigma_n\in\BH_+$ such that $\rho_n\to\rho,\sigma_n\to0$ as $n\to+\infty$.
\item\label{fdiv cont ii} 
Let $f$ be an operator convex function on $(0,+\infty)$ (with no restriction on
$f(0^+)$ and $f'(+\infty)$). Then
$$
S_f(\rho\|\sigma)=\lim_{n\to\infty}S_f(\rho+L_n\|\sigma+L_n)
$$
for any choice of a sequence $L_n\in\BH_+$ such that $\rho+L_n,\sigma+L_n>0$ for every $n$, and
$L_n\to0$ as $n\to+\infty$.
\end{enumerate}
\end{prop}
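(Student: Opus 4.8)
The plan is to handle (i) by a direct continuous functional calculus argument, and (ii) by reducing, through the integral representation of operator convex functions, to a handful of elementary functions whose perturbation behaviour can be controlled by semicontinuity.

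For (i), I would first rewrite the divergence in the joint functional calculus form $S_f(\rho\|\sigma)=\inner{I}{\per{f}(L_\rho,R_\sigma)I}_\HS$. This is legitimate precisely because $f(0^+)$ and $f'(+\infty)$ are finite: by \eqref{persp ext} the perspective $P_f$ then extends to a \emph{continuous} function on the whole box $[0,M]^2$ (continuity at the origin follows from the case split $x\le y$ versus $x>y$, which bounds $\abs{P_f(x,y)}$ by $y\sup_{(0,1]}\abs{f}$ or by $x\sup_{(1,\infty)}\abs{f(t)/t}$), and formula \eqref{standard fdiv1} of Proposition \ref{P-3.2} identifies $S_f(\rho\|\sigma)$ with the spectral sum of $P_f$ against the commuting positive operators $L_\rho,R_\sigma$. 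Since $\rho_n\to\rho$ and $\sigma_n\to\sigma$, the commuting pairs $(L_{\rho_n},R_{\sigma_n})$ converge in norm to $(L_\rho,R_\sigma)$ with all spectra in a common compact box; approximating $P_f$ uniformly there by polynomials via Stone--Weierstrass and using that $\sum c_{ij}L_{\rho^i}R_{\sigma^j}$ is manifestly norm continuous in $(\rho,\sigma)$, a standard $3\eps$ argument gives $\per{f}(L_{\rho_n},R_{\sigma_n})\to\per{f}(L_\rho,R_\sigma)$ in norm, hence convergence of the inner products.

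For (ii), I would start from the integral representation \eqref{F-2.3}, writing $f$ as an affine function plus $c(x-1)^2$ plus $\int_{[0,+\infty)}(x-1)^2(x+s)^{-1}\,d\lambda(s)$, and apply $g\mapsto S_g(\rho+L_n\|\sigma+L_n)$ term by term for each (invertible) approximant; the interchange with $\int d\lambda$ is legitimate by Tonelli, since each non-negative summand has a non-negative perspective and hence a non-negative $f$-divergence. This reduces the claim to the elementary functions $1,\,x,\,(x-1)^2$ and $g_s(x):=(x-1)^2(x+s)^{-1}$, $s\ge0$. The affine pieces give $\Tr\sigma$ and $\Tr\rho$, which are plainly continuous under $L_n\to0$; for $s>0$ one has $g_s(0^+)=1/s<+\infty$ and $g_s'(+\infty)=1<+\infty$, so part (i) already yields $S_{g_s}(\rho+L_n\|\sigma+L_n)\to S_{g_s}(\rho\|\sigma)$. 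The only genuinely singular pieces are $(x-1)^2$ (where $f'(+\infty)=+\infty$) and $g_0(x)=(x-1)^2/x$ (where $g_0(0^+)=+\infty$); modulo the affine terms these are $x^2$ and $1/x$, and since the transpose of $x^2$ is $1/x$, Proposition \ref{P-3.4} lets me deduce the $1/x$ case from the $x^2$ case by exchanging $\rho$ and $\sigma$. Everything thus reduces to controlling $S_{x^2}(\rho+L_n\|\sigma+L_n)=\Tr(\rho+L_n)^2(\sigma+L_n)^{-1}$.

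I would then prove the two matching inequalities. For the lower bound I would use the variational formula $S_{x^2}(\rho\|\sigma)=\Tr\rho\sigma^{-1}\rho=\sup_{X\in\BH}\bigl\{2\,\mathrm{Re}\,\Tr(\rho X)-\Tr(X^*\sigma X)\bigr\}$, which exhibits $S_{x^2}$ as a supremum of functionals jointly continuous in $(\rho,\sigma)$, hence as a jointly lower semicontinuous $[0,+\infty]$-valued function (it equals $+\infty$ exactly when $\rho^0\nleq\sigma^0$, by testing $X=t\diad{v}{v}$ with $v\in\ker\sigma$, $\inner{v}{\rho v}>0$; cf. Corollary \ref{cor:fdiv infty}); together with Fatou applied to the $\lambda$-integral this gives $\liminf_n S_f(\rho+L_n\|\sigma+L_n)\ge S_f(\rho\|\sigma)$ and settles the case $S_f(\rho\|\sigma)=+\infty$ for free. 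For the reverse inequality, needed only when $S_f(\rho\|\sigma)<+\infty$, finiteness forces $\rho^0\le\sigma^0=:P$, so $\rho=P\rho P$, and a block/Schur-complement computation with respect to $P$ shows that the off-support contribution of $(\rho+L_n)^2(\sigma+L_n)^{-1}$ is $O(\norm{L_n})$ while the on-support part converges to $\Tr\rho\sigma^{-1}\rho$; carrying this through the $\lambda$-integral in the $\limsup$ direction is handled by dominated convergence, using the growth control $\int_{[0,+\infty)}(1+s)^{-1}\,d\lambda(s)<+\infty$ from \eqref{F-2.3} to produce a majorant. The main obstacle is exactly this reverse inequality for the singular terms combined with the $\lim_n$--$\int d\lambda$ interchange: unlike in part (i), here $P_f$ is neither bounded nor continuous up to the boundary, so one must track by hand the behaviour on $\ker\sigma$ (where the perturbed values are finite but large) and secure a $\lambda$-integrable majorant, whereas the affine terms, the regular integrand $g_s$ with $s>0$, and the whole lower-semicontinuity half are routine given part (i) and the representation \eqref{F-2.3}.
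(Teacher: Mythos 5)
Your part (i) is correct: writing $S_f(\rho\|\sigma)=\inner{I}{P_f(L_\rho,R_\sigma)I}_\HS$ via \eqref{standard fdiv1}, extending $P_f$ continuously to the closed quadrant, and passing to the limit by Stone--Weierstrass and norm-continuity of polynomial functional calculus is equivalent in substance to the paper's argument (which instead tracks convergence of eigenvalues and spectral projections); both rest on the same key fact, the continuity of $P_f$ on $[0,+\infty)\times[0,+\infty)$ when $f(0^+)$ and $f'(+\infty)$ are finite. In part (ii), your decomposition into affine $+\,cx^2+dx^{-1}+\int_{(0,+\infty)}\psi_s\,d\lambda(s)$ is exactly the paper's \eqref{Eq-1}, and your \emph{lower} bound is sound and in fact cleaner than the paper's: the variational formula $S_{f_2}(\rho\|\sigma)=\sup_X\bigl\{2\,\mathrm{Re}\Tr(\rho X)-\Tr(X^*\sigma X)\bigr\}$ exhibits $S_{f_2}$ as jointly lower semicontinuous on all of $\BH_+\times\BH_+$ (the paper instead uses pinching monotonicity, which needs the special structure of the perturbation), the transpose trick handles $x^{-1}$, and Fatou handles the integral.

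The genuine gaps are in your $\limsup$ direction. First, ``finiteness forces $\rho^0\le\sigma^0$'' is false: by Corollary \ref{cor:fdiv infty}, if $f(0^+)=+\infty$ and $f'(+\infty)<+\infty$, finiteness forces $\sigma^0\le\rho^0$ instead. This is repairable, but only by running the support analysis \emph{per term}: the $x^2$ term occurs only when $c>0$, hence $f'(+\infty)=+\infty$, hence $\rho^0\le\sigma^0$; the $x^{-1}$ term only when $\lambda(\{0\})>0$, hence $f(0^+)=+\infty$, hence $\sigma^0\le\rho^0$ and one transposes. Second, and more seriously, the majorant you invoke for reverse Fatou cannot exist in the generality you claim: since by part (i) $S_{\psi_s}(\rho+L_n\|\sigma+L_n)\to S_{\psi_s}(\rho\|\sigma)$, and $S_{\psi_s}(\rho\|\sigma)\ge\frac1s\Tr(I-\rho^0)\sigma$, any majorant uniform in $n$ must blow up like $1/s$ as $s\searrow0$ whenever $\sigma^0\nleq\rho^0$; its $\lambda$-integrability then requires $\int_{(0,1)}s^{-1}\,d\lambda(s)<+\infty$, equivalently $f(0^+)<+\infty$ (see the discussion after \eqref{F-2.3}), and this does \emph{not} follow from the growth control $\int(1+s)^{-1}\,d\lambda(s)<+\infty$ that you cite. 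Again the needed finiteness is exactly what Corollary \ref{cor:fdiv infty} supplies in the relevant case, so your plan is completable, but only after this case analysis plus the (nontrivial, positivity-of-$L_n$-using) Schur-complement bound to make $\sup_n S_{(x-1)^2}(\rho+L_n\|\sigma+L_n)$ finite. You should know that the paper disposes of the entire $\limsup$ direction in one line: joint convexity and homogeneity (Proposition \ref{P-3.6}, Remark \ref{rem:subadd}) give $S_f(\rho+L_n\|\sigma+L_n)\le S_f(\rho\|\sigma)+S_f(L_n\|L_n)=S_f(\rho\|\sigma)+f(1)\Tr L_n$; applied to each $\psi_s$ (which satisfies $\psi_s(1)=0$) the same inequality would also hand you the majorant $S_{\psi_s}(\rho\|\sigma)$ for free, so incorporating it would collapse most of your difficulties.
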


We give the proof of the above proposition, and further observations about the continuity properties of the standard $f$-divergences, in Appendix \ref{sec:cont}. We remark that in the proof of \ref{fdiv cont ii} of the above proposition, we will use the joint convexity property given in Proposition \ref{P-3.6} below.

\begin{remark}
Note that \ref{fdiv cont i} of the above proposition can be reformulated as follows: When $f$ is a continuous function on 
$(0,+\infty)$ such that both $f(0^+)$ and $f'(+\infty)$ are finite, then 
\begin{align*}
(\rho,\sigma)\mapsto S_f(\rho\|\sigma)\ds\ds\text{is continuous on}\ds\ds \B(\hil)_+\times\B(\hil)_+.
\end{align*}
\end{remark}

\medskip

The most important properties of $f$-divergences are their joint convexity and monotonicity
under stochastic maps when $f$ is operator convex. These properties follow immediately from the results of \cite{P86,HMPB}, 
even though our definition of $f$-divergences in this paper is slightly 
more general than in \cite{P86,HMPB}.

\begin{prop}\label{P-3.6}
Let $f:\,(0,+\infty)\to\bR$ be operator convex.
$S_f(\rho\|\sigma)$ is jointly convex in $\rho,\sigma\in\BH_+$, i.e., for every
$\rho_i,\sigma_i\in\BH_+$ and $\lambda_i\ge0$ for $1\le i\le k$,
\begin{align}\label{convexity}
S_f\Biggl(\sum_{i=1}^k\lambda_i\rho_i\Bigg\|\sum_{i=1}^k\lambda_i\sigma_i\Biggr)
\le\sum_{i=1}^k\lambda_iS_f(\rho_i\|\sigma_i).
\end{align}
\end{prop}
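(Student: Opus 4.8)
The plan is to reduce the joint convexity of $S_f$ to the joint operator convexity of the non-commutative perspective, which is precisely Lemma \ref{lemma:persp properties}(1). First I would treat the case of invertible $\rho_i,\sigma_i$, where by \eqref{qe def} one has $S_f(\rho\|\sigma)=\inner{I}{\per{f}(L_\rho,R_\sigma)I}_{\HS}$, the perspective being taken on the Hilbert space $\BH$ equipped with the Hilbert--Schmidt inner product, on which $L_\rho$ and $R_\sigma$ are commuting positive invertible operators. The two structural facts I would exploit are that $\rho\mapsto L_\rho$ and $\sigma\mapsto R_\sigma$ are linear, so that $L_{\sum_i\lambda_i\rho_i}=\sum_i\lambda_iL_{\rho_i}$ and $R_{\sum_i\lambda_i\sigma_i}=\sum_i\lambda_iR_{\sigma_i}$, and that $X\mapsto\inner{I}{XI}_{\HS}$ is a positive linear functional on $\B(\BH)$, hence preserves operator inequalities. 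Thus, once I have the operator inequality $\per{f}(L_{\sum_i\lambda_i\rho_i},R_{\sum_i\lambda_i\sigma_i})\le\sum_i\lambda_i\per{f}(L_{\rho_i},R_{\sigma_i})$, applying this functional yields \eqref{convexity}.

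To obtain that operator inequality I would invoke Lemma \ref{lemma:persp properties}(1), which (since $f$ is operator convex) gives joint operator convexity of $\per{f}$ on $\BH_{++}\times\BH_{++}$, applied with the underlying Hilbert space taken to be $\BH$ itself. This directly covers the convex-combination case $\sum_i\lambda_i=1$. For general $\lambda_i\ge0$, I would combine it with the positive homogeneity of degree one of the perspective, namely $\per{f}(tA,tB)=t\,\per{f}(A,B)$ for $t>0$: writing $\Lambda:=\sum_i\lambda_i$ (the case $\Lambda=0$ being trivial), factor out $\Lambda$, apply the convex-combination inequality to the renormalized coefficients $\lambda_i/\Lambda$, and factor $\Lambda$ back in. This upgrades joint operator convexity to the subadditive form needed for arbitrary nonnegative weights.

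Finally I would remove the invertibility assumption using the defining limit \eqref{F-3.6} together with Proposition \ref{P-3.2}. For each $\eps>0$ the inequality already holds for $\rho_i+\eps I,\sigma_i+\eps I$; letting $\eps\searrow0$, the right-hand side converges to $\sum_i\lambda_iS_f(\rho_i\|\sigma_i)$ termwise by \eqref{F-3.6}. The one point requiring care is the left-hand side: here $\sum_i\lambda_i(\rho_i+\eps I)=\bigl(\sum_i\lambda_i\rho_i\bigr)+\Lambda\eps I$, so the perturbation is $\Lambda\eps I$ rather than $\eps I$; since $\Lambda\eps\searrow0$ this still converges to $S_f\bigl(\sum_i\lambda_i\rho_i\big\|\sum_i\lambda_i\sigma_i\bigr)$ by \eqref{F-3.6}, and the inequality passes to the limit. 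I expect the verification that these limits exist and behave correctly to be the only genuinely delicate step, and it is already guaranteed by Proposition \ref{P-3.2}; everything else is an immediate consequence of Lemma \ref{lemma:persp properties}(1) and the linearity of $L$ and $R$.
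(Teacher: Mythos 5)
Your proof is correct, but it takes a genuinely different route from the paper's. The paper disposes of this proposition in one line, citing the joint convexity result of \cite[Corollary 4.7]{HMPB} for the invertible case and then invoking the limiting definition \eqref{F-3.6}; the underlying argument in that reference follows Petz' original operator-Jensen/relative-modular-operator technique. You instead give a self-contained derivation inside this paper's toolkit: you use the identity \eqref{qe def}, $S_f(\rho\|\sigma)=\inner{I}{\per{f}(L_\rho,R_\sigma)I}_{\HS}$, together with the linearity of $\rho\mapsto L_\rho$, $\sigma\mapsto R_\sigma$, the positivity of the vector functional $X\mapsto\inner{I}{X(I)}_{\HS}$ on $\cB(\BH)$, and the joint operator convexity of the non-commutative perspective (Lemma \ref{lemma:persp properties}\,(1), i.e.\ the Effros-type theorem) applied on the finite-dimensional Hilbert space $\K:=\BH$ with the Hilbert--Schmidt inner product, where $L_{\rho}, R_{\sigma}\in\cB(\BH)_{++}$ for invertible $\rho,\sigma$. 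Your upgrade from convex combinations to arbitrary non-negative weights via the homogeneity $\per{f}(tA,tB)=t\,\per{f}(A,B)$ is sound (and consistent with Remark \ref{rem:subadd}), and your handling of the limit is careful and correct: the right-hand side converges termwise in $(-\infty,+\infty]$ by Proposition \ref{P-3.2}, and on the left-hand side the perturbation $\Lambda\eps I$ is still a vanishing multiple of the identity, so \eqref{F-3.6} applies after reparametrization. What your approach buys is independence from the external reference (modulo the cited perspective-convexity lemma), and it makes explicit the conceptual point that the paper only gestures at in \eqref{qe def}: joint convexity of the standard $f$-divergences is an instance of operator convexity of the perspective, evaluated along the commuting pair $(L_\rho,R_\sigma)$. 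What the paper's route buys is brevity and uniformity with \cite{HMPB}, whose Corollary 4.7 also covers the monotonicity statements used elsewhere (e.g.\ in Proposition \ref{P-3.8}).
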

\begin{proof}
Immediate from \cite[Corollary 4.7]{HMPB} and definition \eqref{F-3.6}.
\end{proof}

\begin{remark}\label{rem:subadd}
It is clear from \eqref{F-3.7} that the $f$-divergences have the homogeneity property 
\begin{align*}
S_f(\lambda\rho\|\lambda\sigma)=\lambda S_f(\rho\|\sigma),\ds\ds\ds \lambda\ge0,\ds\rho,\sigma\in\B(\hil)_+.
\end{align*}
Hence, \eqref{convexity} is equivalent to the joint subadditivity
\begin{align*}
S_f\Biggl(\sum_{i=1}^k\rho_i\Bigg\|\sum_{i=1}^k\sigma_i\Biggr)
\le\sum_{i=1}^k S_f(\rho_i\|\sigma_i).
\end{align*}
In particular, it is not necessary that the $\lambda_i$'s sum up to $1$ in \eqref{convexity}.
\end{remark}

The monotonicity property of $f$-divergences, first shown by Petz \cite{P86} in a somewhat
restricted setting, was later extended in various ways, e.g., in \cite{LR,TCR,HMPB}. The
following is an easy adaptation of  \cite[Theorem 4.3]{HMPB} to the present setting. 

\begin{prop}\label{P-3.8}
Let $\Phi:\BH\to\BK$ be a trace-preserving linear map such that the adjoint $\Phi^*$ is a
Schwarz contraction (see Section 2.5). Then for every $\rho,\sigma\in\BH_+$, and every operator
convex function $f:\,(0,+\infty)\to\bR$,
\begin{align}\label{fdiv monotonicity}
S_f(\Phi(\rho)\|\Phi(\sigma))\le S_f(\rho\|\sigma).
\end{align}
\end{prop}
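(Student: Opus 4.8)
The plan is to reduce the monotonicity statement to the already-established result \cite[Theorem 4.3]{HMPB} via the approximation scheme built into Definition \ref{D-3.1}, handling the two-sided limit carefully. The key difficulty is that our $f$-divergence is defined only as a limit $S_f(\rho\|\sigma)=\lim_{\eps\searrow 0}S_f(\rho+\eps I\|\sigma+\eps I)$, and the function $f$ need only be operator convex on $(0,+\infty)$ with possibly infinite $f(0^+)$ or $f'(+\infty)$, whereas the cited monotonicity theorem is stated for invertible arguments (or for functions on $[0,+\infty)$). So the whole point is to pass from the regularized, invertible-argument inequality to the general one by taking limits on both sides, which is where Proposition \ref{prop:standard fdiv cont} does the heavy lifting.

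First I would fix $\eps>0$ and consider the strictly positive operators $\rho+\eps I,\sigma+\eps I\in\BH_{++}$. Since $\Phi^*$ is a unital (because $\Phi$ is trace-preserving) Schwarz contraction, \cite[Theorem 4.3]{HMPB} applies to give
\begin{align*}
S_f\bigl(\Phi(\rho+\eps I)\,\big\|\,\Phi(\sigma+\eps I)\bigr)\le S_f(\rho+\eps I\|\sigma+\eps I).
\end{align*}
By linearity of $\Phi$, the left-hand side equals $S_f\bigl(\Phi(\rho)+\eps\Phi(I)\,\|\,\Phi(\sigma)+\eps\Phi(I)\bigr)$. Because $\Phi$ is trace-preserving and positive, $\Phi(I)\in\BK_+$ and, with $\Phi^*$ unital, one checks $\Tr\Phi(I)=\Tr\Phi^*(I)\cdot(\dots)$—more simply, $\Phi(I)$ is a fixed positive operator independent of $\eps$. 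The right-hand side converges to $S_f(\rho\|\sigma)$ by definition \eqref{F-3.6}, so it remains only to show the left-hand side converges to $S_f(\Phi(\rho)\|\Phi(\sigma))$.

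The convergence of the left-hand side is exactly the situation addressed by Proposition \ref{prop:standard fdiv cont}\ref{fdiv cont ii}: setting $L_n:=\eps_n\Phi(I)$ for a sequence $\eps_n\searrow 0$, we have $L_n\in\BK_+$, $L_n\to 0$, and $\Phi(\rho)+L_n,\Phi(\sigma)+L_n$ are strictly positive once $\eps_n$ is large enough relative to the support complement—or at least we may add a further $\delta I$ and use a diagonal argument. The cited continuity result (which requires only operator convexity of $f$, with no restriction on $f(0^+)$ or $f'(+\infty)$) gives
\begin{align*}
\lim_{n\to\infty}S_f\bigl(\Phi(\rho)+L_n\,\|\,\Phi(\sigma)+L_n\bigr)=S_f(\Phi(\rho)\|\Phi(\sigma)).
\end{align*}
Passing to the limit $\eps_n\searrow 0$ on both sides of the regularized inequality then yields \eqref{fdiv monotonicity}.

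The main obstacle I anticipate is the bookkeeping around $\Phi(I)$: it need not equal $I_{\kil}$ (that would require $\Phi$ to be unital as well as trace-preserving), so the perturbation on the output side is $\eps\Phi(I)$ rather than $\eps I_{\kil}$, and $\Phi(\rho)+\eps\Phi(I)$ need not be invertible even for $\eps>0$ if $\Phi(I)$ is not faithful. This is precisely why the more flexible continuity statement in Proposition \ref{prop:standard fdiv cont}\ref{fdiv cont ii}—where the perturbation is an arbitrary vanishing sequence $L_n\in\BK_+$ rather than $\eps I$—is needed instead of the plain definition \eqref{F-3.6}; the remark immediately preceding that proposition flags exactly this non-unital complication. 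Once that continuity input is granted, the argument is a routine two-sided limit, and no further convexity or spectral computation is required beyond what the cited results already supply.
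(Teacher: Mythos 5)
Your proof has a genuine gap at its very first step. You invoke \cite[Theorem 4.3]{HMPB} for the regularized pair $\rho+\eps I,\sigma+\eps I$, but that theorem is stated for operator convex functions on $[0,+\infty)$, i.e., it presupposes $f(0^+)<+\infty$, whereas Proposition \ref{P-3.8} allows $f$ that is operator convex only on $(0,+\infty)$, possibly with $f(0^+)=+\infty$ (e.g.\ $f(x)=x^{-1}$ or $f(x)=-\log x$). This extension of the function class is precisely the new content of the proposition, and regularizing the \emph{operators} does not bring you into the scope of the cited theorem: its hypothesis concerns the domain of $f$, not the invertibility of the arguments, and in any case the outputs $\Phi(\rho)+\eps\Phi(I)$, $\Phi(\sigma)+\eps\Phi(I)$ can still be singular when $\Phi(I)$ is not faithful, so one cannot pretend that only the values of $f$ on a compact subset of $(0,+\infty)$ enter. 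Repairing your route would require real work, e.g.\ decomposing $f$ as in \eqref{Eq-1} into an affine part, $cx^2$, $dx^{-1}$, and $\int_{(0,+\infty)}\psi_s\,d\lambda(s)$, applying \cite[Theorem 4.3]{HMPB} to the pieces that do extend to $[0,+\infty)$, and handling $x^{-1}$ via the transpose symmetry $S_{f_{-1}}(\rho\|\sigma)=S_{f_2}(\sigma\|\rho)$ of Proposition \ref{P-3.4}; none of this is in your proposal.

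The paper avoids all of this by regularizing the \emph{function} instead of the operators: $f_\eps(x):=f(x+\eps)$ is operator convex on $[0,+\infty)$, so \cite[Theorem 4.3]{HMPB} applies verbatim to $f_\eps$ with the original, possibly singular, $\rho,\sigma$ (that theorem covers arbitrary positive semidefinite arguments), and the limits $\lim_{\eps\searrow0}S_{f_\eps}(\rho\|\sigma)=S_f(\rho\|\sigma)$ and $\lim_{\eps\searrow0}S_{f_\eps}(\Phi(\rho)\|\Phi(\sigma))=S_f(\Phi(\rho)\|\Phi(\sigma))$ are read off term by term from the finite sum \eqref{F-3.7}. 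Besides being shorter, this keeps the logic cleaner than your route: you lean on Proposition \ref{prop:standard fdiv cont}\,(ii), whose proof in Appendix \ref{sec:cont} itself invokes Proposition \ref{P-3.8} for $f_2(x)=x^2$ under a pinching (benign, since that instance is covered by \cite{HMPB}, but it makes your dependency chain delicate). Moreover, even granting that continuity input, your verification of its hypothesis is muddled: Proposition \ref{prop:standard fdiv cont}\,(ii) requires $\Phi(\rho)+L_n,\Phi(\sigma)+L_n>0$, which fails for $L_n=\eps_n\Phi(I)$ whenever $\Phi(I)$ is not faithful, and ``$\eps_n$ large enough relative to the support complement'' does not make it true; the correct fix is either to regard $\Phi$ as a map into $\cB(\Phi(I)^0\kil)$, where $\Phi(I)$ is invertible (noting that $S_f$ is unchanged under this restriction), or to perturb further by $\delta_n I$ and compare using the subadditivity of Remark \ref{rem:subadd}.
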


\begin{proof}
For $\eps>0$ let $f_\eps(x):=f(x+\eps)$, $x\ge0$. By \cite[Theorem 4.3]{HMPB} one has
$$
S_{f_\eps}(\Phi(\rho)\|\Phi(\sigma))\le S_{f_\eps}(\rho\|\sigma).
$$
Thanks to expression \eqref{F-3.7} it is straightforward to see that
$$
\lim_{\eps\searrow0}S_{f_\eps}(\rho\|\sigma)=S_f(\rho\|\sigma),
$$
and similarly $\lim_{\eps\searrow0}S_{f_\eps}(\Phi(\rho)\|\Phi(\sigma))=S_f(\Phi(\rho)\|\Phi(\sigma))$, so the
assertion follows.
\end{proof}

\begin{remark}\label{rem:con=mono}
As observed in \cite{LR} (more explicitly, in \cite[Appendix A]{TCR} and
\cite[Proposition E.2]{HP2}), it is known that for a general continuous function $f$ on
$(0,+\infty)$,  the $f$-divergence $S_f$ has the joint convexity property in Proposition
\ref{P-3.6} if and only if it has the monotonicity property under CPTP maps. Indeed, this fact
holds true for different types of quantum divergences; for example, the proof of the monotonicity
under CPTP maps for $D_{\alpha,z}$ given in \eqref{F-1.2} can be reduced to that of the joint
convexity/concavity of $(\rho,\sigma)\mapsto
\Tr(\sigma^{1-\alpha\over2z}\rho^{\alpha\over z}\sigma^{1-\alpha\over2z})^z$
(see \cite{FL13,AD}).
\end{remark}

\begin{remark}\label{rem:positive mon}
It is not known whether in Proposition \ref{P-3.8}, the assumption that $\Phi^*$ is a Schwarz
contraction can be weakened to simply requiring that $\map$ is positive. 
A non-trivial example is when $f(x):=f_2(x):=x^2$, giving the $f$-divergence
$S_{f_2}(\rho\|\sigma)=\Tr\rho^2\sigma\inv$.
Monotonicity of this $f$-divergence under trace-preserving positive maps is a consequence of a
stronger operator inequality (see, e.g., \cite[Lemma 3.5]{HMPB}).
Alternatively, this follows from the more general statement in Corollary \ref{P-3.20}, by noting
that $S_{f_2}=\maxdiv{f_2}$ (see Example \ref{E-3.18}).
More importantly, it has been pointed out recently in \cite{M-HR} that Beigi's proof for the
monotonicity of the sandwiched R\'enyi divergences \cite{Beigi} yields that the Umegaki
relative entropy \eqref{Umegaki relentr} is monotone under trace-preserving positive maps.
\end{remark}

As with any inequality, it is natural to ask when \eqref{fdiv monotonicity} holds with equality. This problem was first addressed by Petz, who 
considered it in the more general von Neumann algebraic framework \cite{Petz_sufficiency}. When translated to our finite-dimensional setting, his result, given in \cite[Theorem 3]{Petz_sufficiency},
 says that 
for a $2$-positive and trace-preserving $\map:\,\B(\hil)\to\B(\kil)$, and $\rho,\sigma\in\B(\hil)_{++}$,
\begin{align}\label{Petz reversibility}
S_{f_{1/2}}(\map(\rho)\|\map(\sigma))=S_{f_{1/2}}(\rho\|\sigma)\ds\iff\ds
\map_{\sigma}^*(\map(\rho))=\rho,
\end{align}
where $f_{1/2}(x):=-x^{1/2}$ with the corresponding $f$-divergence
$S_{f_{1/2}}(\rho\|\sigma)=-\Tr\rho^{1/2}\sigma^{1/2}$, and $\map_{\sigma}^*$ is the
adjoint of the map $\Phi_\sigma:\,\B(\hil)\to\B(\kil)$ defined by
\begin{equation}\label{F-3.13}
\map_{\sigma}(X)=\map(\sigma)^{-1/2}\map\bz\sigma^{1/2}X\sigma^{1/2}\jz\map(\sigma)^{-1/2},
\ds\ds\ds X\in\B(\hil).
\end{equation}
More explicitly, $\Phi_\sigma^*:\,\B(\kil)\to\B(\hil)$ is given as
\begin{align}\label{rverse-map}
\map_{\sigma}^*(Y):=\sigma^{1/2}\map^*\bz\map(\sigma)^{-1/2}Y\map(\sigma)^{-1/2}\jz
\sigma^{1/2},\ds\ds\ds Y\in\B(\kil).
\end{align}
Since it is easy to check that $\map_{\sigma}^*(\map(\sigma))=\sigma$, the second condition in
\eqref{Petz reversibility} yields the reversibility 
of $\map$ in the sense defined below, while reversibility implies the first condition in
\eqref{Petz reversibility} by a double application of the monotonicity inequality 
\eqref{fdiv monotonicity}.

By comparing (iii) of \cite[Theorem 3]{Petz_sufficiency} with (i) of \cite[Theorem 3.1]{mon_revisited}, one sees that the conditions in \eqref{Petz reversibility}
are further equivalent to the preservation of the Umegaki relative entropy
\begin{align*}
S(\map(\rho)\|\map(\sigma))=S(\rho\|\sigma).
\end{align*}
Moreover, it was stated in \cite[Theorem 2]{JP_survey} (albeit with an incorrect formulation and without a proof) that \eqref{Petz reversibility}
is also equivalent to the preservation of the $f_{\alpha}$-divergences for $0<\alpha<1$, where $f_{\alpha}(x):=x^{\alpha}$.

\begin{remark}
The notation of \cite{Petz_sufficiency,JP,JP_survey} corresponds to ours as 
\begin{align*}
\phi(\cdot)=\Tr\rho(\cdot),\ds\ds
\omega(\cdot)=\Tr\sigma(\cdot),\ds\ds
\alpha=\map^*,\ds\ds
\alpha_{\omega}^*=\map_{\sigma},
\end{align*}
where the first expressions are always from \cite{Petz_sufficiency}, and the second expressions
are our notations.
We remark that (v) and (vi) of \cite[Theorem 3]{Petz_sufficiency} are incorrectly stated
as  $\phi\circ\alpha_{\omega}^*=\phi$ and $\omega\circ\alpha_{\phi}^*=\omega$, respectively; they should be 
$\phi\circ\alpha\circ\alpha_{\omega}^*=\phi$ and $\omega\circ\alpha\circ\alpha_{\phi}^*=\omega$. This correction was given, e.g., in \cite[Theorem 3]{JP}.
\end{remark}

\begin{definition}\label{def:rev}
Let $\map:\,\B(\hil)\to\B(\kil)$ be a trace-preserving positive linear map and $\rho,\sigma\in\B(\hil)_+$. We say that $\Phi$ is {\it reversible} on the pair $\rho,\sigma$ if there
exists a trace-preserving positive linear map $\mapp:\,\B(\kil)\to\B(\hil)$ such that 
$$
\mapp(\map(\rho))=\rho,\ds\ds\ds
\mapp(\map(\sigma))=\sigma.
$$
\end{definition}

\begin{remark}
(1) Note that we only assume positivity of the reverse map $\mapp$ in the above definition, irrespective of the type of positivity of the map $\map$. The reason for this becomes clear from 
\ref{rev cond1}\,$\iff$\,\ref{rev cond2}\,$\iff$\,\ref{rev cond3} in 
Theorem \ref{T-3.12}, where we see that the reversibility condition for
$\map$ on the pair $\rho,\sigma$ is independent of the choice of the type of positivity for
the reverse map; the reversibility conditions with a simply positive reverse map and with a
completely positive one are equivalent.

(2) Note that the right-hand side of \eqref{Petz reversibility} states reversibility with the
reverse map $\map_{\sigma}^*$, except that $\Phi_\sigma^*$ is not necessarily trace-preserving
on the whole $\B(\kil)$. However, its restriction to
$\Phi(\sigma)^0\B(\kil)\Phi(\sigma)^0=\B(\Phi(\sigma)^0\kil)$ is trace-preserving, since
$\Phi_\sigma$ is unital as a map from $\B(\hil)$ to $\B(\Phi(\sigma)^0\kil)$, and it is
easy to extend $\Phi_\sigma^*|_{\Phi(\sigma)^0\B(\kil)\Phi(\sigma)^0}$ to a trace-preserving map
on $\B(\kil)$. We will benefit from this observation in the proof of
\ref{rev cond2}\,$\imp$\,\ref{rev cond3} of Theorem \ref{T-3.12}.

(3) It is easy to see that if $\Phi$ is $n$-positive for some $n\in\bN$ then so is $\Phi_\sigma^*$.
However, if $\map^*$ is a Schwarz contraction, that need not imply that $\map_{\sigma}$ is a 
Schwarz contraction, as was pointed out in \cite[Proposition 2]{Je}.
\end{remark}

A systematic study of the relation between reversibility and the preservation of
$f$-divergences was carried out in \cite{HMPB}, complemented later in \cite{Je} with some
further results. We summarize these results and give some slight extensions and modifications
in the following theorem.

\begin{thm}\label{T-3.12}
Let $\rho,\sigma\in\B(\hil)_+$ be such that $\rho^0\le\sigma^0$, and let $\map:\,\B(\hil)\to\B(\kil)$ be a $2$-positive trace-preserving linear map. Then the following 
\ref{rev cond1}--\ref{rev cond9} are equivalent:
\begin{enumerate}

\item \label{rev cond1}
$\map$ is reversible on $\{\rho,\sigma\}$ in the sense of Definition \ref{def:rev}, i.e.,
there exists a trace-preserving positive map 
$\Psi:\,\B(\kil)\to\B(\hil)$ such that $\Psi(\Phi(\rho))=\rho$,
$\Psi(\Phi(\sigma))=\sigma$.

\item \label{rev cond2}
There exists a trace-preserving map $\Psi:\,\B(\kil)\to\B(\hil)$ such that $\Psi^*$ satisfies the Schwarz inequality and $\Psi(\Phi(\rho))=\rho$,
$\Psi(\Phi(\sigma))=\sigma$.

\item \label{rev cond3}
There exist CPTP maps $\widetilde\Phi:\,\B(\hil)\to\B(\kil)$ and
$\widetilde\Psi:\,\B(\kil)\to\B(\hil)$ such that
$\widetilde\Phi(\rho)=\Phi(\rho)$, $\widetilde\Phi(\sigma)=\Phi(\sigma)$ and
$\widetilde\Psi(\Phi(\rho))=\rho$, $\widetilde\Psi(\Phi(\sigma))=\sigma$.

\item \label{rev cond4}
$S_f(\Phi(\rho)\|\Phi(\sigma))=S_f(\rho\|\sigma)$ for some operator convex function $f$ on
$(0,+\infty)$ such that $f(0^+)<+\infty$ and
\begin{equation}\label{F-3.15}
|\supp\mu_f|\ge
\big|\Sp\bigl(L_\rho R_{\sigma^{-1}}\bigr)\cup\Sp\bigl(L_{\Phi(\rho)}R_{\Phi(\sigma)^{-1}}\bigr)\big|,
\end{equation}
where $\mu_f$ is the measure from the integral representation given in \eqref{F-2.4}.

\item \label{rev cond5}
$S_f(\Phi(\rho)\|\Phi(\sigma))=S_f(\rho\|\sigma)$ for all operator convex functions $f$ on
$[0,+\infty)$.

\item \label{rev cond6}
 $\sigma^0\Phi^*(\Phi(\sigma)^{-z}\Phi(\rho)^{2z}\Phi(\sigma)^{-z})\sigma^0=\sigma^{-z}\rho^{2z} \sigma^{-z}$ for all $z\in\bC$.

\item\label{rev cond7}
 $\sigma^0\Phi^*(\Phi(\sigma)^{-1/2}\Phi(\rho)\Phi(\sigma)^{-1/2})\sigma^0=\sigma^{-1/2}\rho \sigma^{-1/2}$.

\item \label{rev cond8}
 $\Phi_\sigma^*(\Phi(\rho))=\rho$ (and also $\Phi_\sigma^*(\Phi(\sigma))=\sigma$ automatically).
 
\newcounter{szamlalo}
\setcounter{szamlalo}{\value{enumi}}
\end{enumerate}
\begin{enumerate}
\setcounter{enumi}{\value{szamlalo}}
\item \label{rev cond9} $\sigma^{-1/2}\rho \sigma^{-1/2}\in\fix{\Phi^*\circ\Phi_\sigma}$, the set of fixed points of
$\Phi^*\circ\Phi_\sigma$.
\setcounter{szamlalo}{\value{enumi}}
\end{enumerate}

Moreover, when we assume in addition that $\rho,\sigma$ are density operators with
invertible $\sigma$, the above \ref{rev cond1}--\ref{rev cond9} are also equivalent to
\begin{enumerate}
\setcounter{enumi}{\value{szamlalo}}
\item \label{rev cond10} $\bigl\<\Phi(\rho-\sigma),\Omega_{\Phi(\sigma)}^\kappa(\Phi(\rho-\sigma))\bigr\>_\HS
=\<\rho-\sigma,\Omega_\sigma^\kappa(\rho-\sigma)\>_\HS$ for some operator decreasing
function $\kappa:(0,+\infty)\to(0,+\infty)$ such that
$$
|\supp \nu_\kappa|\ge\big|\Sp(L_\sigma R_{\sigma^{-1}})\cup
\Sp\bigl(L_{\Phi(\sigma)}R_{\Phi(\sigma)^{-1}}\bigr)\big|,
$$
where $\Omega_\sigma^\kappa$ is given in \eqref{F-2.11} and $\nu_\kappa$ is the measure from the integral expression in \eqref{F-2.10}.
\end{enumerate}
\end{thm}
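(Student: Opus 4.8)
The plan is to establish the full cycle of equivalences by feeding the weakest reversibility hypothesis \ref{rev cond1} into the equality analysis of the monotonicity inequality \eqref{fdiv monotonicity} developed in \cite{HMPB, Je}. I would first dispose of the trivial implications. A CPTP map $\widetilde\Psi$ has a unital $2$-positive, hence Schwarz, adjoint, so \ref{rev cond3}$\Rightarrow$\ref{rev cond2} (drop $\widetilde\Phi$ and take $\Psi=\widetilde\Psi$); and since positivity is preserved under the Hilbert--Schmidt adjoint, $\Psi^*$ Schwarz forces $\Psi$ positive, giving \ref{rev cond2}$\Rightarrow$\ref{rev cond1}. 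For \ref{rev cond5}$\Rightarrow$\ref{rev cond4} it suffices to exhibit one operator convex $f$ on $[0,+\infty)$ with $f(0^+)<+\infty$ whose measure $\mu_f$ in \eqref{F-2.4} has infinite support; the choice $f=\eta$ works, as $\eta(x)=x\log x$ is operator convex with $\eta(0^+)=0$ and $\mu_\eta$ supported on all of $(0,+\infty)$. Finally \ref{rev cond6}$\Rightarrow$\ref{rev cond7} is the case $z=1/2$.

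The core of the argument is the block \ref{rev cond4}$\Leftrightarrow$\ref{rev cond6}$\Leftrightarrow$\ref{rev cond7}$\Leftrightarrow$\ref{rev cond8}$\Leftrightarrow$\ref{rev cond9}, together with \ref{rev cond8}$\Rightarrow$\ref{rev cond5} and \ref{rev cond8}$\Rightarrow$\ref{rev cond3}, which I would obtain as adaptations of \cite[Theorem 5.1]{HMPB} and the fixed-point characterization of \cite{Je}. The key analytic step is \ref{rev cond4}$\Rightarrow$\ref{rev cond7}: equality in \eqref{fdiv monotonicity} for a single $f$, resolved through the integral representation \eqref{F-2.4}, forces equality for every resolvent frequency in $\mu_f$, and the support condition \eqref{F-3.15} guarantees that these frequencies exhaust the finitely many eigenvalue ratios of $L_\rho R_{\sigma^{-1}}$ and of $L_{\Phi(\rho)}R_{\Phi(\sigma)^{-1}}$, yielding the operator identity at $z=1/2$. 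The equivalences \ref{rev cond7}$\Leftrightarrow$\ref{rev cond8}$\Leftrightarrow$\ref{rev cond9} are algebraic rewritings via the definitions \eqref{F-3.13}--\eqref{rverse-map} of $\Phi_\sigma,\Phi_\sigma^*$ and $\rho^0\le\sigma^0$, with Lemmas \ref{lemma:mult domain fcalculus} and \ref{lemma:fix-mult} governing the passage to the multiplicative domain; the same multiplicative-domain argument, followed by analytic continuation in $z$, upgrades \ref{rev cond7} to \ref{rev cond6}. For \ref{rev cond8}$\Rightarrow$\ref{rev cond5}, the Petz map $\Phi_\sigma^*$ recovers both $\rho$ and $\sigma$, so a double application of monotonicity preserves every operator convex $f$; for \ref{rev cond8}$\Rightarrow$\ref{rev cond3}, since $\Phi_\sigma^*$ is only guaranteed to be $2$-positive, I would invoke the structural decomposition (Theorem \ref{thm:decomposition}, building on \cite{HJPW, MP}) to produce genuine CPTP maps $\widetilde\Phi,\widetilde\Psi$.

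The genuinely new step, and the one I expect to be the main obstacle, is closing the loop at \ref{rev cond1}$\Rightarrow$\ref{rev cond4}: the reverse map $\Psi$ is assumed only positive, whereas the analysis above enters through preservation of an $f$-divergence, and monotonicity under merely positive trace-preserving maps is unavailable for general operator convex $f$ (Remark \ref{rem:positive mon}). I would circumvent this by routing through the Umegaki relative entropy, for which monotonicity under positive trace-preserving maps \emph{is} known by the recent results of \cite{Beigi, M-HR}. Explicitly, with $\Psi$ positive trace-preserving and $\Phi^*$ a Schwarz contraction, the sandwich
\begin{align*}
S(\rho\|\sigma)=S\bz\Psi(\Phi(\rho))\,\|\,\Psi(\Phi(\sigma))\jz
\le S\bz\Phi(\rho)\|\Phi(\sigma)\jz\le S(\rho\|\sigma),
\end{align*}
finite because $\rho^0\le\sigma^0$, forces $S(\Phi(\rho)\|\Phi(\sigma))=S(\rho\|\sigma)$; here the first inequality is monotonicity under $\Psi$ from \cite{M-HR} and the second is Proposition \ref{P-3.8}. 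Since $S=S_\eta$ and $\eta$ already satisfies the requirements in \ref{rev cond4}, this gives \ref{rev cond4} and closes the cycle.

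Finally, under the extra hypothesis that $\rho,\sigma$ are density operators with $\sigma$ invertible, I would attach \ref{rev cond10} by an argument parallel to \ref{rev cond4}$\Leftrightarrow$\ref{rev cond7}: using the integral representation \eqref{F-2.10} of $\kappa$ and the description \eqref{F-2.11} of $\Omega_\sigma^\kappa$, equality of the monotone-metric quadratic form at $X=\rho-\sigma$ forces equality at each frequency in $\nu_\kappa$, and the support condition on $\nu_\kappa$ again makes these exhaust the relevant spectrum, returning the identity \ref{rev cond7}. The reverse direction follows since reversibility preserves every monotone metric, by the defining monotonicity property in \cite{Petz_monotone}.
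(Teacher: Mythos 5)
Your proposal is correct, and on the genuinely new content it coincides with the paper's own proof: the step you identify as the main obstacle, \ref{rev cond1}\,$\imp$\,\ref{rev cond4} with a merely positive reverse map, is resolved in the paper exactly by your sandwich argument, namely
$S(\rho\|\sigma)=S(\mapp(\map(\rho))\|\mapp(\map(\sigma)))\le S(\map(\rho)\|\map(\sigma))\le S(\rho\|\sigma)$,
using monotonicity of the Umegaki relative entropy under positive trace-preserving maps from \cite{Beigi,M-HR} (Remark \ref{rem:positive mon}) together with the observation that $\mu_\eta$ is the Lebesgue measure on $(0,+\infty)$, so $f=\eta$ satisfies \eqref{F-3.15}. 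The remaining equivalences are, in both your write-up and the paper, assembled from \cite[Theorem 5.1]{HMPB} and \cite{Je}; the organizational differences are minor. You derive \ref{rev cond3} from \ref{rev cond8} via the structural decomposition of Theorem \ref{thm:decomposition}, whereas the paper proves \ref{rev cond2}\,$\imp$\,\ref{rev cond3} by cutting down to $\sigma^0\B(\hil)\sigma^0$ and $\map(\sigma)^0\B(\kil)\map(\sigma)^0$ and invoking \cite[Theorem 4]{Je}, then re-extending by a rank-one dump term; and for \ref{rev cond10} you sketch a direct resolvent-frequency argument, whereas the paper simply cites \cite[Proposition 4]{Je}. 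Both variants work; yours is somewhat more self-contained, the paper's shorter.

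One caution: your mechanism for upgrading \ref{rev cond7} to \ref{rev cond6}, ``the multiplicative-domain argument followed by analytic continuation in $z$,'' is under-specified as stated. The operator $\map(\sigma)^{-z}\map(\rho)^{2z}\map(\sigma)^{-z}$ is not a function of $\map_\sigma\bz\sigma^{-1/2}\rho\sigma^{-1/2}\jz$, and knowing an entire operator-valued function vanishes at the single point $z=1/2$ gives nothing by continuation alone. Two clean routes are available: the paper's, which starts from the intermediate formula
$\sigma^0\map^*\bz\map(\sigma)^{-z}\map(\rho)^z\jz=\sigma^{-z}\rho^z$ for all $z\in\bC$
(condition (vi) of \cite[Theorem 5.1]{HMPB}, where the analytic continuation genuinely lives) and then inserts $Y=\map(\rho)^z\map(\sigma)^{-z}$; or the structural one, since under Theorem \ref{thm:decomposition} both sides of \ref{rev cond6} decompose blockwise as $\bigoplus_k\sigma_k^{-z}\rho_k^{2z}\sigma_k^{-z}\otimes I_{1,k,R}$ and agree for every $z$, with no continuation needed. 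Since you anyway defer the core block to an adaptation of \cite{HMPB}, this is a presentational repair rather than a genuine gap.
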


\begin{proof}
The equivalence of \ref{rev cond2}, \ref{rev cond4}, \ref{rev cond5}, and \ref{rev cond8} is in
\cite[Theorem 5.1]{HMPB}, and \ref{rev cond3}\,$\imp$\,\ref{rev cond2}\,$\imp$\,\ref{rev cond1}
is trivial. By Remark \ref{rem:positive mon}, \ref{rev cond1} yields that
\begin{align*}
S(\rho\|\sigma)=S(\mapp(\map(\rho))\|\mapp(\map(\rho)))\le S(\map(\rho)\|\map(\sigma))\le S(\rho\|\sigma)
\end{align*}
for $S=S_\eta$ with $\eta(x):=x\log x$. Since 
\begin{align*}
x\log x=\int_{(0,+\infty)}\biggl({x\over1+s}-{x\over x+s}\biggr)\,ds,
\end{align*}
we see that $\mu_f$ is the Lebesgue measure on $(0,+\infty)$, and hence 
\ref{rev cond1}\,$\imp$\,\ref{rev cond4} follows.

Next assume that \ref{rev cond2} holds, and consider the maps
$\Phi_0:\,\cB(\sigma^0\cH)=\sigma^0\BH\sigma^0\to\cB(\Phi(\sigma)^0\cK)=\Phi(\sigma)^0\BK\Phi(\sigma)^0$ and $\Psi_0:\,\cB(\Phi(\sigma)^0\cK)\to\cB(\sigma^0\cH)$ given by
$$
\Phi_0:=\Phi|_{\sigma^0\B(\hil)\sigma^0},\qquad
\Psi_0(Y):=\sigma^0\Psi(Y)\sigma^0,\quad Y\in\Phi(\sigma)^0\BK\Phi(\sigma)^0.
$$
Then it is easy to see that $(\map_0)^*$ and $(\map_0)_{\sigma}$ are unital $2$-positive maps,
and hence Schwarz contractions, and $(\mapp_0)^*$ is a Schwarz contraction; moreover,
\ref{rev cond2} is satisfied for $(\Phi_0,\rho,\sigma,\Psi_0)$ in
place of $(\Phi,\rho,\sigma,\Psi)$. Hence we can use \cite[Theorem 4]{Je} to conclude that
there exist CPTP maps $\widetilde\Phi_0:\,\cB(\sigma^0\cK)\to\cB(\Phi(\sigma)^0\cK)$ and
$\widetilde\Psi_0:\,\cB(\Phi(\sigma)^0\cK)\to\cB(\sigma^0\cK)$ such that
$\widetilde\Phi_0(\rho)=\Phi(\rho)$, $\widetilde\Phi_0(\sigma)=\Phi(\sigma)$ and
$\widetilde\Psi_0(\Phi_0(\rho))=\rho$, $\widetilde\Psi_0(\Phi_0(\sigma))=\sigma$. Define
CPTP maps $\widetilde\Phi:\,\BH\to\BK$ and $\widetilde\Psi:\,\BK\to\BH$ by
\begin{align*}
\widetilde\Phi(X)
&:=
\widetilde\Phi_0(\sigma^0X\sigma^0)+\pr{\psi_{\kil}}\cdot\Tr(I-\sigma^0)X,\qquad X\in\BH,\\
\widetilde\Psi(Y)
&:=
\widetilde\Psi_0(\Phi(\sigma)^0Y\Phi(\sigma)^0)
+\pr{\psi_{\hil}}\cdot\Tr(I-\Phi(\sigma)^0)Y,\qquad Y\in\BK,
\end{align*}
where $\psi_{\hil}\in\hil,\,\psi_{\kil}\in\kil$ are unit vectors.
Then \ref{rev cond3} holds for $\widetilde\Phi$ and $\widetilde\Psi$.

It was shown in \cite[Theorem 5.1]{HMPB} that \ref{rev cond4} implies 
\begin{align}\label{intermediate formula}
\sigma^0\map^*\bz\map(\sigma)^{-z}\map(\rho)^z\jz=\sigma^{-z}\rho^z,\ds\ds\ds z\in\bC,
\end{align}
which is condition (vi) of \cite[Theorem 5.1]{HMPB}. The proof of (vi)\,$\imp$\,(x) in
p.\,719 of \cite{HMPB} shows that 
this implies 
\begin{align*}
\sigma^0\map^*\bz\map(\sigma)^{-z}\map(\rho)^z Y\jz\sigma^0=\sigma^{-z}\rho^z\map^*(Y)\sigma^0
\end{align*}
for any $Y\in\B(\kil)$ and any $z\in\bC$. Hence we get \ref{rev cond6} by choosing 
$Y:=\map(\rho)^z\map(\sigma)^{-z}$ and using
$$
\map^*\bz\map(\rho)^z\map(\sigma)^{-z}\jz\sigma^0=\rho^z\sigma^{-z},\ds\ds\ds z\in\bC,
$$
which follows by taking the adjoint of both sides in \eqref{intermediate formula}.
The implication \ref{rev cond6}\,$\imp$\,\ref{rev cond7} is trivial. Even when $\Phi$ is only assumed to be positive, the equivalence \ref{rev cond7}\,$\iff$\,\ref{rev cond8} is a matter of straightforward computation. Thus, it has been shown that \ref{rev cond1}--\ref{rev cond8} are all equivalent.

It is clear that \ref{rev cond9} implies \ref{rev cond7}, and it is easily verified
by using Theorem \ref{thm:decomposition} that \ref{rev cond8} implies \ref{rev cond9}. Finally,
under the restriction of $\rho,\sigma$ to density operators, the equivalence
\ref{rev cond2}\,$\iff$\,\ref{rev cond10} was given in \cite[Proposition 4]{Je}.
\end{proof}

Note that when $\sigma$ is invertible, the equivalences
\ref{rev cond7}\,$\iff$\,\ref{rev cond8}\,$\iff$\,\ref{rev cond9} hold even when $\map$ is only
assumed to be positive.

Assume that $\map:\,\B(\hil)\to\B(\kil)$ is $2$-positive and trace-preserving and
$\sigma\in\BH_+$.  By the above theorem we have 
\begin{align*}
&\left\{\rho\in\B(\hil)_+:\,\rho^0\le\sigma^0\text{ and }S_f(\map(\rho)\|\map(\sigma))=S_f(\rho\|\sigma)\text{ for all operator convex }f:\,(0,+\infty)\to\bR\right\}\\
&\ds=
\left\{\rho\in\B(\hil)_+:\,\rho^0\le\sigma^0\text{ and }\map\text{ is reversible on }\{\rho,\sigma\}\right\}\\
&\ds=
\fix{\map_{\sigma}^*\circ\map}.
\end{align*}
In the above proof, we have used the following characterization
of $\mathcal{F}_{\Phi_\sigma^*\circ\Phi}$, due to \cite{HMPB,JP,MosonyiPhD,MP}:

\begin{thm}\label{thm:decomposition}
Let $\map:\,\B(\hil_1)\to\B(\hil_2)$ be a $2$-positive trace-preserving map, let $\sigma_1:=\sigma\in\B(\hil_1)_+\setminus\{0\}$, and 
$\sigma_2:=\map(\sigma)$. Then there exist decompositions
$\supp \sigma_m=\bigoplus_{k=1}^r\hil_{m,k,L}\otimes \hil_{m,k,R},\,m=1,2$, 
invertible density operators $\omega_{k}$ on $\hil_{1,k,R}$, 
unitaries $U_k\,:\hil_{1,k,L}\to\hil_{2,k,L}$,
and $2$-positive trace-preserving maps $\eta_k:\,\B(\hil_{1,k,R})\to\B(\hil_{2,k,R})$
such that $\omega_k$ is invertible on $\hil_{1,k,R}$, $\eta_k(\omega_{k})$ is invertible on $\hil_{2,k,R}$, and

\begin{align}
\fix{\map^*\circ\map_{\sigma}}&=\bigoplus_{k=1}^r\B(\hil_{1,k,L})\otimes I_{1,k,R},\label{dec1}\\
\fix{\map_{\sigma}\circ\map^*}&=\bigoplus_{k=1}^r\B(\hil_{2,k,L})\otimes I_{2,k,R},\label{dec2}\\
(\fix{\map_\sigma^*\circ\map})_+&=\bigoplus_{k=1}^r\B(\hil_{1,k,L})_+\otimes\omega_{k},\label{dec3}\\
\map(\rho_{1,k,L}\otimes \rho_{1,k,R})&=U_k \rho_{1,k,L}U_k^*\otimes \eta_k(\rho_{1,k,R}),\label{dec4}\\
\sigma^0\map^*(\rho_{2,k,L}\otimes \rho_{2,k,R})\sigma^0&=U_k^* \rho_{2,k,L}U_k\otimes \eta_k^*(\rho_{2,k,R}),
\label{dec5}
\end{align}
for all $\rho_{m,k,L}\in\B(\hil_{m,k,L}),\,\rho_{m,k,R}\in\B(\hil_{m,k,R})$.
\end{thm}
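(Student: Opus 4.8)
The plan is to recover the whole decomposition from the structure of the fixed-point algebra of the composite Petz recovery map, and then to quote the structure theorem for sufficient channels. First I would compress everything to the supports, exactly as in the proof of Theorem~\ref{T-3.12}: replacing $\hil_m$ by $\supp\sigma_m$ and passing to $\Phi_0:=\Phi|_{\sigma^0\B(\hil_1)\sigma^0}$ and the associated $(\Phi_0)_\sigma$, one may assume that $\sigma$ and $\Phi(\sigma)$ are invertible, that $\Phi_\sigma$ is unital $2$-positive, and that $\Phi^*$ is unital. Then $\Phi^*\circ\Phi_\sigma$ and $\Phi_\sigma\circ\Phi^*$ are unital $2$-positive (hence Schwarz) maps on $\B(\supp\sigma_1)$ and $\B(\supp\sigma_2)$, whose Hilbert--Schmidt adjoints $\Phi_\sigma^*\circ\Phi$ and $\Phi\circ\Phi_\sigma^*$ fix $\sigma$ and $\Phi(\sigma)$ respectively (recall $\Phi_\sigma^*(\Phi(\sigma))=\sigma$).

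Since $\sigma>0$ on its support lies in $\fix{(\Phi^*\circ\Phi_\sigma)^*}=\fix{\Phi_\sigma^*\circ\Phi}$, Lemma~\ref{lemma:fix-mult} shows that $\mathcal{N}:=\fix{\Phi^*\circ\Phi_\sigma}$ is a $C^*$-subalgebra of $\B(\supp\sigma_1)$ (contained in the multiplicative domain), with unit $\sigma^0$. The Wedderburn structure theorem for finite-dimensional $C^*$-algebras then produces a decomposition $\supp\sigma_1=\bigoplus_k\hil_{1,k,L}\otimes\hil_{1,k,R}$ with $\mathcal{N}=\bigoplus_k\B(\hil_{1,k,L})\otimes I_{1,k,R}$, which is \eqref{dec1}; the symmetric argument on the $\hil_2$ side gives \eqref{dec2}. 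For \eqref{dec3} I would use the detailed-balance identity $(\Phi_\sigma^*\circ\Phi)(X)=\sigma^{1/2}(\Phi^*\circ\Phi_\sigma)(\sigma^{-1/2}X\sigma^{-1/2})\sigma^{1/2}$, which yields $\fix{\Phi_\sigma^*\circ\Phi}=\sigma^{1/2}\mathcal{N}\sigma^{1/2}$; together with the known fact (\cite{HMPB,MP}) that the invariant operator $\sigma$ commutes with $\mathcal{N}$, so that $\sigma=\bigoplus_k I_{1,k,L}\otimes\widetilde\omega_k$ with each $\widetilde\omega_k$ invertible on $\hil_{1,k,R}$, this gives $\fix{\Phi_\sigma^*\circ\Phi}=\bigoplus_k\B(\hil_{1,k,L})\otimes\widetilde\omega_k$; normalizing each $\widetilde\omega_k$ to a density operator $\omega_k$ and absorbing the positive scalar into the left factor produces \eqref{dec3}.

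It remains to describe $\Phi$ itself. Because $\mathcal{N}$ lies in the multiplicative domain, $\Phi^*\circ\Phi_\sigma$ restricts to the identity $^*$-homomorphism on $\mathcal{N}$, whence $\Phi|_{\mathcal{N}}$ maps $\mathcal{N}$ $^*$-isomorphically onto $\fix{\Phi_\sigma\circ\Phi^*}=\bigoplus_k\B(\hil_{2,k,L})\otimes I_{2,k,R}$; after matching the blocks this isomorphism is implemented by unitaries $U_k:\hil_{1,k,L}\to\hil_{2,k,L}$. To upgrade this to the genuine tensor factorization \eqref{dec4}, namely $\Phi(\rho_{1,k,L}\otimes\rho_{1,k,R})=U_k\rho_{1,k,L}U_k^*\otimes\eta_k(\rho_{1,k,R})$ with $2$-positive trace-preserving $\eta_k:\B(\hil_{1,k,R})\to\B(\hil_{2,k,R})$, I would invoke the structure theorem for reversible (sufficient) channels of \cite{MP} (cf.\ also \cite{HJPW,HMPB,JP,MosonyiPhD}), whose hypotheses are precisely supplied by the two preceding steps. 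Equation \eqref{dec5} then follows by taking Hilbert--Schmidt adjoints in \eqref{dec4} and using $\Phi^*\circ\Phi_\sigma|_{\mathcal{N}}=\id$. Finally, invertibility of $\omega_k$ is built in, and $\eta_k(\omega_k)$ is invertible on $\hil_{2,k,R}$ because $\Phi(\sigma)=\bigoplus_k I_{2,k,L}\otimes\eta_k(\widetilde\omega_k)$ must have full support $\supp\sigma_2$.

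The main obstacle is this last step: passing from the separate descriptions of $\Phi$ on $\mathcal{N}$ and on the residual degrees of freedom to the honest product form \eqref{dec4}, i.e.\ showing that the $^*$-isomorphic action on the $\B(\hil_{1,k,L})$ factors decouples completely from the channel $\eta_k$ acting on the $\hil_{1,k,R}$ factors. This is exactly the content of the Koashi--Imoto/Mosonyi--Petz analysis, and the cleanest route is to quote \cite{MP} directly rather than reprove it; the role of the fixed-point-algebra computations above is then only to verify that $\mathcal{N}$ and the invariant operator $\sigma$ are in the position required by that theorem.
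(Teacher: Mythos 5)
The paper offers no proof of Theorem \ref{thm:decomposition} at all: it is stated as a known result, with the proof deferred to \cite{HMPB,JP,MosonyiPhD,MP}. So your overall architecture --- do the fixed-point-algebra bookkeeping yourself and quote \cite{MP} for the Koashi--Imoto-type factorization \eqref{dec4} --- is consonant with the paper's own treatment, and a good part of your scaffolding is sound: the compression to supports, the application of Lemma \ref{lemma:fix-mult} showing that $\mathcal{N}:=\fix{\map^*\circ\map_\sigma}$ is a $C^*$-algebra (its adjoint $\map_\sigma^*\circ\map$ fixes the faithful $\sigma$ after compression), and the identity $\fix{\map_\sigma^*\circ\map}=\sigma^{1/2}\mathcal{N}\sigma^{1/2}$ are all correct.

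However, your route to \eqref{dec3} rests on a false claim: the ``known fact'' that $\sigma$ commutes with $\mathcal{N}$, hence $\sigma=\bigoplus_k I_{1,k,L}\otimes\widetilde\omega_k$. This is wrong, and it in fact contradicts the very theorem you are proving: by \eqref{dec3}, $\sigma=\map_\sigma^*(\map(\sigma))\in(\fix{\map_\sigma^*\circ\map})_+$ must have the form $\bigoplus_k\sigma_{k,L}\otimes\omega_k$, i.e., $\sigma$ sits in the \emph{left} tensor factors, and such an operator commutes with $\mathcal{N}=\bigoplus_k\B(\hil_{1,k,L})\otimes I_{1,k,R}$ only when every $\sigma_{k,L}$ is a scalar. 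The simplest counterexample is $\map=\id$ with non-scalar $\sigma$: then $\map_\sigma=\id$, so $\mathcal{N}=\B(\hil)$ (one block, $\hil_{1,1,L}=\hil$, $\hil_{1,1,R}=\bC$), and your claim would force $\sigma\in\bC I$. What is actually true, and what you need, is \emph{modular invariance} rather than commutation: since $\map^*\circ\map_\sigma$ is a unital Schwarz map whose Hilbert--Schmidt adjoint fixes the faithful $\sigma$, the algebra $\mathcal{N}$ is invariant under $X\mapsto\sigma^{it}X\sigma^{-it}$ for all $t\in\bR$ (Takesaki's theorem on $\sigma$-preserving conditional expectations, or a direct finite-dimensional argument). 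This invariance fixes each minimal central projection of $\mathcal{N}$ and splits $\sigma$ as $\bigoplus_k\sigma_{k,L}\otimes\sigma_{k,R}$ with both factors non-trivial in general; then $\sigma^{1/2}\mathcal{N}\sigma^{1/2}=\bigoplus_k\B(\hil_{1,k,L})\otimes\sigma_{k,R}$, which gives \eqref{dec3} after setting $\omega_k:=\sigma_{k,R}/\Tr\sigma_{k,R}$ and absorbing the normalization into the left factor.

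Two smaller slips in the last part. First, the $^*$-isomorphism onto $\fix{\map_\sigma\circ\map^*}$ is implemented by $\map_\sigma|_{\mathcal{N}}$, not by $\map|_{\mathcal{N}}$: $\map$ is not unital, and indeed \eqref{dec4} gives $\map(b_k\otimes I_{1,k,R})=U_kb_kU_k^*\otimes\eta_k(I_{1,k,R})$, which does not lie in $\bigoplus_k\B(\hil_{2,k,L})\otimes I_{2,k,R}$ unless $\eta_k$ is unital. Second, \eqref{dec5} does not follow from \eqref{dec4} merely ``by taking adjoints'': \eqref{dec4} specifies $\map$ only on the diagonal blocks $\B(\hil_{1,k,L}\otimes\hil_{1,k,R})$, while $\sigma^0\map^*(\cdot)\sigma^0$ also pairs against the off-diagonal corners $P_i\B(\hil_1)P_j$, $i\ne j$. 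One needs the additional (standard, but not free) observation that $2$-positivity together with the block-support of $\map(P_i)$ forces $\map(P_iXP_j)=Q_i\map(P_iXP_j)Q_j$, so that these cross terms vanish against output operators supported in a single block.
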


\begin{remark}
Note that the reversibility conditions \ref{rev cond1}--\ref{rev cond3} in Theorem \ref{T-3.12} are symmetric in 
$\rho$ and $\sigma$, while the rest of the equivalent characterizations of reversibility are not. To understand this, 
one should first note that deriving reversibility from the preservation of some $f$-divergence $S_f$ 
(i.e., the implication \ref{rev cond4}\,$\imp$\,\ref{rev cond1})
may only be possible 
if $S_f(\rho\|\sigma)<+\infty$, and the assumptions $\rho^0\le\sigma^0$ and $f(0^+)<+\infty$ guarantee this
(see Corollary \ref{cor:fdiv infty}).
If we assumed instead that $\sigma^0\le\rho^0$ and $f'(+\infty)<+\infty$ then \ref{rev cond4}\,$\imp$\,\ref{rev cond1}
would still hold; the proof of this can be reduced to the one with the original conditions, by using 
Proposition \ref{P-3.4} and noting that $|\supp\mu_{\wtilde f}|=|\supp\mu_f|$.
Of course, in this case $\rho$ and $\sigma$ have to be interchanged in points 
\ref{rev cond6}--\ref{rev cond10}.

There are two more ways to guarantee that $S_f(\rho\|\sigma)<+\infty$.
One is to assume that $\rho^0=\sigma^0$; it is easy to see that in this case we have the implication 
\ref{rev cond4}\,$\imp$\,\ref{rev cond1} even if we do not assume that $f(0^+)<+\infty$ or $f'(+\infty)<+\infty$;
one only has to note that in this case $\supp\mu_f$ in \ref{rev cond4} has to be replaced with $(\supp\lambda)\cap(0,+\infty)$, with $\lambda$ from \eqref{F-2.3}.
On the other hand, we do not know whether \ref{rev cond1} follows from \ref{rev cond4} if we assume that 
both $f(0^+)<+\infty$ and $f'(+\infty)<+\infty$, but we do not require any relation between the supports of $\rho$ and $\sigma$.
\end{remark}


\subsection{Maximal $f$-divergences}
\label{sec:maximal f-div}

In this section we consider in detail the quantum $f$-divergence introduced in
\eqref{maxdiv def}. This version of $f$-divergences was formerly treated in \cite{PR}, and more
recently it was studied in much detail by Matsumoto \cite{Ma}. While Matsumoto's definition,
referred to as the {\it maximal $f$-divergence}, 
is rather different
from that given here, it was shown in \cite[Lemma 4 and Theorem 5]{Ma} that the two definitions
coincide when $\rho^0\le \sigma^0$. Since our starting point here is the operator perspective function, 
we will use the notation $\what S_{f}$ for this family of $f$-divegences, as in \eqref{maxdiv def}, instead of the 
more operationally motivated notation $S_f^{\max}$ in \eqref{maxdiv def2}.

In this section we will always assume that $f$ is operator convex on $(0,+\infty)$. This is primarily to make sense of definition \eqref{F-3.18}; see Remark \ref{rem:op conv necessary}.

\begin{definition}\label{D-3.14}\rm
For invertible $\rho,\sigma\in\BH_+$ define
\begin{align}
\maxdiv{f}(\rho\|\sigma)&:=
\Tr\per{f}(\rho,\sigma)\nonumber\\
&\ =\Tr \sigma f(\sigma^{-1/2}\rho \sigma^{-1/2})
=\<\sigma^{1/2},f(\sigma^{-1/2}\rho \sigma^{-1/2})\sigma^{1/2}\>_\HS.\label{F-3.17}
\end{align}
For general $\rho,\sigma\in\BH_+$ let
\begin{equation}\label{F-3.18}
\maxdiv{f}(\rho\|\sigma):=\lim_{\eps\searrow0}\maxdiv{f}(\rho+\eps I\|\sigma+\eps I).
\end{equation}
\end{definition}

\begin{prop}\label{P-3.15}
\s
\begin{enumerate}
\item[(1)]\label{maxdiv existence}
For every $\rho,\sigma\in\BH_+$ the limit in \eqref{F-3.18} exists in $(-\infty,+\infty]$, and
it is equal to \eqref{F-3.17} for invertible $\rho,\sigma$. 
\item[(2)]\label{maxdiv convexity}
$\maxdiv{f}(\rho\|\sigma)$ is jointly convex in $\rho,\sigma\in\BH_+$.
\item[(3)]\label{maxdiv tranpose}
For every $\rho,\sigma\in\BH_+$,
\begin{align*}
\maxdiv{\widetilde f}(\rho\|\sigma)=\maxdiv{f}(\sigma\|\rho).
\end{align*}
\end{enumerate}
\end{prop}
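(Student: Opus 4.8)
The plan is to prove the three parts of Proposition \ref{P-3.15} by reducing everything to known properties of the operator perspective $\per{f}$, established in Lemma \ref{lemma:persp properties} and Lemma \ref{lemma:transpose perspective}. The main work is in part (1), where I must show that the limit in \eqref{F-3.18} exists in $(-\infty,+\infty]$; parts (2) and (3) will then follow relatively quickly.

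For part (1), I would first observe that for invertible $\rho,\sigma$ the map $\varepsilon\mapsto\maxdiv{f}(\rho+\varepsilon I\|\sigma+\varepsilon I)$ agrees with \eqref{F-3.17} at $\varepsilon=0$ by definition, so the only real content is the existence of the limit as $\varepsilon\searrow 0$ for general (possibly non-invertible) $\rho,\sigma$. The key idea is monotonicity in $\varepsilon$. Since $f$ is operator convex on $(0,+\infty)$, Lemma \ref{lemma:persp properties}(1) gives that $\per{f}$ is jointly operator convex on $\BH_{++}\times\BH_{++}$. Combined with the homogeneity $\per{f}(\lambda A,\lambda B)=\lambda\per{f}(A,B)$, joint convexity upgrades to joint \emph{operator} superadditivity/subadditivity in the appropriate sense; concretely I expect that $\varepsilon\mapsto\per{f}(\rho+\varepsilon I,\sigma+\varepsilon I)$ is monotone non-increasing in the operator order as $\varepsilon\searrow 0$. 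Taking $\Tr$ then yields a monotone net of real numbers, whose limit exists in $(-\infty,+\infty]$. I would verify the monotonicity by writing, for $0<\varepsilon'<\varepsilon$, the pair $(\rho+\varepsilon I,\sigma+\varepsilon I)$ as a convex combination involving $(\rho+\varepsilon' I,\sigma+\varepsilon' I)$ and applying joint convexity, using that $\per{f}(\delta I,\delta I)=\delta f(1)I$ is the genuinely new ingredient controlling the leftover term.

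For part (2), joint convexity of $\maxdiv{f}$ on $\BH_+\times\BH_+$, I would first establish it on the invertible cone: for invertible operators, $\maxdiv{f}(\rho\|\sigma)=\Tr\per{f}(\rho,\sigma)$, and since $\per{f}$ is jointly operator convex by Lemma \ref{lemma:persp properties}(1), applying the (order-preserving, linear) trace functional immediately gives scalar joint convexity. To pass to general $\rho,\sigma\in\BH_+$, I would perturb each argument by $\varepsilon I$, apply the invertible-case inequality to the perturbed operators, and take $\varepsilon\searrow 0$ using the limit from part (1); convexity is preserved under pointwise limits, so the inequality \eqref{convexity} survives.

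For part (3), the transpose identity $\maxdiv{\widetilde f}(\rho\|\sigma)=\maxdiv{f}(\sigma\|\rho)$, I would invoke Lemma \ref{lemma:transpose perspective}, which states $\per{\widetilde f}(A,B)=\per{f}(B,A)$ for invertible $A,B$. Taking traces gives $\maxdiv{\widetilde f}(\rho\|\sigma)=\Tr\per{\widetilde f}(\rho,\sigma)=\Tr\per{f}(\sigma,\rho)=\maxdiv{f}(\sigma\|\rho)$ for invertible $\rho,\sigma$, and the general case follows by the same $\varepsilon\searrow 0$ limiting argument, noting that the $\varepsilon I$-perturbation is symmetric in both arguments. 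The main obstacle I anticipate is the careful justification of the operator-order monotonicity in $\varepsilon$ in part (1)---in particular handling the boundary behaviour where $f(0^+)$ or $f'(+\infty)$ may be $+\infty$, which is precisely why the limit is only claimed in $(-\infty,+\infty]$ rather than in $\bR$; I would treat the finite and infinite cases by splitting off the boundary contributions analogous to the decomposition in \eqref{F-3.7} for the standard $f$-divergence.
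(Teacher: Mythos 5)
Your proposal is correct, and for parts (2) and (3) it is essentially the paper's own argument: trace the joint operator convexity of $\per{f}$ from Lemma \ref{lemma:persp properties}\,(1), respectively apply Lemma \ref{lemma:transpose perspective} to $\rho+\ep I,\sigma+\ep I$, take the trace, and pass to the limit. Where you diverge is in how you extract existence of the limit in part (1). The paper's route is shorter: joint convexity of $\maxdiv{f}$ on $\BH_{++}\times\BH_{++}$ makes $t\mapsto\maxdiv{f}(\rho+tI\|\sigma+tI)$ a convex real function on $(0,+\infty)$ (composition of a jointly convex function with an affine path), and a convex function on an open interval automatically has a limit at the left endpoint in $(-\infty,+\infty]$. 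You instead upgrade convexity to subadditivity via homogeneity and argue by a monotone limit. This works, but one intermediate claim needs correcting: $\per{f}(\rho+\ep I,\sigma+\ep I)$ itself is \emph{not} monotone in $\ep$ in general — for $\rho=\sigma=0$ it equals $\ep f(1)I$, whose direction of monotonicity flips with the sign of $f(1)$. What subadditivity actually yields, for $0<\ep'<\ep$, is
\begin{align*}
\per{f}(\rho+\ep I,\sigma+\ep I)\le \per{f}(\rho+\ep' I,\sigma+\ep' I)+(\ep-\ep')f(1)I,
\end{align*}
i.e.\ monotonicity of the corrected operator $\per{f}(\rho+\ep I,\sigma+\ep I)-\ep f(1)I$ (equivalently, your claim becomes literally true after normalizing $f$ so that $f(1)=0$). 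Since the linear correction vanishes in trace as $\ep\searrow0$, your monotone-net argument then goes through and produces the limit in $(-\infty,+\infty]$; this is precisely the ``leftover term'' you flagged, so the repair is already implicit in your plan. On balance, the paper's convexity-in-$t$ argument is marginally cleaner since it needs no normalization, while your version buys a little extra information: after the linear correction, the limit in \eqref{F-3.18} is approached monotonically.
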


\begin{proof}
The joint convexity of $(\rho,\sigma)\mapsto \maxdiv{f}(\rho\|\sigma)$ on
$\BH_{++}\times\BH_{++}$ follows from that of the perspective function $\per{f}$ given in
Lemma \ref{lemma:persp properties}\,(1). In particular, for every $\rho,\sigma\in\BH_+$ the
real function $t\mapsto \maxdiv{f}(\rho+tI\|\sigma+tI)$ is convex on $(0,+\infty)$, which implies
the existence of the limit in \eqref{F-3.18}, and that it is in $(-\infty,+\infty]$. The last
claim of (1) for invertible $\rho,\sigma$ is obvious, and (2) is immediate from definition
\eqref{F-3.18} and joint convexity on $\BH_{++}\times\BH_{++}$. For (3), applying Lemma
\ref{lemma:transpose perspective} to $\rho_{\ep}:=\rho+\ep I$, $\sigma_{\ep}:=\sigma+\ep I$,
$\ep>0$, taking the trace, and then the limit as $\ep\searrow 0$, we get the assertion by
\eqref{F-3.18}.
\end{proof}

\begin{remark}\label{rem:op conv necessary}
By Proposition \ref{P-A.1} note that the operator convexity of $f$ is a necessary and
sufficient condition for the joint convexity property of $\maxdiv{f}$ as stated in (2) above.
Although the details are not given here, we know that the joint
convexity of $S_f$ (equivalent to the monotonicity under CPTP maps,
see Remark 3.11) implies the operator convexity of $f$, whenever
$f$ is symmetric (i.e., $f=\widetilde f$) or both $f(0^+)$ and $f'(+\infty)$
are finite. However, it is still open whether this is true for a general
function $f$ on $(0,+\infty)$.
\end{remark}

Since $\maxdiv{f}$ arises as the trace of the operator perspective function $\per{f}$,
properties of the former can easily follow from those of the latter. From this point, it is
natural to study the properties of $\per{f}$ in further detail.
Below, we investigate to what extent the formula $\maxdiv{f}(\rho,\sigma)=\Tr\per{f}(\rho,\sigma)$
can be extended to  not necessarily invertible $\rho$ and $\sigma$. For this, we have to investigate whether the perspective function can be extended to not 
necessarily invertible operators. Note that this is not always possible in a natural way, as the following trivial example shows:
\begin{example}
Let $e_1,e_2$ be the canonical basis of $\bC^2$, and $\rho:=\pr{e_1}$, $\sigma:=\pr{e_2}$.
\begin{enumerate}
\item[(1)]
Let $f(x):=x^2$, which is operator convex with $f(0)=0$ and $f'(+\infty)=+\infty$. Then
$\lim_{\ep\searrow 0}\inner{e_1}{\per{f}(\rho+\ep I,\sigma+\ep I)e_1}=\lim_{\ep\searrow 0}(1+\ep)^2/\ep=+\infty$,
and hence $\ep\mapsto \per{f}(\rho+\ep I,\sigma+\ep I)$ does not have a limit as $\ep\searrow 0$.
\item[(2)]
Let $f(x):=1/x$, which is operator convex with $f'(+\infty)=0$ and $f(0^+)=+\infty$. Then 
$\lim_{\ep\searrow 0}\inner{e_2}{\per{f}(\rho+\ep I,\sigma+\ep I)e_2}=\lim_{\ep\searrow 0}(1+\ep)^2/\ep=+\infty$,
and hence $\ep\mapsto \per{f}(\rho+\ep I,\sigma+\ep I)$ does not have a limit as $\ep\searrow 0$.
\end{enumerate}
\end{example}

\begin{prop}\label{prop:persp extension}
Let $\rho,\sigma,\rho_n,\sigma_n\in\B(\hil)_+,\,n\in\bN$, be such that
$\lim_n\rho_n=\rho$ and $\lim_n\sigma_n=\sigma$. In the cases below, the limit
$\lim_{n\to\infty}\per{f}(\rho_n,\sigma_n)$ exists, independently of the choice of
$\rho_n,\sigma_n$, and it coincides with $\per{f}(\rho,\sigma)$  when both $\rho$ and $\sigma$
are invertible.
\begin{enumerate}
\item\label{op conv cond1}
If $f(0^+)<+\infty$, $f'(+\infty)<+\infty$, $\rho\le\rho_n$ and $\sigma\le\sigma_n$, then 
\begin{align*}
\lim_{n\to\infty}\per{f}(\rho_n,\sigma_n)=
f(0)\sigma+f'(+\infty)\rho-\sigma\,\tau_{h_f}\,\rho,
\end{align*}
where a non-negative operator monotone function $h_f$ on $[0,+\infty)$ is given by
$h_f(x):=\int_{(0,+\infty)}x(1+s)(x+s)^{-1}\,d\nu(s)$, $x\ge0$, with $\nu$ the representing
measure from \eqref{F-2.5}.

\item\label{op conv cond2}
If $f(0^+)<+\infty$ and $\sigma>0$, then 
\begin{align}\label{persp limit2}
\lim_{n\to\infty}\per{f}(\rho_n,\sigma_n)=
\sigma^{1/2}f\bz \sigma^{-1/2}\rho\sigma^{-1/2}\jz\sigma^{1/2}.
\end{align}
\item\label{op conv cond3}
If $f'(+\infty)<+\infty$ and $\rho>0$, then
\begin{align*}
\lim_{n\to\infty}\per{f}(\rho_n,\sigma_n)=
\rho^{1/2}\widetilde f\bz \rho^{-1/2}\sigma\rho^{-1/2}\jz\rho^{1/2},
\end{align*}
where $\widetilde f(x):=xf(x\inv)$ is the transpose of $f$.
\end{enumerate}
\end{prop}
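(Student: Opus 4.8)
For Case \ref{op conv cond1}, the plan is to split off the singular behaviour of $f$ at the two endpoints. Since both $f(0^+)$ and $f'(+\infty)$ are finite, I would invoke the integral representation \eqref{F-2.5} to write $f(x)=f(0^+)+f'(+\infty)x-h_f(x)$, where $h_f(x)=\int_{(0,+\infty)}x(1+s)(x+s)^{-1}\,d\nu(s)$ is non-negative operator monotone as a positive superposition of the functions $h_s$. Feeding this scalar identity into the functional calculus for the positive operator $\sigma_n^{-1/2}\rho_n\sigma_n^{-1/2}$ and conjugating by $\sigma_n^{1/2}$ gives, for invertible $\rho_n,\sigma_n$, the exact identity
\[
\per{f}(\rho_n,\sigma_n)=f(0^+)\sigma_n+f'(+\infty)\rho_n-\sigma_n\,\conn{h_f}\,\rho_n,
\]
since $\per{h_f}(\rho_n,\sigma_n)=\sigma_n\,\conn{h_f}\,\rho_n$ by the defining relation $A\,\conn{h_f}\,B=\per{h_f}(B,A)$. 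The first two terms converge to $f(0^+)\sigma$ and $f'(+\infty)\rho$, and the remaining operator-connection term is exactly where the hypotheses $\sigma\le\sigma_n\to\sigma$ and $\rho\le\rho_n\to\rho$ enter: these are precisely the assumptions of Lemma \ref{lemma:op mon extension}, which yields $\sigma_n\,\conn{h_f}\,\rho_n\to\sigma\,\conn{h_f}\,\rho$. Assembling the three limits gives the stated formula (with $f(0)=f(0^+)$), and since the limit of each piece is manifestly determined by $\rho,\sigma$ alone, it is independent of the approximating sequences; taking constant sequences recovers the coincidence with $\per{f}(\rho,\sigma)$ for invertible $\rho,\sigma$.

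Case \ref{op conv cond2} should be a routine continuity argument once $\sigma$ is invertible. As $\sigma_n\to\sigma$ with $\sigma>0$, the $\sigma_n$ are eventually invertible with $\sigma_n^{\pm1/2}\to\sigma^{\pm1/2}$, whence $\sigma_n^{-1/2}\rho_n\sigma_n^{-1/2}\to\sigma^{-1/2}\rho\sigma^{-1/2}$ with all spectra contained in a fixed bounded interval. Because $f(0^+)<+\infty$, the function $f$ extends continuously to $[0,+\infty)$, so continuity of the functional calculus yields $\per{f}(\rho_n,\sigma_n)=\sigma_n^{1/2}f(\sigma_n^{-1/2}\rho_n\sigma_n^{-1/2})\sigma_n^{1/2}\to\sigma^{1/2}f(\sigma^{-1/2}\rho\sigma^{-1/2})\sigma^{1/2}$. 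Here neither monotonicity of the sequences nor finiteness of $f'(+\infty)$ is needed, and independence of the choice of $\rho_n,\sigma_n$ is immediate. I would then obtain Case \ref{op conv cond3} by transposition: Lemma \ref{lemma:transpose perspective} gives $\per{f}(\rho_n,\sigma_n)=\per{\widetilde f}(\sigma_n,\rho_n)$, and $\widetilde f$ is operator convex (Proposition \ref{P-A.2}) with $\widetilde f(0^+)=f'(+\infty)<+\infty$ by \eqref{F-2.1}. Applying Case \ref{op conv cond2} to $\widetilde f$, now with the invertible operator $\rho$ as its second argument, produces $\per{\widetilde f}(\sigma_n,\rho_n)\to\rho^{1/2}\widetilde f(\rho^{-1/2}\sigma\rho^{-1/2})\rho^{1/2}$, which is the asserted limit.

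The main obstacle is the operator-connection term in Case \ref{op conv cond1}. Operator connections (equivalently, perspectives of operator monotone functions) are jointly continuous along \emph{decreasing} sequences but not along arbitrary convergent ones, so without further hypotheses $\lim_n\sigma_n\,\conn{h_f}\,\rho_n$ need not exist, let alone be sequence-independent. The one-sided domination $\rho\le\rho_n$, $\sigma\le\sigma_n$ is exactly the extra input that Lemma \ref{lemma:op mon extension} converts into convergence, and the conceptual crux is to recognize that the integral representation \eqref{F-2.5} isolates precisely this connection term, with the genuinely singular endpoint contributions $f(0^+)\sigma$ and $f'(+\infty)\rho$ already separated out as the harmless affine part. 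This also explains structurally why the monotonicity hypotheses are indispensable in Case \ref{op conv cond1} but can be dropped once one of the operators is invertible, as in Cases \ref{op conv cond2} and \ref{op conv cond3}.
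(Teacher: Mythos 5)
Your proposal is correct and follows essentially the same route as the paper's proof: case \ref{op conv cond1} via the decomposition $f(x)=f(0^+)+f'(+\infty)x-h_f(x)$ from \eqref{F-2.5} combined with Lemma \ref{lemma:op mon extension}, case \ref{op conv cond2} via continuity of the functional calculus for the continuous extension of $f$ to $[0,+\infty)$, and case \ref{op conv cond3} by transposition through Lemma \ref{lemma:transpose perspective} and \eqref{F-2.1}. (Only a trivial slip: operator convexity of $\widetilde f$ is Proposition \ref{P-A.1}, not \ref{P-A.2}, though it is not actually needed—continuity of $\widetilde f$ at $0$ suffices.)
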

\begin{proof}
\ref{op conv cond1}\enspace By \eqref{F-2.5}, $f(x)=f(0)+f'(+\infty)x-h_f(x),\,x\in(0,+\infty)$,
where  $h_f$ is a non-negative operator monotone function, and hence the assertion is immediate
from Lemma \ref{lemma:op mon extension}.

\ref{op conv cond2}\enspace
By the assumption, $f$ extends to a continuous function on $[0,+\infty)$, and thus
\eqref{persp limit2} follows from the continuity of functional calculus.

\ref{op conv cond3}\enspace
By Lemma \ref{lemma:transpose perspective}, we have 
$\per{f}(\rho_n,\sigma_n)=\per{\widetilde f}(\sigma_n,\rho_n)
=\rho_n^{1/2}\widetilde f\bz \rho_n^{-1/2}\sigma_n\rho_n^{-1/2}\jz\rho_n^{1/2}$
for every $n\in\bN$. Since $f'(+\infty)=\widetilde f(0^+)$, the assumption implies that
$\widetilde f$ extends to a continuous function on $[0,+\infty)$, and hence the assertion
follows as in (ii). 
\end{proof}

For applications, the assumptions $\sigma>0$ in \ref{op conv cond2} and $\rho>0$ in
\ref{op conv cond3} are too restrictive. However, we have the following:

\begin{prop}\label{prop:persp extension2}
Let $\rho,\sigma\in\B(\hil)_{+}$, and for every $n\in\bN$, let $K_n\ge 0$ be such that
$\rho+K_n>0$, $\sigma+K_n>0$, and $K_n\to0$.
\begin{enumerate}
\item\label{op conv cond4}
If $f(0^+)<+\infty$ and $\rho^0\le\sigma^0$, then 
\begin{align}\label{persp limit3}
\lim_{n\to\infty}\per{f}(\rho+K_n,\sigma+K_n)=
\sigma^{1/2}f\bz \sigma^{-1/2}\rho\sigma^{-1/2}\jz\sigma^{1/2}.
\end{align}
\item\label{op conv cond5}
If $f'(+\infty)<+\infty$ and $\sigma^0\le\rho^0$, then 
\begin{align}\label{persp limit5}
\lim_{n\to\infty}\per{f}(\rho+K_n,\sigma+K_n)=
\rho^{1/2}\widetilde f\bz \rho^{-1/2}\sigma\rho^{-1/2}\jz\rho^{1/2}.
\end{align}
\item If $\rho^0=\sigma^0$ then both \eqref{persp limit3} and \eqref{persp limit5} hold.
\end{enumerate} 
\end{prop}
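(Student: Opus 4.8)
The plan is to prove the first displayed statement \ref{op conv cond4} directly, deduce the second \ref{op conv cond5} from it by transposition, and obtain the final claim by an additive splitting of $f$. Throughout I write $A_n:=\rho+K_n$ and $B_n:=\sigma+K_n$, and record the decisive structural consequence of the hypotheses: since $K_n\ge0$, we have $\rho\le A_n\to\rho$ and $\sigma\le B_n\to\sigma$, so every operator connection evaluated along these sequences is controlled by Lemma \ref{lemma:op mon extension}, even though $\sigma$ (resp.\ $\rho$) is not invertible.

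For part \ref{op conv cond4} I would first treat the subcase where both $f(0^+)$ and $f'(+\infty)$ are finite. Here Proposition \ref{prop:persp extension}\,\ref{op conv cond1} applies verbatim, its monotonicity hypotheses $\rho\le A_n$, $\sigma\le B_n$ being exactly what $K_n\ge0$ supplies; it gives the limit $f(0)\sigma+f'(+\infty)\rho-\sigma\,\tau_{h_f}\,\rho$. Using \eqref{F-2.5} and $\rho^0\le\sigma^0$, so that $\sigma\,\tau_{h_f}\,\rho=\per{h_f}(\rho,\sigma)$ is supported in $\sigma^0\hil$, a short functional-calculus computation identifies this with the right-hand side of \eqref{persp limit3}. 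For general $f$ with $f(0^+)<+\infty$ I would invoke \eqref{F-2.4} to write $f(x)=f(0^+)+ax+bx^2+\int_{(0,\infty)}g_s(x)\,d\mu_f(s)$ with $g_s(x):=\frac{x}{1+s}-\frac{x}{x+s}$, and pass to the limit in $\per{f}(A_n,B_n)$ term by term by linearity of the perspective. The constant and linear terms are immediate ($\per{1}(A_n,B_n)=B_n\to\sigma$, $\per{x}(A_n,B_n)=A_n\to\rho$); the quadratic term $\per{x^2}(A_n,B_n)=A_nB_n^{-1}A_n$ converges to $\rho\sigma^{-1}\rho$ because the Schur-complement inequality implied by $K_n\ge0$ forces $\sigma^0B_n^{-1}\sigma^0\to\sigma^{-1}$ (here $\rho^0\le\sigma^0$ makes $\rho-\sigma$ live in $\sigma^0\hil$).

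The heart of the matter is the integral term, where I must interchange $\lim_n$ with $\int(\cdot)\,d\mu_f$, and I expect this to be the main technical obstacle. Pointwise in $s$, writing $\per{g_s}(A_n,B_n)=\frac{1}{1+s}\bigl(A_n-\per{h_s}(A_n,B_n)\bigr)$ with $h_s(x)=\frac{x(1+s)}{x+s}$, convergence to $\per{g_s}(\rho,\sigma)$ is immediate from Lemma \ref{lemma:op mon extension}. For the dominating bound I would use the exact identity $\per{g_s}(A_n,B_n)=\frac{1}{(1+s)^2}A_n\,C_n(s)^{-1}(\rho-\sigma)$, where $C_n(s):=\frac{1}{1+s}A_n+\frac{s}{1+s}B_n$. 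Since $\rho-\sigma$ is supported in $\sigma^0\hil$ and $C_n(s)\ge\frac12\sigma$ for $s\ge1$ (so $\|C_n(s)^{-1/2}(\rho-\sigma)\|$ is bounded there), while $C_n(s)\ge\frac{1}{1+s}A_n$ controls the remaining factor, this yields $\|\per{g_s}(A_n,B_n)\|\le \mathrm{const}\cdot(1+s)^{-2}$ for $s\ge1$ together with a uniform bound for $s\le1$; both are $\mu_f$-integrable because $\int(1+s)^{-2}\,d\mu_f<+\infty$ and $\mu_f$ is finite on bounded sets. Dominated convergence then delivers the integral term, and reassembling the four pieces through the functional calculus of $f$ applied to $\sigma^{-1/2}\rho\sigma^{-1/2}$ on $\sigma^0\hil$ yields \eqref{persp limit3}.

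Part \ref{op conv cond5} follows by transposition: by Lemma \ref{lemma:transpose perspective}, $\per{f}(A_n,B_n)=\per{\widetilde f}(B_n,A_n)$, and since $\widetilde f(0^+)=f'(+\infty)<+\infty$ by \eqref{F-2.1} with $\widetilde f$ operator convex, while $\sigma^0\le\rho^0$ is precisely the support hypothesis of part \ref{op conv cond4} for the pair $(\sigma,\rho)$ with the same perturbations $K_n$, applying the already-proven part \ref{op conv cond4} gives \eqref{persp limit5}. Finally, for $\rho^0=\sigma^0$ — where $f$ may have $f(0^+)=f'(+\infty)=+\infty$ — I would split $f=f_a+f_b$ using \eqref{F-2.3}, placing the restriction of the representing measure $\lambda$ to $(1,\infty)$ together with the $c(x-1)^2$ and affine terms into $f_b$, and its restriction to $[0,1]$ into $f_a$. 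The constraint $\int(1+s)^{-1}\,d\lambda<+\infty$ forces $f_b(0^+)<+\infty$ and $f_a'(+\infty)=\lambda([0,1])<+\infty$, so parts \ref{op conv cond4} and \ref{op conv cond5} apply to $f_b$ and $f_a$ respectively (both support hypotheses being equalities now); adding the limits and noting that on $\sigma^0\hil=\rho^0\hil$ the expressions in \eqref{persp limit3} and \eqref{persp limit5} coincide, again via Lemma \ref{lemma:transpose perspective}, completes the argument.
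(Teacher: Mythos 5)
Your architecture is reasonable, and much of it is sound: the subcase where both $f(0^+)$ and $f'(+\infty)$ are finite via Proposition \ref{prop:persp extension}\,(i), the transposition argument for part (ii), and the splitting of the representing measure $\lambda$ at $s=1$ for part (iii) are all correct (the last two coincide with the paper's own proof). Your treatment of the quadratic term is also essentially right: writing $A_nB_n^{-1}A_n=B_n+2(\rho-\sigma)+(\rho-\sigma)B_n^{-1}(\rho-\sigma)$ and noting that $\sigma^0B_n^{-1}\sigma^0$ is the inverse of the Schur complement $\sigma_{11}+K_{11}-K_{12}K_{22}^{-1}K_{21}$, squeezed between $\sigma_{11}$ and $\sigma_{11}+K_{11}$, does give $A_nB_n^{-1}A_n\to\rho\sigma^{-1}\rho$. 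For part (i) your route (integral representation \eqref{F-2.4} plus dominated convergence) is genuinely different from the paper's, which writes $f(x)=\alpha+\beta x+(x-1)h(x)$ with $h$ operator monotone and performs a $2\times2$ block-matrix analysis of $(\rho+K_n)(\sigma+K_n)^{-1}$ and $(\sigma+K_n)\,\tau_h\,(\rho+K_n)$, thereby never needing any estimate uniform in a parameter $s$.

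The gap is exactly at the step you flagged as the main obstacle: your dominating bound does not follow from your stated ingredients. From $C_n(s)\ge\frac{1}{1+s}A_n$ you get $\norm{A_nC_n(s)^{-1/2}}\le\sqrt{(1+s)\norm{A_n}}$, and from $C_n(s)\ge\frac12\sigma$ (for $s\ge1$) together with $\ran(\rho-\sigma)\subseteq\ran\sigma$ you get $\norm{C_n(s)^{-1/2}(\rho-\sigma)}\le\mathrm{const}$; multiplying these gives
\[
\norm{\per{g_s}(A_n,B_n)}\le\frac{1}{(1+s)^2}\,\sqrt{(1+s)\norm{A_n}}\cdot\mathrm{const}
=O\bigl((1+s)^{-3/2}\bigr),
\]
not $O\bigl((1+s)^{-2}\bigr)$: the factor $\norm{A_nC_n(s)^{-1/2}}$ costs an extra $\sqrt{1+s}$ that your two operator inequalities cannot remove. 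This matters because the only standing assumption is $\int_{(0,+\infty)}(1+s)^{-2}\,d\mu_f(s)<+\infty$; for instance $d\mu_f(s)=(1+s)\log^{-2}(2+s)\,ds$ satisfies this while $\int(1+s)^{-3/2}\,d\mu_f(s)=+\infty$, and such a $\mu_f$ arises from a legitimate operator convex $f$ with $f(0^+)<+\infty$ via \eqref{F-2.4}. So dominated convergence cannot be invoked, and the proof of part (i) fails at this point for general $f$. The gap is repairable within your framework: from the identity $\per{g_s}(A_n,B_n)=\frac1s A_n(A_n+sB_n)^{-1}A_n-\frac{1}{s(1+s)}A_n$ and the operator inequality $(A_n+sB_n)^{-1}\le\frac1s B_n^{-1}$ one gets
\[
\norm{\per{g_s}(A_n,B_n)}\le\frac{1}{s^2}\norm{A_nB_n^{-1}A_n}+\frac{1}{s(1+s)}\norm{A_n},
\]
and $\sup_n\norm{A_nB_n^{-1}A_n}<+\infty$ is already available from your quadratic step, which yields a true $O\bigl((1+s)^{-2}\bigr)$ domination on $s\ge1$. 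But as written, the decisive estimate is wrong, and it is precisely the step that distinguishes your approach from the paper's.
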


Since the proof of the above proposition is rather lengthy, we defer it to Appendix
\ref{sec:extension proof}.
Now, we can extend the definition of $\per{f}$ to not necessarily invertible operators in the
following way:

\begin{definition}\label{def:persp extension}
Let $\rho,\sigma\in\B(\hil)_+$, and $f:\,(0,+\infty)\to\bR$ be an operator convex function such
that at least one of the following conditions is satisfied:
\begin{enumerate}
\item $\rho^0=\sigma^0$,
\item
$f(0^+)<+\infty$ and $f'(+\infty)<+\infty$,
\item
$f(0^+)<+\infty$ and $\rho^0\le\sigma^0$,
\item
$f'(+\infty)<+\infty$ and $\sigma^0\le\rho^0$.
\end{enumerate}
Then, we define $\per{f}(\rho,\sigma)$ as
\begin{align*}
\per{f}(\rho,\sigma):=\lim_{n\to\infty}\per{f}(\rho+K_n,\sigma+K_n),
\end{align*}
where $K_n\in\B(\hil)_+$ is any sequence such that $\rho+K_n,\sigma+K_n>0$ for every $n$ and 
$K_n\to0$.
\end{definition}

\begin{cor}\label{cor:maxdiv}
If any of the conditions in Definition \ref{def:persp extension} holds, then we have 
\begin{align*}
\maxdiv{f}(\rho\|\sigma)=\Tr\per{f}(\rho,\sigma).
\end{align*}
\end{cor}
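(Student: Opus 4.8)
The plan is to reduce the statement to a direct comparison of the two limiting definitions, using a common approximating sequence together with the continuity of the trace. The only content of the corollary is that the operator limit defining $\per{f}(\rho,\sigma)$ in Definition \ref{def:persp extension} and the scalar limit defining $\maxdiv{f}(\rho\|\sigma)$ in \eqref{F-3.18} can be matched up along one and the same sequence of perturbations.

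First I would recall that, by Definition \ref{D-3.14} and Proposition \ref{P-3.15}(1), the quantity $\maxdiv{f}(\rho\|\sigma)$ equals $\lim_{\eps\searrow0}\maxdiv{f}(\rho+\eps I\|\sigma+\eps I)$, and that for each $\eps>0$ the arguments $\rho+\eps I,\sigma+\eps I$ are invertible, so that $\maxdiv{f}(\rho+\eps I\|\sigma+\eps I)=\Tr\per{f}(\rho+\eps I,\sigma+\eps I)$ by \eqref{F-3.17}. Moreover this limit is a genuine limit (the proof of Proposition \ref{P-3.15}(1) shows that $t\mapsto\maxdiv{f}(\rho+tI\|\sigma+tI)$ is convex on $(0,+\infty)$), so it may be computed along any sequence $\eps_n\searrow0$.

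Next I would fix such a sequence and set $K_n:=\eps_n I$. Since $\rho+\eps_n I>0$, $\sigma+\eps_n I>0$, and $\eps_n I\to0$, this is an admissible choice in Definition \ref{def:persp extension}; hence, under any of the four listed conditions, the operator limit $\per{f}(\rho,\sigma)=\lim_n\per{f}(\rho+\eps_n I,\sigma+\eps_n I)$ exists as an operator and is independent of the sequence, by Propositions \ref{prop:persp extension} and \ref{prop:persp extension2}. Because $\hil$ is finite-dimensional, the trace is a continuous linear functional on $\BH$, so I may interchange $\Tr$ with this operator limit, giving
\begin{align*}
\Tr\per{f}(\rho,\sigma)=\lim_{n\to\infty}\Tr\per{f}(\rho+\eps_n I,\sigma+\eps_n I)=\lim_{n\to\infty}\maxdiv{f}(\rho+\eps_n I\|\sigma+\eps_n I)=\maxdiv{f}(\rho\|\sigma),
\end{align*}
the last equality being that of the previous paragraph. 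As a by-product this also shows that $\maxdiv{f}(\rho\|\sigma)$ is finite whenever one of the conditions of Definition \ref{def:persp extension} holds.

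I do not expect a genuine obstacle here: the argument is essentially an unwinding of the two definitions along the common diagonal sequence $\eps_n I$ combined with continuity of the trace. The only points needing a moment's care are verifying that $K_n=\eps_n I$ is admissible in Definition \ref{def:persp extension} (immediate) and that the sequential limit along $\eps_n\searrow0$ agrees with the full limit in \eqref{F-3.18} (guaranteed by the convexity established in Proposition \ref{P-3.15}(1)). The substantive work has already been carried out in Propositions \ref{prop:persp extension} and \ref{prop:persp extension2}, which supply the existence and sequence-independence of the operator limit.
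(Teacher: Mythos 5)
Your proof is correct and coincides with the argument the paper leaves implicit: the corollary is stated without proof precisely because it follows by taking $K_n=\eps_n I$ in Definition \ref{def:persp extension}, invoking the existence and sequence-independence of the operator limit from Propositions \ref{prop:persp extension} and \ref{prop:persp extension2}, and using the continuity of the trace in finite dimensions to match this with the scalar limit in \eqref{F-3.18}. Your added care in noting that the limit in \eqref{F-3.18} is a genuine limit (via the convexity established in Proposition \ref{P-3.15}) and so can be evaluated along the sequence $\eps_n\searrow 0$ is exactly the right justification, and the finiteness by-product agrees with Proposition \ref{prop:maxdiv infty}.
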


In complete analogy with Corollary \ref{cor:fdiv infty}, we have the following:

\begin{prop}\label{prop:maxdiv infty}
$\maxdiv{f}(\rho\|\sigma)=+\infty$ if and only if one of the following conditions holds:
\begin{enumerate}
\item
$f(0^+)=+\infty$ and $\sigma^0\nleq\rho^0$;
\item
$f'(+\infty)=+\infty$ and $\rho^0\nleq\sigma^0$.
\end{enumerate}
In all other cases, $\maxdiv{f}(\rho\|\sigma)$ is a finite number.
\end{prop}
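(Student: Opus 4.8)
The plan is to prove the finiteness and the blow-up directions separately, reducing the latter to a single case by transpose symmetry. For the finiteness direction I would prove the contrapositive: if neither (i) nor (ii) holds, i.e.\ if $[f(0^+)<+\infty$ or $\sigma^0\le\rho^0]$ and $[f'(+\infty)<+\infty$ or $\rho^0\le\sigma^0]$, then $\maxdiv f(\rho\|\sigma)<+\infty$. A short case analysis on the mutual position of $\rho^0$ and $\sigma^0$ does the job: if $\rho^0\le\sigma^0$ then either $\rho^0=\sigma^0$, or $\sigma^0\nleq\rho^0$ and hence $f(0^+)<+\infty$, so in both cases a hypothesis of Definition \ref{def:persp extension} holds; the case $\sigma^0\le\rho^0$ is symmetric; and if $\rho^0,\sigma^0$ are incomparable then $\sigma^0\nleq\rho^0$ and $\rho^0\nleq\sigma^0$ force both $f(0^+)<+\infty$ and $f'(+\infty)<+\infty$. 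In each case $\per f(\rho,\sigma)$ is a well-defined bounded operator, and Corollary \ref{cor:maxdiv} yields $\maxdiv f(\rho\|\sigma)=\Tr\per f(\rho,\sigma)<+\infty$.

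For the blow-up direction, by Proposition \ref{P-3.15}\,(3) we have $\maxdiv f(\rho\|\sigma)=\maxdiv{\widetilde f}(\sigma\|\rho)$, and by \eqref{F-2.1} $\widetilde f(0^+)=f'(+\infty)$; thus condition (ii) for $(f,\rho,\sigma)$ is exactly condition (i) for $(\widetilde f,\sigma,\rho)$, so it suffices to prove that (i) implies $\maxdiv f(\rho\|\sigma)=+\infty$. Assume therefore $f(0^+)=+\infty$ and $\sigma^0\nleq\rho^0$. I would use the integral representation \eqref{F-2.3}, $f(x)=f(1)+f'(1)(x-1)+c(x-1)^2+\int_{[0,+\infty)}g_s(x)\,d\lambda(s)$ with $g_s(x):=(x-1)^2/(x+s)\ge0$, recalling that here $f(0^+)=+\infty$ is equivalent to $\int_{[0,+\infty)}s^{-1}\,d\lambda(s)=+\infty$. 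For the invertible pairs $\rho_\eps:=\rho+\eps I$, $\sigma_\eps:=\sigma+\eps I$, functional calculus is linear and commutes with the integral, so dropping the nonnegative quadratic term yields
\[
\maxdiv f(\rho_\eps\|\sigma_\eps)\ge f(1)\Tr\sigma_\eps+f'(1)\Tr(\rho_\eps-\sigma_\eps)+\int_{[0,+\infty)}\Tr\per{g_s}(\rho_\eps,\sigma_\eps)\,d\lambda(s).
\]

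The heart of the argument is a lower bound on the integrand. Since $\sigma^0\nleq\rho^0$ is equivalent to $(I-\rho^0)\sigma(I-\rho^0)\ne0$, I can fix a unit vector $\psi$ with $\rho\psi=0$ and $\delta:=\inner\psi{\sigma\psi}>0$. Writing $g_s(x)=x-(2+s)+(1+s)^2(x+s)^{-1}$, using $\per{(x+s)^{-1}}(\rho_\eps,\sigma_\eps)=\sigma_\eps(\rho_\eps+s\sigma_\eps)^{-1}\sigma_\eps$, and estimating $\inner\psi{\sigma_\eps(\rho_\eps+s\sigma_\eps)^{-1}\sigma_\eps\psi}=\inner{\sigma_\eps\psi}{(\rho_\eps+s\sigma_\eps)^{-1}\sigma_\eps\psi}$ from below through the variational identity $\inner v{A^{-1}v}=\sup_w\{2\,\mathrm{Re}\,\inner vw-\inner w{Aw}\}$ with the trial vector $w=t\psi$, I obtain $\inner\psi{\per{g_s}(\rho_\eps,\sigma_\eps)\psi}\ge \eps-(2+s)(\delta+\eps)+(1+s)^2(\delta+\eps)^2/(\eps+s(\delta+\eps))$, whose limit as $\eps\searrow0$ equals $\delta/s$. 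As $\per{g_s}\ge0$, this gives $\liminf_{\eps\searrow0}\Tr\per{g_s}(\rho_\eps,\sigma_\eps)\ge\delta/s$ for every $s\ge0$. Applying Fatou's lemma to the displayed inequality then gives $\maxdiv f(\rho\|\sigma)\ge(\text{finite})+\delta\int_{[0,+\infty)}s^{-1}\,d\lambda(s)=+\infty$, as required.

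I expect the only genuinely delicate point to be the exchange of the limit $\eps\searrow0$ with the $d\lambda$-integral; the plan circumvents any dominated-convergence issue by needing the inequality in one direction only, which Fatou's lemma supplies together with the explicit $\delta/s$ bound on the $\liminf$ of each integrand. The secondary technical point, namely the legitimacy of the termwise perspective decomposition, is harmless, since for invertible $\rho_\eps,\sigma_\eps$ every summand is a bona fide bounded operator and the norm-convergent integral representation of $f$ passes through the functional calculus.
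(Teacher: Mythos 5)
Your proof is correct. The finiteness half coincides with the paper's own argument: your case analysis shows that the failure of both (i) and (ii) forces one of the hypotheses of Definition \ref{def:persp extension}, after which Propositions \ref{prop:persp extension} and \ref{prop:persp extension2} together with Corollary \ref{cor:maxdiv} give $\maxdiv{f}(\rho\|\sigma)=\Tr\per{f}(\rho,\sigma)<+\infty$. The blow-up half, however, takes a genuinely different route. The paper proves case (ii) directly and obtains (i) by the same transpose symmetry you invoke (just in the opposite direction): it picks a unit vector $\psi$ with $\sigma^0\psi=0$ and $\inner{\psi}{\rho\psi}>0$, bounds the trace of $\per{f}(\rho+\ep I,\sigma+\ep I)$ from below by its diagonal entry at $\psi$, and applies Jensen's inequality for the scalar convex function $f$, which yields the lower bound $\ep f\bz\ep^{-1}\inner{\psi}{\rho\psi}+1\jz\to f'(+\infty)\inner{\psi}{\rho\psi}=+\infty$; no integral representation appears. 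You instead decompose $f$ via \eqref{F-2.3}, use $g_s(x):=(x-1)^2/(x+s)=x-(2+s)+(1+s)^2(x+s)^{-1}$, prove $\liminf_{\ep\searrow0}\Tr\per{g_s}(\rho+\ep I,\sigma+\ep I)\ge\delta/s$ by your variational estimate (the algebra is right: with your $\delta=\inner{\psi}{\sigma\psi}>0$, the trial vector $t\psi$ gives the bound $(\delta+\ep)^2/(\ep+s(\delta+\ep))$, and the resulting expression indeed tends to $\delta/s$), and finish with Fatou's lemma and the equivalence $f(0^+)=+\infty\iff\int_{[0,+\infty)}s^{-1}\,d\lambda(s)=+\infty$ recorded in Section \ref{sec:op convex}. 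The paper's argument is much shorter; yours buys two things. First, the operators you compare against a single matrix element, namely $\per{g_s}(\rho+\ep I,\sigma+\ep I)$, are manifestly positive, so the step $\Tr X\ge\inner{\psi}{X\psi}$ is unconditionally valid, whereas the paper applies this step to $\per{f}$ itself with $f$ possibly taking negative values, which strictly speaking presupposes the removal of the affine part of $f$ that your decomposition performs explicitly. Second, your proof locates the source of the infinity precisely in the divergence of the representing-measure integral $\int_{[0,+\infty)}s^{-1}\,d\lambda(s)$. One formal point: Fatou's lemma should be applied along a sequence $\ep_n\searrow0$; this suffices because Proposition \ref{P-3.15}\,(1) guarantees that the limit in \eqref{F-3.18} exists, so its value can be computed along any single such sequence.
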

\begin{proof}
Assume that $f'(+\infty)=+\infty$ and $\rho^0\nleq\sigma^0$, so that there exists a unit vector
$\psi$ such that $\sigma^0\psi=0$ and $\inner{\psi}{\rho\psi}>0$. For all $\ep>0$,
\begin{align*}
&\Tr (\sigma+\ep I)^{1/2}f\bz(\sigma+\ep I)^{-1/2}(\rho+\ep I)(\sigma+\ep I)^{-1/2}
\jz(\sigma+\ep I)^{1/2}\\
&\ds\ds\ge 
\inner{\psi}{(\sigma+\ep I)^{1/2}f\bz(\sigma+\ep I)^{-1/2}(\rho+\ep I)(\sigma+\ep I)^{-1/2}
\jz(\sigma+\ep I)^{1/2}\psi}\\
&\ds\ds=\ep \inner{\psi}{f\bz(\sigma+\ep I)^{-1/2}(\rho+\ep I)(\sigma+\ep I)^{-1/2} \jz\psi}\\
&\ds\ds\ge\ep f\bz\inner{\psi}{(\sigma+\ep I)^{-1/2}(\rho+\ep I)(\sigma+\ep I)^{-1/2} \psi}\jz\\
&\ds\ds=\ep f\bz \ep\inv\inner{\psi}{(\rho+\ep I) \psi}\jz
=\ep f\bz \ep\inv\inner{\psi}{\rho\psi}+1\jz\\
&\ds\ds=
\frac{f\bz \ep\inv\inner{\psi}{\rho\psi}+1\jz}{\ep\inv\inner{\psi}{\rho\psi}+1}
\bz \inner{\psi}{\rho\psi}+\ep\jz,
\end{align*}
where the second inequality is due to Jensen's inequality. Since the last term converges to
$f'(+\infty)\inner{\psi}{\rho\psi}=+\infty$ as $\ep\searrow 0$,
$\maxdiv{f}(\rho,\sigma)=+\infty$. When $f(0^+)=+\infty$ and $\sigma^0\nleq\rho^0$, the previous
result combined with Proposition \ref{P-3.15}\,(3) yields immediately that
$\maxdiv{f}(\rho,\sigma)=+\infty$. 

Finiteness of $\maxdiv{f}(\rho,\sigma)$ in all other cases is immediate from Propositions
\ref{prop:persp extension} and \ref{prop:persp extension2}.
\end{proof}

\begin{prop}\label{prop:persp monotonicity}
Let $f:\,(0,+\infty)\to\bR$ be operator convex, and $\rho,\sigma\in\B(\hil)_+$ be such that at
least one of the conditions in Definition \ref{def:persp extension} holds.
Then, for any positive linear map 
$\map:\,\B(\hil)\to\B(\kil)$, we have
\begin{align}\label{persp monotonicity}
\per{f}(\map(\rho),\map(\sigma))\le\map\bz\per{f}(\rho,\sigma)\jz.
\end{align}
\end{prop}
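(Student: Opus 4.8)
The plan is to reduce the claim to the case of invertible arguments, establish it there by a single application of the operator Jensen inequality (Lemma~\ref{Choi inequality}) to a suitably normalized unital positive map, and then recover the general statement from the approximation built into Definition~\ref{def:persp extension}. Before doing so I would record two preliminary reductions. First, the left-hand side is well defined: each support condition in Definition~\ref{def:persp extension} satisfied by $(\rho,\sigma)$ is inherited by $(\map(\rho),\map(\sigma))$, because positivity of $\map$ turns $\rho\le C\sigma$ into $\map(\rho)\le C\map(\sigma)$; thus $\rho^0\le\sigma^0$ forces $\map(\rho)^0\le\map(\sigma)^0$, the cases $\rho^0=\sigma^0$ and $\sigma^0\le\rho^0$ are analogous, and condition (2) constrains only $f$. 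Second, since $\map(X)\le\|X\|\,\map(I)$ gives $\supp\map(X)\subseteq\supp\map(I)=:q\kil$ for every $X\in\B(\hil)_+$, the map $\map$ corestricts to $\B(q\kil)$, and the extended perspective of operators supported in $q\kil$ is unchanged by this corestriction (clear from the formulas in Propositions~\ref{prop:persp extension} and~\ref{prop:persp extension2}); hence I may assume from now on that $\map(I)$ is invertible.

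The core is the invertible case. For $A,B\in\B(\hil)_{++}$ I would introduce the map $\Psi(X):=\map(B)^{-1/2}\map(B^{1/2}XB^{1/2})\map(B)^{-1/2}$, which is well defined because $c\,\map(I)\le\map(B)\le C\,\map(I)$ makes $\map(B)$ invertible, and which is positive and unital since $\Psi(I)=\map(B)^{-1/2}\map(B)\map(B)^{-1/2}=I$. Writing $T:=B^{-1/2}AB^{-1/2}>0$, definition~\eqref{F-2.6} gives $B^{1/2}f(T)B^{1/2}=\per{f}(A,B)$, so $\Psi(f(T))=\map(B)^{-1/2}\map(\per{f}(A,B))\map(B)^{-1/2}$, while $\Psi(T)=\map(B)^{-1/2}\map(A)\map(B)^{-1/2}\ge(\min\spec T)\,I$ has spectrum in $(0,+\infty)$. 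Lemma~\ref{Choi inequality} applied to $\Psi$ and $T$ yields $f(\Psi(T))\le\Psi(f(T))$, and conjugating both sides by $\map(B)^{1/2}$ turns this, via~\eqref{F-2.6} again, into exactly $\per{f}(\map(A),\map(B))\le\map(\per{f}(A,B))$.

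It remains to pass to the limit. Applying the invertible case to $A=\rho+\eps I$ and $B=\sigma+\eps I$ gives, for every $\eps>0$,
\[
\per{f}\bigl(\map(\rho)+\eps\map(I),\,\map(\sigma)+\eps\map(I)\bigr)\le\map\bigl(\per{f}(\rho+\eps I,\sigma+\eps I)\bigr).
\]
As $\eps\searrow0$ the right-hand side converges to $\map(\per{f}(\rho,\sigma))$ by Definition~\ref{def:persp extension} and linearity of $\map$. On the left, $K_\eps:=\eps\map(I)$ is an admissible approximating sequence, since $\map(\rho)+K_\eps=\map(\rho+\eps I)\ge\eps\map(I)>0$ and likewise for $\sigma$; because the inherited support condition holds for $(\map(\rho),\map(\sigma))$, Propositions~\ref{prop:persp extension} and~\ref{prop:persp extension2} (together with Definition~\ref{def:persp extension}) give convergence of the left-hand side to $\per{f}(\map(\rho),\map(\sigma))$. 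Passing to the limit in the operator inequality proves the claim.

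I expect the genuine difficulty to lie entirely in the non-invertibility bookkeeping rather than in the algebra: verifying that $\map$ may be corestricted to $\supp\map(I)$ without changing either side, that $\Psi$ is truly unital on the reduced space, and that $K_\eps=\eps\map(I)$ is an admissible perturbation for the \emph{extended} perspective on the output, with the correct case of Proposition~\ref{prop:persp extension2} selected by the inherited support condition. Once $\map(I)$ is arranged to be invertible, the heart of the argument is the single use of Lemma~\ref{Choi inequality}, which is routine.
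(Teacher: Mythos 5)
Your proof is correct and is essentially the paper's own argument: your map $\Psi$ is exactly the paper's $\map_{\sigma_n}$ from \eqref{F-3.13}, the key step is the same application of Lemma \ref{Choi inequality}, and the reduction to $\map(I)^0=I$ followed by the limit $\eps\searrow0$ via Propositions \ref{prop:persp extension}, \ref{prop:persp extension2} and Definition \ref{def:persp extension} matches the paper's proof step for step. Your explicit verification that the support conditions are inherited by $(\map(\rho),\map(\sigma))$ is a detail the paper leaves implicit, but it is the same route.
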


\begin{proof}
The proof below is essentially the same as that of \cite[Proposition 2.5]{HP} (cf.~also
\cite[Lemma 3]{Ma}). By considering $\Phi$ as a map into
$\Phi(I)^0\BK\Phi(I)^0=\cB(\Phi(I)^0\cK)$, we can assume without loss of generality that
$\Phi(I)^0=I$. Let $\rho_n:=\rho+n^{-1}I$ and $\sigma_n:=\sigma+n^{-1}I$. 
Define $\map_{\sigma_n}:\BH\to\cB(\cK)$ by 
$\map_{\sigma_n}(X):=\map(\sigma_n)^{-1/2}\map(\sigma_n^{1/2} X\sigma_n^{1/2})\map(\sigma_n)^{-1/2}$, as in 
\eqref{F-3.13}. Then $\map_{\sigma_n}$ is a unital positive map, and Lemma \ref{Choi inequality}
yields
\begin{align*}
f(\Phi_{\sigma_n}(\sigma_n^{-1/2}\rho_n \sigma_n^{-1/2}))\le
\map_{\sigma_n}(f(\sigma_n^{-1/2}\rho_n \sigma_n^{-1/2})),
\end{align*}
which means that
\begin{align*}
\Phi(\sigma_n)^{1/2}f(\Phi(\sigma_n)^{-1/2}\Phi(\rho_n)\Phi(\sigma_n)^{-1/2})\Phi(\sigma_n)^{1/2}
\le\Phi(\sigma_n^{1/2}f(\sigma_n^{-1/2}\rho_n \sigma_n^{-1/2})\sigma_n^{1/2}),
\end{align*}
i.e., $P_f(\Phi(\rho_n),\Phi(\sigma_n))\le\Phi(P_f(\rho_n,\sigma_n))$. By now using Propositions
\ref{prop:persp extension}, \ref{prop:persp extension2} and Definition \ref{def:persp extension},
taking the limit $n\to\infty$ gives \eqref{persp monotonicity}.
\end{proof}

Now, the monotonicity of $\maxdiv{f}$ follows immediately:

\begin{cor}\label{P-3.20}
Let $\Phi:\BH\to\BK$ be a trace-preserving positive linear map. Then for every
$\rho,\sigma\in\BH_+$,
\begin{align}\label{maxdiv monotonicity}
\maxdiv{f}(\Phi(\rho)\|\Phi(\sigma))\le \maxdiv{f}(\rho\|\sigma).
\end{align}
\end{cor}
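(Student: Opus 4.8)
The plan is to reduce the statement entirely to the operator-level monotonicity of the perspective already established in Proposition \ref{prop:persp monotonicity}, and then simply take traces; the only genuine care needed is to ensure that the trace representation $\maxdiv{f}=\Tr\per{f}$ is available on \emph{both} sides, i.e.\ that each pair of operators in question satisfies one of the conditions of Definition \ref{def:persp extension}. First I would dispose of the trivial case: if $\maxdiv{f}(\rho\|\sigma)=+\infty$ there is nothing to prove, so assume $\maxdiv{f}(\rho\|\sigma)<+\infty$. By Proposition \ref{prop:maxdiv infty} this finiteness forces one of the conditions of Definition \ref{def:persp extension} to hold for $(\rho,\sigma)$: if both $f(0^+)$ and $f'(+\infty)$ are finite this is condition (ii); if $f(0^+)=+\infty$ (so $f'(+\infty)<+\infty$ by our standing assumption) then finiteness forces $\sigma^0\le\rho^0$, which is (iv); symmetrically $f'(+\infty)=+\infty$ forces $\rho^0\le\sigma^0$, which is (iii); and if both limits equal $+\infty$ then finiteness forces $\rho^0=\sigma^0$, which is (i).

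Next I would observe that each of these conditions is preserved when passing from $(\rho,\sigma)$ to $(\Phi(\rho),\Phi(\sigma))$. Condition (ii) is a property of $f$ alone. For the support conditions I would use that in finite dimensions $\rho^0\le\sigma^0$ is equivalent to $\rho\le c\sigma$ for some $c>0$; since $\Phi$ is positive, and hence monotone, this gives $\Phi(\rho)\le c\Phi(\sigma)$, whence $\Phi(\rho)^0\le\Phi(\sigma)^0$, and likewise for the reverse domination and for the equality of supports. Consequently the very same condition of Definition \ref{def:persp extension} that holds for $(\rho,\sigma)$ also holds for $(\Phi(\rho),\Phi(\sigma))$, so Corollary \ref{cor:maxdiv} applies to both pairs and gives $\maxdiv{f}(\rho\|\sigma)=\Tr\per{f}(\rho,\sigma)$ as well as $\maxdiv{f}(\Phi(\rho)\|\Phi(\sigma))=\Tr\per{f}(\Phi(\rho),\Phi(\sigma))$.

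Finally I would invoke Proposition \ref{prop:persp monotonicity}, applicable precisely because $(\rho,\sigma)$ satisfies a condition of Definition \ref{def:persp extension}, to obtain the operator inequality $\per{f}(\Phi(\rho),\Phi(\sigma))\le\Phi\bz\per{f}(\rho,\sigma)\jz$. Taking the trace of both sides, using monotonicity of the trace on self-adjoint operators together with the fact that $\Phi$ is trace-preserving, yields
\[
\maxdiv{f}(\Phi(\rho)\|\Phi(\sigma))=\Tr\per{f}(\Phi(\rho),\Phi(\sigma))\le\Tr\Phi\bz\per{f}(\rho,\sigma)\jz=\Tr\per{f}(\rho,\sigma)=\maxdiv{f}(\rho\|\sigma),
\]
which is exactly \eqref{maxdiv monotonicity}. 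I do not expect a serious obstacle: once Proposition \ref{prop:persp monotonicity} is in hand the argument is essentially trace-and-go, and the only point requiring attention is the bookkeeping above that guarantees the trace representation is legitimate after applying $\Phi$ (equivalently, that the hypotheses of Corollary \ref{cor:maxdiv} and Proposition \ref{prop:persp monotonicity} survive the map). Note also that no positivity of $\per{f}(\rho,\sigma)$ is needed for the middle trace identity, which holds by linearity of $\Phi$ and trace-preservation alone. This bookkeeping is what makes the corollary follow "immediately," and it conveniently sidesteps any need to control the non-unital perturbation $\Phi(\rho)+\eps\Phi(I)$ that a direct $\eps\searrow0$ limit would otherwise introduce.
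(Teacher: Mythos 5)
Your proof is correct and is essentially the paper's own argument: the paper's two-line proof splits on whether some condition of Definition \ref{def:persp extension} holds for $(\rho,\sigma)$ (in which case \eqref{maxdiv monotonicity} is immediate from Proposition \ref{prop:persp monotonicity} together with Corollary \ref{cor:maxdiv}) and otherwise invokes Proposition \ref{prop:maxdiv infty} to conclude $\maxdiv{f}(\rho\|\sigma)=+\infty$, so you have merely made explicit the bookkeeping (preservation of the support conditions under $\Phi$, the trace identity under a trace-preserving map) that the paper compresses into the word ``immediate.'' One harmless slip: the standing assumption does not force $f'(+\infty)<+\infty$ when $f(0^+)=+\infty$ (only infinities of opposite signs are excluded, and for operator convex $f$ both limits can be $+\infty$), but since you treat the case where both limits are $+\infty$ separately, this parenthetical does not affect the completeness of your case analysis.
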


\begin{proof}
If any of the conditions in Definition \ref{def:persp extension} is satisfied, then 
\eqref{maxdiv monotonicity} is immediate from \eqref{persp monotonicity}. Otherwise 
$\maxdiv{f}(\rho,\sigma)=+\infty$, according to Proposition \ref{prop:maxdiv infty}, and thus
the assertion is trivial.
\end{proof}

\begin{remark}\label{remark:persp}
Let $\rho,\sigma\in\BH_+$ with $\rho^0\le\sigma^0$. For any function $\vfi:[0,\infty)\to\bR$,
one can define $P_\ffi(\rho,\sigma):=
\sigma^{1/2}\ffi(\sigma^{-1/2}\rho\sigma^{-1/2})\sigma^{1/2}$
simply via functional calculus. When $f$ is operator convex with $f(0^+)<+\infty$, this
definition is consistent with case (iii) of Definition \ref{def:persp extension} due to
\eqref{persp limit3}. When $\Phi:\BH\to\BK$ is a positive linear map, one can also define
$P_f(\Phi(\rho),\Phi(\sigma))$ in the same way since $\Phi(\rho)^0\le\Phi(\sigma)^0$. If the map
$\Phi_\sigma$ defined in \eqref{F-3.13} is considered as a map from $\cB(\sigma^0\cH)$ to
$\cB(\Phi(\sigma)^0\cK)$, then it is unital and positive, so one can apply Lemma
\ref{Choi inequality} to have
$$
f(\Phi_\sigma(\sigma^{-1/2}\rho\sigma^{-1/2}))\le\Phi_\sigma(f(\sigma^{-1/2}\rho\sigma^{-1/2})),
$$
which means \eqref{persp monotonicity}. Thus, Proposition \ref{prop:persp monotonicity}, if
restricted to this situation, follows in a simpler way without the convergence argument.
\end{remark}

\begin{remark}
When $h:(0,+\infty)\to\bR$ is non-negative and operator monotone, Proposition
\ref{prop:persp monotonicity} applied to $f:=-h$ shows that for any positive linear map
$\Phi:\BH\to\BK$ and for every $\rho,\sigma\in\BH_+$,
\begin{equation}\label{Ando inequality}
\Phi(\rho\,\tau_h\,\sigma)\le\Phi(\rho)\,\tau_h\,\Phi(\sigma).
\end{equation}
This inequality is essentially due to Ando \cite{An}, where it was proved only for the geometric
and the harmonic means in a similar way to the proof of Proposition \ref{prop:persp monotonicity}.
We will use this observation in the proof of \ref{max f rev cond4}\,$\imp$\,\ref{max f rev cond1}
in Theorem \ref{T-3.21} below.
\end{remark}
\medskip

By Corollary \ref{P-3.20} it is obvious that if $\Phi$ is reversible on $\{\rho,\sigma\}$
(see Definition \ref{def:rev}), then
$$
\maxdiv{f}(\Phi(\rho)\|\Phi(\sigma))=\maxdiv{f}(\rho\|\sigma)
$$
for all operator convex functions $f$ on $(0,+\infty)$. The next theorem presents several
equivalent conditions for the equality case of $\maxdiv{f}$ under $\Phi$. We note that the
implication \ref{max f rev cond1}\,$\imp$\,\ref{max f rev cond9} was shown in
\cite[Lemma 12]{Ma} under an additional assumption on the support of $\mu_f$, analogous to
\eqref{F-3.15}. Here we stress that assumption \eqref{F-3.15} on $f$ is essential in
\ref{rev cond4} of Theorem \ref{T-3.12} (see, e.g., \cite[Example 1]{Je}),  while $f$ in
\ref{max f rev cond1} of Theorem \ref{T-3.21} can be an arbitrary non-linear operator convex
function. The equivalence \ref{max f rev cond9}\,$\iff$\,\ref{max f rev cond7} was also pointed
out in \cite[Section 9.1]{Ma}. Moreover, we note that a variant of 
\ref{max f rev cond1}\,$\imp$\,\ref{max f rev cond9} in the case where $\rho^0\nleq\sigma^0$ was
given in \cite[Lemma 12]{Ma}.

\renewcommand\theenumi{(\alph{enumi})}
\begin{thm}\label{T-3.21}
Let $\rho,\sigma\in\B(\hil)_+$ be such that $\rho^0\le\sigma^0$, and let $\map:\,\B(\hil)\to\B(\kil)$ be a positive trace-preserving linear map. Then the following are equivalent:
\begin{enumerate}
\item\label{max f rev cond1}
$\maxdiv{f}(\Phi(\rho)\|\Phi(\sigma))=\maxdiv{f}(\rho\|\sigma)$ for some non-linear
operator convex function $f$ on $[0,+\infty)$.
\item\label{max f rev cond2}
$\maxdiv{f}(\Phi(\rho)\|\Phi(\sigma))=\maxdiv{f}(\rho\|\sigma)$ for all operator convex functions $f$ on
$[0,+\infty)$.
\item\label{max f rev cond3}  
$\Tr\Phi(\rho)^2\Phi(\sigma)^{-1}=\Tr \rho^2\sigma^{-1}$.
\item\label{max f rev cond9}  
$\per{\vfi}(\map(\rho),\map(\sigma))=\map\bz\per{\vfi}(\rho,\sigma)\jz$ for all functions $\vfi$
on $[0,+\infty)$.
\item\label{max f rev cond5} 
$\Phi(\sigma)\,\tau\,\Phi(\rho)=\Phi(\sigma\,\tau\,\rho)$ for all operator connections $\tau$.
\item\label{max f rev cond4} 
$\Phi(\sigma)\,\tau\,\Phi(\rho)=\Phi(\sigma\,\tau\,\rho)$ for some non-linear operator connection
$\tau$.
\item\label{max f rev cond6}
$\Phi(\rho \sigma^{-1}\rho)=\Phi(\rho)\Phi(\sigma)^{-1}\Phi(\rho)$.
\item\label{max f rev cond7}
$\map_{\sigma}(\crn^2)=\bz\map_{\sigma}(\crn)\jz^2$,\ds\ds where\ds\ds $\crn:=\sigma^{-1/2}\rho \sigma^{-1/2}$.
\setcounter{szamlalo}{\value{enumi}}
\end{enumerate}
If we further assume that $\map$ is $2$-positive, then the above are also equivalent to 
\begin{enumerate}
\setcounter{enumi}{\value{szamlalo}}
\item\label{max f rev cond8}
$\sigma^{-1/2}\rho \sigma^{-1/2}\in\cM_{\Phi_\sigma}$.
\end{enumerate}

\end{thm}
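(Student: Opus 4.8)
The plan is to reduce all of \ref{max f rev cond1}--\ref{max f rev cond8} to a single functional-calculus identity for the unital positive map $\Phi_\sigma$ defined in \eqref{F-3.13}, viewed as a map into $\B(\Phi(\sigma)^0\kil)$. For operator convex $f$, Proposition \ref{prop:persp monotonicity} gives the operator inequality $\per f(\Phi(\rho),\Phi(\sigma))\le\Phi(\per f(\rho,\sigma))$; since $\Phi$ is trace-preserving and, by $\rho^0\le\sigma^0$ and Proposition \ref{prop:maxdiv infty}, both divergences are finite, the equality $\maxdiv f(\Phi(\rho)\|\Phi(\sigma))=\maxdiv f(\rho\|\sigma)$ forces the positive semidefinite operator $\Phi(\per f(\rho,\sigma))-\per f(\Phi(\rho),\Phi(\sigma))$ to have zero trace, hence to vanish. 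Writing $A:=\crn$ and using $\Phi_\sigma(A)=\Phi(\sigma)^{-1/2}\Phi(\rho)\Phi(\sigma)^{-1/2}$, conjugation by $\Phi(\sigma)^{1/2}$ shows that on $\Phi(\sigma)^0\kil$ the identity $\per\vfi(\Phi(\rho),\Phi(\sigma))=\Phi(\per\vfi(\rho,\sigma))$ is equivalent to $\vfi(\Phi_\sigma(A))=\Phi_\sigma(\vfi(A))$. Thus \ref{max f rev cond9} is this identity for all $\vfi$; \ref{max f rev cond1} and \ref{max f rev cond4} are its special cases $f(\Phi_\sigma(A))=\Phi_\sigma(f(A))$ for a non-linear operator convex $f$, respectively for $f=-h$ with $h$ the operator monotone function of the connection; and \ref{max f rev cond3}, \ref{max f rev cond6}, \ref{max f rev cond7} are the trace, the $(\rho,\sigma)$-level, and the $\Phi_\sigma$-level forms of the one quadratic equality $\Phi_\sigma(A^2)=(\Phi_\sigma(A))^2$, mutually equivalent by the same conjugation and trace passage.

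With this dictionary most implications are immediate: \ref{max f rev cond2}$\imp$\ref{max f rev cond1} and \ref{max f rev cond5}$\imp$\ref{max f rev cond4} by specialization; \ref{max f rev cond9}$\imp$\ref{max f rev cond2},\ref{max f rev cond5},\ref{max f rev cond6} by choosing $\vfi$ operator convex, operator monotone, and $\vfi(x)=x^2$ and tracing; and \ref{max f rev cond6}$\iff$\ref{max f rev cond7}$\iff$\ref{max f rev cond3} are the reformulations just described. For \ref{max f rev cond4}$\imp$\ref{max f rev cond1} one writes $\Phi(\sigma)\,\tau_h\,\Phi(\rho)=\per h(\Phi(\rho),\Phi(\sigma))$ and $\sigma\,\tau_h\,\rho=\per h(\rho,\sigma)$, takes traces, and uses trace-preservation; Ando's inequality \eqref{Ando inequality}, which needs no finiteness hypothesis on $h$, guarantees that $\maxdiv{-h}$ is monotone unconditionally, so \ref{max f rev cond4} is exactly its equality instance. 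Hence the whole web closes once we establish the two genuine links \ref{max f rev cond1}$\imp$\ref{max f rev cond7} and \ref{max f rev cond7}$\imp$\ref{max f rev cond9}.

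For \ref{max f rev cond1}$\imp$\ref{max f rev cond7} I would use the integral representation \eqref{F-2.4}: write $f(x)=f(0^+)+ax+bx^2+\int_{(0,+\infty)}\psi_s(x)\,d\mu_f(s)$ with $b\ge0$, $\mu_f\ge0$, where $\psi_s(x):=\frac{x}{1+s}-\frac{x}{x+s}$. Since $\Phi_\sigma$ is unital positive and each of $x^2$ and $\psi_s$ is operator convex, Choi's inequality (Lemma \ref{Choi inequality}) makes every term of
\begin{align*}
\Phi_\sigma(f(A))-f(\Phi_\sigma(A))
&=b\bigl[\Phi_\sigma(A^2)-(\Phi_\sigma(A))^2\bigr]\\
&\quad+\int_{(0,+\infty)}\bigl[\Phi_\sigma(\psi_s(A))-\psi_s(\Phi_\sigma(A))\bigr]\,d\mu_f(s)
\end{align*}
positive semidefinite; as the sum vanishes, so does each term. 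Because $f$ is non-linear, either $b>0$, which gives the quadratic equality \ref{max f rev cond7} at once, or $\mu_f\ne0$, which gives $\Phi_\sigma((A+s_0)^{-1})=(\Phi_\sigma(A)+s_0)^{-1}$ for some $s_0>0$.

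The crux is \ref{max f rev cond7}$\imp$\ref{max f rev cond9}, and the device that makes it work despite $\Phi$ being merely positive is that the restriction of $\Phi_\sigma$ to the commutative $C^*$-algebra $C^*(A)$ generated by $A=\crn$ is automatically completely positive, since every positive map out of a commutative $C^*$-algebra is completely positive; in particular $\Phi_\sigma|_{C^*(A)}$ is a Schwarz map. Thus the equality case of Choi's inequality is available on $C^*(A)$: the quadratic equality $\Phi_\sigma(A^2)=(\Phi_\sigma(A))^2$ (or, in the alternative above, the single resolvent equality with $B:=A+s_0$) forces $A\in\cM_{\Phi_\sigma|_{C^*(A)}}$ through \eqref{multdom2}, respectively through the equality case of Choi's inequality for $x^{-1}$. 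Lemma \ref{lemma:mult domain fcalculus} then gives $\vfi(\Phi_\sigma(A))=\Phi_\sigma(\vfi(A))$ for all $\vfi$, i.e.\ \ref{max f rev cond9}; this also shows that any single non-linear building block suffices, removing the support restriction of \cite[Lemma 12]{Ma}. Propagating one scalar equality to the entire functional calculus without global $2$-positivity is the main obstacle, and it is precisely this commutative complete-positivity that overcomes it. Finally, when $\Phi$, and hence $\Phi_\sigma$, is $2$-positive, $\Phi_\sigma$ is globally a Schwarz map, so \eqref{multdom2} upgrades $\Phi_\sigma(A^2)=(\Phi_\sigma(A))^2$ to $A\in\cM_{\Phi_\sigma}$, which is \ref{max f rev cond8}, while \ref{max f rev cond8}$\imp$\ref{max f rev cond7} is trivial.
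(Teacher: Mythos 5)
Your proof is correct, and its decisive device is the same as the paper's: restrict $\map_\sigma$ to the unital commutative $C^*$-algebra generated by $\crn$, observe that a positive map on a commutative $C^*$-algebra is automatically completely positive (hence a Schwarz contraction), and then use the multiplicative-domain characterization \eqref{multdom2} together with Lemma \ref{lemma:mult domain fcalculus} to promote a single non-trivial equality to the full functional-calculus identity \ref{max f rev cond9}; your conjugation dictionary between $\per{\vfi}(\map(\rho),\map(\sigma))=\map\bz\per{\vfi}(\rho,\sigma)\jz$ and $\vfi(\map_\sigma(\crn))=\map_\sigma(\vfi(\crn))$, and the trace-to-operator upgrades via Lemma \ref{Choi inequality}, also mirror the paper's treatment of \ref{max f rev cond3}\,$\iff$\,\ref{max f rev cond6}\,$\iff$\,\ref{max f rev cond7}. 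You differ in two sub-steps, in both cases arguing more directly than the paper. First, for \ref{max f rev cond4}\,$\imp$\,\ref{max f rev cond1} you simply take traces, using $\Tr\bz\sigma\,\tau_h\,\rho\jz=-\maxdiv{-h}(\rho\|\sigma)$ and that $-h$ is a non-linear operator convex function on $[0,+\infty)$ (legitimate, since a non-negative operator monotone $h$ has finite $h(0^+)$ and $h'(+\infty)$); the paper instead decomposes through \eqref{F-2.8} and Ando's inequality \eqref{Ando inequality}. Second, for \ref{max f rev cond1}\,$\imp$\,\ref{max f rev cond7} you upgrade the trace equality to the operator identity $f(\map_\sigma(\crn))=\map_\sigma(f(\crn))$ and split it by \eqref{F-2.4} into a quadratic piece and resolvent pieces, disposing of the resolvent case via the equality case of Choi's inequality for $x\mapsto x^{-1}$; the paper stays at the trace level (\eqref{F-3.26}--\eqref{F-3.27}) and, when only a resolvent term survives, bootstraps by rewriting the $\maxdiv{f_s}$-preservation as condition \ref{max f rev cond3} for the auxiliary pair $(\sigma,\rho+s\sigma)$, so that only the quadratic machinery is ever needed. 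Your shortcut is sound, but you assert without proof that $\Psi(B^{-1})=\Psi(B)^{-1}$ (for unital positive $\Psi$ and $B>0$) forces $B$ into the multiplicative domain; you should supply the argument or a citation. It is a two-line Stinespring computation available precisely because $\Psi=\map_\sigma|_{C^*(\crn)}$ is CP: writing $\Psi=V^*\pi(\cdot)V$ with $V^*V=I$, one has $V^*\pi(B)^{-1}V-\bz V^*\pi(B)V\jz^{-1}=(S-R)^*(S-R)$ with $S:=\pi(B)^{-1/2}V$ and $R:=\pi(B)^{1/2}V\Psi(B)^{-1}$, so equality gives $\pi(B)V=V\Psi(B)$ and hence $\Psi(XB)=\Psi(X)\Psi(B)$ for all $X$ in the domain. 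With that inserted, your web of implications closes completely and matches the strength of the theorem, including the $2$-positive case \ref{max f rev cond8}.
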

\renewcommand\theenumi{(\roman{enumi})}

\begin{proof}
We shall prove 
\begin{align}\label{maxdiv preservation}
\text{\ref{max f rev cond3}\,$\iff$\,\ref{max f rev cond6}\,$\iff$\,\ref{max f rev cond7}
\,$\imp$\,\ref{max f rev cond9}\,$\imp$\,\ref{max f rev cond2}\,$\imp$\,\ref{max f rev cond1}
\,$\imp$\,\ref{max f rev cond3} \ds and \ds
\ref{max f rev cond9}\,$\imp$\,\ref{max f rev cond5}\,$\imp$\,\ref{max f rev cond4}
\,$\imp$\,\ref{max f rev cond1}.}
\end{align}

First, note that $P_\vfi(\rho,\sigma)$ and $P_\vfi(\Phi(\rho),\Phi(\sigma))$ are defined in
the sense of Remark \ref{remark:persp}, and $P_h(\rho,\sigma)=\sigma\,\tau_h\,\rho$ when $h$ is
a non-negative operator monotone function on $[0,\infty)$ with the corresponding operator
connection $\tau_h$. Hence \ref{max f rev cond9}\,$\imp$\,\ref{max f rev cond2} is obvious by
Corollary \ref{cor:maxdiv}, and the implications
\ref{max f rev cond2}\,$\imp$\,\ref{max f rev cond1} and
\ref{max f rev cond9}\,$\imp$\,\ref{max f rev cond5}\,$\imp$\,\ref{max f rev cond4} are trivial.
We also remark (although not necessary for the rest of the proof) that
\ref{max f rev cond2}\,$\imp$\,\ref{max f rev cond3}\,$\imp$\,\ref{max f rev cond1} is obvious
by applying equality in \ref{max f rev cond2} to the quadratic function.

\ref{max f rev cond3}\,$\iff$\,\ref{max f rev cond6} is easy since
$\Phi(\rho)\Phi(\sigma)^{-1}\Phi(\rho)\le\Phi(\rho \sigma^{-1}\rho)$ (see, e.g.,
\cite[Proposition 2.7.3]{Bhatia2} and \cite[Lemma 3.5]{HMPB}).

\ref{max f rev cond6}\,$\iff$\,\ref{max f rev cond7} 
follows immediately from
\begin{align*}
\Phi_\sigma\bigl((\sigma^{-1/2}\rho \sigma^{-1/2})^2\bigr)&=
\Phi(\sigma)^{-1/2}\Phi(\rho\sigma^{-1}\rho)\Phi(\sigma)^{-1/2},\\
\bigl(\Phi_\sigma(\sigma^{-1/2}\rho\sigma^{-1/2})\bigr)^2&=
\Phi(\sigma)^{-1/2}\Phi(\rho)\Phi(\sigma)^{-1}\Phi(\rho)\Phi(\sigma)^{-1/2}.
\end{align*}

%

\ref{max f rev cond7}\,$\imp$\,\ref{max f rev cond9}.\enspace
By considering $\map_{\sigma}$ as a map from $\B(\sigma^0\hil)$ to $\B(\map(\sigma)^0\kil)$,
we can assume that  $\map_{\sigma}$ is unital.
Let $\wtilde\map_{\sigma}$ be the restriction of $\map_{\sigma}$ onto the commutative algebra generated by 
$\crn=\sigma^{-1/2}\rho \sigma^{-1/2}$. By \ref{max f rev cond7}, $\sigma^{-1/2}\rho \sigma^{-1/2}$ is in the multiplicative domain of $\wtilde\map_{\sigma}$ (see \eqref{multdom} and \eqref{multdom2}). Thus,
\begin{align*}
\vfi\bz\map_{\sigma}\bz\sigma^{-1/2}\rho \sigma^{-1/2}\jz\jz
&=
\vfi\bz\wtilde\map_{\sigma}\bz\sigma^{-1/2}\rho \sigma^{-1/2}\jz\jz\\
&=
\wtilde\map_{\sigma}\bz\vfi\bz\sigma^{-1/2}\rho \sigma^{-1/2}\jz\jz
=
\map_{\sigma}\bz\vfi\bz\sigma^{-1/2}\rho \sigma^{-1/2}\jz\jz,
\end{align*}
where the second equality is due to Lemma \ref{lemma:mult domain fcalculus}.
The equality of the first and the last terms above is exactly \ref{max f rev cond9}.


\ref{max f rev cond1}\,$\imp$\,\ref{max f rev cond3}.\enspace
For $s\in(0,+\infty)$ set
\begin{equation}\label{F-3.25}
f_s(x):=-{x\over x+s},\qquad x\in[0,+\infty),
\end{equation}
which is an operator convex function on $[0,+\infty)$. From the integral expression \eqref{F-2.4}
of $f$ one has
\begin{equation}\label{F-3.26}
\maxdiv{f}(\rho\|\sigma)=f(0)\Tr \sigma+a\Tr \rho+b\Tr \rho^2\sigma^{-1}
+\int_{(0,+\infty)}\biggl({\Tr \rho\over1+s}+\maxdiv{f_s}(\rho\|\sigma)\biggr)\,d\mu_f(s)
\end{equation}
and similarly
\begin{align}
\maxdiv{f}(\Phi(\rho)\|\Phi(\sigma))
&=f(0)\Tr\Phi(\sigma)+a\Tr\Phi(\rho)+b\Tr\Phi(\rho)^2\Phi(\sigma)^{-1} \nonumber\\
&\qquad+\int_{(0,+\infty)}\biggl({\Tr\Phi(\rho)\over1+s}
+\maxdiv{f_s}(\Phi(\rho)\|\Phi(\sigma))\biggr)\,d\mu_f(s) \nonumber\\
&=f(0)\Tr \sigma+a\Tr \rho+b\Tr\Phi(\rho)^2\Phi(\sigma)^{-1} \nonumber\\
&\qquad+\int_{(0,+\infty)}\biggl({\Tr \rho\over1+s}
+\maxdiv{f_s}(\Phi(\rho)\|\Phi(\sigma))\biggr)\,d\mu_f(s). \label{F-3.27}
\end{align}
By comparing \eqref{F-3.26} and \eqref{F-3.27} together with the monotonicity property of
Corollary \ref{P-3.20}, one must have
$$
\Tr\Phi(\rho)^2\Phi(\sigma)^{-1}=\Tr \rho^2\sigma^{-1}\quad\mbox{if $b>0$},
$$
$$
\maxdiv{f_s}(\Phi(\rho)\|\Phi(\sigma))=\maxdiv{f_s}(\rho\|\sigma)\quad
\mbox{for all $s\in\supp\mu$}.
$$
Since $f$ is non-linear, if follows that $b>0$ or $\supp\mu_f$ is not empty. So it suffices to
prove that (c) holds if $\maxdiv{f_s}(\Phi(\rho)\|\Phi(\sigma))=\maxdiv{f_s}(\rho\|\sigma)$ for
some $s\in(0,+\infty)$. Since $f_s(x)=-1+s(x+s)^{-1}$, the assumption implies that
\begin{align*}
&\Tr \Phi(\sigma)^{1/2}\bigl[\Phi(\sigma)^{-1/2}\Phi(\rho)\Phi(\sigma)^{-1/2}+sI\bigr]^{-1}
\Phi(\sigma)^{1/2}
=\Tr \sigma^{1/2}\bigl[\sigma^{-1/2}\rho \sigma^{-1/2}+sI\bigr]^{-1}\sigma^{1/2}.
\end{align*}
By noting that $\sigma^{-1/2}\rho \sigma^{-1/2}+sI
=\sigma^{-1/2}(\rho+s\sigma)\sigma^{-1/2}+s(I-\sigma^0)$, the above can be rephrased as
$$
\Tr\Phi(\sigma)^2\Phi(\rho+s\sigma)^{-1}=\Tr \sigma^2(\rho+s\sigma)^{-1}.
$$
As we have already proved
\ref{max f rev cond3}\,$\iff$\,\ref{max f rev cond7}\,$\imp$\,\ref{max f rev cond2}, we can
apply \ref{max f rev cond3}\,$\imp$\,\ref{max f rev cond2} to $\sigma$ and $\rho+s\sigma$
(in place of $\rho$, $\sigma$) and $f(x)=(x+\eps)^{-1}$ for any $\eps>0$. We then find that
\begin{align*}
&\Tr\Phi(\rho+s\sigma)^{1/2}\bigl[\Phi(\rho+s\sigma)^{-1/2}\Phi(\sigma)\Phi(\rho+s\sigma)^{-1/2}
+\eps I\bigr]^{-1}\Phi(\rho+s\sigma)^{1/2} \\
&\qquad=\Tr(\rho+s\sigma)^{1/2}\bigl[(\rho+s\sigma)^{-1/2}\sigma(\rho+s\sigma)^{-1/2}+\eps I\bigr]^{-1}(\rho+s\sigma)^{1/2}.
\end{align*}
Letting $\eps\searrow0$ yields
$$
\Tr\Phi(\rho+s\sigma)^2\Phi(\sigma)^{-1}=\Tr(\rho+s\sigma)^2\sigma^{-1}
$$
so that
$$
\Tr\Phi(\rho)^2\Phi(\sigma)^{-1}+2s\Tr\Phi(\rho)+s^2\Tr\Phi(\sigma)
=\Tr \rho^2\sigma^{-1}+2s\Tr \rho+s^2\Tr \sigma.
$$
Therefore, $\Tr\Phi(\rho)^2\Phi(\sigma)^{-1}=\Tr \rho^2\sigma^{-1}$.

\ref{max f rev cond4}\,$\imp$\,\ref{max f rev cond1}.\enspace
Assume \ref{max f rev cond4} for $\tau=\tau_h$ with a non-negative operator monotone function $h$.
From the integral expression \eqref{F-2.8} one writes
\begin{align}
\Phi(\sigma\,\tau_h\,\rho)&=a\Phi(\rho)+b\Phi(\sigma)
+\int_{(0,+\infty)}\Phi(\sigma\,\tau_{h_s}\,\rho)\,d\nu_h(s), \label{F-3.28}\\
\Phi(\sigma)\,\tau_h\,\Phi(\rho)
&=a\Phi(\rho)+b\Phi(\sigma)+\int_{(0,+\infty)}\Phi(\sigma)\,\tau_{h_s}\,\Phi(\rho)\,d\nu_h(s).
\label{F-3.29}
\end{align}
Comparing \eqref{F-3.28} and \eqref{F-3.29} implies by means of \eqref{Ando inequality} that
$$
\Phi(\sigma)\,\tau_{h_s}\,\Phi(\rho)=\Phi(\sigma\,\tau_{h_s}\,\rho)\quad
\mbox{for all $s\in\supp\nu_h$}.
$$
Since $h_s(x)=-(1+s)f_s(x)$ with $f_s$ given in \eqref{F-3.25}, one finds that
$$
\Tr\bigl(\sigma\,\tau_{h_s}\,\rho\bigr)=-(1+s)\maxdiv{f_s}(\rho\|\sigma).
$$
Therefore, the above equality means that
$$
\maxdiv{f_s}(\Phi(\rho)\|\Phi(\sigma))
=\maxdiv{f_s}(\rho\|\sigma)\quad\mbox{for $s\in\supp\nu_h$}.
$$
Hence \ref{max f rev cond1} follows since $\tau_h$ is non-linear so that $\supp\nu_h$ is not
empty.

\ref{max f rev cond8}\,$\iff$\,\ref{max f rev cond7}.
As before, by considering $\map_{\sigma}$ as a map from $\B(\sigma^0\hil)$ to $\B(\map(\sigma)^0\kil)$,
we can assume that $\map_{\sigma}$ is unital. If $\map$ is $2$-positive then so is $\map_{\sigma}$. Hence,
\ref{max f rev cond8} is equivalent to \ref{max f rev cond7}, according to 
\eqref{multdom} and \eqref{multdom2}.
\end{proof}


\begin{remark}
Note that \ref{max f rev cond3} gives a particularly easy-to-verify criterion for the rest of the points in Theorem \ref{T-3.21} to hold.
\end{remark}

\section{Comparison of different $f$-divergences}
\label{sec:comparison}

In this section we compare the quantum $f$-divergences  
$\maxdiv{f},\,S_f$, and $S_f^{\meas}$. In particular, 
in Section \ref{sec:maxdiv vs mod div}, we 
extend and strengthen Matsumoto's inequality 
$S_f(\rho\|\sigma)\le \maxdiv{f}(\rho\|\sigma)$,
that was proved in \cite{Ma} for the case where $f$ is operator convex on $[0,+\infty)$ and 
$\rho^0\le\sigma^0$. In Section \ref{sec:reversibility comparison}, we compare the preservation
of $S_f$ and $\maxdiv{f}$ in Theorems \ref{T-3.12} and \ref{T-3.21}. Finally, in Section
\ref{sec:meas fdiv}, we discuss the measured $f$-divergence.

\subsection{The relation of $S_f$ and $\maxdiv{f}$}
\label{sec:maxdiv vs mod div}

It is easy to verify that if $\rho,\sigma\in\BH_+$ are commuting, then
$S_f(\rho\|\sigma)=\maxdiv{f}(\rho\|\sigma)$ for every $f$. 
The main result of this section, given in Theorem \ref{P-4.1}, is that the converse is also true
in the sense that $S_f(\rho\|\sigma)=\maxdiv{f}(\rho\|\sigma)$ for some \ki{fixed} 
operator convex function $f$ implies the commutativity of $\rho$ and $\sigma$, provided that $f$
satisfies some technical condition.

In general, one has
\begin{align}\label{fdiv maxdiv ineq}
S_f(\rho\|\sigma)\le \maxdiv{f}(\rho\|\sigma)
\end{align}
for any operator convex function $f$ on $(0,+\infty)$. By Proposition \ref{P-3.8}, this is a
special case of a more general statement proved by Matsumoto \cite{Ma}, given in
\eqref{min-max properties} (although he only considered operator convex functions on
$[0,+\infty)$). The proof for the general case (i.e., without the assumption  $f(0^+)<+\infty$)
goes the same way, using Matsumoto's construction of the ``minimal reverse test"; we give it in
detail below as a preparation for the proof of  the stronger inequality given in Theorem
\ref{P-4.1}.

\begin{prop}\label{P-3.19}
For every $\rho,\sigma\in\BH_+$, and every operator convex function $f:\,(0,+\infty)\to\bR$,
$$
S_f(\rho\|\sigma)\le \maxdiv{f}(\rho\|\sigma).
$$
\end{prop}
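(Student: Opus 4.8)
The plan is to realize $\maxdiv{f}(\rho\|\sigma)$ as the classical $f$-divergence of a commuting pair obtained from Matsumoto's \emph{minimal reverse test}, and then invoke the monotonicity of the standard $f$-divergence under CPTP maps (Proposition \ref{P-3.8}) together with the fact that $S_f$ reduces to the classical $f$-divergence on commuting arguments (which is immediate from Proposition \ref{P-3.2}). First I would reduce to the case of invertible $\rho,\sigma$: by the limiting definitions \eqref{F-3.6} and \eqref{F-3.18} it suffices to prove the inequality with $\rho+\eps I,\sigma+\eps I$ in place of $\rho,\sigma$ and then let $\eps\searrow0$, so from now on assume $\rho,\sigma>0$.

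Next I would build the reverse test. Write the spectral decomposition of the commutant Radon--Nikodym derivative as $A:=\sigma^{-1/2}\rho\sigma^{-1/2}=\sum_{i}a_iE_i$, where the $a_i>0$ are the distinct eigenvalues, $\sum_i E_i=I$, and $\Tr\sigma E_i>0$ for each $i$. Index a classical system by the (finitely many) $i$'s, and define commuting positive functions $q(i):=\Tr\sigma E_i$ and $p(i):=a_i\Tr\sigma E_i$. Define a map $\map$ from the matrix algebra of the classical system into $\B(\hil)$ by sending a matrix $X$ to $\sum_i X_{ii}\,\frac{\sigma^{1/2}E_i\sigma^{1/2}}{\Tr\sigma E_i}$, where $X_{ii}$ are the diagonal entries in the basis diagonalizing $p$ and $q$. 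Each summand is completely positive and the image operators $\sigma^{1/2}E_i\sigma^{1/2}/\Tr\sigma E_i$ have unit trace, so $\map$ is CPTP. A one-line computation using $\sum_iE_i=I$ and $\sum_i a_iE_i=A$ gives $\map(q)=\sigma^{1/2}\bz\sum_iE_i\jz\sigma^{1/2}=\sigma$ and $\map(p)=\sigma^{1/2}A\sigma^{1/2}=\rho$.

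Then I would conclude by monotonicity. Since $\map$ is CPTP, $\map^*$ is unital and completely positive, hence a Schwarz contraction, so Proposition \ref{P-3.8} applies and yields $S_f(\map(p)\|\map(q))\le S_f(p\|q)$. The left-hand side equals $S_f(\rho\|\sigma)$ by the previous paragraph. Because $p$ and $q$ commute, the right-hand side equals the classical $f$-divergence, namely $\sum_i q(i)f\bz p(i)/q(i)\jz=\sum_i\Tr(\sigma E_i)\,f(a_i)=\Tr\sigma f(A)=\maxdiv{f}(\rho\|\sigma)$. This proves the inequality for invertible operators, and the general statement follows by letting $\eps\searrow0$ as above.

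The proof is essentially a bookkeeping exercise once the reverse test is in place, so the only real points requiring care are: that operator convexity of $f$ is genuinely needed, since it is exactly the hypothesis of the monotonicity Proposition \ref{P-3.8}; that the classical reduction of $S_f$ on commuting $p,q$ is valid for functions $f$ merely defined on $(0,+\infty)$ (here all $a_i>0$, so no boundary values of $f$ enter); and that the passage $\eps\searrow0$ is legitimate on both sides, which is guaranteed by the very definitions \eqref{F-3.6} and \eqref{F-3.18}. I would expect the construction and verification of the reverse test to be the main step, while the inequality itself is then immediate.
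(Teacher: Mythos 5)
Your proposal is correct and is essentially the paper's own proof: both reduce to invertible $\rho,\sigma$ via the limiting definitions, construct Matsumoto's minimal reverse test from the spectral decomposition of $\sigma^{-1/2}\rho\sigma^{-1/2}$ (your $(\map,p,q)$ is exactly the paper's $(\Phi,\a,\b)$), and conclude by the monotonicity of $S_f$ under CPTP maps (Proposition \ref{P-3.8}) together with the classical evaluation of $S_f$ on the commuting pair.
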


\begin{proof}
By definitions \eqref{F-3.6} and \eqref{F-3.18} one may
assume that $\rho,\sigma>0$. Choose the spectral decomposition of
$\sigma^{-1/2}\rho \sigma^{-1/2}$ as
$$
\sigma^{-1/2}\rho \sigma^{-1/2}=\sum_{i=1}^k\lambda_iP_i,
$$
where the $P_i$ are orthogonal projections with $\sum_{i=1}^kP_{i}=I$. 
For every $i=1,\ldots,k$, let $\delta_i$ denote the indicator function of the singleton $\{i\}$
in the commutative algebra $\bC^k$, and 
define a trace-preserving positive linear map $\Phi$ from $\bC^k$ to $\BH$ by
$$
\Phi\Biggl(\sum_{i=1}^kx_i\delta_i\Biggr)
:=\sum_{i=1}^kx_i\,{\sigma^{1/2}P_i\sigma^{1/2}\over\Tr \sigma P_i},
$$
and $\a,\b\in\bC^k$ by
$$
\a:=\sum_{i=1}^k(\lambda_i\Tr \sigma P_i)\delta_i,\qquad\b:=\sum_{i=1}^k(\Tr \sigma P_i)\delta_i.
$$
Then $\Phi$ is CPTP, and
$$
\Phi(\a)=\sum_{i=1}^k\lambda_i\sigma^{1/2}P_i\sigma^{1/2}=\rho,\qquad
\Phi(\b)=\sum_{i=1}^k\sigma^{1/2}P_i\sigma^{1/2}=\sigma.
$$
Therefore, by the monotonicity property of $S_f$ (Proposition \ref{P-3.8}) one has
\begin{align*}
S_f(\rho\|\sigma)&\le S_f(\a\|\b)
=\sum_{i=1}^k(\Tr \sigma P_i)f\bigl((\lambda_i\Tr \sigma P_i)(\Tr \sigma P_i)^{-1}\bigr) \\
&=\sum_{i=1}^k(\Tr \sigma P_i)f(\lambda_i)=\Tr \sigma f(\sigma^{-1/2}\rho \sigma^{-1/2})
=\maxdiv{f}(\rho\|\sigma),
\end{align*}
which is the required inequality.
\end{proof}

It is easy to see that $S_f$ is actually equal to $\maxdiv{f}$ when $f$ is a polynomial of
degree two:

\begin{example}\label{E-3.18}\rm(Quadratic function)\quad
For the quadratic function $f_2(x):=x^2$ and for $\rho,\sigma>0$,
$$
S_{f_2}(\rho\|\sigma)=\Tr\rho^2\sigma^{-1}=\maxdiv{f_2}(\rho\|\sigma).
$$
Therefore, when $f$ is of the form $f(x)=ax^2+bx+c$ with $a\ge0$, we have
$S_f(\rho\|\sigma)=\maxdiv{f}(\rho\|\sigma)$ for all $\rho,\sigma\in\BH_+$.
\end{example}

\begin{thm}\label{P-4.1}
Let $\rho,\sigma\in\BH_+$ satisfy $\rho^0\le \sigma^0$ and $\rho \sigma\ne \sigma\rho$.
Then
\begin{equation}\label{F-4.1}
S_f(\rho\|\sigma)<\maxdiv{f}(\rho\|\sigma)
\end{equation}
for any operator convex function $f$ on $[0,+\infty)$ such that 
\begin{align}\label{strict order support cond}
|\supp\mu_f|\ge \abs{\spec(\sigma^{-1/2}\rho\sigma^{-1/2})\cup\spec\bz L_{\rho}R_{\sigma\inv}\jz}.
\end{align}
\end{thm}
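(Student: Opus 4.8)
The plan is to combine the minimal reverse test from the proof of Proposition~\ref{P-3.19} with the equality analysis of the standard $f$-divergence in Theorem~\ref{T-3.12}. Recall that test: there is a commutative algebra $\bC^k$, commuting $\a,\b\in\bC^k$, and a CPTP map $\Phi\colon\bC^k\to\BH$ with $\Phi(\a)=\rho$, $\Phi(\b)=\sigma$ and $S_f(\a\|\b)=\maxdiv f(\rho\|\sigma)$, where $\a,\b$ are built from the spectral data $\crn=\sigma^{-1/2}\rho\sigma^{-1/2}=\sum_i\lambda_iP_i$. Since restricting everything to $\sigma^0\hil$ leaves both $S_f(\rho\|\sigma)$, $\maxdiv f(\rho\|\sigma)$ and the commutator $\rho\sigma-\sigma\rho$ unchanged (as $\rho^0\le\sigma^0$), I may assume $\sigma$ invertible; then $\Tr\sigma P_i>0$ for every $i$, so $\b>0$ and $\a^0\le\b^0$. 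Monotonicity (Proposition~\ref{P-3.8}) gives $S_f(\rho\|\sigma)=S_f(\Phi(\a)\|\Phi(\b))\le S_f(\a\|\b)=\maxdiv f(\rho\|\sigma)$, and under the standing hypotheses both sides are finite by Corollary~\ref{cor:fdiv infty} and Proposition~\ref{prop:maxdiv infty}; the task is to exclude equality when $\rho\sigma\ne\sigma\rho$.

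Arguing by contradiction, assume $S_f(\rho\|\sigma)=\maxdiv f(\rho\|\sigma)$. The displayed chain then collapses to $S_f(\Phi(\a)\|\Phi(\b))=S_f(\a\|\b)$, i.e. $\Phi$ preserves the standard $f$-divergence of $(\a,\b)$. I would apply the equality characterization of Theorem~\ref{T-3.12} to $(\a,\b,\Phi)$, in its von Neumann algebra version from \cite{HMPB} valid for a commutative domain algebra. The only point to verify is that the support hypothesis \eqref{strict order support cond} is exactly the condition \eqref{F-3.15} for this triple: the relative modular operator of the commuting pair $(\a,\b)$ in its (commutative) GNS space is multiplication by the ratios $\a_i/\b_i=\lambda_i$, so $\spec(L_\a R_{\b^{-1}})=\spec(\crn)$, while $\spec(L_{\Phi(\a)}R_{\Phi(\b)^{-1}})=\spec(L_\rho R_{\sigma^{-1}})$. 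Hence \eqref{F-3.15} reads $|\supp\mu_f|\ge|\spec(\crn)\cup\spec(L_\rho R_{\sigma^{-1}})|$, which is precisely \eqref{strict order support cond}. The implication \ref{rev cond4}\,$\imp$\,\ref{rev cond1} of Theorem~\ref{T-3.12} then shows that $\Phi$ is reversible on $\{\a,\b\}$.

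It remains to convert reversibility of the reverse test into commutativity of $\rho$ and $\sigma$; this is the main obstacle. Viewing $\Phi$ as a CPTP map on the full matrix algebra $\B(\bC^k)$ (precomposing with the diagonal pinching, which does not change its values on $\a,\b$, and observing that the reversal persists), I would invoke the decomposition Theorem~\ref{thm:decomposition} with reference operator $\sigma_1:=\b$. This produces a splitting $\supp\b=\bigoplus_j\hil_{1,j,L}\otimes\hil_{1,j,R}$, unitaries $U_j$, channels $\eta_j$, and states $\omega_j$ with $(\fix{\Phi_\b^*\circ\Phi})_+=\bigoplus_j\B(\hil_{1,j,L})_+\otimes\omega_j$. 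Reversibility on $\{\a,\b\}$ places both $\a$ and $\b$ in this positive cone, so $\a=\bigoplus_j\a_{j,L}\otimes\omega_j$ and $\b=\bigoplus_j\b_{j,L}\otimes\omega_j$ with the \emph{same} right factors $\omega_j$. Applying \eqref{dec4} then yields
\[
\rho=\Phi(\a)=\bigoplus_j U_j\a_{j,L}U_j^*\otimes\eta_j(\omega_j),\qquad
\sigma=\Phi(\b)=\bigoplus_j U_j\b_{j,L}U_j^*\otimes\eta_j(\omega_j),
\]
so in each block $\rho$ and $\sigma$ carry the common right factor $\eta_j(\omega_j)$.

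Finally, since $\a$ and $\b$ are diagonal in $\bC^k$ they commute, and $[\a,\b]=\bigoplus_j[\a_{j,L},\b_{j,L}]\otimes\omega_j^2=0$ with $\omega_j\ne0$ forces $[\a_{j,L},\b_{j,L}]=0$ for every $j$. Therefore $[\rho,\sigma]=\bigoplus_j U_j[\a_{j,L},\b_{j,L}]U_j^*\otimes\eta_j(\omega_j)^2=0$, contradicting $\rho\sigma\ne\sigma\rho$. Hence equality is impossible and \eqref{F-4.1} holds. The two delicate points I would record in full are the reduction to invertible $\sigma$ together with the commutative-domain reading of \eqref{F-3.15}, and the verification that extending $\Phi$ to $\B(\bC^k)$ preserves reversibility on $\{\a,\b\}$ so that Theorem~\ref{thm:decomposition} applies.
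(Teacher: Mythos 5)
Your proof is correct, and its first half is exactly the paper's: run the minimal reverse test $(\Phi,\a,\b)$, observe that the assumed equality collapses the chain to $S_f(\Phi(\a)\|\Phi(\b))=S_f(\a\|\b)$, and feed this into Theorem \ref{T-3.12}, reading the relative modular operator of $(\a,\b)$ inside the commutative algebra $\bC^k$ itself, so that $\Sp(L_\a R_{\b^{-1}})=\{\lambda_i\}=\spec(\sigma^{-1/2}\rho\sigma^{-1/2})$ and \eqref{F-3.15} becomes precisely \eqref{strict order support cond}; the paper makes this identification silently when it asserts \eqref{F-3.15}, whereas you flag it explicitly, which is worth doing since literally extending $\Phi$ to $\B(\bC^k)$ would enlarge that spectrum to all cross-ratios $a_i/b_j$ and spoil the hypothesis. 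Where you genuinely diverge is the endgame. The paper stops at condition \ref{rev cond9}, i.e.\ $\a\b^{-1}\in\fix{\Phi^*\circ\Phi_\b}$ as in \eqref{F-4.2}, computes $\Phi_\b$ and $\Phi^*$ explicitly for the reverse test, and gets $\lambda_i=\sum_j\lambda_j\pi_j^i$ with $\pi_j^i=\Tr\sigma^{1/2}P_j\sigma^{1/2}P_i/\Tr\sigma P_i$; the strict ordering $\lambda_1>\lambda_2>\cdots$ and the symmetry of $\Tr\sigma^{1/2}P_j\sigma^{1/2}P_i$ then force $P_i\sigma^{1/2}P_j=0$ for $i\ne j$, hence $[\rho,\sigma]=0$. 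You instead upgrade to reversibility \ref{rev cond1}, extend $\Phi$ to $\B(\bC^k)$ by pre-composing with the diagonal pinching (reversibility on $\{\a,\b\}$ and $2$-positivity clearly persist), and invoke Theorem \ref{thm:decomposition}: by \eqref{dec3} both $\a$ and $\b$ lie in $\bigoplus_j\B(\hil_{1,j,L})_+\otimes\omega_j$, so $[\a,\b]=0$ and invertibility of the $\omega_j$ give $[\a_{j,L},\b_{j,L}]=0$, and \eqref{dec4} transports this to $(\rho,\sigma)$. Both routes are sound, and your reliance on Theorem \ref{thm:decomposition} is not really an extra cost, since the paper's own proof of \ref{rev cond8}\,$\imp$\,\ref{rev cond9} in Theorem \ref{T-3.12} already uses it. The paper's computation buys self-containedness and an explicit mechanism for commutativity ($\sigma^{1/2}$ commutes with the spectral projections of $\sigma^{-1/2}\rho\sigma^{-1/2}$); yours buys a conceptual one-line reason—reversibility forces $\a,\b$ and $\rho,\sigma$ into a common block-tensor form in which commutativity is manifest—at the price of the bookkeeping you correctly recorded (restriction to $\sigma^0\hil$ so that $\b>0$, the check $\a^0\le\b^0$, and persistence of the reversal under the extension).
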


\begin{proof}
The proof is based on the minimal reverse test \cite{Ma} as in the proof of Proposition
\ref{P-3.19}. 
Write the spectral decomposition of
$\sigma^{-1/2}\rho \sigma^{-1/2}$ as
$$
\sigma^{-1/2}\rho \sigma^{-1/2}=\sum_{i=1}^k\lambda_iP_i,\qquad
\lambda_1>\lambda_2>\dots>\lambda_k,
$$
and define the trace-preserving positive map $\Phi:\bC^k\to\BH$ and $\a,\b\in\bC^k$ as in the
proof of Proposition \ref{P-3.19}. Then $\Phi(\a)=\rho$, $\Phi(\b)=\sigma$ and
$S_f(\rho\|\sigma)\le S_f(\a\|\b)=\maxdiv{f}(\rho\|\sigma)$. Now, assume that
$S_f(\rho\|\sigma)=\maxdiv{f}(\rho\|\sigma)$ and prove that $\rho$ and $\sigma$ must be
commuting. Since $S_f(\rho\|\sigma)=S_f(\a\|\b)$ and \eqref{F-3.15} is satisfied, it
follows from Theorem \ref{T-3.12} that
\begin{equation}\label{F-4.2}
\a\b^{-1}\in\fix{\Phi^*\circ\Phi_\b}.
\end{equation}
Since
\begin{align}
\Phi_\b\Biggl(\sum_{i=1}^kx_i\delta_i\Biggr)
&=\sigma^{-1/2}\Phi\Biggl(\sum_{i=1}^k(\Tr \sigma P_i)x_i\delta_i\Biggr)\sigma^{-1/2} \nonumber\\
&=\sigma^{-1/2}\Biggl(\sum_{i=1}^kx_i\sigma^{1/2}P_i\sigma^{1/2}\Biggr)\sigma^{-1/2}
=\sum_{i=1}^kx_iP_i \label{F-4.3}
\end{align}
and $\a\b^{-1}=\sum_{i=1}^k\lambda_i\delta_i$, we have
$\Phi_\b(\a\b^{-1})=\sum_{i=1}^k\lambda_iP_i=\sigma^{-1/2}\rho \sigma^{-1/2}$. Moreover, since
$$
\Bigl\<X,\Phi\Biggl(\sum_ix_i\delta_i\Biggr)\Bigr\>_\HS
=\Tr\Biggl(X^*\sum_ix_i\,{\sigma^{1/2}P_i\sigma^{1/2}\over\Tr \sigma P_i}\Biggr)
=\sum_i\overline{\biggl({\Tr X\sigma^{1/2}P_i\sigma^{1/2}\over\Tr \sigma P_i}\biggr)}\,x_i,
$$
we have
$$
\Phi^*(X)=\sum_{i=1}^k{\Tr X\sigma^{1/2}P_i\sigma^{1/2}\over\Tr \sigma P_i}\,\delta_i
$$
with the convention $0/0=0$. Therefore,
$$
\Phi^*\circ\Phi_\b(\a\b^{-1})
=\sum_{i=1}^k{\Tr \sigma^{-1/2}\rho \sigma^{-1/2}\sigma^{1/2}P_i\sigma^{1/2}\over\Tr \sigma P_i}\,\delta_i
=\sum_{i=1}^k{\Tr \rho P_i\over\Tr \sigma P_i}\,\delta_i
$$
so that (4.4) yields
$$
\sum_{i=1}^k\lambda_i\delta_i=\sum_{i=1}^k{\Tr\rho P_i\over\Tr\sigma P_i}\,\delta_i.
$$
Since $\rho=\sum_{j=1}^k\lambda_j\sigma^{1/2}P_j\sigma^{1/2}$, this implies that
$$
\lambda_i={\Tr\rho P_i\over\Tr\sigma P_i}=\sum_{j=1}^k\lambda_j\pi_j^i,
\qquad1\le i\le k,
$$
where
$$
\pi_j^i:={\Tr\sigma^{1/2}P_j\sigma^{1/2}P_i\over\Tr\sigma P_i},
\qquad1\le i,j\le k.
$$
Note that $\pi_j^i\ge0$ and $\sum_{j=1}^k\pi_j^i=1$ for all $i$. Since
$\lambda_1>\lambda_2>\cdots$, by taking $i=1$ we obtain $\pi_j^1=0$ for all $j\ne1$, and
hence $\pi_1^i=0$ for all $i\ne1$, too, since
$\Tr\sigma^{1/2}P_j\sigma^{1/2}P_i=\Tr\sigma^{1/2}P_i\sigma^{1/2}P_j$. Using the same argument
for $i=2,3,\dots$, we obtain $\pi_j^i=0$, i.e., $\Tr\sigma^{1/2}P_j\sigma^{1/2}P_i=0$ for
$i\ne j$, so that $P_i\sigma^{1/2}P_j=0$, $i\ne j$. This yields that $\sigma^{1/2}$ commutes
with $\sigma^{-1/2}\rho\sigma^{-1/2}$, hence $\rho\sigma^{1/2}=\sigma^{1/2}\rho$, so that
$\rho\sigma=\sigma\rho$.
\end{proof}

\begin{example}\label{E-4.2}\rm(Log function)\quad
Consider $\eta(x):=x\log x$ as in Example \ref{E-3.4}. For every $\rho,\sigma\in\BH_+$ with
$\rho^0\le \sigma^0$ we have
$$
\maxdiv{\eta}(\rho\|\sigma)
=\Tr \sigma^{1/2}\rho \sigma^{-1/2}\log(\sigma^{-1/2}\rho \sigma^{-1/2})
=\Tr \rho\log(\rho^{1/2}\sigma^{-1}\rho^{1/2}),
$$
which is the {\it Belavkin-Staszewski relative entropy} $S_\BS(\rho\|\sigma)$ introduced in
\cite{BS}. Proposition \ref{P-3.19} gives the inequality
$S(\rho\|\sigma)\le S_\BS(\rho\|\sigma)$, which was first proved in \cite{HP}. By
Theorem \ref{P-4.1} we further have $S(\rho\|\sigma)<S_\BS(\rho\|\sigma)$ whenever
$\rho^0\le \sigma^0$ and $\rho \sigma\ne \sigma\rho$.
\end{example}

\begin{example}\label{E-4.3}\rm(Power functions)\quad
Consider $f_\alpha$, given in Example \ref{E-3.4}, for $\alpha\in(0,2]$. For
$\rho,\sigma>0$ we have
$$
\maxdiv{f_\alpha}(\rho\|\sigma)
=s(\alpha)\Tr \sigma^{1/2}(\sigma^{-1/2}\rho \sigma^{-1/2})^\alpha \sigma^{1/2}.
$$
When $0<\alpha\le1$, this is rewritten as
$$
\maxdiv{f_\alpha}(\rho\|\sigma)=-\Tr \sigma\,\#_\alpha\,\rho,
$$
where $\#_\alpha$ denotes the weighted geometric mean corresponding to $x^\alpha$.
Proposition \ref{P-3.19} gives the inequality
$\Tr \sigma\,\#_\alpha\,\rho\le\Tr \rho^\alpha \sigma^{1-\alpha}$ for
$\rho,\sigma\in\BH_+$, which is also a consequence of the well-known log-majorization \cite{AH}.
When $1\le\alpha\le2$ and $\rho,\sigma\in\BH_+$ with $\rho^0\le \sigma^0$, by Proposition
\ref{P-3.19} we also have
$$
\Tr \rho^\alpha \sigma^{1-\alpha}
\le\Tr \sigma^{1/2}(\sigma^{-1/2}\rho \sigma^{-1/2})^\alpha \sigma^{1/2},
$$
which seems a novel trace inequality in matrix theory. Furthermore, Theorem \ref{P-4.1}
implies that if $\rho^0\le \sigma^0$ and $\rho \sigma\ne \sigma\rho$, then
\begin{align}
\Tr \rho\,\#_\alpha\,\sigma&<\Tr \rho^{1-\alpha}\sigma^\alpha
\ \ \mbox{for $\alpha\in(0,1)$}, \label{F-4.8}\\
\Tr \rho^\alpha \sigma^{1-\alpha}&<\Tr \sigma^{1/2}(\sigma^{-1/2}\rho \sigma^{-1/2})^\alpha \sigma^{1/2}
\ \ \mbox{for $\alpha\in(1,2)$}. \label{geom-Renyi ineq}
\end{align}
Note that more refined results than \eqref{F-4.8} are found in \cite{Hi1}.
\end{example}

\begin{remark}
Further to the above example, it is worth mentioning that if $\rho^0\le\sigma^0$ and $\rho\sigma\ne\sigma\rho$,
then the strict inequality in (4.8) holds in the opposite direction for $\alpha\in(2,+\infty)$.
Indeed, by elaborating the method in [3], one can prove the following log-majorization results
(for the definition and basics of log-majorization, see [3]):
\begin{itemize}
\item[(a)] $\sigma^{1/2}(\sigma^{-1/2}\rho\sigma^{-1/2})^\alpha\sigma^{1/2}\prec_{\log}
(\sigma^{1-\alpha\over2z}\rho^{\alpha\over z}\sigma^{1-\alpha\over2z})^z$
if $0<\alpha\le1$ and $z>0$,
\item[(b)] $(\sigma^{1-\alpha\over2z}\rho^{\alpha\over z}\sigma^{1-\alpha\over2z})^z
\prec_{\log}\sigma^{1/2}(\sigma^{-1/2}\rho\sigma^{-1/2})^\alpha\sigma^{1/2}$
if $\alpha\ge1$ and $z\ge\max\{\alpha/2,\alpha-1\}$,
\item[(c)] $\sigma^{1/2}(\sigma^{-1/2}\rho\sigma^{-1/2})^\alpha\sigma^{1/2}\prec_{\log}
(\sigma^{1-\alpha\over2z}\rho^{\alpha\over z}\sigma^{1-\alpha\over2z})^z$
if $\alpha\ge1$ and $0<z\le\min\{\alpha/2,\alpha-1\}$.
\end{itemize}
In particular, when $z=1$, (a) and (b) imply the inequalities in (4.7) and (4.8), respectively,
and (c) implies the opposite inequality of (4.8), where the strict inequality when
$\alpha\in(0,+\infty)\setminus\{1,2\}$ can be shown by using [31, Theorem 2.1]. Note that
$\mathrm{Tr}(\sigma^{1-\alpha\over2z}\rho^{\alpha\over z}\sigma^{1-\alpha\over2z})^z$ is the
main component of the $\alpha$-$z$-R\'enyi divergence in (1.4).
\end{remark}
\medskip

It is natural to ask whether the support condition \eqref{strict order support cond} in Theorem
\ref{P-4.1} is necessary. The following proposition shows that, at least when $\dim\hil=2$, 
the condition \eqref{strict order support cond} is not needed, and $\supp\mu_f\ne\emptyset$
is sufficient to guarantee the strict inequality in \eqref{F-4.1} for any non-commuting pair 
$(\rho,\sigma)$. Note that this condition cannot be further weakened, as $\supp\mu_f=\emptyset$ 
means that $f$ is a polynomial of degree at most two, in which case $S_f=\maxdiv{f}$, according to Example \ref{E-3.18}.

\begin{prop}\label{P-4.5}
Let $f$ be an operator convex function on $[0,+\infty)$ that is not a
polynomial,
i.e., $\supp\mu_f\ne\emptyset$. Then for any non-commuting $\rho,\gamma\in\cB(\bC^2)_+$ with
$\rho^0\le\gamma^0$,
$$
S_f(\rho\|\gamma)<\maxdiv{f}(\rho\|\gamma).
$$
\end{prop}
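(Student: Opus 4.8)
The plan is to first reduce to the case where $\gamma$ is invertible and $f$ is one of the elementary functions $f_s(x)=-x/(x+s)$ from \eqref{F-3.25}, and then to settle that case by an explicit $2\times2$ computation. First, since $\rho^0\le\gamma^0$ and $\rho\gamma\ne\gamma\rho$ on $\bC^2$, the projection $\gamma^0$ must have rank $2$, i.e.\ $\gamma>0$: otherwise $\gamma^0$ would be a rank-one projection, forcing $\rho^0\le\gamma^0$ to be rank one as well, so that $\rho$ and $\gamma$ would be supported on the same line and would commute. By unitary invariance of $S_f$ and $\maxdiv{f}$ I may assume $\gamma=\diag(b_1,b_2)$ with $b_1\ne b_2>0$ (if $b_1=b_2$ then $\gamma$ is a multiple of $I$ and commutes with everything), and write $\rho=\left(\begin{smallmatrix}p&w\\ \bar w&q\end{smallmatrix}\right)\ge0$; non-commutativity is then equivalent to $w\ne0$, whence $p,q>0$ and $0<|w|^2\le pq$.

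Next I would invoke the integral representation \eqref{F-2.4}. Since $f$ is operator convex on $[0,+\infty)$, both $S_f(\rho\|\gamma)$ and $\maxdiv{f}(\rho\|\gamma)$ are finite (Corollary \ref{cor:fdiv infty} and Proposition \ref{prop:maxdiv infty}). Decomposing $f$ as in \eqref{F-2.4} and using linearity in $f$ exactly as in \eqref{F-3.26}, together with $S_f=\maxdiv{f}$ for all polynomials of degree at most two (Example \ref{E-3.18}) and the fact that the linear term $x/(1+s)$ contributes equally to $S_f$ and to $\maxdiv{f}$, only the $\mu_f$-part survives in the difference:
\[
\maxdiv{f}(\rho\|\gamma)-S_f(\rho\|\gamma)=\int_{(0,+\infty)}\bigl(\maxdiv{f_s}(\rho\|\gamma)-S_{f_s}(\rho\|\gamma)\bigr)\,d\mu_f(s).
\]
By Proposition \ref{P-3.19} each integrand is non-negative, and since $\supp\mu_f\ne\emptyset$ the measure $\mu_f$ is non-zero; hence it suffices to prove the strict inequality $\maxdiv{f_s}(\rho\|\gamma)>S_{f_s}(\rho\|\gamma)$ for every $s>0$.

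Finally I would compute both sides of this last inequality directly. Writing $A_i:=p+sb_i$, $B_i:=q+sb_i$ and $t:=|w|^2$, one finds $\maxdiv{f_s}(\rho\|\gamma)=-\Tr\gamma+s\,\Tr\gamma^2(\rho+s\gamma)^{-1}$ with $\Tr\gamma^2(\rho+s\gamma)^{-1}=(b_1^2B_2+b_2^2A_1)/(A_1B_2-t)$, while for the standard divergence one uses $f_s(x)=-1+s/(x+s)$ and evaluates $\bigl\langle\gamma^{1/2},(L_\rho R_{\gamma^{-1}}+sI)^{-1}\gamma^{1/2}\bigr\rangle_\HS$ by solving the Sylvester equation $(L_\rho+sR_\gamma)Y=\gamma^{3/2}$, which yields the value $b_1^2B_1/(A_1B_1-t)+b_2^2A_2/(A_2B_2-t)$. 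Subtracting and clearing the positive denominators $D_0:=A_1B_2-t$, $D_1:=A_1B_1-t$, $D_2:=A_2B_2-t$, the numerator factors transparently as
\[
\maxdiv{f_s}(\rho\|\gamma)-S_{f_s}(\rho\|\gamma)=\frac{s^2\,t\,(b_1-b_2)^2\bigl[(b_1+b_2)(pq-t)+s(p+q)b_1b_2\bigr]}{D_0D_1D_2}.
\]
Since $t>0$, $b_1\ne b_2$, $pq-t\ge0$ and $p,q,s>0$, the right-hand side is strictly positive, completing the argument. The main obstacle is the explicit evaluation of $S_{f_s}$: unlike $\maxdiv{f_s}$, it requires inverting the relative modular operator via the Sylvester equation above, and the payoff is that the off-diagonal weight $t=|w|^2$ then appears as an overall factor, vanishing precisely in the commuting case. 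The boundary case where $\rho$ is singular ($pq=t$) is covered by the same formula, the bracket reducing to $s(p+q)b_1b_2>0$, or alternatively by continuity since $f_s(0^+)$ and $f_s'(+\infty)$ are finite.
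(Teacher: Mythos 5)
Your proof is correct; I checked the computations, including the key factorization
\[
\maxdiv{f_s}(\rho\|\gamma)-S_{f_s}(\rho\|\gamma)
=\frac{s^2\,t\,(b_1-b_2)^2\bigl[(b_1+b_2)(pq-t)+s(p+q)b_1b_2\bigr]}{D_0D_1D_2},
\]
which indeed follows from $z_{11}=b_1^{3/2}B_1/D_1$, $z_{22}=b_2^{3/2}A_2/D_2$ in the Sylvester equation and the identities $B_2D_1-B_1D_0=st(b_1-b_2)$, $A_1D_2-A_2D_0=-st(b_1-b_2)$. Structurally your argument has the same skeleton as the paper's: both reduce, via an integral representation of the operator convex $f$, to a one-parameter family of rational functions (your $f_s(x)=-x/(x+s)$ from \eqref{F-2.4} and \eqref{F-3.25}, versus the paper's $g_s(x)=(x-1)^2/(x+s)$ from \eqref{F-2.3}; these differ only by affine terms and a positive scalar, on which $S_f$ and $\maxdiv{f}$ agree, so the reductions are equivalent), and then settle the elementary case by an explicit $2\times2$ computation. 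The execution of that computation is, however, genuinely different. The paper works in the Bloch-sphere parametrization and imports several formulas from the monotone-metric reference \cite{HR} to express $S_{g_s}$ and $\maxdiv{g_s}$ in terms of Bloch vectors, reducing the strict inequality to $|\bu|^2+|\bv|^2-2\bigl(\tfrac{1-s}{1+s}\bigr)\bu\cdot\bv<4s$, with the singular-$\rho$ case handled by a separate remark ($|\bw|=1$). You instead diagonalize $\gamma$, solve the $2\times2$ Sylvester equation $(L_\rho+sR_\gamma)Y=\gamma^{3/2}$ by hand, and obtain a closed-form gap that is self-contained (no external lemmas), manifestly positive, exhibits the off-diagonal weight $t=|w|^2$ and $(b_1-b_2)^2$ as explicit factors (so it vanishes exactly in the commuting case), and covers singular $\rho$ (i.e., $pq=t$) uniformly within the same formula. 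What the paper's route buys is reuse of the Bloch machinery that also drives the proof of Proposition \ref{P-4.6}; what yours buys is a quantitative, explicitly factorized lower bound on $\maxdiv{f_s}-S_{f_s}$ and a fully self-contained argument. One small remark: your closing alternative "by continuity" for the singular case would only give a non-strict inequality on its own; but this is moot, since your explicit formula already handles that case directly.
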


\begin{proof}
We sketch the proof here. We may restrict to $2\times2$ density matrices. In the
well-known Bloch sphere description, a qubit density matrix is written as
${1\over2}(I+\bw\cdot\sigma)$, where $\bw\cdot\sigma:=w_1\sigma_1+w_2\sigma_2+w_3\sigma_3$ for
$\bw=(w_1,w_2,w_3)\in\bR^3$ with $|\bw|:=\bigl(w_1^2+w_2^2+w_3^2\bigr)^{1/2}\le1$ and Pauli
matrices $\sigma_1=\begin{bmatrix}0&1\\1&0\end{bmatrix}$,
$\sigma_2=\begin{bmatrix}0&-i\\i&0\end{bmatrix}$ and
$\sigma_3=\begin{bmatrix}1&0\\0&-1\end{bmatrix}$. Let $\rho={1\over2}(I+\bw\cdot\sigma)$ and
$\gamma={1\over2}(I+\bx\cdot\sigma)$ with $\bw,\bx\in\bR^3$, and assume that $\rho,\gamma>0$,
equivalently $|\bw|,|\bx|<1$. Set $\by:=\bw-\bx$, $\bu:=\bw-s\bx$ and $\bv:=\bw+s\bx$. Thanks
to the integral expression \eqref{F-2.3}, to prove Proposition \ref{P-4.5}, it is enough to show
that if $\rho\gamma\ne\gamma\rho$ then
\begin{equation}\label{F-4.17}
S_{g_s}(\rho\|\gamma)<\maxdiv{g_s}(\rho\|\gamma),
\end{equation}
where $g_s(x):=(x-1)^2/(x+s)$ with $s\in(0,+\infty)$; here note that $s=0$ is excluded due
to $f(0^+)<+\infty$. Since
$$
S_{g_s}(\rho\|\gamma)=\Tr(\rho-\gamma)\,{1\over L_\rho+sR_\gamma}(\rho-\gamma),
$$
we have by \cite[Lemma B.5]{HR}
\begin{equation}\label{F-4.18}
S_{g_s}(\rho\|\gamma)=(1+s)\Bigl\<\by,\bigl[\bigl\{(1+s)^2-|\bu|^2\bigr\}I
+|\bu\>\<\bu|-|\bv\>\<\bv|\bigr]^{-1}\by\Bigr\>.
\end{equation}
On the other hand, by using \cite[(B2)]{HR}, we have
\begin{align*}
\maxdiv{g_s}(\rho\|\gamma)
&=\Tr\gamma^{-1}(\rho-\gamma)\gamma(\rho+s\gamma)^{-1}(\rho-\gamma) \\
&={1\over2(1-|\bx|^2)\bigl[(1+s)^2-|\bv|^2\bigr]}
\,\Tr(I-\bx\cdot\sigma)(\by\cdot\sigma)(I+\bx\cdot\sigma)
[(1+s)I-\bv\cdot\sigma](\by\cdot\sigma).
\end{align*}
A bit tedious computation using \cite[(B1)]{HR} gives
$$
\Tr(I-\bx\cdot\sigma)(\by\cdot\sigma)(I+\bx\cdot\sigma)
[(1+s)I-\bv\cdot\sigma](\by\cdot\sigma)
=2(1+s)|\by|^2(1-|\bx|^2)
$$
and hence
\begin{equation}\label{F-4.19}
\maxdiv{g_s}(\rho\|\gamma)={1+s\over(1+s)^2-|\bv|^2}\,|\by|^2.
\end{equation}
Here, note that $\rho\gamma\ne\gamma\rho$ if and only if $\bw,\bx$ are linearly independent,
equivalently so are $\bu,\bv$. When this holds, an elementary but again tedious computation with
\eqref{F-4.18} and \eqref{F-4.19} shows that \eqref{F-4.17} is equivalent to
$$
|\bu|^2+|\bv|^2-2\biggl({1-s\over1+s}\biggr)\bu\cdot\bv<4s.
$$
Since $\bu-\bv=-2s\bx$ and $\bu\cdot\bv=|\bw|^2-s^2|\bx|^2$, the above left-hand side is
${4s\over1+s}(|\bw|^2+s|\bx|^2)<4s$ since $|\bw|,|\bx|<1$. When $\rho\not>0$ and
$\gamma>0$, the computation is similar with $|\bw|=1$ and $|\bx|<1$.
\end{proof}

\subsection{The relation of the preservation conditions}
\label{sec:reversibility comparison}

In this section we compare the implications of the preservation of the two $f$-divergences, 
$S_f$ and $\maxdiv{f}$, by a quantum operation; 
that is, we compare Theorem \ref{T-3.12} and Theorem \ref{T-3.21}.
As it turns out, the preservation of $S_f$ is in general strictly stronger than the preservation 
of $\maxdiv{f}$, i.e., in general the preservation of $\maxdiv{f}$ does not imply the reversibility 
of the quantum operation as in Definition \ref{def:rev}.

This can be seen in various ways. In \cite[Remark 5.4]{HMPB}, an example from \cite{JPP} was used to show states 
$\rho,\sigma$ and a CPTP map $\map$ such that $\map$ is not reversible on $\{\rho,\sigma\}$,
but $S_{f_2}(\map(\rho)\|\map(\sigma))=S_{f_2}(\rho\|\sigma)$ holds for $f_2(x)=x^2$.
By Example \ref{E-3.18} and \ref{max f rev cond3} of Theorem \ref{T-3.21}, this latter condition implies that  
$\maxdiv{f}(\map(\rho)\|\map(\sigma))=\maxdiv{f}(\rho\|\sigma)$ for every operator convex function 
$f$ on $(0,+\infty)$; yet reversibility does not hold. The example from \cite{JPP} is rather involved; below we give a much simpler one, 
in Example \ref{E-4.4}.

Another way to see the above statement is to consider Matsumoto's minimal reverse test 
$(\map,\a,\b)$ as in the proof of Proposition \ref{P-3.19}. Then 
\begin{align*}
\maxdiv{f}(\map(\a)\|\map(\b))&=
\maxdiv{f}(\rho\|\sigma)=
S_f(\a\|\b)=
\maxdiv{f}(\a\|\b)
\end{align*}
for any operator convex function $f$ on $(0,+\infty)$, and thus all of \ref{max f rev cond1}--\ref{max f rev cond7} in Theorem 
\ref{T-3.21} hold. 
However, if $f$ satisfies the support condition \eqref{strict order support cond} and
$\rho^0\le\sigma^0$ and $\rho\sigma\ne\sigma\rho$, then
by Theorem 4.3 we have
\begin{align*}
S_f(\map(\a)\|\map(\b))&=S_f(\rho\|\sigma)<\maxdiv{f}(\rho\|\sigma)=S_f(\a\|\b),
\end{align*}
and hence none of \ref{rev cond1}--\ref{rev cond9} in Theorem \ref{T-3.12} hold.
Note that while the argument in the previous paragraph was based on a very specific example, using the
function $f_2$, the argument in this paragraph shows that, in general, preservation of 
$\maxdiv{f}$ does not imply reversibility for any function that satisfies the support condition 
\eqref{strict order support cond}.

Yet another approach is given in Example \ref{E-4.4} below, where we directly compare 
\ref{rev cond7} of Theorem \ref{T-3.12} and \ref{max f rev cond6} of Theorem \ref{T-3.21}.
Note that the map used in \cite[Remark 5.4]{HMPB} is not unital, and neither is the map $\map$ in the minimal reverse test 
unless $\rho,\sigma$ are commuting and $k=\dim\cH$ (i.e., all
$P_i$ are rank one). Hence, Example \ref{E-4.4} with a unital qutrit channel
gives a further non-trivial insight into the difference of the preservation of the two $f$-divergences.

On the other hand, the points of Theorems \ref{T-3.12} and \ref{T-3.21} become equivalent when some further 
conditions are imposed on $(\map,\rho,\sigma)$. This happens, for instance, in the qubit case when $\map$ is unital, as shown in Proposition \ref{P-4.6}, or in the case where $\map(\rho)$ and $\map(\sigma)$ commute, given in Proposition \ref{C-3.22} below.

\begin{example}\label{E-4.4}\rm
Let $\mathcal{H}=\bC^3$ and $P$ be the orthogonal projection onto $\bC^2\oplus0$. Let
$\Phi:\BH\to\BH$ be the pinching
$$
\Phi(X):=PXP+(I-P)X(I-P),
$$
which is a unital qutrit channel. Let $\rho:=|\psi\>\<\psi|$ with $\psi\in\ran P$, and
$\sigma:=b_1|x_1\>\<x_1|+b_2|x_2\>\<x_2|+|x_3\>\<x_3|$ with $b_1,b_2>0$,
where $\{x_1,x_2,x_3\}$ is an orthonormal basis in $\bC^3$.
It is easy to verify that
\begin{align}
&\Phi^*(\Phi(\sigma)^{-1/2}\Phi(\rho)\Phi(\sigma)^{-1/2})=\sigma^{-1/2}\rho \sigma^{-1/2} \label{counterex1}\\
&\ds\ds\ds\iff
\ds |(P\sigma P)^{-1/2}\psi\>\<(P\sigma P)^{-1/2}\psi|=|\sigma^{-1/2}\psi\>\<\sigma^{-1/2}\psi|, \ds\text{and}\label{F-4.10}\\
&\Phi(\rho \sigma^{-1}\rho)=\Phi(\rho)\Phi(\sigma)^{-1}\Phi(\rho)\label{F-4.11}\\
&\ds\ds\ds \iff
\ds \<\psi,\sigma^{-1}\psi\>=\<\psi,(P\sigma P)^{-1}\psi\>,\label{counterex2}
\end{align}
where \eqref{counterex1} is \ref{rev cond7} of Theorem \ref{T-3.12}, and \eqref{F-4.11} is
\ref{max f rev cond6} of Theorem \ref{T-3.21}. Hence, in order to find an example where the
equivalent points of of Theorem \ref{T-3.21} hold, but those of Theorem \ref{T-3.12} do not, we
have to set the parameters above so that 
\begin{align}
&\text{\textbullet}\ds\sigma^{-1/2}\psi\ds\text{and}\ds(P\sigma P)^{-1/2}\psi \ds \text{are linearly independent (i.e., \eqref{F-4.10} fails), and}\label{counterex3}\\
&\text{\textbullet}\ds\|\sigma^{-1/2}\psi\|=\|(P\sigma P)^{-1/2}\psi\|\ds \text{(i.e., \eqref{counterex2} holds).}\label{counterex4}
\end{align}
In order to achieve this, let us choose
$$
\psi:=\begin{bmatrix}1\\1\\0\end{bmatrix},\quad
x_1:={1\over\sqrt3}\begin{bmatrix}1\\1\\1\end{bmatrix},\quad
x_2:={1\over\sqrt2}\begin{bmatrix}1\\0\\-1\end{bmatrix},\quad
x_3:={1\over\sqrt6}\begin{bmatrix}1\\-2\\1\end{bmatrix}.
$$
It is straightforward to compute
$$
\sigma^{-1/2}\psi=b_1^{-1/2}\<x_1,\psi\>x_1+b_2^{-1/2}\<x_2,\psi\>x_2+\<x_3,\psi\>x_3
=\begin{bmatrix}{2\over3}b_1^{-1/2}+{1\over2}b_2^{-1/2}-{1\over6}\\
{2\over3}b_1^{-1/2}+{1\over3}\\
{2\over3}b_1^{-1/2}-{1\over2}b_2^{-1/2}-{1\over6}\end{bmatrix},
$$
and
\begin{equation}\label{F-4.15}
\|\sigma^{-1/2}\psi\|^2={4\over3}b_1^{-1}+{1\over2}b_2^{-1}+{1\over6}.
\end{equation}
On the other hand, we have
\begin{align*}
P\sigma P=b_1|Px_1\>\<Px_1|+b_2|Px_2\>\<Px_2|+|Px_3\>\<Px_3|
=\begin{bmatrix}{2b_1+3b_2+1\over6} & {b_1-1\over3} & 0\\{b_1-1\over3} & {b_1+2\over3} \\ 0 & 0 & 0\end{bmatrix},
\end{align*}
and
$$
(P\sigma P)^{-1}={6\over b_1b_2+3b_1+2b_2}
\begin{bmatrix}{b_1+2\over3} & -{b_1-1\over3} & 0\\-{b_1-1\over3} & {2b_1+3b_2+1\over6} & 0\\ 0 & 0 & 0\end{bmatrix},
$$
so that
\begin{equation}\label{F-4.16}
\|(P\sigma P)^{-1/2}\psi\|^2={3(b_2+3)\over b_1b_2+3b_1+2b_2}.
\end{equation}
We find that \eqref{F-4.15} and \eqref{F-4.16} are equal, for instance, when $b_1=1/3$ and
$b_2=3/11$, in which case the third coordinate of $\sigma^{-1/2}\psi$ is non-zero. Therefore, when
$\sigma={1\over3}|x_1\>\<x_1|+{3\over11}|x_2\>\<x_2|+|x_3\>\<x_3|$, we see that both
\eqref{counterex3} and \eqref{counterex4} are satisfied, as required.

This shows that $\what S_f(\map(\rho)\|\map(\sigma))=\what S_f(\rho\|\sigma)$ for any operator convex $f$
on $[0,+\infty)$, while
$S_f(\map(\rho)\|\map(\sigma))<S_f(\rho\|\sigma)$ for any operator convex $f$ such that $|\supp\mu_f|\ge 7$.
In particular,
$\Phi$ is not reversible on $\{\rho,\sigma\}$, while \ref{max f rev cond1}--\ref{max f rev cond7} of Theorem \ref{T-3.21} hold.
\end{example}

\begin{remark}
 Since in the above example
$\Phi^*\circ\Phi_\sigma=\Phi_\sigma$, 
comparing this with \ref{rev cond9} of Theorem \ref{T-3.12} and \ref{max f rev cond7} of Theorem \ref{T-3.21}
shows that 
$\Phi_\sigma$ is an example of a unital channel
$\Psi:\cB(\bC^3)\to\cB(\bC^3)$ such that
$$
\bC I\subsetneqq\fix{\Psi}\subsetneqq\cM_\Psi\subsetneqq\cB(\bC^3);
$$
cf.~also Appendix \ref{sec:examples}.
\end{remark}
\bigskip

The next proposition shows that Example \ref{E-4.4} has minimal dimension among {\it unital}
channels for which the points of Theorems \ref{T-3.12} and \ref{T-3.21} are inequivalent,
though we have a {\it non-unital} qubit channel showing the difference (see the discussion
before Example \ref{E-4.4}).

\begin{prop}\label{P-4.6}\rm
All the points of Theorems \ref{T-3.12} and \ref{T-3.21} are equivalent to each other for any unital qubit
channel $\Phi:\cB(\bC^2)\to\cB(\bC^2)$.
\end{prop}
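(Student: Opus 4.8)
The implications from reversibility (condition \ref{rev cond1} of Theorem \ref{T-3.12}) to every other point of both theorems are already in hand: reversibility forces preservation of all standard and all maximal $f$-divergences by the monotonicity in Proposition \ref{P-3.8} and Corollary \ref{P-3.20}, while the internal equivalences inside each theorem hold because a unital qubit channel is completely positive, hence $2$-positive. Thus the whole statement reduces to one new implication, namely that the weakest hypothesis of Theorem \ref{T-3.21}, condition \ref{max f rev cond1} (preservation of $\maxdiv{f}$ for one non-linear operator convex $f$), already yields reversibility. The plan is to prove exactly this. I would first dispose of the degenerate configurations: if $\Phi(\rho)$ and $\Phi(\sigma)$ commute, the equivalence is Proposition \ref{C-3.22}, and since the Bloch vectors being parallel is preserved by the linear Bloch action of $\Phi$, this case in particular absorbs all commuting pairs $\rho,\sigma$. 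Hence I may assume $\Phi(\rho),\Phi(\sigma)$ non-commuting, which forces $\rho,\sigma$ non-commuting; together with $\rho^0\le\sigma^0$ this forces $\sigma$ to be invertible, since a rank-one $\sigma$ would make $\rho=\sigma$.

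The main computation is carried out in the Bloch representation, reusing the formulas behind Proposition \ref{P-4.5}. Write $\bw,\bx\in\bR^3$ for the Bloch vectors of $\rho$ and $\sigma$; since $\Phi$ is unital its Bloch action is linear, $\bw\mapsto T\bw$, for a real $3\times3$ contraction $T$ (that is, $T\trans T\le I$, as $\Phi$ maps the Bloch ball into itself), and $\Phi(\sigma)$ stays invertible because $|T\bx|\le|\bx|<1$. By the equivalence \ref{max f rev cond1}\,$\iff$\,\ref{max f rev cond2} of Theorem \ref{T-3.21}, the hypothesis gives $\maxdiv{g_s}(\Phi(\rho)\|\Phi(\sigma))=\maxdiv{g_s}(\rho\|\sigma)$ for the whole operator convex family $g_s(x)=(x-1)^2/(x+s)$, $s>0$. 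Substituting the explicit value \eqref{F-4.19} on both sides (with $\bw,\bx$ replaced by $T\bw,T\bx$ on the output side) and clearing the strictly positive denominators produces a polynomial identity in $s$; comparing the coefficients of $s^0,s^1,s^2$ gives, with $\kappa:=|T\by|^2/|\by|^2$ where $\by=\bw-\bx$, the three relations $|T\bw|^2=\kappa|\bw|^2+(1-\kappa)$, $|T\bx|^2=\kappa|\bx|^2+(1-\kappa)$ and $(T\bw)\cdot(T\bx)=\kappa\,\bw\cdot\bx+(1-\kappa)$. Here $0<\kappa\le1$: the upper bound is the contraction property, and $\kappa>0$ since $T\bw\ne T\bx$ (the images are distinct).

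The heart of the argument, and the step I expect to be the main obstacle, is to extract $\kappa=1$ from these relations. This is where invertibility of $\sigma$ and the contraction property of $T$ combine: from $|T\bx|^2\le|\bx|^2$ and the second relation one gets $(1-\kappa)(1-|\bx|^2)\le0$, and since $|\bx|<1$ this forces $\kappa\ge1$, hence $\kappa=1$. Geometrically, a unital qubit channel can only shrink the Bloch ball, so equality of the $f$-divergences can be sustained only if no shrinking occurs along the relevant directions; the delicate point is recognising that the single scalar $\kappa$ simultaneously governs all three relations, so that a strict contraction of $\sigma$ is incompatible with preservation.

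With $\kappa=1$ the three relations reduce to $(T\,\cdot)\cdot(T\,\cdot)=(\cdot)\cdot(\cdot)$ on the pair $\bw,\bx$, so $T$ restricts to a linear isometry of the plane $V:=\spann\{\bw,\bx\}$ onto $V':=\spann\{T\bw,T\bx\}$. I would then extend the inverse isometry $V'\to V$ to some $R\in O(3)$, mapping $V'^{\perp}$ isometrically onto $V^{\perp}$. The associated Bloch map $\Psi$ is a positive trace-preserving map on $\cB(\bC^2)$ — proper rotations give unitary conjugations and improper ones the transpose composed with a rotation, both positive and trace-preserving — and it satisfies $\Psi(\Phi(\rho))=\rho$ and $\Psi(\Phi(\sigma))=\sigma$ because $RT\bw=\bw$ and $RT\bx=\bx$. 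Thus $\Phi$ is reversible on $\{\rho,\sigma\}$ in the sense of Definition \ref{def:rev}, which is condition \ref{rev cond1} of Theorem \ref{T-3.12}; combined with the reverse implications noted at the outset, this establishes the equivalence of all points of both theorems.
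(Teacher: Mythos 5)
Your proof is correct, and while its first half coincides with the paper's own argument, its endgame is genuinely different. Both proofs reduce the statement to showing that condition \ref{max f rev cond1} of Theorem \ref{T-3.21} implies reversibility, pass to the Bloch picture with a contraction $T$, and exploit preservation of $\maxdiv{g_s}$ for the family $g_s(x)=(x-1)^2/(x+s)$ via formula \eqref{F-4.19}; your coefficient comparison in $s$ and the paper's per-$s$ squeeze (using $|T\bv|\le|\bv|$ and $|T\by|\le|\by|$ with the $s$-dependent vector $\bv=\bw+s\bx$) are two equivalent ways of concluding that $T$ acts isometrically on $\spann\{\bw,\bx\}$, equivalently that $T^*T\bw=\bw$ and $T^*T\bx=\bx$, which is the paper's formulation. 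The divergence comes after that: the paper feeds the preserved lengths and angles into a computation of the Bogoliubov--Kubo--Mori monotone metric and invokes condition \ref{rev cond10} of Theorem \ref{T-3.12} (whence its need to normalize to invertible density matrices at the outset), whereas you bypass the monotone-metric machinery entirely by extending the inverse isometry $\spann\{T\bw,T\bx\}\to\spann\{\bw,\bx\}$ to an orthogonal $R\in O(3)$ and noting that every orthogonal Bloch action, proper or improper, is implemented by a positive trace-preserving map (unitary conjugation, possibly composed with the transpose), which then reverses $\Phi$ on $\{\rho,\sigma\}$ directly in the sense of Definition \ref{def:rev}. Your route is more elementary and self-contained --- positivity of the reversing map even follows from $|R\bbz|=|\bbz|$ alone, via the spectral description of qubit positivity --- while the paper's route illustrates how the metric criterion \ref{rev cond10} is put to work, consistent with its use for $D_2^*$ in Section \ref{sec:Renyi rev}. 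Two small points you should still patch: the theorems are stated for general positive operators, so one should first normalize to density matrices (the paper does this with $(\rho+\gamma)/\Tr(\rho+\gamma)$ and $\gamma/\Tr\gamma$; the transfer of the preservation hypothesis is routine by Theorem \ref{T-3.21}\,\ref{max f rev cond2} together with Remark \ref{R-3.24} and trace preservation); and a rank-one $\sigma$ with $\rho^0\le\sigma^0$ forces $\rho=c\sigma$ for a scalar $c$ rather than $\rho=\sigma$ --- the pair still commutes, so your reduction to the non-commuting, invertible-$\sigma$ case stands.
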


\begin{proof}
Let $\Phi$ be a unital qubit channel, and let $\rho,\gamma\in\mathcal{B}(\bC^2)_+$ with
$\gamma>0$. Assume that the equivalent statements of Theorem \ref{T-3.21} hold for $\Phi$ and
$\rho,\gamma$, and we prove that $\Phi$ is reversible on $\{\rho,\gamma\}$. By considering
$(\rho+\gamma)/\Tr(\rho+\gamma)$ and $\gamma/\Tr \gamma$ in place of $\rho$ and $\gamma$,
respectively, it suffices to assume that $\rho$ and $\gamma$ are invertible density
matrices, so we write $\rho={1\over2}(I+\bw\cdot\sigma)$ and
$\gamma={1\over2}(I+\bx\cdot\sigma)$ with $|\bw|,|\bx|<1$, and let $\by$ and $\bv$ be as in
the proof of Proposition \ref{P-4.5}. We may also assume that $\rho\ne\gamma$, i.e.,
$\bw\ne\bx$. In the Bloch sphere description, recall that $\Phi$ acts on density matrices
as follows:
$$
\Phi:{1\over2}(I+\bbz\cdot\sigma)\longmapsto{1\over2}(I+T\bbz\cdot\sigma),\qquad
\bbz\in\bR^3,\ |\bbz|\le1,
$$
where $T$ is a $3\times3$ real matrix with the operator norm $\|T\|_\infty\le1$. Consider
$g_s(x):=(x-1)^2/(x+s)$, $s\in[0,+\infty)$, as given in the proof of Proposition
\ref{P-4.5}. By assumption we have the equality
$\maxdiv{g_s}(\rho\|\gamma)=\maxdiv{g_s}(\Phi(\rho)\|\Phi(\sigma))$, which means by
\eqref{F-4.19} that
$$
{1+s\over(1+s)^2-|\bv|^2}\,|\by|^2={1+s\over(1+s)^2-|T\bv|^2}\,|T\by|^2.
$$
Since $\|T\|_\infty\le1$, this forces $|T\bv|=|\bv|$ and $|T\by|=|\by|$, which are
equivalent to $T^*T\bv=\bv$ and $T^*T\by=\by$. Hence $T^*T\bw=\bw$ and $T^*T\bx=\bx$. Now,
recall \cite[(17) and (22)]{HR} that the so-called Bogoliubov-Kubo-Mori monotone Riemannian
metric on invertible density matrices is
$\<X,\Omega_\gamma^\BKM(Y)\>_\HS=\Tr X\Omega_\gamma^\BKM(Y)$ for $X,Y\in\cB(\bC^2)_{\sa}^0$,
where $\Omega_\gamma^\BKM=\Omega_\gamma^\kappa$ with
$\kappa(x):=(\log x)/(x-1)$ in \eqref{F-2.11} is given as
\begin{equation}\label{F-4.20}
\Omega_\gamma^\BKM(Y):={1\over2}\int_0^\infty{1\over2\gamma+sI}\,Y\,{1\over2\gamma+sI}\,ds.
\end{equation}
Thanks to \cite[(B21)]{HR} we have
\begin{align*}
&{1\over2}\,\Tr(\by\cdot\sigma){1\over(1+s)I+\bx\cdot\sigma}(\by\cdot\sigma)
{1\over(1+s)I+\bx\cdot\sigma} \\
&\qquad={|\by|^2\over\bigl[(1+s)^2-|\bx|^2\bigr]^2}\bigl[(1+s)^2+|\bx|^2\cos2\theta\bigr],
\end{align*}
where $\theta$ is the angle between $\bx$ and $\by$. Since $|T\bx|=|\bx|$, $|T\by|=|\by|$ and
$$
(T\bx)\cdot(T\by)=\bx\cdot(T^*T\by)=\bx\cdot\by,
$$
it follows that the angle between $T\bx$ and $T\by$ coincides with $\theta$. Therefore,
\begin{align*}
&{1\over2}\,\Tr((T\by)\cdot\sigma){1\over(1+s)I+(T\bx)\cdot\sigma}((T\by)\cdot\sigma)
{1\over(1+s)I+(T\bx)\cdot\sigma} \\
&\qquad={1\over2}\,\Tr(\by\cdot\sigma){1\over(1+s)I+\bx\cdot\sigma}(\by\cdot\sigma)
{1\over(1+s)I+\bx\cdot\sigma}.
\end{align*}
Integrate the above for $s\in(0,+\infty)$, and apply \eqref{F-4.20} to obtain
$$
\bigl\<\Phi(\by\cdot\sigma),\Omega_{\Phi(\gamma)}^\BKM(\Phi(\by\cdot\sigma))\bigr\>_\HS
=\bigl\<\by\cdot\sigma,\Omega_\gamma^\BKM(\by\cdot\sigma)\bigr\>_\HS.
$$
Since
$$
{\log x\over x-1}=\int_{(0,+\infty)}{1\over(x+s)(1+s)}\,ds,
$$
it follows that \ref{rev cond10} of Theorem \ref{T-3.12} holds with
$\supp\nu_\kappa=(0,+\infty)$, so that $\Phi$ is reversible on $\{\rho,\gamma\}$.
\end{proof}

\begin{prop}\label{C-3.22}
Let $\rho,\sigma$, and $\Phi$ be as in Theorem \ref{T-3.21}.
If $\Phi(\rho)$ commutes with $\Phi(\sigma)$, then the points of Theorem \ref{T-3.21}
imply those of Theorem \ref{T-3.12}.
\end{prop}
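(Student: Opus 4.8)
The plan is to derive condition \ref{rev cond5} of Theorem \ref{T-3.12} (preservation of $S_f$ for \emph{all} operator convex $f$ on $[0,+\infty)$), from which every remaining point of Theorem \ref{T-3.12}, in particular reversibility, follows from that theorem. The bridge between the two theorems is the elementary coincidence, recorded at the beginning of Section \ref{sec:maxdiv vs mod div}, that $S_f(\xi\|\zeta)=\maxdiv{f}(\xi\|\zeta)$ whenever $\xi,\zeta$ commute. Since by hypothesis $\Phi(\rho)$ commutes with $\Phi(\sigma)$, this applies to the output pair and is exactly what collapses the gap between the standard and the maximal $f$-divergence on the image of $\Phi$.

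Concretely, I would fix an arbitrary operator convex $f$ on $[0,+\infty)$ and assemble the chain
\begin{align*}
\maxdiv{f}(\rho\|\sigma)
&=\maxdiv{f}(\Phi(\rho)\|\Phi(\sigma))
=S_f(\Phi(\rho)\|\Phi(\sigma))\\
&\le S_f(\rho\|\sigma)
\le \maxdiv{f}(\rho\|\sigma),
\end{align*}
where the first equality is the assumed preservation of $\maxdiv{f}$ (condition \ref{max f rev cond2} of Theorem \ref{T-3.21}), the second equality is the commuting-case coincidence applied to $\Phi(\rho),\Phi(\sigma)$, the first inequality is monotonicity of the standard $f$-divergence (Proposition \ref{P-3.8}), and the last is the universal bound of Proposition \ref{P-3.19}. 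All four quantities being finite, the chain is forced to equality throughout, whence $S_f(\Phi(\rho)\|\Phi(\sigma))=S_f(\rho\|\sigma)$. As $f$ was arbitrary, this is precisely condition \ref{rev cond5} of Theorem \ref{T-3.12}, and that theorem then delivers all of its equivalent points.

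Two technical points need checking. The monotonicity step invokes Proposition \ref{P-3.8}, hence that $\Phi^*$ is a Schwarz contraction; this is where $2$-positivity of $\Phi$---presupposed anyway by Theorem \ref{T-3.12}---enters, via the fact that a unital $2$-positive map is a Schwarz contraction. For finiteness, $\rho^0\le\sigma^0$ together with $f(0^+)<+\infty$ (automatic for $f$ operator convex on $[0,+\infty)$) gives $S_f(\rho\|\sigma),\maxdiv{f}(\rho\|\sigma)<+\infty$ by Corollary \ref{cor:fdiv infty} and Proposition \ref{prop:maxdiv infty}; on the output side one first notes $\Phi(\rho)^0\le\Phi(\sigma)^0$, which holds because $\rho\le\lambda\sigma$ for some $\lambda\ge0$ and $\Phi$ is positive, so the output divergences are finite as well.

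The main obstacle is structural rather than computational: without the commutativity of $\Phi(\rho)$ and $\Phi(\sigma)$ the second equality in the chain fails, and indeed the entire thrust of Section \ref{sec:reversibility comparison} (e.g.\ Example \ref{E-4.4}) is that preservation of $\maxdiv{f}$ alone does \emph{not} force preservation of $S_f$. Thus the essential insight is that output-commutativity pins $S_f$ to $\maxdiv{f}$ on the image, which, pinched against monotonicity of $S_f$ and the bound $S_f\le\maxdiv{f}$ on the domain, propagates back to equality of the standard $f$-divergences and hence to reversibility. A secondary subtlety is that Theorem \ref{T-3.21} is phrased for merely positive $\Phi$, so to land in the setting of Theorem \ref{T-3.12} one reads $\Phi$ as $2$-positive, as required both for the monotonicity step and for the conclusion.
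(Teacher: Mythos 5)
Your proof is correct and is essentially identical to the paper's own argument: both pin the chain $S_f(\Phi(\rho)\|\Phi(\sigma))\le S_f(\rho\|\sigma)\le\maxdiv{f}(\rho\|\sigma)$ (Propositions \ref{P-3.8} and \ref{P-3.19}) against the assumed preservation of $\maxdiv{f}$ and the commuting-case identity $S_f=\maxdiv{f}$ on the outputs, forcing equality throughout and hence condition \ref{rev cond5} of Theorem \ref{T-3.12}. Your side remarks on finiteness and on reading $\Phi$ as $2$-positive are sound and consistent with the paper's implicit assumptions.
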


\begin{proof}
By Propositions \ref{P-3.8} and \ref{P-3.19}, we have 
\begin{align}\label{general inequalities}
S_f(\map(\rho)\|\map(\sigma))\le S_f(\rho\|\sigma)\le\maxdiv{f}(\rho\|\sigma)
\end{align}
for all operator convex functions $f$ on $[0,+\infty)$. Assume now \ref{max f rev cond2} of 
Theorem \ref{T-3.21}. Since $\map(\rho)$ and $\map(\sigma)$ commute, we then have
\begin{align*}
S_f(\map(\rho)\|\map(\sigma))=\maxdiv{f}(\map(\rho)\|\map(\sigma))=
\maxdiv{f}(\rho\|\sigma)
\end{align*}
for all operator convex functions $f$ on $[0,+\infty)$, from which, when combined with 
\eqref{general inequalities}, we get that  $S_f(\map(\rho)\|\map(\sigma))=S_f(\rho\|\sigma)$ 
for all operator convex functions $f$ on $[0,+\infty)$, i.e., all the points of Theorem
\ref{T-3.21} hold.
\end{proof}

By Theorems \ref{T-3.12}, \ref{T-3.21}, and Proposition \ref{C-3.22} we have

\begin{cor}\label{C-3.23}
Let $\map:\,\B(\hil)\to\B(\kil)$ be a $2$-positive trace-preserving map. If
$\Phi(\rho)$ commutes with $\Phi(\sigma)$ for all $\rho\in\BH$ (in particular, if $\Phi$ is
a quantum-classical channel, i.e., the range of $\Phi$ is commutative), then
$\fix{\Phi^*\circ\Phi_\sigma}=\cM_{\Phi_\sigma}$.
\end{cor}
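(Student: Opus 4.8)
The plan is to prove the two $^*$-subalgebras $\fix{\Phi^*\circ\Phi_\sigma}$ and $\cM_{\Phi_\sigma}$ of $\B(\sigma^0\hil)$ coincide by comparing them on positive elements. Both are indeed $^*$-subalgebras of $\B(\sigma^0\hil)$: the first by formula \eqref{dec1} of Theorem \ref{thm:decomposition} (which identifies it with $\bigoplus_k\B(\hil_{1,k,L})\otimes I_{1,k,R}$), and the second because $\Phi_\sigma$ is positive, so its multiplicative domain is closed under the adjoint (see the discussion after \eqref{multdom}). Since a finite-dimensional $^*$-subalgebra is the linear span of its positive part, it suffices to show that $X\in\B(\sigma^0\hil)_+$ lies in $\fix{\Phi^*\circ\Phi_\sigma}$ if and only if it lies in $\cM_{\Phi_\sigma}$. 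For such an $X$ I would set $\rho:=\sigma^{1/2}X\sigma^{1/2}\in\B(\hil)_+$, so that $\rho^0\le\sigma^0$ and $\sigma^{-1/2}\rho\sigma^{-1/2}=X$; then $(\Phi,\rho,\sigma)$ falls exactly under the hypotheses of Theorems \ref{T-3.12} and \ref{T-3.21}, with $X\in\fix{\Phi^*\circ\Phi_\sigma}$ being precisely condition \ref{rev cond9} of Theorem \ref{T-3.12} and $X\in\cM_{\Phi_\sigma}$ being precisely condition \ref{max f rev cond8} of Theorem \ref{T-3.21} (both available since $\Phi$ is $2$-positive).

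For the inclusion $\fix{\Phi^*\circ\Phi_\sigma}\subseteq\cM_{\Phi_\sigma}$, which needs no commutativity assumption, I would start from $X\in\fix{\Phi^*\circ\Phi_\sigma}$ and invoke Theorem \ref{T-3.12} to conclude that $\Phi$ is reversible on $\{\rho,\sigma\}$. Writing $\Psi$ for the reversing map and applying the monotonicity of $\maxdiv{f}$ under positive trace-preserving maps (Corollary \ref{P-3.20}) twice, once for $\Phi$ and once for $\Psi$, I get $\maxdiv{f}(\rho\|\sigma)=\maxdiv{f}(\Psi(\Phi(\rho))\|\Psi(\Phi(\sigma)))\le\maxdiv{f}(\Phi(\rho)\|\Phi(\sigma))\le\maxdiv{f}(\rho\|\sigma)$, hence equality throughout for every operator convex $f$. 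This is condition \ref{max f rev cond2} of Theorem \ref{T-3.21}, so that theorem yields $X\in\cM_{\Phi_\sigma}$.

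For the reverse inclusion $\cM_{\Phi_\sigma}\subseteq\fix{\Phi^*\circ\Phi_\sigma}$, which is exactly where the hypothesis enters, I would take $X\in\cM_{\Phi_\sigma}$, so all the equivalent conditions of Theorem \ref{T-3.21} hold. Because $\Phi(\rho)$ commutes with $\Phi(\sigma)$ by assumption, Proposition \ref{C-3.22} transfers the (maximal) preservation conditions of Theorem \ref{T-3.21} to the (standard) preservation conditions of Theorem \ref{T-3.12}; in particular condition \ref{rev cond9} holds, i.e.\ $X\in\fix{\Phi^*\circ\Phi_\sigma}$. Combining the two inclusions on positive elements and extending by linearity gives $\fix{\Phi^*\circ\Phi_\sigma}=\cM_{\Phi_\sigma}$, and the quantum–classical case is immediate, since a commutative range makes $\Phi(\rho)$ commute with $\Phi(\sigma)$ for all $\rho$.

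Since Theorems \ref{T-3.12}, \ref{T-3.21}, and Proposition \ref{C-3.22} carry the analytic content, I expect the only genuine point requiring care to be the support bookkeeping when $\sigma$ is not invertible: one must consistently regard $\Phi_\sigma$ as the unital $2$-positive map from $\B(\sigma^0\hil)$ to $\B(\Phi(\sigma)^0\kil)$, and verify that $X\mapsto\rho=\sigma^{1/2}X\sigma^{1/2}$ is a bijection between $\B(\sigma^0\hil)_+$ and $\{\rho\in\B(\hil)_+:\rho^0\le\sigma^0\}$ that matches conditions \ref{rev cond9} and \ref{max f rev cond8}. Beyond this and the elementary reduction from the full algebras to their positive cones, the argument should present no further obstacle.
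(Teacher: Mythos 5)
Your proposal is correct and follows essentially the same route as the paper, whose proof of this corollary is left implicit as an immediate consequence of Theorems \ref{T-3.12}, \ref{T-3.21} and Proposition \ref{C-3.22}: identify positive elements $X\in\B(\sigma^0\hil)_+$ with pairs $(\rho,\sigma)$ via $\rho=\sigma^{1/2}X\sigma^{1/2}$, use reversibility plus double monotonicity of $\maxdiv{f}$ for the inclusion $\fix{\Phi^*\circ\Phi_\sigma}\subseteq\cM_{\Phi_\sigma}$, use Proposition \ref{C-3.22} (where the commutativity hypothesis enters) for the reverse inclusion, and extend by linearity using that both sets are finite-dimensional $^*$-subalgebras spanned by their positive parts. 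Your handling of the support bookkeeping (regarding $\Phi_\sigma$ as a unital $2$-positive map on $\B(\sigma^0\hil)$) is exactly the convention the paper uses in the proofs of Theorems \ref{T-3.12} and \ref{T-3.21}.
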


In particular, if $\Phi$ is a unital channel (trace-preserving) and $\sigma=I$ (so
$\Phi(\sigma)=I$), then $\fix{\Phi^*\circ\Phi}=\cM_\Phi$ holds. This is contained in
\cite[Theorem 11]{CJK}, where the fixed point algebra $\fix{\Phi^*\circ\Phi}$ was denoted
by $UCC(\Phi)$ and called the UCC algebra ({\it unitarily correctable} codes). The unitality
of the channel $\Phi$ seems essential in \cite{CJK}.

Another special case is when $\Phi$ is a (trace-preserving) conditional expectation onto a
subalgebra of $\BH$ and $\Phi(\sigma)=\sigma$. 
In this case, $\Phi_\sigma=\Phi$ and $\Phi^*$ is the inclusion map
of the subalgebra into $\mathcal{B}(\mathcal{H})$, hence
$\Phi^*\circ\Phi_\sigma=\Phi_\sigma=\Phi$. Moreover, we have
$\mathcal{F}_\Phi=\mathcal{M}_\Phi$ by the above corollary since
$\Phi(I)=I$, which is also easily verified directly, and therefore
the points of Theorem 3.1 imply those of Theorem 3.16.

\subsection{Measured $f$-divergence}
\label{sec:meas fdiv}

A measurement $\M$ on $\hil$ is given by $(M_x)_{x\in\X}$, where $\X$ is a finite set (the set of possible outcomes), $M_x\in\B(\hil)_+$ for all $x\in\X$, and $\sum_{x\in\X}M_x=I$. The measurement 
$\M$ is then a CPTP map from $\B(\hil)$ to $\bC^{\X}$, given by $\M(A):=\sum_{x\in\X}(\Tr AM_x)\delta_x$, where $\delta_x$ is the indicator function of the singleton $\{x\}$. 
We will use the same notation for this CPTP map and the collection of operators $(M_x)_{x\in\X}$. We will denote
the set of all measurements on $\hil$ with outcomes in $\X$ by $\povm(\hil|\X)$.

We say that the measurement is projective if all the $M_x$ are projections, and it is a von Neumann measurement if all the $M_x$ are rank $1$ projections.
We will use the notation 
\begin{align*}
|\M|:=|\X|.
\end{align*}
It is easy to see that with definition \eqref{meas fdiv def}, we have 
\begin{align}\label{meas fdiv2}
S_f^{\meas}(\rho\|\sigma)=\sup\{&S_f(\M(\rho)\|\M(\sigma)):\,\M\text{ measurement on }\hil\}.
\end{align}
Here we use the classical $f$-divergence 
\begin{align*}
S_f(p,q):=\sum_{x\in\X}P_f(p(x),q(x)),\ds\ds\ds p,q\in[0,+\infty)^{\X},
\end{align*}
that reduces to \eqref{cl fdiv def1} when both $p$ and $q$ are strictly positive.
We can introduce two variants of the measured $f$-divergences, by restricting the measurements to projective and von Neumann measurements, respectively:
\begin{align}
S_f^{\pro}(\rho\|\sigma):=\sup\{&S_f(\M(\rho)\|\M(\sigma)):\,\M\text{ projective measurement on }\hil\},\label{pr fdiv def}\\
S_f^{\vN}(\rho\|\sigma):=\sup\{&S_f(\M(\rho)\|\M(\sigma)):\,\M\text{ rank $1$ projective measurement on }\hil\}.\label{vN fdiv def}
\end{align}
Obviously,
\begin{align}\label{measured fdiv inequalities}
S_f^{\vN}(\rho\|\sigma)\le S_f^{\pro}(\rho\|\sigma)\le S_f^{\meas}(\rho\|\sigma)
\end{align}
for any $\rho,\sigma$ and any $f$. 

\begin{lemma}\label{lemma:vN=pr}
For any $\rho,\sigma\in\B(\hil)_+$, and any convex function $f:\,(0,+\infty)\to\bR$, 
\begin{align*}
S_f^{\vN}(\rho\|\sigma)=S_f^{\pro}(\rho\|\sigma).
\end{align*}
\end{lemma}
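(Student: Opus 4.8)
The plan is to establish the reverse of the inequality $S_f^{\vN}\le S_f^{\pro}$ already recorded in \eqref{measured fdiv inequalities}, namely $S_f^{\pro}(\rho\|\sigma)\le S_f^{\vN}(\rho\|\sigma)$, by refining an arbitrary projective measurement into a rank-one one without decreasing the classical $f$-divergence of the induced distributions. Since the other inequality is trivial (every rank-one projective measurement is in particular projective), this gives the asserted equality.

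First I would fix a projective measurement $\M=(P_x)_{x\in\X}$, so that the $P_x$ are projections with $\sum_{x\in\X}P_x=I$ (hence mutually orthogonal). For each $x$ I would choose an orthonormal basis of $\ran P_x$ and thereby write $P_x=\sum_{j=1}^{\rank P_x}Q_{x,j}$ as a sum of mutually orthogonal rank-one projections. The resulting family $\mathcal{N}:=(Q_{x,j})_{x,j}$, indexed by the pairs $(x,j)$, is then a rank-one projective (von Neumann) measurement that refines $\M$.

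Next, the key observation is that $\M$ is a coarse-graining of $\mathcal{N}$: the deterministic stochastic map $T$ sending the outcome $(x,j)$ to $x$ satisfies $T(\mathcal{N}(\rho))=\M(\rho)$ and $T(\mathcal{N}(\sigma))=\M(\sigma)$, because $\Tr\rho P_x=\sum_j\Tr\rho Q_{x,j}$ and likewise for $\sigma$. By the monotonicity of the classical $f$-divergence under stochastic maps applied to $T$, we obtain $S_f(\M(\rho)\|\M(\sigma))\le S_f(\mathcal{N}(\rho)\|\mathcal{N}(\sigma))\le S_f^{\vN}(\rho\|\sigma)$. Taking the supremum over all projective measurements $\M$ then yields $S_f^{\pro}(\rho\|\sigma)\le S_f^{\vN}(\rho\|\sigma)$, which together with \eqref{measured fdiv inequalities} closes the argument.

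The only point requiring care is that the classical data-processing inequality be invoked in the right generality: here $f$ is merely convex (not necessarily operator convex), and the distributions $\mathcal{N}(\rho),\mathcal{N}(\sigma)$ may have vanishing entries. One must therefore use the extended-perspective convention $S_f(p,q)=\sum_{x}P_f(p(x),q(x))$, under which classical $f$-divergences are still jointly convex and monotone non-increasing under stochastic maps. This is standard, so I do not expect a genuine obstacle; the whole proof reduces to a refinement of the measurement followed by classical data processing.
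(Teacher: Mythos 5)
Your proposal is correct and follows essentially the same route as the paper: decompose each projection $P_x$ into rank-one projections via an orthonormal basis of its range, then argue that this refinement cannot decrease the classical $f$-divergence. The only cosmetic difference is that you invoke the classical data-processing inequality for the deterministic coarse-graining map abstractly, whereas the paper proves exactly this instance by hand via the generalized log-sum inequality (joint convexity of the perspective function) with an $\eps$-regularization to handle vanishing entries — the same point you flag at the end.
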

\begin{proof}
By \eqref{measured fdiv inequalities}, we only have to prove $S_f^{\vN}(\rho\|\sigma)\ge S_f^{\pro}(\rho\|\sigma)$. To this end, let 
$\M$ be a projective measurement, given by the measurement operators $P_x,\,x\in\X$. Each $P_x$ can be decomposed as
$P_x=\sum_{i=1}^{k_x}\pr{e_{x,i}}$, where $\{e_{x,i}\}_{i=1}^{k_x}$ is an ONB in $\ran P_x$. By the generalized log-sum inequality (i.e., the joint convexity of the classical perspective function), for every $\ep>0$ we have
\begin{align*}
\sum_{i=1}^{k_x}(\Tr\pr{e_{x,i}}\sigma+\ep)f\bz\frac{\Tr\pr{e_{x,i}}\rho+\ep}{\Tr\pr{e_{x,i}}\sigma+\ep}\jz
\ge
(\Tr P_x\sigma+k_x\ep)f\bz\frac{\Tr P_x\rho+k_x\ep}{\Tr P_x\sigma+k_x\ep}\jz.
\end{align*}
Summing over $x$, and taking the limit $\ep\searrow 0$ yields
$$
S_f\bz\{\Tr\pr{e_{x,i}}\rho\}_{x,i}\|\{\Tr\pr{e_{x,i}}\sigma\}_{x,i}\jz\ge S_f\bz\{\Tr P_x\rho\}_x\|\{\Tr P_x\sigma\}_x\jz,
$$
from which the assertion follows immediately.
\end{proof}

Due to Lemma \ref{lemma:vN=pr}, we will only use the notation $S_f^{\pro}$ for the rest, with the understanding that the supremum 
in \eqref{pr fdiv def} is achieved at a von Neumann measurement (see Proposition \ref{prop:meas fdiv achievability} below).

When $f$ is operator convex, the inequalities in \eqref{measured fdiv inequalities} can be continued as
\begin{align}\label{measured fdiv inequalities2}
S_f^{\pro}(\rho\|\sigma)\le S_f^{\meas}(\rho\|\sigma)\le S_f(\rho\|\sigma)
\end{align}
for any $\rho,\sigma$, according to Proposition \ref{P-3.8}.
It is an interesting open question whether the first inequality holds as an equality 
for a general operator convex function $f$ and every $\rho,\sigma$. This has been shown very recently in \cite{BFT_variational} to be true for 
\begin{align}\label{Renyi functions}
f(x)=f_{\alpha}(x)=s(\alpha)x^{\alpha}\ds\text{for}\ds\alpha\in(0,+\infty),\ds\ds\ds\text{and}
\ds\ds\ds f(x)=\eta(x)=x\log x,
\end{align}
(cf.~Example \ref{E-3.4}); we will give some further insight into this result after Theorem \ref{thm:measured}.

On the other hand, equality in the second inequality in \eqref{measured fdiv inequalities2} turns out to be very restrictive; indeed, under some mild technical 
conditions on $f$, $S_f^{\meas}(\rho\|\sigma)= S_f(\rho\|\sigma)$ implies that $\rho$ and $\sigma$ commute, in which case all the 
inequalities in \eqref{measured fdiv inequalities2} hold trivially as equalities. We will show this in Theorem 
\ref{thm:measured}, by combining a result by Petz \cite[Lemma 4.1]{mon_revisited} with Theorem \ref{T-3.12}. For this, we will
show that all 
the suprema in \eqref{meas fdiv2}--\eqref{vN fdiv def} are attained, an interesting fact in itself.
These will follow by simple compactness and continuity arguments. For \eqref{meas fdiv2}, we need some preparation first; namely, we show that it is sufficient to consider measurements with at most 
$(\dim\hil)^2$ outcomes.

\begin{lemma}\label{lemma:d square enough}
Let $\rho,\sigma\in\B(\hil)_+$ and $f:\,(0,+\infty)\to\bR$ be an operator convex function. 
For any measurement $\M=(M_x)_{x\in\X}$ on $\hil$, there exists a measurement $\wtilde M=(\wtilde M_k)_{k\in\{1,\ldots,(\dim\hil)^2\}}$ 
such that $S_f\bz\M(\rho)\|\M(\sigma)\jz\le S_f\bz\wtilde M(\rho)\|\wtilde \M(\sigma)\jz$.
As a consequence,
\begin{align}\label{meas fdiv3}
S_f^{\meas}(\rho\|\sigma)=\sup\{&S_f(\M(\rho)\|\M(\sigma)):\,\M\in\povm(\hil|[d^2])\},
\end{align}
where $[d^2]:=\{1,\ldots,(\dim\hil)^2\}$.
\end{lemma}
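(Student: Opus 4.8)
The plan is to reduce the number of outcomes by a perturbation argument that exploits the convexity of the classical $f$-divergence as a function of the measurement, combined with a dimension count in the $d^2$-dimensional real space $\B(\hil)_{\sa}$. Throughout write $d:=\dim\hil$. The two elementary facts that drive the argument are: (i) a convex function $g\colon[t_-,t_+]\to(-\infty,+\infty]$ satisfies $g(t)\le\max\{g(t_-),g(t_+)\}$ for all $t$ (which also covers the case of infinite values); and (ii) since $P_f$ is jointly convex on $[0,+\infty)^2$ (being the perspective of the convex function $f$), for any family of self-adjoint operators $(D_x)_{x\in\X}$ the real function
\[
t\longmapsto\sum_{x\in\X}P_f\bz\Tr\rho(M_x+tD_x),\,\Tr\sigma(M_x+tD_x)\jz
\]
is convex, as it is a finite sum of compositions of $P_f$ with the affine maps $t\mapsto\bz\Tr\rho(M_x+tD_x),\Tr\sigma(M_x+tD_x)\jz$.

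For the main step, suppose $\M=(M_x)_{x\in\X}$ has $\sum_x(\dim\ran M_x)^2>d^2$. Consider the real vector space $W:=\bigoplus_x\B(\ran M_x)_{\sa}$ of families $(D_x)$ with $D_x=M_x^0 D_x M_x^0$ self-adjoint; it has dimension $\sum_x(\dim\ran M_x)^2>d^2=\dim\B(\hil)_{\sa}$, so the linear map $W\to\B(\hil)_{\sa}$, $(D_x)\mapsto\sum_x D_x$, has a nonzero kernel element $(D_x)$. Thus $\sum_x D_x=0$ and each $D_x$ is supported on $\ran M_x$. Set $\M_t:=(M_x+tD_x)_{x\in\X}$; for small $|t|$ each $M_x+tD_x$ stays positive on $\ran M_x$ and zero off it, and $\sum_x(M_x+tD_x)=I$, so $\M_t$ is a measurement, valid on a maximal interval $[t_-,t_+]$. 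One checks $-\infty<t_-<0<t_+<+\infty$: if say $t_+=+\infty$, dividing $M_x+tD_x\ge0$ by $t$ and letting $t\to\infty$ forces every $D_x\ge0$, whence $\sum_x D_x=0$ gives $D_x=0$, a contradiction. At $t=t_\pm$ some $M_x+t_\pm D_x$ acquires a new kernel vector, so $\dim\ran(M_x+t_\pm D_x)<\dim\ran M_x$ for that $x$ and hence $\sum_x(\dim\ran(M_x+t_\pm D_x))^2<\sum_x(\dim\ran M_x)^2$. By fact (ii) the function $t\mapsto S_f(\M_t(\rho)\|\M_t(\sigma))$ is convex, so by fact (i) its value at $t=0$ is at most the larger of its two endpoint values; the corresponding endpoint measurement has $f$-divergence no smaller than that of $\M$ and a strictly smaller value of $\sum_x(\dim\ran)^2$.

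Iterating this finitely many times yields a measurement $\wtilde\M$ with $\sum_x(\dim\ran\wtilde M_x)^2\le d^2$, hence with at most $d^2$ nonzero outcomes (each nonzero element contributing at least $1$ to the sum) and with $S_f(\wtilde\M(\rho)\|\wtilde\M(\sigma))\ge S_f(\M(\rho)\|\M(\sigma))$; padding with zero operators, which contribute $P_f(0,0)=0$, gives $\wtilde\M\in\povm(\hil|[d^2])$. This proves the first assertion. Identity \eqref{meas fdiv3} then follows at once: the reduction shows the supremum in \eqref{meas fdiv2} is not decreased by restricting to $\povm(\hil|[d^2])$, while the reverse inequality is trivial since the $[d^2]$-outcome measurements form a subfamily.

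The only genuinely delicate point is the bookkeeping of the termination: one must verify that each endpoint measurement remains a legitimate POVM (immediate from closedness of the positive cone and $\sum_x D_x=0$) and that the quantity $\sum_x(\dim\ran M_x)^2$ strictly decreases at each step, guaranteeing that the process halts after finitely many iterations at a value $\le d^2$. The possibility $S_f=+\infty$ needs no special treatment, since the convexity inequality in fact (i) is valid for extended-real-valued convex functions.
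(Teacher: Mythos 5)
Your proof is correct, but it takes a genuinely different route from the paper's. The paper decomposes the given POVM into a convex combination of extreme points of the compact convex set $\povm(\hil|\X)$, uses joint convexity of $S_f$ to pass to one of the extremal POVMs without decreasing the $f$-divergence, and then cites the known characterization (Arveson, D'Ariano--Presti--Perinotti, Parthasarathy) that an extremal POVM has at most $(\dim\hil)^2$ nonzero elements. You avoid both the extreme-point decomposition and the external citation: your perturbation argument --- finding a nonzero $(D_x)$ in the kernel of $(D_x)\mapsto\sum_x D_x$ by the dimension count $\sum_x(\rank M_x)^2>d^2$, walking along the segment $\M_t$ to an endpoint using convexity of $t\mapsto\sum_x P_f(\Tr\rho(M_x+tD_x),\Tr\sigma(M_x+tD_x))$, and observing that the potential $\sum_x(\rank M_x)^2$ strictly drops at the boundary --- is in essence an inlined proof of that very extremality characterization, fused with the convexity step. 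What this buys: your argument is self-contained and constructive, it yields the slightly stronger conclusion $\sum_k(\rank\wtilde M_k)^2\le d^2$, and it visibly uses only the \emph{classical} joint convexity of the perspective $P_f$ (hence works verbatim for any convex $f$ satisfying the paper's standing limit assumptions, matching the paper's remark following the lemma). What the paper's route buys is brevity, at the cost of an appeal to the literature. The delicate points you flag are handled correctly: boundedness of $[t_-,t_+]$ via $D_x\ge0$ forcing $D_x=0$, the strict rank drop at $t_\pm$ by continuity of the minimal eigenvalue on $\ran M_x$, and the validity of the endpoint inequality for extended-real-valued convex functions, so the $S_f=+\infty$ case indeed needs no separate treatment.
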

\begin{proof}
$\povm(\hil|\X)$ is a compact convex set of the finite-dimensional complex vector space 
$\B(\hil)^{\X}:=\{A:\,\X\to\B(\hil)\}$ (equipped with any norm).
Thus, any $\M\in\povm(\hil|\X)$ can be decomposed as $\M=\sum_{i\in\I}p_i\M^{(i)}$, where 
$\I$ is a finite set,
$p$ is a probability distribution on $\I$, and all the 
$\M^{(i)}$ are extremal points of $\povm(\hil|\X)$. Using the convexity of $S_f$ (Proposition \ref{P-3.6}), we get that 
\begin{align*}
S_f(\M(\rho)\|\M(\sigma))
\le
\sum_{i\in\I}p_i S_f(\M^{(i)}(\rho)\|\M^{(i)}(\sigma))
\le
S_f(\M^{(i)}(\rho)\|\M^{(i)}(\sigma))
\end{align*}
for some $i\in\I$.
Various characterizations of the extremal points of $\povm(\hil|\X)$ were given, e.g., in
\cite{Arveson,D'ArianoPrestiPerinotti,Parthasarathy}; in particular, it is known 
that if $\M^{(i)}$ is an extremal point of $\povm(\hil|\X)$ then $|\{x\in\X:\,M^{(i)}_x\ne 0\}|\le(\dim\hil)^2$.
Since $S_f(\M^{(i)}(\rho)\|\M^{(i)}(\sigma))$ only depends on the outcome probabilities 
$(\Tr M^{(i)}_x\rho)_{x\in\X}$ and $(\Tr M^{(i)}_x\sigma)_{x\in\X}$, we can assume without loss of generality that 
$\M^{(i)}$ has outcomes in $[d^2]$. From this, the assertion follows.
\end{proof}

\begin{remark}
Note that Lemma \ref{lemma:d square enough} holds for any convex function $f$ for 
which $S_f$ is jointly convex. According to Proposition \ref{P-3.6}, operator convexity of $f$ is sufficient for this, and
Remark \ref{rem:op conv necessary} shows that it is also likely to be necessary.
\end{remark}

Next, we want to show that the supremum in \eqref{meas fdiv3} is attained. 
Since $\povm(\hil|\X)$ is compact, the assertion would follow if the map $\M\mapsto S_f(\M(\rho)\|\M(\sigma))$ was continuous. 
This is not possible in general, since $S_f(\M(\rho)\|\M(\sigma))$ can  be $+\infty$, but with some care, these pathological cases can be treated as well.
The following observation about classical $f$-divergences will be useful in this direction:

\begin{remark}\label{rem:cl fdiv cont}
It is easy to see from the definition \eqref{persp ext} that 
$P_f$ is continuous on 
\begin{align*}
A_{\gamma_0,\gamma_1}:=\left\{(r\cos\gamma,r\sin\gamma):\,r\ge 0,\, \gamma_0\le\gamma\le\gamma_1\right\}
\end{align*}
for any 
$0<\gamma_0<\gamma_1<\pi/2$. If $f(0^+)<+\infty$ then $P_f$ is continuous also on $A_{\gamma_0,\pi/2}$ for any 
$0<\gamma_0<\pi/2$, and if $f'(+\infty)<+\infty$ then $P_f$ is continuous on $A_{0,\gamma_1}$ for any 
$0<\gamma_1<\pi/2$.
In particular, if $f(0^+)$ and $f'(+\infty)$ are both finite then 
$P_f$ is continuous on 
$\bR_+\times\bR_+$, where $\bR_+:=[0,+\infty)$.

Let $S_{f,\X}$ denote the classical $f$-divergence on $\bR_+^{\X}\times \bR_+^{\X}\equiv\bz\bR_+\times\bR_+\jz^{\X}$. By the above, we have 
\begin{align*}
S_{f,\X}\ds\text{is continuous on}\ds
\begin{cases}
A_{\gamma_0,\gamma_1}^{\X},&\text{for any}\ds 0<\gamma_0<\gamma_1<\pi/2,\\
A_{\gamma_0,\pi/2}^{\X},& \text{for any}\ds 0<\gamma_0<\pi/2,\ds\text{if}\ds f(0^+)<+\infty,\\
A_{0,\gamma_1}^{\X},& \text{for any}\ds 0<\gamma_1<\pi/2,\ds\text{if}\ds f'(+\infty)<+\infty.
\end{cases}
\end{align*}
\end{remark}


\begin{prop}\label{prop:meas fdiv achievability}
Let $\rho,\sigma\in\B(\hil)_+$ and $f:\,(0,+\infty)\to\bR$ be a convex function. 
Then the suprema in \eqref{pr fdiv def} and \eqref{vN fdiv def} are attained. 
If $f$ is also operator convex then the suprema in \eqref{meas fdiv3} and \eqref{meas fdiv2} are attained as well.
\end{prop}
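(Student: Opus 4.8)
The plan is to realize each of the four divergences as the supremum of a single functional over a compact parameter space, and then to attain the supremum by isolating the only genuine source of discontinuity, which lives at the ``origin'' and on the coordinate axes of the classical perspective $P_f$.

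For the projective and von Neumann cases, by Lemma~\ref{lemma:vN=pr} it suffices to treat $S_f^{\vN}$. Rank-$1$ projective measurements are exactly ordered orthonormal bases, so the parameter space is a continuous image of the compact group $U(\dim\hil)$; an ONB $\{e_x\}$ produces the outcome data $p_x=\inner{e_x}{\rho e_x}$, $q_x=\inner{e_x}{\sigma e_x}$, and $S_f(\M(\rho)\|\M(\sigma))=\sum_x P_f(p_x,q_x)=\sum_x g(e_x)$, where $g(a):=P_f(\inner{a}{\rho a},\inner{a}{\sigma a})$ for unit $a$. Everything then reduces to continuity of $a\mapsto g(a)$ on the unit sphere $S(\hil)$ into $(-\infty,+\infty]$, after which $F(\{e_x\})=\sum_x g(e_x)$ is continuous on a compact set and attains its maximum. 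By Remark~\ref{rem:cl fdiv cont} the classical perspective $P_f$ is continuous into $(-\infty,+\infty]$ on $[0,+\infty)^2\setminus\{0\}$: finite and continuous on the interior; finite and continuous on the $q$-axis when $f(0^+)<+\infty$, and equal to $+\infty$ with $P_f(p_n,q_n)\to+\infty$ when $f(0^+)=+\infty$; and symmetrically on the $p$-axis through $f'(+\infty)$. Hence $g$ is continuous into $(-\infty,+\infty]$ except possibly at directions $a$ with $\inner{a}{\rho a}=\inner{a}{\sigma a}=0$, i.e.\ $a\in\ker\rho\cap\ker\sigma$, where $P_f$ genuinely fails to be upper semicontinuous.

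The origin is tamed by homogeneity together with a finite/infinite dichotomy. If $g(a_0)=+\infty$ for some unit $a_0$ --- which by the above description of $P_f$ happens exactly when $f(0^+)=+\infty$ and $\sigma^0\nleq\rho^0$, or $f'(+\infty)=+\infty$ and $\rho^0\nleq\sigma^0$ --- then completing $a_0$ to an orthonormal basis yields a measurement with $F=+\infty$, since the standing sign convention forces $P_f>-\infty$ everywhere; thus $S_f^{\vN}=+\infty$ is attained. Otherwise $g<+\infty$ pointwise; the subgradient bound $P_f(p,q)\ge f'_-(1)p+(f(1)-f'_-(1))q$ shows $g\ge -C$ on $S(\hil)$, and completing any fixed unit vector $a$ to an ONB gives $S_f^{\vN}\ge g(a)-(\dim\hil-1)C$, so $g$ is bounded above as well, say $|g|\le B$. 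For $a_n\to a_*\in\ker\rho\cap\ker\sigma$ write $a_n=a_*+\delta_n$; since $\rho a_*=\sigma a_*=0$ the cross terms vanish and $(\inner{a_n}{\rho a_n},\inner{a_n}{\sigma a_n})=(\inner{\delta_n}{\rho\delta_n},\inner{\delta_n}{\sigma\delta_n})$, so the degree-one homogeneity of $P_f$ gives $g(a_n)=|\delta_n|^2\,g(\delta_n/|\delta_n|)$, whence $|g(a_n)|\le B|\delta_n|^2\to0=g(a_*)$. This establishes continuity of $g$ at the remaining points, and the von Neumann (hence also the projective) supremum is attained.

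For the measured $f$-divergence with $f$ operator convex, Lemma~\ref{lemma:d square enough} reduces \eqref{meas fdiv2} to the supremum over the compact set $\povm(\hil|[d^2])$, and the same functional $F(\M)=\sum_x P_f(\Tr M_x\rho,\Tr M_x\sigma)$ is analysed, the only difference being that outcomes need not be rank one. Writing $M_x=\sum_i\pr{v_{x,i}}$ and using the joint convexity, hence subadditivity and homogeneity, of $P_f$ from Proposition~\ref{P-3.6}, one obtains in the finite case the bound $|P_f(\Tr M_x\rho,\Tr M_x\sigma)|\le\sum_i|v_{x,i}|^2\,|g(\hat v_{x,i})|\le B\,\Tr M_x$, which yields continuity of each summand as $M_x\to0$ exactly as before; the $+\infty$ case is again attained by placing a rank-one outcome along a bad direction and completing to a POVM with at most $d^2$ outcomes. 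I expect the main obstacle to be precisely this failure of upper semicontinuity of $P_f$ at the origin: the argument circumvents it through the homogeneity identity $g(a)=|\delta|^2\,g(\delta/|\delta|)$ valid on $\ker\rho\cap\ker\sigma$, rather than through any naive continuity of $S_f$.
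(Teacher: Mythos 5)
Your overall architecture --- compactness of the parameter space, continuity of the extended-real functional away from $\ker\rho\cap\ker\sigma$, a finite/infinite dichotomy, and homogeneity of $P_f$ to tame the origin --- is sound and in fact close in spirit to the paper's proof, which handles the same origin-discontinuity by confining the accessible pairs $(p,q)$ to the cones $A_{0,\gamma_1}$, $A_{\gamma_0,\pi/2}$ of Remark \ref{rem:cl fdiv cont}, on which $P_f$ is continuous up to and including $(0,0)$. However, there is a genuine gap at the one step your entire finite case rests on: the boundedness of $g$ on the unit sphere. You deduce ``$g$ is bounded above'' from $S_f^{\vN}(\rho\|\sigma)\ge g(a)-(\dim\hil-1)C$, but this only bounds $\sup_a g(a)$ by $S_f^{\vN}(\rho\|\sigma)+(\dim\hil-1)C$, which is vacuous unless you already know $S_f^{\vN}(\rho\|\sigma)<+\infty$ --- and that finiteness is established nowhere (pointwise finiteness of $g$ does not give it). The reasoning is circular: finiteness of the supremum is exactly what the boundedness of $g$ is needed to prove. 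Worse, the scenario you must exclude --- $g$ finite pointwise but unbounded --- is precisely the scenario in which the supremum equals $+\infty$ yet is attained by no measurement (every ONB gives a finite sum of finite terms), so it cannot be folded into your ``infinite case''; it has to be ruled out directly.

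The gap is fillable, and the fix is essentially the paper's cone device. In your finite case the support inclusions are forced: if $f(0^+)=+\infty$, pointwise finiteness of $g$ gives $\sigma^0\le\rho^0$, hence $\sigma\le c_1\rho$ for some $c_1>0$, so every accessible pair satisfies $q\le c_1p$; on that cone the convexity estimate $f(x)\le f(1/c_1)+f'(+\infty)\,(x-1/c_1)$ for $x\ge 1/c_1$ (difference quotients increase to $f'(+\infty)$), together with your subgradient lower bound, yields $|P_f(p,q)|\le K(p+q)$. The remaining cases ($f'(+\infty)=+\infty$, symmetric; both limits infinite, where the supports must coincide and $p/q$ is confined to a compact subinterval of $(0,+\infty)$; both limits finite) are handled similarly. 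This linear bound $|P_f(p,q)|\le K(p+q)$ is the correct quantitative form of your boundedness claim, and it is also what your POVM part actually needs: your stated estimate $|P_f(\Tr M_x\rho,\Tr M_x\sigma)|\le B\Tr M_x$ does \emph{not} vanish when $M_x\to M_*$ with $M_*\ne0$ supported in $\ker\rho\cap\ker\sigma$ (and the inequality preceding it is unjustified, since subadditivity does not survive taking absolute values), whereas $|P_f(\Tr M_x\rho,\Tr M_x\sigma)|\le K(\Tr M_x\rho+\Tr M_x\sigma)\to0$ does the job. With that lemma in place, your homogeneity argument and the compactness of $\povm(\hil|[d^2])$ and of the set of ONB's give attainment in all four cases.
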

\begin{proof}
It is enough to prove the assertions about \eqref{vN fdiv def} (due to Lemma \ref{lemma:vN=pr}) and about 
\eqref{meas fdiv3}. We start by proving the latter.

Note that for any $\M\in \povm(\hil|[d^2])$, 
\begin{align*}
S_f(\M(\rho)\|\M(\sigma))=\sum_{k=1}^{d^2}P_f(\Tr\rho M_k,\Tr\sigma M_k)
=
S_{f,[d^2]}\bz(\Tr M_k\rho)_{k=1}^{d^2},(\Tr M_k\sigma)_{k=1}^{d^2}\jz.
\end{align*}
For fixed $\rho$ and $\sigma$, the map 
\begin{align}\label{meas map}
\povm(\hil|[d^2])\ni M\mapsto \bz(\Tr\rho M_k)_{k=1}^{d^2},(\Tr\sigma M_k)_{k=1}^{d^2}\jz\in\bR_+^{d^2}\times\bR_+^{d^2}\equiv(\bR_+\times\bR_+)^{d^2}
\end{align}
is continuous.
Thus, by Remark \ref{rem:cl fdiv cont}, if $f(0^+)$ and $f'(+\infty)$ are both finite then 
the map $\M\mapsto S_f(\M(\rho)\|\M(\sigma))$ is continuous on the compact set 
$\povm(\hil|[d^2])$, and therefore the supremum in \eqref{meas fdiv3} is attained.
Hence, the only thing left is to prove the assertion when $f(0^+)$ and $f'(+\infty)$ are not both finite.

If $f(0^+)=+\infty$ and $\sigma^0\nleq\rho^0$ then for the two-outcome measurement
$\M=(\rho^0,I-\rho^0)$ we have 
$S_f(\rho\|\sigma)=+\infty=S_f(\M(\rho)\|\M(\sigma))$ (see Corollary \ref{cor:fdiv infty}), from which 
it is trivial that the supremum in \eqref{meas fdiv3} is attained. 
Similarly, if $f'(+\infty)=+\infty$ and $\rho^0\nleq\sigma^0$ then we can choose
$\M=(\sigma^0,I-\sigma^0)$ to arrive at the same conclusion. 

Hence, for the rest we assume that 
$\sigma^0\le\rho^0$ when $f(0^+)=+\infty$, and $\rho^0\le\sigma^0$ when $f'(+\infty)=+\infty$.
Note that if $\sigma^0\le\rho^0$ then there exists a positive constant $c_1>0$ such that 
$\sigma\le c_1\rho$, and hence for any measurement operator $M$, $\Tr\sigma M\le c_1\Tr\rho M$.
This means that the map in \eqref{meas map} maps $\povm(\hil|[d^2])$ into $A_{0,\gamma_1}^{[d^2]}$, where 
$\gamma_1:=\arctan c_1$. Similarly, if $\rho^0\le\sigma^0$ then there exists a $c_0>0$ such that 
the map in \eqref{meas map} maps $\povm(\hil|[d^2])$ into $A_{\gamma_0,\pi/2}^{[d^2]}$, where 
$\gamma_0:=\arctan c_0$. Hence we see that in the remaining cases, the map in \eqref{meas map} maps $\povm(\hil|[d^2])$ into
a domain on which $P_{f,[d^2]}$ is continuous, and thus we can use continuity and compactness again to conclude that the supremum in \eqref{meas fdiv3} is attained.

The proof of the assertion about \eqref{vN fdiv def} goes almost the same way.
Let $d:=\dim\hil$, and equip $\hil^d:=\times_{i=1}^d\hil$ with the product topology. Let
$\ONB(\hil)$ be the set of all ONB's $(e_i)_{i=1}^d$ of $\hil$. Then $\ONB(\hil)$ is a compact subset of $\hil^d$, and 
\begin{align*}
\ONB(\hil)\ni (e_i)_{i=1}^d\longmapsto 
\bz(\inner{e_i}{\rho e_i})_{i=1}^d,\inner{e_i}{\sigma e_i})_{i=1}^d\jz
\end{align*}
is continuous. Repeating the above argument with this map in place of the one in \eqref{meas map}, and 
$S_{f,[d]}$ in place of $S_{f,[d^2]}$, yields the assertion.
\end{proof}

\medskip

Now we are ready to prove the following:

\begin{thm}\label{thm:measured}
Let $\rho,\sigma\in\B(\hil)_+$ be such that $\rho^0\le\sigma^0$. The following are equivalent:
\begin{enumerate}
\item\label{meas1}
$S_f(\rho\|\sigma)= S_f^{\meas}(\rho\|\sigma)$ for some operator convex function
$f$ on $[0,+\infty)$ such that 
\begin{align*}
|\supp\mu_f|\ge \abs{\spec\bz L_{\rho}R_{\sigma\inv}\jz}+(\dim\hil)^2.
\end{align*}

\item\label{meas2}
$\rho\sigma=\sigma\rho$.
\item\label{meas3}
$S_f(\rho\|\sigma)= S_f^{\pro}(\rho\|\sigma)$ for all convex functions $f:\,(0,+\infty)\to\bR$.

\item\label{meas4}
$S_f(\rho\|\sigma)= S_f^{\pro}(\rho\|\sigma)$ for 
a continuous operator convex function
$f$ on $[0,+\infty)$ such that 
\begin{align*}
|\supp\mu_f|\ge \abs{\spec\bz L_{\rho}R_{\sigma\inv}\jz}+\dim\hil.
\end{align*}
\end{enumerate}
\end{thm}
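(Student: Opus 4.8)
The plan is to prove the two cycles \ref{meas2}\,$\imp$\,\ref{meas3}\,$\imp$\,\ref{meas4}\,$\imp$\,\ref{meas2} and \ref{meas2}\,$\imp$\,\ref{meas1}\,$\imp$\,\ref{meas2}; the implications issuing from \ref{meas2} are the elementary commutative direction, while the two returning to \ref{meas2} carry the real content. First I would dispose of the commutative case. Assuming $\rho\sigma=\sigma\rho$, fix a common orthonormal eigenbasis $\{u_i\}$ with $\rho u_i=r_i u_i$ and $\sigma u_i=s_i u_i$; the von Neumann measurement in this basis reproduces the classical distributions $(r_i)_i,(s_i)_i$, so that $S_f(\M(\rho)\|\M(\sigma))=\sum_i s_i f(r_i/s_i)=S_f(\rho\|\sigma)$, giving $S_f^{\pro}(\rho\|\sigma)\ge S_f(\rho\|\sigma)$ for every convex $f$. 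Conversely, a rank-one projective measurement in any basis $\{e_i\}$ pushes $(r_i)_i,(s_i)_i$ through the doubly stochastic matrix $P_{ik}:=|\inner{e_i}{u_k}|^2$, so the classical data-processing inequality yields $S_f(\M(\rho)\|\M(\sigma))\le S_f(\rho\|\sigma)$; taking the supremum gives \ref{meas3}. Combining this with the chain \eqref{measured fdiv inequalities2} (valid for operator convex $f$) squeezes $S_f^{\meas}$ between two equal quantities, giving \ref{meas1}. Finally \ref{meas3}\,$\imp$\,\ref{meas4} is trivial, since every operator convex function is convex and, e.g., $f=\eta$ has representing measure of full support.

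The two non-trivial implications \ref{meas4}\,$\imp$\,\ref{meas2} and \ref{meas1}\,$\imp$\,\ref{meas2} run in parallel. By Proposition \ref{prop:meas fdiv achievability}---using Lemma \ref{lemma:d square enough} for \ref{meas1} and Lemma \ref{lemma:vN=pr} for \ref{meas4}---the pertinent supremum is attained at a genuine measurement $\M$: for \ref{meas1} some $\M\in\povm(\hil|[d^2])$, and for \ref{meas4} a von Neumann measurement with $\dim\hil$ outcomes. Thus equality in \ref{meas1} or \ref{meas4} becomes $S_f(\M(\rho)\|\M(\sigma))=S_f(\rho\|\sigma)$. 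Regarding $\M$ as a CPTP map into the abelian algebra $\bC^{\X}$ on its outcome set $\X$, the range is commutative, so the relative modular operator $L_{\M(\rho)}R_{\M(\sigma)^{-1}}$ acts on the $|\X|$-dimensional GNS space with eigenvalues the likelihood ratios, whence $\abs{\Sp(L_{\M(\rho)}R_{\M(\sigma)^{-1}})}\le|\X|$. Since $|\X|\le(\dim\hil)^2$ in the first case and $|\X|=\dim\hil$ in the second, the support hypotheses on $f$ are precisely tailored to yield the condition \eqref{F-3.15} of Theorem \ref{T-3.12}. As $f(0^+)<+\infty$ and $\rho^0\le\sigma^0$ force $S_f(\rho\|\sigma)<+\infty$ (Corollary \ref{cor:fdiv infty}), the equality above is \ref{rev cond4} of Theorem \ref{T-3.12}, and I conclude that $\M$ is reversible on $\{\rho,\sigma\}$ in the sense of Definition \ref{def:rev}.

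The heart of the matter, and the step I expect to be hardest, is to deduce $\rho\sigma=\sigma\rho$ from the reversibility of a \emph{measurement}. For this I would invoke Petz's \cite[Lemma 4.1]{mon_revisited}, to the effect that sufficiency of a measurement for a pair of states forces the states to commute. Alternatively, the conclusion can be extracted from the structure Theorem \ref{thm:decomposition} applied to $\M$ viewed as a channel onto $\bC^{\X}$: commutativity of the range forces each left factor $\hil_{2,k,L}$, and hence each $\hil_{1,k,L}$, to be one-dimensional, so that by \eqref{dec3} both $\rho$ and $\sigma$---being fixed points of $\M_\sigma^*\circ\M$---lie in $\bigoplus_k\bR_+\omega_k$ with the $\omega_k$ supported on mutually orthogonal subspaces, and are therefore simultaneously diagonalizable. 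The subtle point in either route, which is also what pins down the exact constants $(\dim\hil)^2$ and $\dim\hil$ in \ref{meas1} and \ref{meas4}, is that the commutativity must be read off from the abelian structure of the \emph{range} of $\M$ rather than its matrix codomain; concretely, one must count the spectrum of the relative modular operator on the $|\X|$-dimensional commutative algebra, not on the full matrix algebra $\B(\bC^{|\X|})$, where it could a priori have $|\X|^2$ eigenvalues. This forces one to use Theorems \ref{T-3.12} and \ref{thm:decomposition} in the generality of maps between finite-dimensional von Neumann algebras, the abelian target being the relevant special case.
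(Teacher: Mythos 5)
Your proof is correct, and its skeleton coincides with the paper's: the implications out of \ref{meas2} are elementary, and for \ref{meas1}\,$\imp$\,\ref{meas2} and \ref{meas4}\,$\imp$\,\ref{meas2} both you and the paper first invoke Proposition \ref{prop:meas fdiv achievability} to replace the supremum by an optimizing measurement $\M$ with $S_f(\M(\rho)\|\M(\sigma))=S_f(\rho\|\sigma)$, and then feed this equality into Theorem \ref{T-3.12}. The difference is in how you finish. The paper uses \ref{rev cond4}\,$\imp$\,\ref{rev cond5} of Theorem \ref{T-3.12} only to transfer the preservation to $f_{1/2}(x)=-x^{1/2}$, and then concludes $\rho\sigma=\sigma\rho$ by a modification of Petz's argument in \cite[Lemma 4.1]{mon_revisited} (adapted to non-invertible $\sigma$) --- essentially your route (a). Your primary route (b) is genuinely different and self-contained within the paper's toolbox: from reversibility of $\M$ you apply the structure Theorem \ref{thm:decomposition}, observe that commutativity of the range of $\M$ together with \eqref{dec4} and the invertibility of $\eta_k(\omega_k)$ forces $\dim\hil_{1,k,L}=1$ for every $k$, so that \eqref{dec3} places both fixed points $\rho$ and $\sigma$ in $\bigoplus_k\bR_+\omega_k$ with the $\omega_k$ living on orthogonal subspaces; commutativity is then immediate, with no appeal to the external Petz lemma. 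A further point in your favor is that you make explicit the counting issue that actually fixes the constants $(\dim\hil)^2$ and $\dim\hil$: condition \eqref{F-3.15}, read literally with $\M$ as a map into $\B(\bC^{|\X|})$, would involve the cross-ratios $p_xq_y^{-1}$ and could require up to $|\X|^2$ additional support points, so one must count the spectrum of the relative modular operator on the abelian output algebra (equivalently, only the spectral pairs with $\Tr P_aQ_b\neq0$, of which there are at most $|\X|$ for commuting outputs). The paper's one-line appeal to Theorem \ref{T-3.12} glosses over exactly this point, which is justified by the proof of that theorem rather than its statement; your proposal identifies and resolves it correctly.
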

\begin{proof}
The implications \ref{meas2}\,$\imp$\,\ref{meas3}\,$\imp$\,\ref{meas4}, and 
\ref{meas2}\,$\imp$\,\ref{meas1} are obvious. Assume that \ref{meas1} or \ref{meas4} holds; then, by Proposition \ref{prop:meas fdiv achievability}, there exists a 
measurement $\M$ such that $S_f(\rho\|\sigma)=S_f(\M(\rho)\|\M(\sigma))$.
Then, by Theorem \ref{T-3.12}, 
$S_f(\rho\|\sigma)= S_f(\M(\rho)\|\M(\sigma))$ for $f(x):=-x^{1/2}$. A straightforward modification of the argument 
by Petz in \cite[Lemma 4.1]{mon_revisited} (to avoid the assumption
$\sigma>0$) then shows \ref{meas2}.
%
\end{proof}

\medskip

It is a very natural requirement for a quantum divergence to be invariant under isometric
embeddings of a system into a larger system. It is easy to see that both quantum $f$-divergences
$S_f$ and $\maxdiv{f}$ have this invariance property, i.e.,
\begin{align*}
S_f(V\rho V^*\|V\sigma V^*)=S_f(\rho\|\sigma)
\end{align*}
for any $\rho,\sigma\in\B(\hil)_+$ and any isometry $V:\,\hil\to\kil$, and the same holds true
for $\maxdiv{f}$. 
It is easy to see that the same holds also for the measured $f$-divergence
$S_f^{\meas}$. However, it is not clear whether $S_f^{\pro}$ has the same invariance property.
In fact, the next proposition says that this is equivalent to the equality $S_f^\meas=S_f^\pro$.

\begin{prop}\label{prop:isometric invariance}
For every $\rho,\sigma\in\BH_+$ and any convex function $f$ on $(0,+\infty)$, we have:
\begin{itemize}
\item[(1)] $S_f^\pro(V\rho V^*\|V\sigma V^*)\ge S_f^\pro(\rho\|\sigma)$ for any isometry
$V:\,\hil\to\kil$.
\item[(2)] $S_f^{\meas}(\rho\|\sigma)=
\sup\left\{S_f^{\pro}(V\rho V^*\|V\sigma V^*):\,V\text{ isometry}\right\}$.
\item[(3)] The following (i) and (ii) are equivalent:
\begin{itemize}
\item[(i)] $S_f^{\pro}(V\rho V^*\|V\sigma V^*)=S_f^\pro(\rho\|\sigma)$ for any isometry $V$;
\item[(ii)] $S_f^{\meas}(\rho\|\sigma)=S_f^\pro(\rho\|\sigma)$.
\end{itemize}
\end{itemize}
\end{prop}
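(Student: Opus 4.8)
The plan is to establish (1) and (2) directly from the definition of the (projective) measured divergence, and then to deduce (3) formally from them. Throughout I use Lemma \ref{lemma:vN=pr} to identify $S_f^\pro$ as the supremum of $S_f(\M(\rho)\|\M(\sigma))$ over projective measurements $\M$ on $\hil$, together with the fact that an outcome whose probability vanishes under both states does not contribute to the classical $f$-divergence $S_f(p\|q)=\sum_x P_f(p(x),q(x))$, since $P_f(0,0)=0$ by \eqref{persp ext}.

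For (1), fix an isometry $V:\hil\to\kil$, so that $V^*V=I_\hil$ and $E:=VV^*$ is the projection onto $\ran V$. Given a projective measurement $\M=(P_x)_{x\in\X}$ on $\hil$, I would transport it to $\kil$ by setting $Q_x:=VP_xV^*$ and adjoining the single projection $Q_\infty:=I_\kil-E$. Using $V^*V=I_\hil$ one checks that each $Q_x$ is a projection, that $Q_xQ_y=0$ for $x\ne y$, and that $\sum_x Q_x=E$, so $(Q_x)_{x\in\X}\cup\{Q_\infty\}$ is a projective measurement on $\kil$. Cyclicity of the trace gives $\Tr Q_x(V\rho V^*)=\Tr P_x\rho$ and $\Tr Q_x(V\sigma V^*)=\Tr P_x\sigma$, while $Q_\infty$ annihilates both $V\rho V^*$ and $V\sigma V^*$; hence this measurement reproduces exactly $S_f(\M(\rho)\|\M(\sigma))$. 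Taking the supremum over $\M$ then yields $S_f^\pro(V\rho V^*\|V\sigma V^*)\ge S_f^\pro(\rho\|\sigma)$.

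For (2) the essential tool is Naimark's dilation theorem. For the inequality ``$\le$'', given an arbitrary measurement $\M=(M_x)_{x\in\X}$ on $\hil$, Naimark's theorem provides a finite-dimensional $\kil$, an isometry $V:\hil\to\kil$, and a projective measurement $\Pi=(P_x)_{x\in\X}$ on $\kil$ with $M_x=V^*P_xV$; then $\Tr M_x\rho=\Tr P_x(V\rho V^*)$ and likewise for $\sigma$, so $S_f(\M(\rho)\|\M(\sigma))=S_f(\Pi(V\rho V^*)\|\Pi(V\sigma V^*))\le S_f^\pro(V\rho V^*\|V\sigma V^*)$, and passing to suprema gives $S_f^\meas(\rho\|\sigma)\le\sup_V S_f^\pro(V\rho V^*\|V\sigma V^*)$. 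For the reverse inequality, for any isometry $V:\hil\to\kil$ and any projective measurement $\Pi=(P_x)$ on $\kil$, the operators $M_x:=V^*P_xV$ form a measurement on $\hil$ (positivity is clear and $\sum_x M_x=V^*V=I_\hil$), and the same trace identity shows $S_f(\Pi(V\rho V^*)\|\Pi(V\sigma V^*))=S_f(\M(\rho)\|\M(\sigma))\le S_f^\meas(\rho\|\sigma)$; taking the supremum over $\Pi$ and then over $V$ gives the matching bound $\sup_V S_f^\pro(V\rho V^*\|V\sigma V^*)\le S_f^\meas(\rho\|\sigma)$.

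Finally, (3) is a formal consequence. If (i) holds, then (2) gives $S_f^\meas(\rho\|\sigma)=\sup_V S_f^\pro(V\rho V^*\|V\sigma V^*)=S_f^\pro(\rho\|\sigma)$, which is (ii). Conversely, if (ii) holds, then for every isometry $V$ the chain $S_f^\pro(\rho\|\sigma)\le S_f^\pro(V\rho V^*\|V\sigma V^*)\le\sup_W S_f^\pro(W\rho W^*\|W\sigma W^*)=S_f^\meas(\rho\|\sigma)=S_f^\pro(\rho\|\sigma)$ forces equality throughout, where the first inequality is (1), the middle equality is (2), and the last equality is (ii); this is (i). I expect the only non-routine ingredient to be the correct invocation of Naimark's theorem in (2), in particular ensuring that the dilation space $\kil$ may be taken finite-dimensional so that the resulting projective measurement is admissible in the definition of $S_f^\meas$; the remaining steps are elementary bookkeeping of outcome probabilities, and the possibility that $S_f$ takes the value $+\infty$ causes no difficulty because all the identities above are equalities of outcome-probability vectors and therefore hold verbatim.
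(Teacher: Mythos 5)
Your proof is correct and follows essentially the same route as the paper: part (1) uses the identical construction $Q_x:=VP_xV^*$ augmented by the extra projection $I_\kil-VV^*$ (with the zero-probability outcome contributing nothing since $P_f(0,0)=0$), part (2) rests on Naimark dilation for the inequality $S_f^{\meas}\le\sup_V S_f^{\pro}(V\rho V^*\|V\sigma V^*)$, and (3) is the same formal deduction from (1) and (2). The only cosmetic difference is in the reverse inequality of (2): you pull projective measurements on $\kil$ back to POVMs on $\hil$ via $M_x:=V^*P_xV$ directly, whereas the paper invokes the earlier-asserted isometric invariance of $S_f^{\meas}$ — your version is marginally more self-contained but amounts to the same computation.
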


\begin{proof}
(1)\enspace
For every projective measurement $\M=(P_x)_{x\in\X}$ on $\hil$ one can define a projective
measurement $\M_V=(Q_x)_{x\in\X\cup\{x_0\}}$, $x_0\notin\X$, on $\kil$, by
$$
Q_x:=VP_xV^*\ds\text{for}\ds x\in\X,\qquad Q_{x_0}:=I_\kil-VV^*.
$$
From $\Tr Q_xV\rho V^*=\Tr P_x\rho$ and $\Tr Q_{x_0}V\rho V^*=0$ as well as the same for
$V\sigma V^*$, it follows that $S_f(\M_V(V\rho V^*)\|\M_V(V\sigma V^*))
=S_f(\M(\rho)\|\M(\sigma))$, implying (1).

(2)\enspace
The inequality $\ge$ is obvious since
$$
S_f^\meas(\rho\|\sigma)=S_f^\meas(V\rho V^*\|V\sigma V^*)
\ge S_f^\pro(V\rho V^*\|V\sigma V^*)
$$
for any isometry $V$. For the converse, for any measurement $\M=(M_x)_{x\in\X}$ on $\hil$, by
Naimark's dilation theorem, we get an isometry $V:\,\hil\to\kil$ and a projective measurement
$\overline\M=(P_x)_{x\in\X}$ on $\kil$ such that $M_x=V^*P_xV$ for all $x\in\X$. Since
$\Tr M_x\rho=\Tr P_xV\rho V^*$, we have
$$
S_f(\M(\rho)\|\M(\sigma))=S_f(\overline\M(V\rho V^*)\|\overline\M(V\sigma V^*))
\le S_f^\pro(V\rho V^*\|V\sigma V^*).
$$

(3) is immediate from (1) and (2).
\end{proof}

It is easy to see that monotonicity implies invariance under isometries, but not the other way around; an example for the latter is $S_{f_{\alpha}}$ with $\alpha>2$, that is invariant under isometries but not monotone \cite[Page 5]{Renyi_new}.
We say that a quantum $f$-divergence $S_f^q$ is \ki{invariant under partial isometries} if 
\begin{align*}
S_f^q(V\rho V^*\|V\sigma V^*)=S_f^q(\rho\|\sigma)
\end{align*}
for any $\rho,\sigma\in\B(\hil)_+$ and any partial isometry $V:\,\hil\to\kil$ such that 
$\rho^0,\sigma^0\le V^*V$. It is easy to see that any $f$-divergence $S_f$ is invariant under partial isometries.
Proposition \ref{prop:isometric invariance} yields the following:
\begin{cor}
For a convex function $f$ on $(0,+\infty)$, the following are equivalent:
\begin{enumerate}
\item\label{inv-mon1}
$S_f^{\pro}$ is invariant under partial isometries;
\item\label{inv-mon2}
$S_f^{\pro}$ is invariant under isometries;
\item\label{inv-mon3}
$S_f^{\pro}=S_f^{\meas}$;
\item\label{inv-mon4}
$S_f^{\pro}$ is monotone under positive trace-preserving maps;
\item\label{inv-mon5}
$S_f^{\pro}$ is monotone under CPTP maps.
\end{enumerate}
\end{cor}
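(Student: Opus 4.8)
The plan is to prove the implications \ref{inv-mon1}\,$\imp$\,\ref{inv-mon2}, \ref{inv-mon2}\,$\iff$\,\ref{inv-mon3}, \ref{inv-mon3}\,$\imp$\,\ref{inv-mon1}, \ref{inv-mon3}\,$\imp$\,\ref{inv-mon4}, \ref{inv-mon4}\,$\imp$\,\ref{inv-mon5}, and \ref{inv-mon5}\,$\imp$\,\ref{inv-mon2}; together these collapse all five conditions into a single equivalence class. The backbone is Proposition \ref{prop:isometric invariance}: its part (3), applied for each fixed pair $\rho,\sigma$, already yields \ref{inv-mon2}\,$\iff$\,\ref{inv-mon3}, and its part (1) supplies the inequality $S_f^\pro(V\rho V^*\|V\sigma V^*)\ge S_f^\pro(\rho\|\sigma)$, which I will repeatedly combine with a monotonicity estimate to upgrade inequalities to equalities.

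The easy implications come first. Clearly \ref{inv-mon1}\,$\imp$\,\ref{inv-mon2}, since an isometry $V:\hil\to\kil$ is a partial isometry with $V^*V=I\ge\rho^0,\sigma^0$; and \ref{inv-mon4}\,$\imp$\,\ref{inv-mon5} is trivial because every CPTP map is positive and trace-preserving. For \ref{inv-mon5}\,$\imp$\,\ref{inv-mon2}, I note that for any isometry $V$ the map $X\mapsto VXV^*$ is completely positive and trace-preserving (as $\Tr VXV^*=\Tr XV^*V=\Tr X$), so monotonicity under CPTP maps gives $S_f^\pro(V\rho V^*\|V\sigma V^*)\le S_f^\pro(\rho\|\sigma)$; pairing this with Proposition \ref{prop:isometric invariance}(1) forces equality, i.e.\ isometric invariance.

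The two substantive steps both rest on properties of $S_f^\meas$. For \ref{inv-mon3}\,$\imp$\,\ref{inv-mon4}, I will show that $S_f^\meas$ is itself monotone under positive trace-preserving maps for every convex $f$: if $\Phi:\B(\hil)\to\B(\kil)$ is positive and trace-preserving and $\M=(M_x)_x$ is a measurement on $\kil$, then $\Phi^*$ is positive and unital, so the composite $\M\circ\Phi:\B(\hil)\to\bC^{\X}$ is again a measurement (with POVM elements $\Phi^*(M_x)$), whence $S_f(\M(\Phi(\rho))\|\M(\Phi(\sigma)))=S_f((\M\circ\Phi)(\rho)\|(\M\circ\Phi)(\sigma))\le S_f^\meas(\rho\|\sigma)$; taking the supremum over $\M$ in \eqref{meas fdiv2} gives $S_f^\meas(\Phi(\rho)\|\Phi(\sigma))\le S_f^\meas(\rho\|\sigma)$. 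Granting \ref{inv-mon3}, i.e.\ $S_f^\pro=S_f^\meas$, this monotonicity transfers to $S_f^\pro$. For \ref{inv-mon3}\,$\imp$\,\ref{inv-mon1}, I will use that $S_f^\meas$ is invariant under partial isometries. Its invariance under isometries is already recorded before the proposition; to pass to a partial isometry $V$ with $\rho^0,\sigma^0\le V^*V=:P$, I factor through $\ran P$ by writing $W:=V\iota$, where $\iota:\ran P\hookrightarrow\hil$ is the inclusion isometry, so that $W:\ran P\to\kil$ is an isometry (since $W^*W=\iota^*V^*V\iota=\iota^*P\iota=I_{\ran P}$). Writing $\rho_0,\sigma_0$ for $\rho,\sigma$ regarded on $\ran P$, so that $\rho=\iota\rho_0\iota^*$ and $V\rho V^*=V\iota\rho_0\iota^*V^*=W\rho_0 W^*$ (and likewise for $\sigma$), two applications of isometric invariance of $S_f^\meas$ give $S_f^\meas(\rho\|\sigma)=S_f^\meas(\rho_0\|\sigma_0)=S_f^\meas(V\rho V^*\|V\sigma V^*)$; under \ref{inv-mon3} the same invariance then holds for $S_f^\pro$.

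The main obstacle is not any single hard estimate but keeping the trace-preservation and unitality bookkeeping correct: the monotonicity of $S_f^\meas$ hinges on $\Phi^*$ being unital, so that $(\Phi^*(M_x))_x$ sums to $I$ and is a genuine measurement, while the partial-isometry argument hinges on $\rho^0,\sigma^0\le V^*V$ forcing $V^*V\rho=\rho$, which is exactly what legitimizes the factorization through $\ran P$ and keeps $X\mapsto VXV^*$ trace-preserving on the relevant supports. I must also verify that every ingredient is valid for an arbitrary convex $f$ rather than merely an operator convex one: Proposition \ref{prop:isometric invariance} and the monotonicity argument for $S_f^\meas$ use only the definition \eqref{meas fdiv2} and convexity of $f$, so no operator convexity is required.
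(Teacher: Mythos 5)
Your proof is correct, and your implication graph (\ref{inv-mon1}\,$\imp$\,\ref{inv-mon2}, \ref{inv-mon2}\,$\iff$\,\ref{inv-mon3}, \ref{inv-mon3}\,$\imp$\,\ref{inv-mon1}, \ref{inv-mon3}\,$\imp$\,\ref{inv-mon4}, \ref{inv-mon4}\,$\imp$\,\ref{inv-mon5}, \ref{inv-mon5}\,$\imp$\,\ref{inv-mon2}) does collapse all five conditions into one equivalence class; however, it closes the loop by a genuinely different route than the paper. The paper proves the single cycle \ref{inv-mon1}\,$\imp$\,\ref{inv-mon2}\,$\imp$\,\ref{inv-mon3}\,$\imp$\,\ref{inv-mon4}\,$\imp$\,\ref{inv-mon5}\,$\imp$\,\ref{inv-mon1}, whose only substantive step is \ref{inv-mon5}\,$\imp$\,\ref{inv-mon1}: for a partial isometry $V$ with $\rho^0,\sigma^0\le V^*V$ it constructs (following Wilde) the CPTP maps $\map_{\hil\to\kil}(\cdot):=V(\cdot)V^*+\tau_{\kil}\Tr(\cdot)(I-V^*V)$ and $\map_{\kil\to\hil}(\cdot):=V^*(\cdot)V+\tau_{\hil}\Tr(\cdot)(I-VV^*)$ with fixed states $\tau_{\hil},\tau_{\kil}$, which implement conjugation by $V$ and its reversal on $\{\rho,\sigma\}$, and then applies the assumed CPTP monotonicity twice to force $S_f^{\pro}(V\rho V^*\|V\sigma V^*)=S_f^{\pro}(\rho\|\sigma)$ --- so partial isometries are handled directly, with no appeal to $S_f^{\meas}$ or to Proposition \ref{prop:isometric invariance}\,(1). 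You instead prove \ref{inv-mon5}\,$\imp$\,\ref{inv-mon2} by pairing CPTP monotonicity of $X\mapsto VXV^*$ with Proposition \ref{prop:isometric invariance}\,(1), and recover \ref{inv-mon1} from \ref{inv-mon3} via partial-isometry invariance of $S_f^{\meas}$, which you derive from its isometry invariance by factoring $V$ through $\ran(V^*V)$; this factorization lemma is easy but genuinely needed, since the paper records invariance of $S_f^{\meas}$ only for isometries (the partial-isometry statement there concerns the standard $S_f$). What each approach buys: the paper's reverse-channel construction settles \ref{inv-mon1} in one stroke and keeps the proof self-contained, while yours avoids building any explicit reverse map at the cost of the extra lemma. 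Finally, your spelled-out verification that $S_f^{\meas}$ is monotone under positive trace-preserving maps (POVMs pull back to POVMs under the positive unital map $\map^*$) is exactly what the paper leaves implicit in calling \ref{inv-mon3}\,$\imp$\,\ref{inv-mon4} trivial, and, as you correctly note, every ingredient uses only convexity of $f$, so the stated generality is preserved.
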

\begin{proof}
\ref{inv-mon1}\,$\imp$\,\ref{inv-mon2} is trivial, \ref{inv-mon2}\,$\imp$\,\ref{inv-mon3}
follows from Proposition \ref{prop:isometric invariance}, 
\ref{inv-mon3}\,$\imp$\,\ref{inv-mon4} is trivial as $S_f^{\meas}$ is 
monotone under positive trace-preserving maps, and 
\ref{inv-mon4}\,$\imp$\,\ref{inv-mon5} is again trivial.
Assume now that \ref{inv-mon5} holds, and let $\rho,\sigma\in\B(\hil)_+$ and $V:\,\hil\to\kil$ be a partial isometry such that $\rho^0,\sigma^0\le V^*V$. We use a construction from \cite[Section 4.6.3]{Wilde_book}
to prove \ref{inv-mon1}.
For fixed states 
$\tau_{\hil}\in\B(\hil),\,\tau_{\kil}\in\B(\kil)$, define 
$\map_{\hil\to\kil}(\cdot):=V(\cdot)V^*+\tau_{\kil}\Tr (\cdot)(I-V^*V)$ and 
$\map_{\kil\to\hil}(\cdot):=V^*(\cdot)V+\tau_{\hil}\Tr (\cdot)(I-VV^*)$. 
Then $\map_{\hil\to\kil}$ and $\map_{\kil\to\hil}$ are 
CPTP maps such that 
$\map_{\hil\to\kil}(\rho)=V\rho V^*$, 
$\map_{\kil\to\hil}(V\rho V^*)=\rho$, and similarly for $\sigma$. The assumed monotonicity of $S_f^{\pro}$ then yields
$S_f^{\pro}(\rho\|\sigma)\le S_f^{\pro}(V\rho V^*\|V\sigma V^*)\le S_f^{\pro}(\rho\|\sigma)$, proving \ref{inv-mon1}.
\end{proof}
\bigskip

Analogously to the corresponding definitions for $f$-divergences, one can define the measured versions of the R\'enyi divergences as 
\begin{align}
D_{\alpha}^{\meas}(\rho\|\sigma):=\sup\{&D_{\alpha}(\M(\rho)\|\M(\sigma)):\,\M\text{ measurement on }\hil\},\label{measured Renyi def}\\
D_{\alpha}^{\pro}(\rho\|\sigma):=\sup\{&D_{\alpha}(\M(\rho)\|\M(\sigma)):\,\M\text{ projective measurement on }\hil\}\label{pr measured Renyi def}
\end{align}
for every $\alpha\in(0,+\infty)$, where $D_1(\rho\|\sigma):=\frac{1}{\Tr\rho}S(\rho\|\sigma)$, according to \eqref{Renyi limit}.
For $\alpha\ne 1$, these are simply functions of $S_{f_{\alpha}}^{\meas}$ and $S_{f_{\alpha}}^{\pro}$, respectively.
Note that $D_{\alpha}$ is monotone non-increasing under measurements for $\alpha\in(0,2]$ according to \eqref{Renyi div as fdiv} and Proposition \ref{P-3.8}.
While for $\alpha>2$, $D_{\alpha}$ is not monotone under CPTP maps, it is still monotone under measurements, as it has been shown in 
\cite[Section 3.7]{Hayashibook}. Thus, it is meaningful to take the suprema in the definitions \eqref{measured Renyi def} and \eqref{pr measured Renyi def}.

Now we review the results of \cite{BFT_variational} on the equality $S_f^{\pro}=S_f^{\meas}$ for the functions
$f_{\alpha}$ and $\eta$ in \eqref{Renyi functions}. The key ingredients are the following
variational expressions, given in \cite[Lemma 3]{BFT_variational}:
\begin{align}\label{f alpha variational}
S_{f_{\alpha}}^{\pro}(\rho\|\sigma)=\sup_{\omega\in\B(\hil)_{++}}
\begin{cases}
s(\alpha)\alpha\Tr\rho\omega+s(\alpha)(1-\alpha)\Tr\sigma \omega^{\frac{\alpha}{\alpha-1}},&\alpha\in(0,1/2),\\
s(\alpha)\alpha\Tr\rho\omega^{\frac{\alpha-1}{\alpha}}+s(\alpha)(1-\alpha)\Tr\sigma \omega,&\alpha
\in [1/2,+\infty).
\end{cases}
\end{align}
Here, note that the above expressions hold for general $\rho,\sigma\in\BH_+$, though proved
in \cite{BFT_variational} under the assumption $\rho^0\le\sigma^0$. In fact, if
$\rho^0\not\le\sigma^0$, then both sides of \eqref{f alpha variational} are $+\infty$ when
$\alpha>1$, and \eqref{f alpha variational} when $\alpha\in(0,1)$ follows by taking
$\sup_{\ep>0}$ of both sides of the expression for $\rho$ and $\sigma+\eps I$, noting that
$S_{f_\alpha}^\pro(\rho\|\sigma)=\sup_{\ep>0}S_{f_\alpha}^\pro(\rho\|\sigma+\eps I)$.

The following Proposition \ref{prop:pr=meas} is the same as Theorems 2 and 4 in
\cite{BFT_variational}; here we provide a proof based on \eqref{f alpha variational} and
Proposition \ref{prop:isometric invariance}\,(2), different from the one in
\cite{BFT_variational}. 
\begin{prop}\label{prop:pr=meas}
Let $f=f_{\alpha}$ for $\alpha\in(0,+\infty)$ or $f(x)=\eta(x)=x\log x$. Then
$S_f^{\meas}=S_f^{\pro}$, and hence
$D_\alpha^{\meas}=D_\alpha^{\pro}$ for every $\alpha\in(0,+\infty)$.
\end{prop}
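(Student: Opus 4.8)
The plan is to prove the nontrivial inequality $S_f^{\meas}\le S_f^{\pro}$ (the reverse being part of \eqref{measured fdiv inequalities}), and the decisive reduction will be Proposition \ref{prop:isometric invariance}\,(2). That result gives $S_f^{\meas}(\rho\|\sigma)=\sup\{S_f^{\pro}(V\rho V^*\|V\sigma V^*):V\text{ isometry}\}$, so it suffices to establish, for every isometry $V:\hil\to\kil$, the single inequality $S_f^{\pro}(V\rho V^*\|V\sigma V^*)\le S_f^{\pro}(\rho\|\sigma)$; combined with the trivial direction this yields $S_f^{\meas}=S_f^{\pro}$. The statement $D_\alpha^{\meas}=D_\alpha^{\pro}$ then follows at once, since for $\alpha\neq1$ these are the same monotone reparametrisation of $S_{f_\alpha}^{\meas}$ and $S_{f_\alpha}^{\pro}$ (the outcome totals $\Tr\rho,\Tr\sigma$ being unchanged by a measurement), while the case $\alpha=1$ is the relative entropy, i.e. $D_1=S_\eta/\Tr\rho$, handled by $f=\eta$.

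The central observation is that for an isometry $V$ the map $\map(A):=V^*AV$ is unital, since $\map(I_\kil)=V^*V=I_\hil$, and positive, so Lemma \ref{Choi inequality} applies to it. I would first treat $f=f_\alpha$ via the variational formula \eqref{f alpha variational}. Fix $V$ and an arbitrary $\omega\in\B(\kil)_{++}$, and set $\omega':=V^*\omega V\in\B(\hil)_{++}$; positive definiteness is clear because $\langle x,\omega'x\rangle=\langle Vx,\omega Vx\rangle>0$ for $x\neq0$. In each regime one term of the variational expression is linear in $\omega$ and transforms exactly: using $V^*V=I$ one has $\Tr V\sigma V^*\omega=\Tr\sigma\,V^*\omega V=\Tr\sigma\,\omega'$ (and similarly with $\rho$), so this term evaluated at $(V\rho V^*,V\sigma V^*,\omega)$ equals the corresponding term at $(\rho,\sigma,\omega')$. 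The remaining term involves a power $\omega^{p}$ with $p\in\{\tfrac{\alpha-1}{\alpha},\tfrac{\alpha}{\alpha-1}\}$, and here I would invoke Lemma \ref{Choi inequality}: when $t^{p}$ is operator convex ($p\in[-1,0)$, which occurs for $\alpha\in(0,1)$) it gives $(V^*\omega V)^{p}\le V^*\omega^{p}V$, and when $t^{p}$ is operator concave ($p=1-\tfrac1\alpha\in[0,1)$, which occurs for all $\alpha\ge1$, including $\alpha>2$ where $f_\alpha$ itself is not operator convex) it gives the reverse inequality. A short check of the signs of $s(\alpha)\alpha$ and $s(\alpha)(1-\alpha)$ in the three regimes $\alpha\in(0,\tfrac12)$, $\alpha\in[\tfrac12,1)$ and $\alpha\ge1$ shows that the sign always aligns so that the value at $(V\rho V^*,V\sigma V^*,\omega)$ is dominated by the value at $(\rho,\sigma,\omega')$, which is in turn $\le S_{f_\alpha}^{\pro}(\rho\|\sigma)$. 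Taking the supremum over $\omega$ gives the desired inequality.

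For $f=\eta$ I would run the identical argument with the corresponding variational formula for the measured relative entropy from \cite{BFT_variational}, of the shape $S_\eta^{\pro}(\rho\|\sigma)=\sup_{\omega\in\B(\hil)_{++}}\bigl(\Tr\rho\log\omega-\Tr\sigma\omega+\Tr\rho\bigr)$. The constant $\Tr\rho$ is preserved because $\Tr V\rho V^*=\Tr\rho$, the term $\Tr V\sigma V^*\omega=\Tr\sigma\,\omega'$ transforms exactly as before, and for $\Tr V\rho V^*\log\omega=\Tr\rho\,V^*(\log\omega)V$ the operator concavity of $\log$ together with Lemma \ref{Choi inequality} gives $V^*(\log\omega)V\le\log(V^*\omega V)=\log\omega'$, whence $\Tr\rho\,V^*(\log\omega)V\le\Tr\rho\log\omega'$ since $\rho\ge0$. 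This again yields $S_\eta^{\pro}(V\rho V^*\|V\sigma V^*)\le S_\eta^{\pro}(\rho\|\sigma)$.

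I expect the main difficulty to be bookkeeping rather than conceptual: one must correctly pair the sign of the relevant coefficient in \eqref{f alpha variational} with the operator convexity or concavity regime of $t^{p}$ in each parameter range, and confirm that $\omega':=V^*\omega V$ genuinely lies in $\B(\hil)_{++}$ so that the variational formula may be evaluated there. A minor point worth flagging is that, as $\omega$ ranges over $\B(\kil)_{++}$, the operators $\omega'=V^*\omega V$ fill only a subset of $\B(\hil)_{++}$; this is harmless since we only need an upper bound, so each transformed value is dominated by $\sup_{\omega'\in\B(\hil)_{++}}$, i.e. by $S_f^{\pro}(\rho\|\sigma)$.
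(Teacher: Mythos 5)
Your proof is correct, and for $f=f_{\alpha}$ it is essentially the paper's own argument: the paper also reduces via Proposition \ref{prop:isometric invariance}\,(2) to showing $S_{f_\alpha}^{\pro}(V\rho V^*\|V\sigma V^*)\le S_{f_\alpha}^{\pro}(\rho\|\sigma)$ for every isometry $V$, and also does so by applying the operator Jensen inequality (your Lemma \ref{Choi inequality}; the paper cites the equivalent fact $V^*\omega^{p}V\le (V^*\omega V)^{p}$ for operator concave powers) to the power term of the variational formula \eqref{f alpha variational}, with exactly the sign bookkeeping you describe, and with the same observation that $V^*\omega V$ ranging over a proper subset of $\B(\hil)_{++}$ only helps the upper bound. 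Where you genuinely diverge is the case $f=\eta$: you run the identical variational argument using the formula $S_\eta^{\pro}(\rho\|\sigma)=\sup_{\omega\in\B(\hil)_{++}}\bigl(\Tr\rho\log\omega-\Tr\sigma\omega+\Tr\rho\bigr)$ from \cite{BFT_variational} together with operator concavity of $\log$, whereas the paper deduces the $\eta$ case from the already-proved $f_\alpha$ cases via the monotone limit \eqref{Renyi limit}, writing $\frac{1}{\Tr\rho}S^{\pro}=\sup_{\alpha\in(0,1)}D_\alpha^{\pro}$ and $\frac{1}{\Tr\rho}S^{\meas}=\sup_{\alpha\in(0,1)}D_\alpha^{\meas}$ and identifying the two suprema. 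The paper explicitly notes (in the remark following the proposition) that your route works, but prefers its own because it needs one fewer external ingredient: it avoids invoking the separate variational formula for the projectively measured relative entropy. What your route buys instead is uniformity---both cases are dispatched by the same one-line mechanism---at the cost of having to import and verify that second variational expression.
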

\begin{proof}
Assume that $\alpha\in(0,1/2)$. By \eqref{f alpha variational}, for any $\rho,\sigma\in\BH_+$
and any isometry $V:\,\hil\to\kil$, we have 
\begin{align*}
S_{f_{\alpha}}^{\pro}(V\rho V^*\|V\sigma V^*)&=
\sup_{\omega\in\B(\hil)_{++}}
s(\alpha)\alpha\Tr V\rho V^*\omega+s(\alpha)(1-\alpha)\Tr V\sigma V^*\omega^{\frac{\alpha}{\alpha-1}}.
\end{align*}
Since $x\mapsto s(\alpha)(1-\alpha)x^{\frac{\alpha}{\alpha-1}}$ is operator concave, we have 
$s(\alpha)(1-\alpha) V^*\omega^{\frac{\alpha}{\alpha-1}}V\le s(\alpha)(1-\alpha) \bz V^*\omega V\jz^{\frac{\alpha}{\alpha-1}}$; see, e.g., 
\cite[Theorem 2.5.7]{Hiai_book}. Moreover, $\omega>0$ implies $V^*\omega V>0$, and hence
\begin{align*}
S_{f_{\alpha}}^{\pro}(V\rho V^*\|V\sigma V^*)&\le
\sup_{\omega\in\B(\hil)_{++}}
s(\alpha)\alpha\Tr \rho \bz V^*\omega V\jz+s(\alpha)(1-\alpha)\Tr \sigma \bz V^*\omega V\jz^{\frac{\alpha}{\alpha-1}}\\
&=
\sup_{\tilde\omega\in\B(\kil)_{++}}
s(\alpha)\alpha\Tr \rho \tilde \omega+s(\alpha)(1-\alpha)\Tr \sigma \tilde\omega^{\frac{\alpha}{\alpha-1}}\\
&=
S_{f_{\alpha}}^{\pro}(\rho\|\sigma).
\end{align*}
Taking now the supremum over all isometries $V$ and using
Proposition \ref{prop:isometric invariance}\,(2), we get the assertion. The proof for
$\alpha\ge 1/2$ goes the same way.

Consider now the R\'enyi divergences $D_{\alpha}$ defined in Example \ref{E-3.4}. By the above,
$S_{f_{\alpha}}^{\pro}=S_{f_{\alpha}}^{\meas}$ for all $\alpha\in(0,+\infty)$, and hence
$D_{\alpha}^{\pro}=D_{\alpha}^{\meas}$ for all $\alpha\in(0,+\infty)\setminus\{1\}$,
with the obvious definitions of the latter quantities; see \eqref{measured Renyi def} and
\eqref{pr measured Renyi def}. Moreover, \eqref{Renyi limit} implies
$$
{1\over\Tr\rho}\,S^{\pro}(\rho\|\sigma)
=\sup_{\alpha\in(0,1)}D_{\alpha}^{\pro}(\rho\|\sigma),\qquad
{1\over\Tr\rho}\,S^{\meas}(\rho\|\sigma)
=\sup_{\alpha\in(0,1)}D_{\alpha}^{\meas}(\rho\|\sigma).
$$
Combining these yields the assertion for $f=\eta$.
\end{proof}

\begin{remark}
From Propositions \ref{prop:isometric invariance}\,(3) and \ref{prop:pr=meas} we also see
that $S_{f_{\alpha}}^\pro$ and $S^\pro=S_{\eta}^\pro$ are invariant under isometries.
If one could prove these invariances directly, that would immediately imply Proposition \ref{prop:pr=meas},
again due to Proposition \ref{prop:isometric invariance}\,(3).
\end{remark}

\begin{remark}
In \cite{BFT_variational}, $S^{\pro}=S^{\meas}$ was proved using a separate variational expression for the projectively measured relative entropy
$S^{\pro}$. The same argument as above, using operator concavity and Proposition \ref{prop:isometric invariance}\,(2), could be applied to that variational formula to obtain
$S^{\pro}=S^{\meas}$; however, in the above proof we could proceed in a simpler way, without using the variational formula for $S^{\pro}$.
\end{remark}
\medskip

Consider also the sandwiched R\'enyi divergences, defined in \eqref{F-1.1}.
These quantities have been shown to be monotone non-increasing under CPTP maps for $\alpha\ge 1/2$ in 
\cite{Beigi,FL13,MO13,Renyi_new,WWY13}; in fact, for $\alpha\ge 1$, they are also monotone under positive trace-preserving maps
\cite{Beigi,M-HR}. By the Araki-Lieb-Thirring inequality \cite{Araki,Lieb-Thirring}, we have
\begin{align}\label{ALT}
D_{\alpha}^*(\rho\|\sigma)\le D_{\alpha}(\rho\|\sigma)
\end{align}
for any $\rho,\sigma$ and $\alpha\in(0,+\infty)$ \cite{WWY13}, with equality if and only if
$\rho$ commutes with $\sigma$ or $\alpha=1$ \cite{Hi1}. In particular, monotonicity of
$D_{\alpha}^*$ for $\alpha>2$ and \eqref{ALT} give an alternative proof for the non-increasing
property of $D_{\alpha}$ under measurements for $\alpha>2$. More generally, if
$\map:\,\B(\hil)\to\B(\kil)$ is a  positive trace-preserving map, and $\rho,\sigma\in\B(\hil)_+$
are such that $\map(\rho)$ and $\map(\sigma)$ commute, then 
$D_{\alpha}(\map(\rho)\|\map(\sigma))\le D_{\alpha}^*(\rho\|\sigma)\le D_{\alpha}(\rho\|\sigma)$
for any $\alpha\in[1/2,+\infty)$; in particular, 
$D_{\alpha}(\map(\rho)\|\map(\sigma))\le D_{\alpha}(\rho\|\sigma)$ also for $\alpha>2$.

It is straightforward from the definition of the measured R\'enyi divergence that for any fixed 
$\rho,\sigma$, $n\mapsto D_{\alpha}^{\meas}(\rho^{\otimes n}\|\sigma^{\otimes n})$ is superadditive, and hence 
\begin{align}\label{measured Renyi limit}
\ol D_{\alpha}^{\meas}(\rho\|\sigma):=
\sup_{n\in\bN}\frac{1}{n}D_{\alpha}^{\meas}\bz\rho^{\otimes n}\|\sigma^{\otimes n}\jz
=
\lim_{n\to\infty}\frac{1}{n}D_{\alpha}^{\meas}\bz\rho^{\otimes n}\|\sigma^{\otimes n}\jz.
\end{align}
We call $\ol D_{\alpha}^{\meas}$ the \ki{regularized measured R\'enyi divergence}.
Moreover, for $\alpha\ge 1/2$ we have 
\begin{align}\label{measured Renyi limit2}
\ol D_{\alpha}^{\meas}(\rho\|\sigma)=D_{\alpha}^*(\rho\|\sigma);
\end{align}
see \cite{HP} for $\alpha=1$, \cite{MO13} for $\alpha>1$, and \cite{HT14} for $\alpha\in[1/2,1)$. 
For $\alpha\in(0,1/2)$ this is no longer true, and instead we have 
$D_{\alpha}^*(\rho\|\sigma)\le D_{\alpha}^{\meas}(\rho\|\sigma)$, with strict inequality for non-commuting 
$\rho,\sigma$, as it has been shown very recently in \cite[Theorem 7]{BFT_variational}.
However, it is true for any $\alpha\in(0,+\infty)$ and any $\rho,\sigma\in\B(\hil)_+$ 
that there exists a sequence of measurements $\M_n$ on $\hil^{\otimes n}$, $n\in\bN$, such that 
\begin{align*}
D_{\alpha}^*(\rho\|\sigma)=\lim_{n\to\infty}\frac{1}{n}D_{\alpha}\bz\M_n(\rho^{\otimes n})\|\M_n(\sigma^{\otimes n})\jz.
\end{align*}
Such a measurement can be chosen as a von Neumann measurement in a common eigenbasis of
$\sigma^{\otimes n}$ and $\P_{\sigma^{\otimes n}}(\rho^{\otimes n})$, where 
$\P_{\sigma^{\otimes n}}$ is the pinching by the spectral projections of $\sigma^{\otimes n}$;
see \cite{HP} for $\alpha=1$,
\cite[Theorem 3.7]{MO13} for $\alpha>1$, and Lemma 3 and Corollary 4 in \cite{HT14}
for $\alpha\in(0,1)$.

The relations of the various quantum R\'enyi divergences mentioned above can be summarized as follows:

\begin{prop}\label{prop:Renyi relations}
For any $\rho,\sigma\in\B(\hil)_+$, we have 
\begin{align}
&D_{\alpha}^*(\rho\|\sigma)\le D_{\alpha}^{\pro}(\rho\|\sigma)=D_{\alpha}^{\meas}(\rho\|\sigma)\le
\ol D_{\alpha}^{\meas}(\rho\|\sigma)\le D_{\alpha}(\rho\|\sigma),\ds\ds\ds \alpha\in(0,1/2),
\label{Renyi chain1}\\
&D_{\alpha}^{\pro}(\rho\|\sigma)=D_{\alpha}^{\meas}(\rho\|\sigma)\le
\ol D_{\alpha}^{\meas}(\rho\|\sigma)=
D_{\alpha}^*(\rho\|\sigma)\le D_{\alpha}(\rho\|\sigma),\ds\ds\ds \alpha\in[1/2,+\infty).
\label{Renyi chain2}
\end{align}
If $\rho$ and $\sigma$ commute or $D_{\alpha}(\rho\|\sigma)=+\infty$ then all the inequalities above are equalities;
otherwise all the inequalities are strict, except the first inequality in \eqref{Renyi chain2}
for $\alpha=1/2$, the last inequality in \eqref{Renyi chain2} for $\alpha=1$, 
and 
possibly the last two inequalities in 
\eqref{Renyi chain1}, of which at least one is strict.
%
\end{prop}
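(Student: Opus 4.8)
The plan is to read off both chains from results already in hand and then to settle strictness inequality by inequality. For the equalities, $D_\alpha^{\pro}=D_\alpha^{\meas}$ in \eqref{Renyi chain1} and \eqref{Renyi chain2} is Proposition \ref{prop:pr=meas}, while $\ol D_\alpha^{\meas}=D_\alpha^*$ for $\alpha\ge1/2$ is \eqref{measured Renyi limit2}. The inequality $D_\alpha^{\meas}\le\ol D_\alpha^{\meas}$ is the $n=1$ instance of \eqref{measured Renyi limit}, and $\ol D_\alpha^{\meas}\le D_\alpha$ follows because $D_\alpha$ is additive and monotone under measurements (for $\alpha\le2$ via \eqref{Renyi div as fdiv} and Proposition \ref{P-3.8}, and for $\alpha>2$ by \cite[Section 3.7]{Hayashibook}), so $\frac1n D_\alpha^{\meas}(\rho^{\otimes n}\|\sigma^{\otimes n})\le D_\alpha(\rho\|\sigma)$ for every $n$. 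The two remaining inequalities, $D_\alpha^*\le D_\alpha^{\meas}$ for $\alpha\in(0,1/2)$ and $D_\alpha^*\le D_\alpha$, are \cite[Theorem 7]{BFT_variational} and the Araki--Lieb--Thirring inequality \eqref{ALT}.

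For the two degenerate regimes I would argue directly. If $\rho\sigma=\sigma\rho$, a von Neumann measurement in a common eigenbasis reduces every one of the five quantities to the classical R\'enyi divergence of the eigenvalue distributions, so they all equal $D_\alpha$ (using also $D_\alpha^*=D_\alpha$ on commuting arguments). If $D_\alpha(\rho\|\sigma)=+\infty$ --- which for $\rho\ne0$ happens only when $\alpha\ge1$ and $\rho^0\nleq\sigma^0$ --- the two-outcome measurement $(\sigma^0,I-\sigma^0)$ already yields $D_\alpha^{\meas}=+\infty$, and $D_\alpha^*=+\infty$ by the support condition for the sandwiched divergence, so every term is infinite.

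Now assume $\rho\sigma\ne\sigma\rho$ and $D_\alpha<+\infty$. The first inequality of \eqref{Renyi chain1} is strict by \cite[Theorem 7]{BFT_variational}. For the remaining two inequalities of \eqref{Renyi chain1}, I would observe that $f_\alpha$ is operator convex on $[0,+\infty)$ with $\supp\mu_{f_\alpha}=(0,+\infty)$, so the support hypothesis of Theorem \ref{thm:measured} is met and (assuming $\rho^0\le\sigma^0$) non-commutativity gives $S_{f_\alpha}^{\meas}(\rho\|\sigma)<S_{f_\alpha}(\rho\|\sigma)$, i.e.\ $D_\alpha^{\meas}(\rho\|\sigma)<D_\alpha(\rho\|\sigma)$; since $D_\alpha^{\meas}\le\ol D_\alpha^{\meas}\le D_\alpha$, at least one of these two must be strict, and as $\ol D_\alpha^{\meas}$ has no closed form for $\alpha\in(0,1/2)$ I do not expect to decide which. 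The last inequality of \eqref{Renyi chain2} is handled by the equality case of \eqref{ALT} due to \cite{Hi1}: $D_\alpha^*<D_\alpha$ for all $\alpha\ne1$, with equality exactly at $\alpha=1$, the first listed exception.

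The delicate point, and the main obstacle, is the first inequality of \eqref{Renyi chain2}, namely $D_\alpha^{\meas}<\ol D_\alpha^{\meas}=D_\alpha^*$ for $\alpha\in(1/2,+\infty)$. Here I would argue by contradiction: if equality held, then by Proposition \ref{prop:meas fdiv achievability} the supremum is attained at a measurement $\M$, and since $\M(\rho)$ and $\M(\sigma)$ commute we get $D_\alpha^*(\M(\rho)\|\M(\sigma))=D_\alpha(\M(\rho)\|\M(\sigma))=D_\alpha^{\meas}(\rho\|\sigma)=D_\alpha^*(\rho\|\sigma)$, so $\M$ preserves the sandwiched R\'enyi divergence. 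For $\alpha>1/2$ this forces reversibility of $\M$ on $\{\rho,\sigma\}$ by \cite{LRD} (and \cite{Jencova_rev16} for $\alpha>1$); a double use of the monotonicity of the relative entropy (Remark \ref{rem:positive mon}) then upgrades reversibility to $S_\eta(\M(\rho)\|\M(\sigma))=S_\eta(\rho\|\sigma)$, whence $S_\eta^{\meas}(\rho\|\sigma)=S_\eta(\rho\|\sigma)$ and, by Theorem \ref{thm:measured} applied to $\eta$, the contradiction $\rho\sigma=\sigma\rho$. This reduction through the deep sufficiency theorems for the sandwiched family is exactly where the work lies, and it degenerates precisely at $\alpha=1/2$, where preservation of $D_{1/2}^*$ no longer forces reversibility; there the equality $D_{1/2}^{\meas}=D_{1/2}^*$ genuinely holds, by the achievability of the fidelity on a single measurement, giving the second listed exception (the case $\alpha=1$ being covered more cheaply, since there $D_\alpha^{\meas}<D_\alpha$ is the relative-entropy instance of Theorem \ref{thm:measured} directly). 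Throughout this last step one must keep track of the support condition $\rho^0\le\sigma^0$ required by Theorem \ref{thm:measured}, which is automatic once $S_\eta(\rho\|\sigma)<+\infty$ and otherwise needs the standard adaptation to operators with mismatched supports.
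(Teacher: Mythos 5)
Most of your proposal tracks the paper's proof: the chains are assembled from Proposition \ref{prop:pr=meas}, from \eqref{measured Renyi limit}--\eqref{measured Renyi limit2}, from monotonicity/additivity of $D_{\alpha}$, and from the Araki--Lieb--Thirring inequality; the two degenerate regimes are handled directly; the strictness of the last two inequalities in \eqref{Renyi chain1} is obtained, exactly as in the paper, by applying Theorem \ref{thm:measured} to $f_{\alpha}$ and concluding that at least one of the two must be strict; and the equality case of \eqref{ALT} from \cite{Hi1} settles the last inequality in \eqref{Renyi chain2}. Where you genuinely diverge is the first inequality of \eqref{Renyi chain2}, i.e.\ $D_{\alpha}^{\meas}(\rho\|\sigma)<D_{\alpha}^*(\rho\|\sigma)$ for non-commuting $\rho,\sigma$ and $\alpha>1/2$: the paper does not prove this itself but cites \cite{BFT_variational}, where the strict inequality is established via the variational formulas, whereas you attempt an independent argument through reversibility.

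That independent argument has a real gap for $\alpha\in(1/2,1)$. Your contradiction scheme needs the implication ``a CPTP map preserving $D_{\alpha}^*$ is reversible on $\{\rho,\sigma\}$,'' and you attribute it for all $\alpha>1/2$ to \cite{LRD}. But \cite{LRD} does not prove reversibility: as the paper's own introduction states, it gives only an \emph{algebraic characterization} of the equality case in the data-processing inequality (based on the variational formula of \cite{FL13}), and it is not established there that this algebraic condition yields a recovery map. The implication you need is available, among the results quoted in this paper, only for $\alpha>1$, from \cite{Jencova_rev16}. So your argument is sound for $\alpha>1$ (the rest of that step --- attainment of the measured supremum, $D_{\alpha}=D_{\alpha}^*$ on commuting outputs, upgrading reversibility to preservation of $S_{\eta}$ by double monotonicity, and then Theorem \ref{thm:measured} --- is correct, modulo using $D_{\alpha}^{\pro}$ attainment for $\alpha>2$ where $f_{\alpha}$ is not operator convex), and your separate treatments of $\alpha=1$ and $\alpha=1/2$ are fine, but the strict inequality on $(1/2,1)$ is left unproven; to close it you must either cite \cite[Theorem 7 ff.]{BFT_variational} as the paper does, or supply a reversibility-from-preservation theorem for that parameter range, which is not in the paper or its references. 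A small additional inaccuracy: $D_{\alpha}(\rho\|\sigma)=+\infty$ can also occur for $\alpha\in(0,1)$ when $\rho$ and $\sigma$ have orthogonal supports, not only for $\alpha\ge1$ with $\rho^0\nleq\sigma^0$; your two-outcome measurement argument still covers that case, so this is harmless.
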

\begin{proof}
When $\rho$ and $\sigma$ commute or $D_{\alpha}(\rho\|\sigma)=+\infty$ then it is straightforward that all the above quantities are equal to each other, and hence for the rest we assume the contrary.

The relations $D_{\alpha}^*(\rho\|\sigma)\le D_{\alpha}^{\pro}(\rho\|\sigma)=D_{\alpha}^{\meas}(\rho\|\sigma)$ for $\alpha\in(0,1/2)$, with strict inequality for non-commuting $\rho,\sigma$, as well as 
$D_{\alpha}^{\pro}(\rho\|\sigma)=D_{\alpha}^{\meas}(\rho\|\sigma)$ for $\alpha\in[1/2,+\infty)$,
and the strict inequality 
$D_{\alpha}^{\meas}(\rho\|\sigma)<D_{\alpha}^*(\rho\|\sigma)$ for non-commuting $\rho,\sigma$ and 
$\alpha>1/2$, 
were proved in \cite{BFT_variational}.

If the last two inequalities in \eqref{Renyi chain1} are both equalities then we also have 
$D_{\alpha}^{\pro}(\rho\|\sigma)=D_{\alpha}(\rho\|\sigma)$, and 
$\rho\sigma=\sigma\rho$ follows by Theorem \ref{thm:measured} applied to $f_{\alpha}(t)=-t^{\alpha}$.
Finally, the last inequality in \eqref{Renyi chain2} and its equality case follow from the Araki-Lieb-Thirring inequality and its equality case, as discussed above.
%
\end{proof}

\begin{remark}
The case $\alpha=1/2$ is special 
in the sense that
$D_{1/2}^*=-2\log F$, where $F$ is the fidelity,
so that $D_{1/2}^*=D_{1/2}^{\meas}$; see, e.g., \cite[Chapter 9]{NC}.
\end{remark}

\begin{remark}
It is an interesting open problem to find a closed expression for
$\ol D_{\alpha}^{\meas}(\rho\|\sigma)$
for $\alpha\in(0,1/2)$; one possible candidate is $D_{\alpha}(\rho\|\sigma)$, based on \eqref{Renyi chain1}.
This is related to another question left open in the above proposition, namely whether both of the last two inequalities in \eqref{Renyi chain1} are strict for non-commuting $\rho$ and $\sigma$.
\end{remark}

\begin{remark}
Note that both the standard and the sandwiched R\'enyi divergences are additive, i.e., 
$D_{\alpha}\bz\rho^{\otimes n}\|\sigma^{\otimes n}\jz=nD_{\alpha}(\rho\|\sigma)$, 
$D_{\alpha}^*\bz\rho^{\otimes n}\|\sigma^{\otimes n}\jz=nD_{\alpha}^*(\rho\|\sigma)$
for all $\rho,\sigma$, all $n\in\bN$, and all $\alpha\in(0,+\infty)$. 
By Proposition \ref{prop:Renyi relations} and \eqref{measured Renyi limit}--\eqref{measured Renyi limit2}, we see that the
measured R\'enyi divergences are not additive for $\alpha>1/2$; more precisely, if
$\rho\sigma\ne\sigma\rho$ then for every $\alpha>1/2$ there exists an $n\in\bN$ such that 
$D_{\alpha}^{\meas}(\rho^{\otimes n}\|\sigma^{\otimes n})>nD_{\alpha}^{\meas}(\rho\|\sigma)$.
It is an open question whether the same holds for $\alpha\in(0,1/2)$.
\end{remark}

\bigskip
We close this section by proving the strict positivity of $f$-divergences (when properly normalized as $f(1)=0$) on pairs of quantum states. More precisely, we prove a Pinsker-type inequality for the projectively measured $f$-divergences. While we don't use it in the rest of the paper, it is interesting in its own right.

The quantum version of the \ki{Pinsker} (or \ki{Pinsker-Csisz\'ar}) \ki{inequality}
\begin{align}\label{Cs-P}
{1\over2}\,\|\rho-\sigma\|_1^2\le S(\rho\|\sigma)
\end{align}
for quantum states $\rho,\sigma$ was first shown in \cite{HOT}, where $\|\cdot\|_1$ denotes
the trace-norm. The following proposition is not only a generalization to general
$f$-divergences, but it also strengthens \eqref{Cs-P} even in the case of the relative entropy,
according to Theorem \ref{thm:measured}.

\begin{prop}\label{Pinsker}
Let $f$ be an operator convex function on $(0,+\infty)$ with $f(1)=0$. Then for every
density operators $\rho,\sigma$ on $\hil$,
$$
{f''(1)\over2}\,\|\rho-\sigma\|_1^2\le S_f^{\pro}(\rho\|\sigma),
$$
where $S_f^{\pro}(\rho\|\sigma)$ is given in \eqref{pr fdiv def}. Hence,
${f''(1)\over2}\,\|\rho-\sigma\|_1^2\le S_f^q(\rho\|\sigma)$ holds for every quantum
$f$-divergence in the sense stated in Section 3.1. Here, $f''(1)>0$ if and only if $f$ is
non-linear.
\end{prop}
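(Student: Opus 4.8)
The plan is to reduce the statement to the classical Pinsker inequality for $f$-divergences and then to use the variational characterization of the trace norm. By definition \eqref{pr fdiv def}, $S_f^{\pro}(\rho\|\sigma)\ge S_f(\M(\rho)\|\M(\sigma))$ for every projective measurement $\M$, and $\M(\rho),\M(\sigma)$ are genuine probability vectors because $\rho,\sigma$ are density operators. I would fix the two-outcome von Neumann measurement $\M_0=(P,I-P)$, where $P$ is the support projection of the positive part $(\rho-\sigma)_+$ in the Jordan (Hahn) decomposition $\rho-\sigma=(\rho-\sigma)_+-(\rho-\sigma)_-$. Since $\Tr(\rho-\sigma)=0$, one has $\Tr P(\rho-\sigma)=\Tr(\rho-\sigma)_+=\tfrac12\|\rho-\sigma\|_1$, so the binary distributions $p:=\M_0(\rho)$ and $q:=\M_0(\sigma)$ satisfy $\|p-q\|_1=\|\rho-\sigma\|_1$; this is precisely the statement that $\M_0$ is an optimal distinguishing measurement. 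Hence it suffices to prove the classical bound $S_f(p\|q)\ge\frac{f''(1)}2\|p-q\|_1^2$ for probability vectors $p,q$.

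For the classical inequality I would invoke the integral representation \eqref{F-2.3}, $f(x)=f(1)+f'(1)(x-1)+c(x-1)^2+\int_{[0,+\infty)}\frac{(x-1)^2}{x+s}\,d\lambda(s)$ with $c\ge0$ and a positive measure $\lambda$. The linear term contributes nothing to $S_f(p\|q)$, as $\sum_x(p(x)-q(x))=0$, nor to $f''(1)$; and a Taylor expansion at $x=1$ gives $f''(1)=2c+\int_{[0,+\infty)}\frac{2}{1+s}\,d\lambda(s)$. Since $S_f$ is linear in $f$, it is enough to treat each summand. For the quadratic term one has $S_{(x-1)^2}(p\|q)=\sum_x\frac{(p(x)-q(x))^2}{q(x)}$, and Cauchy--Schwarz against $\sum_x q(x)=1$ yields $S_{(x-1)^2}(p\|q)\ge\|p-q\|_1^2$, which is the required bound since $(x-1)^2$ has second derivative $2$ at $x=1$. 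For $g_s(x):=\frac{(x-1)^2}{x+s}$, $s\in[0,+\infty)$, one computes $S_{g_s}(p\|q)=\sum_x\frac{(p(x)-q(x))^2}{p(x)+s\,q(x)}$, and Cauchy--Schwarz against $\sum_x(p(x)+s\,q(x))=1+s$ gives $S_{g_s}(p\|q)\ge\frac1{1+s}\|p-q\|_1^2$, matching $\frac{g_s''(1)}2=\frac1{1+s}$. Integrating over $d\lambda(s)$ and adding $c$ times the quadratic estimate reproduces exactly $\frac{f''(1)}2\|p-q\|_1^2$.

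Combining the two steps yields $S_f^{\pro}(\rho\|\sigma)\ge S_f(p\|q)\ge\frac{f''(1)}2\|p-q\|_1^2=\frac{f''(1)}2\|\rho-\sigma\|_1^2$. The extension to an arbitrary monotone quantum $f$-divergence $S_f^q$ then follows from the chain $S_f^{\pro}\le S_f^{\meas}=S_f^{\min}\le S_f^q$, using \eqref{measured fdiv inequalities} and \eqref{min-max properties}. Finally, the equivalence that $f''(1)>0$ if and only if $f$ is non-linear is read off from $f''(1)=2c+2\int_{[0,+\infty)}(1+s)^{-1}\,d\lambda(s)$ together with the uniqueness of the coefficients $c,\lambda$ in \eqref{F-2.3}: this quantity vanishes precisely when $c=0$ and $\lambda=0$, i.e.\ when $f$ is affine.

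I expect the main difficulty to be bookkeeping rather than conceptual: one must justify the termwise/integral splitting of $S_f$ along \eqref{F-2.3} and treat the degenerate entries (where $q(x)=0$ or $p(x)=0$) in the Cauchy--Schwarz estimates, so that the classical $f$-divergences and the constant $f''(1)$ are matched correctly. This is conveniently handled by first replacing $\rho,\sigma$ with $\rho+\eps I,\sigma+\eps I$ and passing to the limit $\eps\searrow0$, or by noting that the asserted inequality is trivial whenever $S_f^{\pro}(\rho\|\sigma)=+\infty$.
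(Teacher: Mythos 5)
Your proof is correct and follows essentially the same route as the paper: reduce to the classical Pinsker inequality via a trace-norm-preserving projective measurement, then bound each term of the integral representation \eqref{F-2.3} by Cauchy--Schwarz with weights $q$ and $p+sq$, matching $f''(1)=2c+2\int_{[0,+\infty)}(1+s)^{-1}\,d\lambda(s)$. The only (immaterial) difference is the choice of measurement: the paper pinches in an eigenbasis of $\rho-\sigma$, so that $p-q$ coincides with $\rho-\sigma$, while you use the two-outcome Hahn-decomposition measurement $(P,I-P)$; both preserve $\|\rho-\sigma\|_1$ and both are admissible in \eqref{pr fdiv def}.
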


\begin{proof}
Let $(e_i)_{i=1}^d$ be an orthonormal basis consisting of eigenvectors of $\rho-\sigma$, and define 
$\E(X):=\sum_{i=1}^d\inner{e_i}{Xe_i}\pr{e_i}$, $X\in\B(\hil)$.
Set $p:=\cE(\rho)=\sum_{i=1}^d p_i\pr{e_i}$ and
$q:=\cE(\sigma)=\sum_{i=1}^d q_i\pr{e_i}$.
Since $\rho-\sigma=\cE(\rho-\sigma)=p-q$ and
$S_f(p\|q)\le S_f^{\pro}(\rho\|\sigma)$, it suffices to show that
\begin{equation}\label{pinsk-0}
{f''(1)\over2}\,\|p-q\|_1^2\le S_f(p\|q).
\end{equation}
Although this is known \cite[Theorem 3]{Gil} for a more general class of convex functions $f$,
we have, for operator convex $f$, the following simple proof based on the integral expression
in \eqref{F-2.3}. As easily verified, note that
\begin{equation}\label{pinsk-1}
f''(1)=2\biggl(c+\int_{[0,+\infty)}{1\over1+s}\,d\lambda(s)\biggr),
\end{equation}
which shows that $f''(1)>0$ if and only if $f$ is non-linear. We may assume by continuity
that $p,q>0$, and we have the expression
\begin{equation}\label{pinsk-2}
S_f(p\|q)=c\sum_{i=1}^d{(p_i-q_i)^2\over q_i}
+\int_{[0,+\infty)}\sum_{i=1}^d{(p_i-q_i)^2\over p_i+sq_i}\,d\lambda(s).
\end{equation}
We estimate
\begin{align}
\sum_{i=1}^d|p_i-q_i|&=\sum_{i=1}^d{|p_i-q_i|\over\sqrt{q_i}}\,\sqrt{q_i}
\le\Biggl(\sum_{i=1}^d{(p_i-q_i)^2\over q_i}\Biggr)^{1/2}
\Biggl(\sum_{i=1}^dq_i\Biggr)^{1/2} \nonumber\\
&=\Biggl(\sum_{i=1}^d{(p_i-q_i)^2\over q_i}\Biggr)^{1/2}, \label{pinsk-3}
\end{align}
and for every $s\in[0,+\infty)$,
\begin{align}
\sum_{i=1}^d|p_i-q_i|&=\sum_{i=1}^d{|p_i-q_i|\over\sqrt{p_i+sq_i}}\,\sqrt{p_i+sq_i}
\le\Biggl(\sum_{i=1}^d{(p_i-q_i)^2\over p_i+sq_i}\Biggr)^{1/2}
\Biggl(\sum_{i=1}^d(p_i+sq_i)\Biggr)^{1/2} \nonumber\\
&=\Biggl(\sum_{i=1}^d{(p_i-q_i)^2\over p_i+sq_i}\Biggr)^{1/2}(1+s)^{1/2}. \label{pinsk-4}
\end{align}
Combining \eqref{pinsk-1}--\eqref{pinsk-4} yields \eqref{pinsk-0}.
\end{proof}

\section{Reversibility via R\'enyi divergences}
\label{sec:Renyi rev}

The notion of the {\it $\alpha$-$z$-R\'enyi relative entropy} was first introduced in
\cite[Section 3.3]{JOPP}, and further studied in \cite{AD}. It is defined for two positive
operators $\rho,\sigma\in\BH_+$ with $\rho^0\le\sigma^0$ as
\begin{align*}
D_{\alpha,z}(\rho\|\sigma)
:=
{1\over\alpha-1}\log\Tr\Bigl(\sigma^{1-\alpha\over2z}
\rho^{\alpha\over z}\sigma^{1-\alpha\over2z}\Bigr)^z
=
{1\over\alpha-1}\log\Tr\Bigl(\rho^{\alpha\over 2z}\sigma^{1-\alpha\over z}\rho^{\alpha\over 2z}\Bigr)^z,
\end{align*}
for any $\alpha\in\bR\setminus\{1\}$ and $z>0$. Below we restrict to the case $\alpha,z>0$ with $\alpha\ne1$. The above definition can be extended to general $\rho,\sigma\in\BH_+$ as 
\begin{align}\label{az def2}
D_{\alpha,z}(\rho\|\sigma)
:=
\lim_{\ep\searrow 0}
{1\over\alpha-1}\log\Tr\Bigl(\rho^{\alpha\over 2z}(\sigma+\ep I)^{1-\alpha\over z}\rho^{\alpha\over 2z}\Bigr)^z.
\end{align}
\begin{lemma}
The limit in \eqref{az def2} exists, and is equal to 
\begin{align*}
\begin{cases}
{1\over\alpha-1}\log\Tr\Bigl(\rho^{\alpha\over 2z}\sigma^{1-\alpha\over z}\rho^{\alpha\over 2z}\Bigr)^z,& \alpha\in(0,1)\text{ or } \rho^0\le \sigma^0,\\
+\infty,&\text{otherwise}.
\end{cases}
\end{align*}
\end{lemma}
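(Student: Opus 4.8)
The plan is to analyze the quantity
$$Q_\eps := \Tr\Bigl(\rho^{\alpha/2z}(\sigma+\eps I)^{(1-\alpha)/z}\rho^{\alpha/2z}\Bigr)^z$$
directly, abbreviating $A := \rho^{\alpha/2z}\in\B(\hil)_+$ (a fixed operator with $\ran A=\supp\rho$, since $\alpha/2z>0$) and $B_\eps := (\sigma+\eps I)^{(1-\alpha)/z}$, so that $Q_\eps=\Tr(AB_\eps A)^z$. The whole behaviour is governed by the sign of the exponent $(1-\alpha)/z$, i.e.\ by whether $\alpha<1$ or $\alpha>1$, and I would treat these two regimes separately.

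First, for $\alpha\in(0,1)$ the exponent $(1-\alpha)/z$ is positive, so $x\mapsto x^{(1-\alpha)/z}$ is continuous on $[0,+\infty)$ with value $0$ at $0$; hence $B_\eps\to\sigma^{(1-\alpha)/z}$ in norm as $\eps\searrow0$ by continuity of the functional calculus. Since $A$ is fixed and $X\mapsto\Tr X^z$ is continuous on $\B(\hil)_+$, it follows at once that $Q_\eps\to\Tr(\rho^{\alpha/2z}\sigma^{(1-\alpha)/z}\rho^{\alpha/2z})^z$, and multiplying by $1/(\alpha-1)$ after taking $\log$ gives the claimed formula, read as $+\infty$ exactly when the trace vanishes, i.e.\ when $\rho$ and $\sigma$ have orthogonal supports, since here $1/(\alpha-1)<0$.

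For $\alpha>1$ the exponent is negative, and I would first secure existence of the limit by monotonicity: $x\mapsto x^{(1-\alpha)/z}$ is decreasing, so $\eps\mapsto B_\eps$ is operator-increasing as $\eps\searrow0$, whence $AB_\eps A$ increases; by Weyl monotonicity of eigenvalues together with monotonicity of $t\mapsto t^z$ on $[0,+\infty)$, the map $\eps\mapsto Q_\eps$ is monotone non-decreasing as $\eps\searrow0$, so $\lim_{\eps\searrow0}Q_\eps$ exists in $(0,+\infty]$. To identify it when $\rho^0\le\sigma^0$, note that then $\sigma^0A=A\sigma^0=A$, so $AB_\eps A=A\,\sigma^0(\sigma+\eps I)^{(1-\alpha)/z}\sigma^0\,A$, and $\sigma^0(\sigma+\eps I)^{(1-\alpha)/z}\sigma^0=\sum_{b>0}(b+\eps)^{(1-\alpha)/z}Q_b\to\sigma^{(1-\alpha)/z}$, the divergent contribution of $\ker\sigma$ being annihilated by $\sigma^0$. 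Continuity of $\Tr(\cdot)^z$ then yields the finite formula, which is genuinely finite because $1/(\alpha-1)>0$ and the trace is positive.

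The main obstacle is the remaining subcase $\alpha>1$ with $\rho^0\not\le\sigma^0$, where I must show $Q_\eps\to+\infty$, so that $\tfrac1{\alpha-1}\log Q_\eps\to+\infty$. The idea is to force a single diverging eigenvalue. Since $\ran A=\supp\rho\not\subseteq\supp\sigma$, I can choose $w\in\supp\rho$ with $w_0:=(I-\sigma^0)w\neq0$ and write $w=A\phi$ for some fixed $\phi$. Because $\ker\sigma$ and $\supp\sigma$ are invariant under $B_\eps$, and $B_\eps$ acts as $\eps^{(1-\alpha)/z}I$ on $\ker\sigma$ with $\eps^{(1-\alpha)/z}\to+\infty$, the orthogonal splitting $w=w_0+\sigma^0w$ gives $\langle\phi,AB_\eps A\phi\rangle=\langle w,B_\eps w\rangle\ge\eps^{(1-\alpha)/z}\|w_0\|^2\to+\infty$. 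Hence the largest eigenvalue of $AB_\eps A$ diverges, and the elementary bound $\Tr(AB_\eps A)^z\ge\lambda_{\max}(AB_\eps A)^z$ (valid for all $z>0$) forces $Q_\eps\to+\infty$. I expect the bookkeeping with support projections and the generalized inverse to be the only delicate point; everything else reduces to continuity and monotonicity of standard functional calculus.
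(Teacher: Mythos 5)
Your proof is correct, and its core coincides with the paper's: the only genuinely non-trivial case is $\alpha>1$ with $\rho^0\nleq\sigma^0$, and both arguments settle it by exhibiting a direction along which the $\eps^{(1-\alpha)/z}$ blow-up of $(\sigma+\eps I)^{(1-\alpha)/z}$ on $\ker\sigma$ survives conjugation by $\rho^{\alpha/2z}$. The execution differs in a dual way. The paper picks a unit vector $\psi\in\ker\sigma$ with $\langle\psi,\rho^0\psi\rangle>0$, uses the operator inequality $(\sigma+\eps I)^{(1-\alpha)/z}\ge\eps^{(1-\alpha)/z}|\psi\rangle\langle\psi|$ together with monotonicity of $X\mapsto\Tr(AXA)^z$ in the operator order, and then evaluates the resulting rank-one trace exactly, obtaining the clean bound $Q_\eps\ge\eps^{1-\alpha}\langle\psi,\rho^{\alpha/z}\psi\rangle^z$. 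You instead pick $w\in\supp\rho$ with $(I-\sigma^0)w\ne0$, pull it back through $A=\rho^{\alpha/2z}$, bound a single quadratic form, and conclude via $\Tr X^z\ge\lambda_{\max}(X)^z$; this is equally valid and equally elementary. You are also more complete than the paper on the routine cases: the paper dismisses them in one sentence, whereas you spell out the continuity argument for $\alpha\in(0,1)$ (including the orthogonal-support degeneracy, where the stated formula must be read as $+\infty$), and you establish existence of the limit for $\alpha>1$ by a monotonicity argument before identifying it, a point the paper never addresses explicitly.

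One step you should make airtight: you justify that $\eps\mapsto B_\eps=(\sigma+\eps I)^{(1-\alpha)/z}$ is operator-increasing as $\eps\searrow0$ by saying $x\mapsto x^{(1-\alpha)/z}$ is decreasing. Scalar monotonicity does not in general transfer to the operator order (indeed $x\mapsto x^{-s}$ is not operator anti-monotone for $s>1$, which can occur here when $\alpha-1>z$). The claim is nevertheless true because all the operators $\sigma+\eps I$ commute: writing $B_\eps=\sum_b(b+\eps)^{(1-\alpha)/z}Q_b$ in the spectral basis of $\sigma$, the coefficients are monotone in $\eps$ for each eigenvalue $b$, so the operator order follows termwise. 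With that remark inserted, the proof is complete.
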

\begin{proof}
The only slightly non-trivial part of the claim is when $\alpha>1$ and $\rho^0\nleq\sigma^0$. In this case, there exists a unit vector $\psi\perp\supp\sigma$ such that 
$\inner{\psi}{\rho^0\psi}>0$. Note that $(\sigma+\ep I)^{1-\alpha\over z}\ge \ep^{1-\alpha\over z}\pr{\psi}$, and thus 
\begin{align*}
\Tr\Bigl(\rho^{\alpha\over 2z}(\sigma+\ep I)^{1-\alpha\over z}\rho^{\alpha\over 2z}\Bigr)^z
\ge
\ep^{1-\alpha}\Tr\Bigl(\rho^{\alpha\over 2z}\pr{\psi}\rho^{\alpha\over 2z}\Bigr)^z
&=
\ep^{1-\alpha}\Tr\Bigl(\pr{\psi}\rho^{\alpha\over z}\pr{\psi}\Bigr)^z\\
&=
\ep^{1-\alpha}\inner{\psi}{\rho^{\alpha\over z}\psi}^z,
\end{align*}
that tends to $+\infty$ as $\ep\searrow 0$.
\end{proof}

We also introduce the notation
\begin{align*}
Q_{\alpha,z}(\rho\|\sigma):=\lim_{\ep\searrow 0}\Tr\Bigl(\rho^{\alpha\over 2z}(\sigma+\ep I)^{1-\alpha\over z}\rho^{\alpha\over 2z}\Bigr)^z
=
\begin{cases}
\Tr\Bigl(\rho^{\alpha\over 2z}\sigma^{1-\alpha\over z}\rho^{\alpha\over 2z}\Bigr)^z,& \alpha\in(0,1)\text{ or } \rho^0\le \sigma^0,\\
+\infty,&\text{otherwise},
\end{cases}
\end{align*}
so that 
\begin{align*}
D_{\alpha,z}(\rho\|\sigma)=\frac{1}{\alpha-1}\log Q_{\alpha,z}(\rho\|\sigma).
\end{align*}

The $\alpha$-$z$-R\'enyi relative entropies have the following monotonicity property:
For any $\rho,\sigma\in\B(\hil)_+$, and any CPTP map $\map:\,\B(\hil)\to\B(\kil)$, 
\begin{align}\label{az mon}
D_{\alpha,z}(\map(\rho)\|\map(\sigma))\le D_{\alpha,z}(\rho\|\sigma),
\end{align}
whenever
\begin{itemize}
\item[(a)]\ds $0<\alpha<1$,\ds $z\ge \max\{\alpha,1-\alpha\},$\ds or
\item[(b)]\ds $1<\alpha\le 2$,\ds $z=1$,\ds or
\item[(c)]\ds $1< \alpha=z$,\ds or
\item[(d)]\ds $1<\alpha\le 2$,\ds $z=\alpha/2$.
\end{itemize}
See \cite{Hi3} for the proof of (a), \cite{An} for (b), \cite{Beigi,FL13} for (c),
and \cite{CFL} for (d) (cf.~also \cite[Theorem 1]{AD}).

The {\it sandwiched R\'enyi divergence} introduced in \cite{Renyi_new,WWY13} is
$$
D_\alpha^*(\rho\|\sigma):={1\over\alpha-1}\log\Tr\Bigl(
\sigma^{1-\alpha\over2\alpha}\rho
\sigma^{1-\alpha\over2\alpha}\Bigr)^\alpha,
$$
which is the $z=\alpha$ case of the $\alpha$-$z$-R\'enyi relative entropy. From
\eqref{az mon} in cases (a) and (c) we have for any $\alpha\in[1/2,+\infty)\setminus\{1\}$
\begin{align}\label{sa mon}
D_\alpha^*(\map(\rho)\|\map(\sigma))\le D_\alpha^*(\rho\|\sigma).
\end{align}

As presented in \eqref{F-1.1} and \eqref{F-1.2} in the Introduction, the formulas of
$D_\alpha^*(\rho\|\sigma)$ and $D_{\alpha,z}(\rho\|\sigma)$ are often given with division
by $\Tr\rho$ inside the logarithm. However, the difference between with or without this
division is irrelevant to our discussions on the monotonicity inequality and the characterization
of its equality case. Thus, we here adopt, for the sake of simplicity, the definitions
without the division by $\Tr\rho$.

\renewcommand\theenumi{(E\arabic{enumi})}
In this section, we shall prove monotonicity \eqref{az mon} in some special cases
of $\rho,\sigma$ and $\Phi$, for some ranges of $\alpha,z$, including values not covered in previous works. 
Our main result is the characterization of equality in the monotonicity inequality \eqref{az mon}
in these cases.
For the latter, we will consider the following possible characterizations:
\begin{enumerate}
\item[(E0)]\ds $D_\alpha^*(\Phi(\rho)\|\Phi(\sigma))=D_\alpha^*(\rho\|\sigma)$,
\item\label{E1}\ds $D_{\alpha,z}(\map(\rho)\|\map(\sigma))= D_{\alpha,z}(\rho\|\sigma)$,
\item\label{E2}\ds $\map^*\bz\map(\rho)\jz=\rho,\ds\map^*\bz\map(\sigma)\jz=\sigma$,
\item\label{E3}\ds $\map_{\sigma}^*\bz\map(\rho)\jz=\rho,\ds\map_{\sigma}^*\bz\map(\sigma)\jz=\sigma$, \ds
(see \eqref{rverse-map} for the map $\map_{\sigma}^*$),
\item\label{E4}\ds $\map_{\rho}^*\bz\map(\rho)\jz=\rho,\ds\map_{\rho}^*\bz\map(\sigma)\jz=\sigma$,
\item\label{E5}\ds there exists a unitary $U$ such that  $\map(\rho)=U\rho U^*,\s\map(\sigma)=U\sigma U^*$.
\end{enumerate}
\renewcommand\theenumi{(\roman{enumi})}

\begin{thm}\label{thm:az mon rev}
Let $\rho,\sigma\in\B(\hil)_+$, and let $\map:\,\B(\hil)\to\B(\hil)$ be a bistochastic map.
The monotonicity inequality \eqref{az mon} holds if at least one of the following conditions is satisfied:
\begin{enumerate}
\item\label{az mon rev(i)} $\alpha\le z\le1$\ds and\ds $\sigma\in\fix{\map}$,
\item\label{az mon rev(ii)} $0<1-\alpha\le z\le1$\ds and\ds $\rho\in\fix{\map}$,
\item\label{az mon rev(iii)} $\alpha\ge z\ge\max\{1,\alpha/2\}$\ds and \ds $\sigma\in\fix{\map}$,
\item\label{az mon rev(iv)} $\alpha>1$, $z\ge\max\{1,\alpha-1\}$,\ds and\ds$\rho\in\fix{\map}$.
\end{enumerate}

If we also assume that $\map$ is $2$-positive, then we have the following characterizations of equality in the monotonicity inequality:
\begin{enumerate}
\renewcommand\theenumi{(\arabic{enumi})}
\item\label{az mon rev(1)}
If \ref{az mon rev(i)} holds with $\rho^0\le\sigma^0$, then we have
\ref{E1}\,$\iff$\,\ref{E2}\,$\iff$\,\ref{E3}\,$\iff$\,\ref{E5}.
\item\label{az mon rev(2)}
If \ref{az mon rev(i)} holds with $z\ne\alpha$ and $\sigma^0\le\rho^0$, then we have
\ref{E1}\,$\iff$\,\ref{E4}.
\item\label{az mon rev(3)}
If \ref{az mon rev(ii)} holds with $\sigma^0\le\rho^0$, then we have
\ref{E1}\,$\iff$\,\ref{E2}\,$\iff$\,\ref{E4}\,$\iff$\,\ref{E5}.
\item\label{az mon rev(4)}
If \ref{az mon rev(ii)} holds with $z\ne 1-\alpha$ and $\rho^0\le\sigma^0$, then we have
\ref{E1}\,$\iff$\,\ref{E3}.
\item\label{az mon rev(5)}
If \ref{az mon rev(iii)} holds with $\rho^0\le\sigma^0$, then we have
\ref{E1}\,$\iff$\,\ref{E2}\,$\iff$\,\ref{E3}\,$\iff$\,\ref{E5}.
\item\label{az mon rev(6)}
If \ref{az mon rev(iv)} holds with $z\ne\alpha-1$ and $\sigma^0=I$, then we have
\ref{E1}\,$\iff$\,\ref{E3}.
\end{enumerate}

Moreover, without the assumption that $\map$ is $2$-positive, we have
\ref{E1}\,$\iff$\,\ref{E2}\,$\iff$\,\ref{E3} in \ref{az mon rev(1)} and \ref{az mon rev(5)}, and
\ref{E1}\,$\iff$\,\ref{E2}\,$\iff$\,\ref{E4} in \ref{az mon rev(3)}.
\end{thm}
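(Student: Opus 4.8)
The plan is to separate the monotonicity inequality \eqref{az mon} from the analysis of its equality case, and in both to exploit that one argument is fixed by $\map$. For the monotonicity I would first reduce \eqref{az mon} to a trace inequality for $Q_{\alpha,z}$. When $\sigma\in\fix{\map}$ the sandwiching factors $\map(\sigma)^{(1-\alpha)/2z}=\sigma^{(1-\alpha)/2z}$ are untouched by $\map$, so only the free factor $\rho^{\alpha/z}$ is affected, and one compares $\Tr(\sigma^{(1-\alpha)/2z}\map(\rho)^{\alpha/z}\sigma^{(1-\alpha)/2z})^z$ with the corresponding expression for $\rho$; the case $\rho\in\fix{\map}$ is treated through the dual expression $Q_{\alpha,z}(\rho\|\sigma)=\Tr(\rho^{\alpha/2z}\sigma^{(1-\alpha)/z}\rho^{\alpha/2z})^z$. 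In regime \ref{az mon rev(i)} the exponent $\alpha/z$ lies in $(0,1]$ and in \ref{az mon rev(iii)} it lies in $[1,2]$ (using $z\ge\alpha/2$), so $t\mapsto t^{\alpha/z}$ is operator concave, resp.\ operator convex; in \ref{az mon rev(iv)} the free exponent on $\sigma$ is the negative power $t\mapsto t^{(1-\alpha)/z}$ with $0<(\alpha-1)/z\le1$, again operator convex. In each regime I would apply the operator Jensen inequality (Lemma \ref{Choi inequality}) to the free factor together with the operator monotonicity of $t\mapsto t^z$ or $t\mapsto t^{1/z}$ in the relevant range, accounting for the sign of $1/(\alpha-1)$; regime \ref{az mon rev(ii)} then follows from \ref{az mon rev(i)} via the symmetry $Q_{\alpha,z}(\rho\|\sigma)=Q_{1-\alpha,z}(\sigma\|\rho)$, which interchanges $\rho$ and $\sigma$ and sends $(0,1)\ni\alpha\mapsto1-\alpha$.

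The easy half of the equivalences is uniform. The implication \ref{E5}\,$\imp$\,\ref{E2},\ref{E3},\ref{E4} is immediate, since the unitary channel $X\mapsto UXU^{*}$ is inverted on $\{\rho,\sigma\}$ by $X\mapsto U^{*}XU$, and on this pair each of $\map^{*}$, $\map_{\sigma}^{*}$, $\map_{\rho}^{*}$ reduces to that inverse. Conversely each of \ref{E2}, \ref{E3}, \ref{E4} exhibits a trace-preserving positive reverse map and hence states that $\map$ is reversible on $\{\rho,\sigma\}$ in the sense of Definition \ref{def:rev}; since $\map$ is monotone for $D_{\alpha,z}$ in the given regime by the first part, applying \eqref{az mon} to $\map$ and then to the reverse map yields \ref{E1}. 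This disposes of \ref{E2}\,$\imp$\,\ref{E1}, \ref{E3}\,$\imp$\,\ref{E1} and \ref{E4}\,$\imp$\,\ref{E1} at once, and requires no positivity beyond that of the reverse map, so it also covers the final assertions made without $2$-positivity.

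The substance is the direction \ref{E1}\,$\imp$\,reversibility. Here I would trace equality back through the monotonicity proof: equality in \eqref{az mon} forces equality in the single operator Jensen step, which by the standard equality analysis places the relevant element ($\crn=\sigma^{-1/2}\rho\sigma^{-1/2}$, resp.\ its $\rho$-analogue) in the multiplicative domain of the unital map $\map_{\sigma}$, resp.\ $\map_{\rho}$. Under $2$-positivity this is condition \ref{max f rev cond8} of Theorem \ref{T-3.21}; however, as recorded there, membership in $\cM_{\map_{\sigma}}$ is in general only as strong as preservation of the \emph{maximal} $f$-divergence, and the additional bistochastic and fixed-point structure is what upgrades it to genuine reversibility. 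Concretely, with $\sigma\in\fix{\map}$, $\map$ bistochastic and $\rho^{0}\le\sigma^{0}$, I would transport the equality to the sandwiched value $z=\alpha$ (coinciding with (E0)) via the log-majorization comparisons and their equality cases, and to preservation of a standard $f$-divergence, where Theorem \ref{T-3.12} does yield the Petz reverse map, giving \ref{E3}; feeding this into the structure Theorem \ref{thm:decomposition} and using that $\map$ is bistochastic with a fixed point forces the blocks to act as a unitary conjugation, giving \ref{E5} and hence \ref{E2}. The support hypothesis selects the reverse map: $\rho^{0}\le\sigma^{0}$ produces the $\map_{\sigma}$-reversal \ref{E3} (cases \ref{az mon rev(1)},\ref{az mon rev(5)}), while $\sigma^{0}\le\rho^{0}$ produces the $\map_{\rho}$-reversal \ref{E4} (cases \ref{az mon rev(2)},\ref{az mon rev(3)}), the symmetric statement proved via the transpose representation.

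I expect the main obstacle to be precisely this upgrade from a multiplicative-domain condition to genuine reversibility when $z\ne\alpha$, where $Q_{\alpha,z}$ is not of sandwiched form: there the equality extracted from operator Jensen does not immediately match a standard $f$-divergence, and one must bridge the gap with the log-majorization/Araki--Lieb--Thirring comparisons between different $z$-values and their equality cases (cf.\ \cite{Hi1}) to reduce to a parameter where Theorem \ref{T-3.12} applies. This is exactly why the hypotheses $z\ne\alpha$, $z\ne1-\alpha$, $z\ne\alpha-1$ appear in \ref{az mon rev(2)}, \ref{az mon rev(4)}, \ref{az mon rev(6)} and why only the one-sided conclusions \ref{E1}\,$\iff$\,\ref{E3} or \ref{E1}\,$\iff$\,\ref{E4} survive there, the stronger chain through \ref{E5} being available only when the sandwiched value is reached; keeping track of which support inequality unlocks which of \ref{E2}--\ref{E5} is the delicate bookkeeping the proof must manage.
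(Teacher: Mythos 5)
Your monotonicity sketch follows the paper's outline (operator Jensen on the free factor, the symmetry $Q_{\alpha,z}(\rho\|\sigma)=Q_{1-\alpha,z}(\sigma\|\rho)$ for regime \ref{az mon rev(ii)}), but it omits the two steps that actually close the chain of inequalities. Jensen applied to $\map$ gives, say, $\map(\rho^{\alpha/z})\le\map(\rho)^{\alpha/z}$, and operator monotonicity of $t\mapsto t^z$ then compares $Q_{\alpha,z}(\map(\rho)\|\map(\sigma))$ with $\Tr\bigl(\sigma^{\frac{1-\alpha}{2z}}\map(\rho^{\alpha/z})\sigma^{\frac{1-\alpha}{2z}}\bigr)^z$; but to return from this hybrid expression to $Q_{\alpha,z}(\rho\|\sigma)$ you need, first, that $\sigma^{\frac{1-\alpha}{2z}}\in\fix{\map}\subseteq\M_{\map}$ (Lemma \ref{lemma:fix-mult}), so the sandwich can be pulled inside $\map$, i.e.\ $\sigma^{\frac{1-\alpha}{2z}}\map(\rho^{\alpha/z})\sigma^{\frac{1-\alpha}{2z}}=\map\bigl(\sigma^{\frac{1-\alpha}{2z}}\rho^{\alpha/z}\sigma^{\frac{1-\alpha}{2z}}\bigr)$, and second, the majorization-based trace inequality $\Tr g(\map(Y))\ge\Tr g(Y)$ for concave $g$ and bistochastic $\map$ (Lemma \ref{lemma:bistochastic strictly convex}). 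Neither appears in your outline, and without them the argument does not produce \eqref{az mon}.

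The equality analysis is where the proposal genuinely breaks down. First, your uniform claim that \ref{E2}, \ref{E3}, \ref{E4} each imply \ref{E1} by applying \eqref{az mon} twice is invalid for \ref{E3} and \ref{E4}: in the parameter regions covered here $D_{\alpha,z}$ is \emph{not} known to be monotone under general trace-preserving positive maps (that is the raison d'\^etre of the theorem), and the Petz-type reverse maps $\map_\sigma^*$, $\map_\rho^*$ need not be bistochastic, so the monotonicity established in the first part does not apply to them; the paper instead derives \ref{E3}\,$\imp$\,\ref{E1} and \ref{E4}\,$\imp$\,\ref{E1} from the explicit block structure \eqref{dec3}--\eqref{dec4} of Theorem \ref{thm:decomposition}. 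Second, equality in the Jensen step is a statement about $\map$ itself, not about $\map_\sigma$: combined with trace preservation it gives $\Tr\map(\rho)^{\alpha/z}=\Tr\map(\rho^{\alpha/z})=\Tr\rho^{\alpha/z}$, and Lemma \ref{lemma:convexity equality} (strict concavity of $t^{\alpha/z}$ plus bistochasticity) then yields $\map^*\circ\map(\rho)=\rho$, i.e.\ \ref{E2} directly --- no detour through $\crn\in\cM_{\map_\sigma}$ or Theorem \ref{T-3.21} occurs or is needed, and indeed that condition would be too weak, as you yourself note. Third, the proposed ``transport to the sandwiched value $z=\alpha$ via log-majorization'' cannot work: those comparisons relate $Q_{\alpha,z}$ at \emph{different} $z$ for a fixed pair of operators, their equality cases force commutativity, and equality of $D_{\alpha,z}$ under $\map$ at one value of $z$ transfers no information to another value. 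Your instinct that cases \ref{az mon rev(2)}, \ref{az mon rev(4)}, \ref{az mon rev(6)} should end at a preserved standard $f$-divergence and Theorem \ref{T-3.12} is correct, but the actual bridge is elementary: one multiplies the operator identity $\sigma^0\map(\rho^{\alpha/z})\sigma^0=\sigma^0\map(\rho)^{\alpha/z}\sigma^0$ (extracted from the Jensen equality) by a suitable power of $\map(\sigma)$, takes the trace, and reads off $S_f(\map(\sigma)\|\map(\rho))=S_f(\sigma\|\rho)$ for $f(x)=-x^{1-\alpha/z}$ with $1-\alpha/z\in(0,1)$, which is exactly where the restrictions $z\ne\alpha$, $z\ne1-\alpha$, $z\ne\alpha-1$ enter (they keep the exponent away from the degenerate value, not --- as you suggest --- away from the sandwiched case).
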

\medskip

Before giving the proof of Theorem \ref{thm:az mon rev}, we give some remarks and a corollary.

\begin{remark}
For $\alpha>1$, unless $\rho^0\le\sigma^0$, we have $D_{\alpha,z}(\rho\|\sigma)=+\infty$,
so that the monotonicity inequality \eqref{az mon} holds trivially, while the preservation
of $D_{\alpha,z}(\rho\|\sigma)$ has no implication on reversibility in general.
Therefore, in cases \ref{az mon rev(iii)} and \ref{az mon rev(iv)}, reversibility cannot be
obtained in general, if instead of the conditions in \ref{az mon rev(5)} and \ref{az mon rev(6)} above, one assumes $\sigma^0\le\rho^0$ as in \ref{az mon rev(2)} or
\ref{az mon rev(3)}.
\end{remark}

\begin{remark}
Note that in the cases \ref{az mon rev(2)}, \ref{az mon rev(4)}, and \ref{az mon rev(6)}
in Theorem \ref{thm:az mon rev}, we do not get \ref{E5} in general.
For instance, with the notations of Theorem \ref{thm:decomposition} where
$\hil_1=\hil_2=\bigoplus_{k=1}^r\hil_{k,L}\otimes\hil_{k,R}$, let $\rho,\sigma$, and
$\map$ be given as
\begin{align*}
\rho=\bigoplus_k \rho_k\otimes\omega_k,\ds\ds\ds
\sigma=\bigoplus_k \sigma_k\otimes\omega_k,\ds\ds\ds
\map(X_k\otimes Y_k)=U_kX_kU_k^*\otimes\eta_k(Y_k),
\end{align*}
 such that
the conditions of \ref{az mon rev(2)} are satisfied. Then it is clear that 
$\map^*_{\rho}(\map(\sigma))=\sigma$, i.e., \ref{E4} holds.
Now, assume that $\sigma_i=0$ for some $i$; then the condition 
$\sigma\in\fix{\map}$ imposes no restriction on $\eta_i$.
Hence, for this $i$, we can take $\omega_i$ and $\eta_i$ so that 
the spectrum of $\eta_i(\omega_i)$ 
is different from the spectrum of $\omega_i$, while for all $k\ne i$, 
$\eta_k(\cdot)=V_k\cdot V_k^*$ with some unitaries $V_k$.
 Then it is clear that there exists no unitary $U$ such that 
$\map(\rho)=U\rho U^*$, i.e., \ref{E5} does not hold.
\end{remark}

From the $z=\alpha$ case of the above theorem we have

\begin{cor}\label{cor:sa mon rev}
Let $\rho,\sigma\in\B(\hil)_+$, and let $\map:\,\B(\hil)\to\B(\hil)$ be a bistochastic map. The monotonicity inequality \eqref{sa mon} holds if one of the following conditions is satisfied:
\begin{enumerate}
\item\label{special monotonicity}
 $\sigma\in\fix{\Phi}$ (for arbitrary $\alpha\in(0,+\infty)\setminus\{1\}$),
\item\label{special monotonicity2}
 $1/2\le\alpha<1$ and $\rho\in\fix{\Phi}$.
\end{enumerate}

If $\Phi$ is $2$-positive, then we have the following characterizations of equality in the monotonicity inequality:
\begin{itemize}
\item[(1)] If (i) holds with $\rho^0\le\sigma^0$, then we have
(E0)\,$\iff$\,\ref{E2}\,$\iff$\,\ref{E3}\,$\iff$\,\ref{E5}.
\item[(2)] If (ii) holds with $\sigma^0\le\rho^0$, then we have
(E0)\,$\iff$\,\ref{E2}\,$\iff$\,\ref{E4}\,$\iff$\,\ref{E5}.
\item[(3)] If (ii) holds with $\alpha\ne1/2$ and $\rho^0\le\sigma^0$, then we have
(E0)\,$\iff$\,\ref{E3}\,$\iff$\,\ref{E5}.
\end{itemize}

Moreover, without the assumption that $\Phi$ is $2$-positive, we have
(E0)\,$\iff$\,\ref{E2}\,$\iff$\,\ref{E3} in (1) and (E0) $\iff$\,\ref{E2}\,$\iff$\,\ref{E4} in (2).
\end{cor}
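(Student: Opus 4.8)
The plan is to deduce Corollary \ref{cor:sa mon rev} from Theorem \ref{thm:az mon rev} by specializing to $z=\alpha$, since the sandwiched divergence is $D_\alpha^*=D_{\alpha,\alpha}$ and condition (E0) is then literally condition \ref{E1} of the theorem. First I would record the elementary observations that make the specialization legitimate: the identity $D_\alpha^*=D_{\alpha,\alpha}$, so that \eqref{sa mon} is the $z=\alpha$ instance of \eqref{az mon}; and the fact that the prefactor $1/(\alpha-1)$ is a fixed nonzero constant, so that equality in the $D_\alpha^*$ monotonicity inequality is the same as equality in the $Q_{\alpha,\alpha}$ (hence $D_{\alpha,\alpha}$) inequality. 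Throughout I would carry along the hypotheses that $\map$ is bistochastic and, for the equality part, $2$-positive.

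For the monotonicity statement I would check the parameter regions. With $z=\alpha$ the constraint $\alpha\le z\le1$ of Theorem \ref{thm:az mon rev}\,\ref{az mon rev(i)} reads $\alpha<1$, while $\alpha\ge z\ge\max\{1,\alpha/2\}$ of \ref{az mon rev(iii)} reads $\alpha>1$; together, with $\sigma\in\fix{\map}$, these cover all $\alpha\in(0,+\infty)\setminus\{1\}$, which is exactly Corollary \ref{cor:sa mon rev}\,(i). Similarly, $0<1-\alpha\le z\le1$ of \ref{az mon rev(ii)} becomes $1/2\le\alpha<1$, giving Corollary \ref{cor:sa mon rev}\,(ii) with $\rho\in\fix{\map}$. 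Thus the monotonicity part is a direct read-off.

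For the equality characterizations I would match the support conditions to the theorem's cases. Corollary case (1) ($\sigma\in\fix{\map}$, $\rho^0\le\sigma^0$) splits, according to whether $\alpha<1$ or $\alpha>1$, into Theorem cases \ref{az mon rev(1)} (from \ref{az mon rev(i)}) and \ref{az mon rev(5)} (from \ref{az mon rev(iii)}), both of which give \ref{E1}\,$\iff$\,\ref{E2}\,$\iff$\,\ref{E3}\,$\iff$\,\ref{E5}; since \ref{E1} equals (E0) here, this is exactly the claim. Corollary case (2) ($\rho\in\fix{\map}$, $\sigma^0\le\rho^0$, $1/2\le\alpha<1$) is Theorem case \ref{az mon rev(3)}, yielding (E0)\,$\iff$\,\ref{E2}\,$\iff$\,\ref{E4}\,$\iff$\,\ref{E5} verbatim. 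The easy implications feeding these chains---that reversibility via $\map^*$, $\map_\sigma^*$ or $\map_\rho^*$ implies equality by two applications of monotonicity, and that \ref{E5} implies reversibility (via Theorem \ref{T-3.12}, using $\rho^0\le\sigma^0$ and $2$-positivity)---I would simply inherit from the proof of Theorem \ref{thm:az mon rev}.

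The delicate point, which I expect to be the main obstacle, is Corollary case (3): the conditions $\rho\in\fix{\map}$, $\alpha\ne1/2$, $\rho^0\le\sigma^0$ place us in Theorem case \ref{az mon rev(4)} (with $z\ne1-\alpha$ becoming $\alpha\ne1/2$), which directly yields only (E0)\,$\iff$\,\ref{E3}. Corollary \ref{cor:sa mon rev}\,(3) additionally asserts equivalence with \ref{E5}, and this genuinely goes beyond Theorem case \ref{az mon rev(4)}---indeed the Remark following the theorem exhibits a case where the implication to \ref{E5} fails. To obtain \ref{E5} here I would start from \ref{E3} (equivalently reversibility, by Theorem \ref{T-3.12}), invoke the structural decomposition of Theorem \ref{thm:decomposition} for $(\map,\rho,\sigma)$, and then use the two extra features of the sandwiched setting---$z=\alpha$ together with $\rho\in\fix{\map}$ and $\rho^0\le\sigma^0$---to force each block map $\eta_k$ to act as a unitary conjugation, so that the $U_k$ assemble into a single unitary $U$ implementing \ref{E5}. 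The crux is controlling the blocks on which $\rho$ has no support, where $\rho\in\fix{\map}$ by itself imposes no constraint on $\eta_k$; ruling these out or showing they do not obstruct a global unitary under $\rho^0\le\sigma^0$ is precisely where the argument is subtle, and I would scrutinize carefully whether an additional nondegeneracy hypothesis is implicitly required.
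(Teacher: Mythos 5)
Your reduction is exactly the paper's own proof: the corollary is stated there as the $z=\alpha$ specialization of Theorem \ref{thm:az mon rev}, with no additional argument, and your bookkeeping of the cases is the correct one (corollary \ref{special monotonicity} is covered by \ref{az mon rev(i)} for $\alpha<1$ together with \ref{az mon rev(iii)} for $\alpha>1$; corollary \ref{special monotonicity2} is \ref{az mon rev(ii)}; case (1) follows from \ref{az mon rev(1)} and \ref{az mon rev(5)}, case (2) from \ref{az mon rev(3)}, and the final ``moreover'' clause from the corresponding clause of the theorem). For all of these parts your proposal is complete and agrees with the paper.

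Concerning case (3), your suspicion is not only justified; it can be sharpened to a refutation. The clause ``$\iff$\,\ref{E5}'' there cannot be proved by any argument, because it is false, and the repair you sketch breaks down exactly where you predicted: on the blocks of the decomposition of Theorem \ref{thm:decomposition} where $\rho$ has no support, the condition $\rho\in\fix{\map}$ puts no constraint on the block maps $\eta_k$. Concretely, let $\hil=\bC^2\oplus\bC^2$ and let $\map$ be the block-pinching followed by depolarization of the second block,
\begin{align*}
\map(X):=X_{11}\oplus\tfrac{1}{2}(\Tr X_{22})I_2,
\end{align*}
which is CPTP and unital, hence bistochastic and $2$-positive. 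Take $\rho:=\rho_1\oplus 0$ with any nonzero $\rho_1\in\cB(\bC^2)_+$, and $\sigma:=\sigma_1\oplus\omega$ with $\sigma_1:=\diag(5,7)$, $\omega:=\diag(1,3)$. Then $\rho\in\fix{\map}$ and $\rho^0\le\sigma^0=I$, and for every $\alpha\in(1/2,1)$ both $Q_{\alpha,\alpha}(\rho\|\sigma)$ and $Q_{\alpha,\alpha}(\map(\rho)\|\map(\sigma))$ equal $\Tr\bigl(\sigma_1^{\frac{1-\alpha}{2\alpha}}\rho_1\sigma_1^{\frac{1-\alpha}{2\alpha}}\bigr)^{\alpha}$, because $\rho=\map(\rho)$ lives in the first block, where $\sigma$ and $\map(\sigma)$ coincide. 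Hence (E0) holds, and so does \ref{E3} (by the equivalence you did establish via Theorem \ref{thm:az mon rev}\,\ref{az mon rev(4)}). But $\spec(\map(\sigma))=\{5,7,2,2\}\ne\{5,7,1,3\}=\spec(\sigma)$ as multisets, so no unitary $U$ can satisfy $\map(\sigma)=U\sigma U^*$, and \ref{E5} fails. This example is nothing but the one in the remark following Theorem \ref{thm:az mon rev} with the roles of $\rho$ and $\sigma$ interchanged --- precisely the swap by which the theorem's proof deduces case \ref{az mon rev(4)} from case \ref{az mon rev(2)} --- and that remark states explicitly that cases \ref{az mon rev(2)}, \ref{az mon rev(4)}, \ref{az mon rev(6)} do not yield \ref{E5} in general. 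So the ``$\iff$\,\ref{E5}'' in case (3) of the corollary is a slip in the statement, inconsistent with the paper's own remark: the correct content of case (3) is (E0)\,$\iff$\,\ref{E3}, which is exactly what your derivation gives (the implication \ref{E5}\,$\imp$\,\ref{E3} of course still holds, via Theorem \ref{T-3.12}, but not its converse).
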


\renewcommand\theenumi{(\alph{enumi})}
\begin{remark}
\begin{enumerate}
\item
Note that the monotonicity in \ref{special monotonicity2}, and in \ref{special monotonicity} for $\alpha\ge 1/2$ above are special cases of the general monotonicity \eqref{sa mon} for $\alpha\ge 1/2$, although they are derived in a different way than the known proofs of \eqref{sa mon}. On the other hand, the monotonicity 
\eqref{sa mon} does not hold in general for $\alpha\in(0,1/2)$ (see \cite[Section IV]{BFT_variational}), 
and hence for this range of $\alpha$, the monotonicity in \ref{special monotonicity} does not follow from known monotonicity results.
\item
A special case of the monotonicity in \ref{special monotonicity} of Corollary \ref{cor:sa mon rev} above is the monotonicity under the pinching by the spectral projections of $\sigma$, given in \cite[Proposition 14]{Renyi_new}.

\item
Concurrently to our paper, Jen\v{c}ov\'a \cite{Jencova_rev16} proved the characterization 
``(E0) for some $\alpha>1$ $\iff$ \ref{E3}'' when $\rho^0\le\sigma^0$, from which the characterizations in 
Corollary \ref{cor:sa mon rev} follow easily when $\alpha>1$.
\end{enumerate}
\end{remark}
\renewcommand\theenumi{(\roman{enumi})}

The case $\alpha=2$ is special, as reversibility can be obtained easily from the preservation of 
$D_2^*$, as has been shown very recently in \cite[Lemma 2]{Jencova_rev16}. 
Below we give a different proof.

\begin{prop}
Let $\rho,\sigma\in\BH_+$  with $\rho^0\le\sigma^0$ and 
$\map:\,\BH\to\BK$ be a $2$-positive trace-preserving (not necessarily bistochastic)
map. Then 
\begin{align*}
D_2^*(\Phi(\rho)\|\Phi(\sigma))=D_2^*(\rho\|\sigma)\ds\iff\ds
\map_{\sigma}^*(\map(\rho))=\rho.
\end{align*}
\end{prop}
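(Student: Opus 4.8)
The plan is to recast the quantity $Q_2^*(\rho\|\sigma):=\Tr\rho\sigma^{-1/2}\rho\sigma^{-1/2}$ as a squared norm in a suitable Hilbert space, to recognize the canonical reversal map $\map_\sigma^*$ as the corresponding adjoint, and then to extract reversibility from the equality case of a Cauchy--Schwarz inequality. First I would reduce to $Q_2^*$: with $\alpha=2$ one has $\tfrac{1-\alpha}{2\alpha}=-\tfrac14$, so by the definition in Section~\ref{sec:Renyi rev}, $D_2^*(\rho\|\sigma)=\log\Tr(\sigma^{-1/4}\rho\sigma^{-1/4})^2=\log Q_2^*(\rho\|\sigma)$, which is finite under the standing assumption $\rho^0\le\sigma^0$ and strictly increasing in $Q_2^*$; hence preservation of $D_2^*$ is equivalent to preservation of $Q_2^*$. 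Working on the support spaces, I equip $\B(\sigma^0\hil)$ with the positive-definite inner product $\inner{X}{Y}_\sigma:=\Tr X^*\sigma^{-1/2}Y\sigma^{-1/2}$, so that $Q_2^*(\rho\|\sigma)=\inner{\rho}{\rho}_\sigma=\|\rho\|_\sigma^2$, and I equip $\B(\map(\sigma)^0\kil)$ with the analogous inner product $\inner{\cdot}{\cdot}_{\map(\sigma)}$, giving $Q_2^*(\map(\rho)\|\map(\sigma))=\|\map(\rho)\|_{\map(\sigma)}^2$ (using $\map(\rho)^0\le\map(\sigma)^0$, which follows from positivity of $\map$ and $\rho^0\le\sigma^0$).

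Next I would identify the adjoint of $\map$ for these weighted inner products. Computing $\inner{\map(X)}{Y}_{\map(\sigma)}=\Tr\map(X)^*\map(\sigma)^{-1/2}Y\map(\sigma)^{-1/2}=\Tr X^*\map^*\bz\map(\sigma)^{-1/2}Y\map(\sigma)^{-1/2}\jz$ and comparing with formula \eqref{rverse-map} shows that this adjoint is exactly $\map_\sigma^*$, i.e. $\inner{\map(X)}{Y}_{\map(\sigma)}=\inner{X}{\map_\sigma^*(Y)}_\sigma$. I record three facts about $M:=\map_\sigma^*\circ\map$, viewed as an operator on the Hilbert space $(\B(\sigma^0\hil),\inner{\cdot}{\cdot}_\sigma)$: being the composition of $\map$ with its $\inner{\cdot}{\cdot}_\sigma$-adjoint, $M$ is positive and self-adjoint; the map $\map_\sigma^*$ is positive and, after restriction to $\map(\sigma)^0\B(\kil)\map(\sigma)^0$, trace-preserving (as noted in the remark following Definition~\ref{def:rev}), with $\map_\sigma^*(\map(\sigma))=\sigma$; and consequently $\rho':=M\rho=\map_\sigma^*(\map(\rho))$ is a positive operator with $\rho'^0\le\sigma^0$.

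Finally I would run the equality argument. By the adjoint relation, $\inner{M\rho}{\rho}_\sigma=\|\map(\rho)\|_{\map(\sigma)}^2=Q_2^*(\map(\rho)\|\map(\sigma))$. Assuming $D_2^*$, and hence $Q_2^*$, is preserved, the right-hand side equals $Q_2^*(\rho\|\sigma)=\|\rho\|_\sigma^2$, so $\inner{\rho'}{\rho}_\sigma=\|\rho\|_\sigma^2$. On the other hand, the known monotonicity of $D_2^*=\log Q_2^*$ under the positive trace-preserving map $\map_\sigma^*$ (valid for $\alpha=2\ge1$ by \cite{Beigi,M-HR}), together with $\map_\sigma^*(\map(\sigma))=\sigma$, gives $\|\rho'\|_\sigma^2=Q_2^*(\rho'\|\sigma)\le Q_2^*(\map(\rho)\|\map(\sigma))=\|\rho\|_\sigma^2$. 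Combining these, $\|\rho\|_\sigma^2=\inner{\rho'}{\rho}_\sigma\le\|\rho'\|_\sigma\,\|\rho\|_\sigma\le\|\rho\|_\sigma^2$, so Cauchy--Schwarz holds with equality; hence $\rho'$ is a nonnegative multiple of $\rho$ with $\|\rho'\|_\sigma=\|\rho\|_\sigma$, and $\inner{\rho'}{\rho}_\sigma=\|\rho\|_\sigma^2$ then forces $\rho'=\rho$, i.e. $\map_\sigma^*(\map(\rho))=\rho$. The converse is immediate, since $\map_\sigma^*(\map(\rho))=\rho$ gives $\|\map(\rho)\|_{\map(\sigma)}^2=\inner{M\rho}{\rho}_\sigma=\|\rho\|_\sigma^2$. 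The step I expect to be the main obstacle is the careful bookkeeping on the non-invertible supports: checking that the weighted inner products are genuinely positive definite on $\B(\sigma^0\hil)$ and $\B(\map(\sigma)^0\kil)$, that the object whose adjoint is $\map_\sigma^*$ is $\map$ compressed by $\map(\sigma)^0$, and that the monotonicity of $Q_2^*$ is legitimately invoked for $\map_\sigma^*$ on the subspace where it is trace-preserving; once these are settled, the Hilbert-space Cauchy--Schwarz core is routine.
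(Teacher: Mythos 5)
Your proof is correct, but it takes a genuinely different route from the paper's. The paper's proof observes that $\Tr\bigl(\sigma^{-1/4}\rho\sigma^{-1/4}\bigr)^2-1=\bigl\<\rho-\sigma,\Omega_\sigma^\kappa(\rho-\sigma)\bigr\>_\HS$ with $\kappa(x)=x^{-1/2}$ (after normalizing $\rho,\sigma$ to density operators and restricting $\map$ to $\sigma^0\BH\sigma^0$ so that $\sigma$ becomes invertible), and then simply invokes the equivalence \ref{rev cond10}\,$\iff$\,\ref{rev cond8} of Theorem~\ref{T-3.12}; all the substance is thus delegated to the monotone-metric characterization of reversibility, ultimately \cite[Proposition 4]{Je}. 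You instead argue directly: the quantity $Q_2^*(\rho\|\sigma)=\Tr\rho\sigma^{-1/2}\rho\sigma^{-1/2}$ is the squared norm of $\rho$ for the inner product $\inner{X}{Y}_\sigma=\Tr X^*\sigma^{-1/2}Y\sigma^{-1/2}$ on $\B(\sigma^0\hil)$, the Petz map $\map_\sigma^*$ is exactly the adjoint of $\map$ between the two weighted Hilbert spaces, and preservation of the norm together with monotonicity of $Q_2^*$ under the positive trace-preserving map $\map_\sigma^*$ forces equality in Cauchy--Schwarz, hence $\map_\sigma^*(\map(\rho))=\rho$; the converse is immediate from the adjoint relation. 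This is essentially the argument of \cite{Jencova_rev16}, from which the paper explicitly states that it is giving a different proof. Comparing the two: the paper's route is shorter given that Theorem~\ref{T-3.12} is already in place, and it exhibits $D_2^*$-preservation as an instance of the metric condition \ref{rev cond10}; your route is self-contained modulo the monotonicity of $D_2^*$ under positive trace-preserving maps \cite{Beigi,M-HR}, handles non-invertible $\sigma$ without the normalization detour, and in fact requires only positivity of $\map$ rather than $2$-positivity, since $\map_\sigma^*$ is positive and trace-preserving on $\B(\map(\sigma)^0\kil)$ whenever $\map$ is positive and trace-preserving (note you apply monotonicity only to $\map_\sigma^*$, never to $\map$ itself). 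The bookkeeping items you flag are all routine: positive definiteness of $\inner{\cdot}{\cdot}_\sigma$ on $\B(\sigma^0\hil)$ is clear, the inclusion $\map(\B(\sigma^0\hil))\subseteq\B(\map(\sigma)^0\kil)$ follows from $\rho^0\le\sigma^0\imp\rho\le c\sigma$ in finite dimensions, and the trace preservation of $\map_\sigma^*$ on $\B(\map(\sigma)^0\kil)$ together with $\map_\sigma^*(\map(\sigma))=\sigma$ is recorded in the paper in the remark following Definition~\ref{def:rev}.
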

\begin{proof}
 When $\rho,\sigma\in\BH_+$ are density operators with $\sigma>0$, since
$$
\bigl\<\rho-\sigma,L_\sigma^{-1/2}R_\sigma^{-1/2}(\rho-\sigma)\bigr\>_\HS
=\Tr(\rho-\sigma)\sigma^{-1/2}(\rho-\sigma)\sigma^{-1/2}
=\Tr\bigl(\sigma^{-1/4}\rho\sigma^{-1/4}\bigr)^2-1,
$$
one finds that $D_2^*(\Phi(\rho)\|\Phi(\sigma))=D_2^*(\rho\|\sigma)$ if and only if
$$
\bigl\<\Phi(\rho-\sigma),\Omega_{\Phi(\sigma)}^\kappa(\Phi(\rho-\sigma))\bigr\>_\HS
=\bigl\<\rho-\sigma,\Omega_\sigma^\kappa(\rho-\sigma)\bigr\>_\HS
$$
with $\kappa(x):=x^{-1/2}$. By virtue of \ref{rev cond10} of Theorem \ref{T-3.12}, this
implies that $\Phi$ is reversible on $\{\rho,\sigma\}$ if and only if
$D_2^*(\Phi(\rho)\|\Phi(\sigma))=D_2^*(\rho\|\sigma)$. This result can immediately be
extended to general $\rho,\sigma\in\BH_+$ with $\rho^0\le\sigma^0$ by normalizing
$\rho,\sigma$ and restricting $\Phi$ to $\sigma^0\BH\sigma^0=\cB(\sigma^0\cH)$.
%
\end{proof}
\medskip

Before we give the proof of Theorem \ref{thm:az mon rev}, we need some preparation, given below.
For any self-adjoint $X\in\B(\hil)$, we denote by $\lambda\down(X):=(\lambda\down_1(X),\ldots,\lambda\down_d(X))$ the vector of the decreasingly ordered eigenvalues of $X$, where $d:=\dim\hil$.
The following Lemmas \ref{lemma:maj1} and \ref{lemma:maj2} are standard; we include their proofs for readers' convenience.

\begin{lemma}\label{lemma:maj1}
For any $X\in\B(\hil)_{\sa}$, and any $k\in\{1,\ldots,d\}$,
\begin{align}\label{majorization1}
\lambda\down_1(X)+\ldots+\lambda\down_k(X)=\max\{\Tr XA:\,0\le A\le I,\,\Tr A\le k\}.
\end{align}
\end{lemma}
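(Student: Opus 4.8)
The statement is the classical Ky Fan variational principle: the sum of the $k$ largest eigenvalues of a self-adjoint operator equals the maximum of $\Tr XA$ over all $A$ with $0\le A\le I$ and $\Tr A\le k$.

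The plan is to prove the two inequalities separately. For the lower bound ($\ge$), I would exhibit a specific feasible $A$ that attains the left-hand side. Writing the spectral decomposition $X=\sum_{i=1}^d\lambda_i^{\downarrow}(X)\pr{u_i}$ in an eigenbasis ordered so that the eigenvalues decrease, I take $A:=\sum_{i=1}^k\pr{u_i}$, the spectral projection onto the top $k$ eigenvectors. This $A$ clearly satisfies $0\le A\le I$ and $\Tr A=k\le k$, and a direct computation gives $\Tr XA=\sum_{i=1}^k\lambda_i^{\downarrow}(X)$. Hence the maximum is at least the left-hand side.

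For the upper bound ($\le$), I would take any feasible $A$ with $0\le A\le I$ and $\Tr A\le k$, and show $\Tr XA\le\sum_{i=1}^k\lambda_i^{\downarrow}(X)$. The cleanest route is via the eigenvalues of $A$: write $A=\sum_{i=1}^d a_i\pr{v_i}$ with $a_i\in[0,1]$ and $\sum_i a_i\le k$. Then $\Tr XA=\sum_i a_i\inner{v_i}{Xv_i}$, and I would bound each diagonal term using the fact that $\sum_{i\in S}\inner{v_i}{Xv_i}\le\sum_{i=1}^{|S|}\lambda_i^{\downarrow}(X)$ for any orthonormal family $\{v_i\}_{i\in S}$ (a consequence of the Courant-Fischer/Ky Fan characterization). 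More elementarily, since $0\le a_i\le1$ and $\sum a_i\le k$, the vector $(a_i)$ is majorized by the vector with $k$ ones followed by zeros among the feasible coefficient vectors; because $t\mapsto\sum_i a_i\,\mu_i$ with decreasingly sorted $\mu_i:=\inner{v_i}{Xv_i}$ is maximized at the extreme point putting weight $1$ on the $k$ largest diagonal entries, and these are in turn dominated by $\lambda_1^{\downarrow},\dots,\lambda_k^{\downarrow}$, the bound follows.

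I expect the main obstacle to be the upper bound, specifically handling the continuous weights $a_i\in[0,1]$ rather than a projection. The key observation resolving this is that the linear functional $A\mapsto\Tr XA$ is maximized over the compact convex feasible set $\{A:0\le A\le I,\ \Tr A\le k\}$ at an extreme point, and the extreme points of this set are exactly the rank-$k$ orthogonal projections; on such a projection the bound is immediate from the variational characterization of eigenvalue sums. Alternatively, one avoids extreme-point analysis by the majorization argument above, reducing everything to the scalar inequality $\sum_i a_i\mu_i^{\downarrow}\le\sum_{i=1}^k\mu_i^{\downarrow}$ under the constraints $0\le a_i\le1$, $\sum_i a_i\le k$, which is elementary. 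Either way the calculation is routine once the correct reduction is identified.
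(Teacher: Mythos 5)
Your lower bound (evaluating at the spectral projection onto the top $k$ eigenvectors) is exactly the paper's attainment step, but your upper bound is organized around the opposite diagonalization, and this is where the two proofs genuinely differ. The paper expands $\Tr XA$ in the eigenbasis of $X$: writing $X=\sum_{i=1}^d\lambda_i^{\downarrow}(X)\pr{e_i}$ gives $\Tr XA=\sum_{i=1}^d\lambda_i^{\downarrow}(X)\inner{e_i}{Ae_i}$, and the only facts needed about $A$ are $0\le\inner{e_i}{Ae_i}\le1$ and $\sum_{i=1}^d\inner{e_i}{Ae_i}=\Tr A\le k$; an elementary scalar inequality on the coefficients then finishes the proof, with no spectral information about $A$ at all. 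You instead expand in the eigenbasis of $A$, which leaves you with the diagonal entries $\inner{v_i}{Xv_i}$, and to dominate their partial sums you must invoke $\sum_{i\in S}\inner{v_i}{Xv_i}\le\sum_{i=1}^{|S|}\lambda_i^{\downarrow}(X)$ --- but this is precisely the projection case $A=\sum_{i\in S}\pr{v_i}$ of the inequality being proved. So your argument reduces the general constraint set to projections and then cites Ky Fan/Cauchy interlacing for the projection case, whereas the paper's choice of basis makes the proof self-contained in three lines; that self-containedness is what the paper's decomposition buys. The same applies to your extreme-point variant, where you should also note that the extreme points of $\{A:\,0\le A\le I,\ \Tr A\le k\}$ are the projections of rank \emph{at most} $k$ (including $A=0$), not exactly $k$, because the trace constraint is an inequality.

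One caveat, which your proof shares with the paper's rather than adds to it: the scalar inequality both arguments rest on, namely $\sum_i a_i\mu_i^{\downarrow}\le\sum_{i=1}^k\mu_i^{\downarrow}$ for $0\le a_i\le 1$ and $\sum_i a_i\le k$, is false in general when $\mu_k^{\downarrow}<0$. Correspondingly, \eqref{majorization1} itself fails for $X=-I$ and $k=d$: the left-hand side is $-d$, while choosing $A=0$ shows the right-hand side is $0$. Both proofs become correct verbatim once the constraint is tightened to $\Tr A=k$, since then $\sum_{i\le k}\lambda_i^{\downarrow}(t_i-1)+\sum_{i>k}\lambda_i^{\downarrow}t_i\le\lambda_k^{\downarrow}\bigl(\sum_i t_i-k\bigr)=0$ for coefficients $t_i\in[0,1]$, irrespective of signs. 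This costs nothing downstream: in the proof of Lemma \ref{lemma:maj2} the operator $\Phi^*(A)$ inherits the equality constraint, because $\Phi^*$ is again bistochastic.
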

\begin{proof}
We have $X=\sum_{i=1}^d\lambda\down_i(X)\pr{e_i}$ for some orthonormal basis $\{e_i\}_{i=1}^d$, and hence
for any $0\le A\le I$ such that $\Tr A\le k$, 
we have $\Tr XA=\sum_{i=1}^d\lambda\down_i(X)\inner{e_i}{Ae_i}\le\sum_{i=1}^k\lambda\down_i(X)$,
since $\inner{e_i}{Ae_i}\le 1$ and $\sum_{i=1}^d\inner{e_i}{Ae_i}\le k$.
The equality in \eqref{majorization1} is attained by $A:=\sum_{i=1}^k\pr{e_i}$.
\end{proof}

\begin{lemma}\label{lemma:maj2}
Let $X\in\B(\hil)_{\sa}$ and $\map:\,\B(\hil)\to\B(\hil)$ be a bistochastic map.
Then $\lambda\down(\map(X))$ is majorized by $\lambda\down(X)$, in notation 
$\lambda\down(\map(X))\prec\lambda\down(X)$, i.e., 
for all $k=1,\ldots,d$,
\begin{align*}
\lambda\down_1(\map(X))+\ldots+\lambda\down_k(\map(X))
\le
\lambda\down_1(X)+\ldots+\lambda\down_k(X),
\end{align*}
with equality for $k=d$. Hence, there exist permutations $\pi_k\in S_d$ and probability weights 
$p_k>0$, $k=1,\ldots,r$, such that all the vectors
$\bigl(\lambda_{\pi_k(i)}^{\downarrow}(X)\bigr)_{i=1}^d$ for $k=1,\dots,r$ are different, and 
\begin{align}\label{BvN}
\lambda_i^{\downarrow}(\map(X))=\sum_{k=1}^r p_k\lambda_{\pi_k(i)}^{\downarrow}(X),\ds\ds\ds i=1,\ldots,d.
\end{align}
\end{lemma}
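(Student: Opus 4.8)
The plan is to derive the majorization directly from the variational formula of Lemma \ref{lemma:maj1} together with the behaviour of the trace pairing under the adjoint $\Phi^*$, and then to obtain the representation \eqref{BvN} from the classical equivalence between majorization and doubly stochastic matrices.

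First I would record that $\Phi^*$ is again bistochastic: since $\Phi$ is positive, unital and trace-preserving, the correspondences recalled in Section 2.5 give that $\Phi^*$ is positive (positivity of a map and of its adjoint being equivalent), unital (because $\Phi$ is trace-preserving), and trace-preserving (because $\Phi$ is unital). Now fix $k\in\{1,\ldots,d\}$. By Lemma \ref{lemma:maj1} applied to the self-adjoint operator $\Phi(X)$, there is an operator $A$ with $0\le A\le I$ and $\Tr A\le k$ such that $\lambda\down_1(\Phi(X))+\ldots+\lambda\down_k(\Phi(X))=\Tr\Phi(X)A$. By the definition of the adjoint in Section 2.5, rewriting the Hilbert–Schmidt pairing and dropping the adjoints via self-adjointness of $X$ and $A$ (and of $\Phi(X)$, as $\Phi$ preserves self-adjointness), one has $\Tr\Phi(X)A=\Tr X\,\Phi^*(A)$. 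Setting $B:=\Phi^*(A)$, the bistochasticity of $\Phi^*$ yields $0\le B\le\Phi^*(I)=I$ and $\Tr B=\Tr A\le k$, so a second application of Lemma \ref{lemma:maj1}, this time to $X$, gives
\begin{align*}
\lambda\down_1(\Phi(X))+\ldots+\lambda\down_k(\Phi(X))=\Tr X B\le\lambda\down_1(X)+\ldots+\lambda\down_k(X).
\end{align*}
For $k=d$ this becomes an equality, since both sides equal the full trace and $\Tr\Phi(X)=\Tr X$ by trace-preservation; this is precisely $\lambda\down(\Phi(X))\prec\lambda\down(X)$.

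Finally, for \eqref{BvN} I would invoke the Hardy–Littlewood–P\'olya theorem: because $\lambda\down(\Phi(X))\prec\lambda\down(X)$, there is a doubly stochastic $d\times d$ matrix $D$ with $\lambda\down(\Phi(X))=D\,\lambda\down(X)$, and by Birkhoff's theorem $D=\sum_j q_jP_{\tau_j}$ is a convex combination of permutation matrices $P_{\tau_j}$, which reads coordinatewise as $\lambda\down_i(\Phi(X))=\sum_j q_j\lambda\down_{\tau_j(i)}(X)$. To obtain the distinct vectors claimed in the statement, I would group those $\tau_j$ producing the same vector $\bigl(\lambda\down_{\tau_j(i)}(X)\bigr)_{i=1}^d$ and sum their weights, keeping only the resulting distinct vectors with the positive combined weights $p_k$. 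I expect no serious conceptual obstacle here; the only delicate points are bookkeeping ones—correctly transferring the constraints $0\le A\le I$ and $\Tr A\le k$ through $\Phi^*$, and the final merging step to guarantee distinctness—while the conceptual core is the one-line passage through the adjoint followed by the standard majorization/Birkhoff machinery.
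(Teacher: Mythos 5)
Your proof is correct and follows essentially the same route as the paper: the majorization is obtained by applying Lemma \ref{lemma:maj1} to $\map(X)$, moving the optimizer $A$ through the adjoint via $\Tr\map(X)A=\Tr X\map^*(A)$, and noting $0\le\map^*(A)\le\map^*(I)=I$ and $\Tr\map^*(A)=\Tr A\le k$. The only difference is cosmetic: for the representation \eqref{BvN} the paper simply cites a standard reference, whereas you spell out the underlying Hardy--Littlewood--P\'olya/Birkhoff argument, which is exactly the proof behind that citation.
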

\begin{proof}
By \eqref{majorization1}, 
\begin{align*}
\lambda\down_1(\map(X))+\ldots+\lambda\down_k(\map(X))
&=
\max\{\Tr \map(X)A:\,0\le A\le I,\,\Tr A\le k\}\\
&=
\max\{\Tr X\map^*(A):\,0\le A\le I,\,\Tr A\le k\}\\
&\le
\sup\{\Tr XB:\,0\le B\le I,\,\Tr B\le k\}\\
&=
\lambda\down_1(X)+\ldots+\lambda\down_k(X),
\end{align*}
where we used that $0\le \map^*(A)\le\map^*(I)=I$, and $\Tr\map^*(A)=\Tr A\le k$. The
majorization relation just established yields immediately the second assertion (see, e.g.,
\cite[Theorem 4.1.1]{Hiai_book} for a proof).
\end{proof}

The following lemma can be considered as an analogue of Lemma \ref{Choi inequality}, where
operator convexity is relaxed to ordinary convexity, on the expense of replacing the
positive semidefinite order with the trace order, and requiring that $\map$ is also
trace-preserving.

\begin{lemma}\label{lemma:bistochastic strictly convex}
Let $\map:\,\B(\hil)\to\B(\hil)$ be a bistochastic map, $X\in\B(\hil)_{\sa}$, and 
let $f$ be a convex function on an interval containing $\spec (X)$. Then
\begin{align*}
\Tr f\bz\map(X)\jz\le\Tr f(X).
\end{align*}
If $f$ is strictly convex, then equality holds if and only if there exists a unitary $U$ such that $\map(X)=UXU^*$.
\end{lemma}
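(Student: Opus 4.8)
The plan is to reduce everything to the majorization relation $\lambda\down(\map(X))\prec\lambda\down(X)$ established in Lemma \ref{lemma:maj2}, and to exploit the representation \eqref{BvN} both for the inequality and for the analysis of its equality case. Writing $d:=\dim\hil$, I first record that the trace of a function of a self-adjoint operator is the sum of that function applied to its eigenvalues, so that $\Tr f(\map(X))=\sum_{i=1}^d f(\lambda\down_i(\map(X)))$ and $\Tr f(X)=\sum_{i=1}^d f(\lambda\down_i(X))$.

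For the inequality, I would substitute \eqref{BvN} into the first sum and apply Jensen's inequality termwise. Since $f$ is convex and $\sum_{k=1}^r p_k=1$,
\[
f\bigl(\lambda\down_i(\map(X))\bigr)=f\Bigl(\sum_{k=1}^r p_k\,\lambda\down_{\pi_k(i)}(X)\Bigr)\le\sum_{k=1}^r p_k\,f\bigl(\lambda\down_{\pi_k(i)}(X)\bigr).
\]
Summing over $i$, interchanging the order of summation, and using that each $\pi_k$ is a permutation so that $\sum_{i=1}^d f(\lambda\down_{\pi_k(i)}(X))=\Tr f(X)$, I obtain $\Tr f(\map(X))\le\sum_{k=1}^r p_k\,\Tr f(X)=\Tr f(X)$.

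For the equality case with $f$ strictly convex, suppose $\Tr f(\map(X))=\Tr f(X)$. Since all $d$ termwise Jensen bounds point in the same direction and their sum is an equality, each must be an equality; strict convexity then forces, for every fixed $i$, the numbers $\lambda\down_{\pi_k(i)}(X)$, $k=1,\dots,r$, to coincide. Hence the vector $\bigl(\lambda\down_{\pi_k(i)}(X)\bigr)_{i=1}^d$ does not depend on $k$. But Lemma \ref{lemma:maj2} guarantees that these $r$ vectors are pairwise distinct, which is possible only if $r=1$. Then $\lambda\down_i(\map(X))=\lambda\down_{\pi_1(i)}(X)$ for all $i$; the left-hand side is non-increasing in $i$ and $\pi_1$ merely permutes the multiset of eigenvalues of $X$, so by uniqueness of the non-increasing rearrangement $\lambda\down_i(\map(X))=\lambda\down_i(X)$ for all $i$. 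Thus $\map(X)$ and $X$ have the same eigenvalues counted with multiplicity, hence are unitarily equivalent, i.e.\ $\map(X)=UXU^*$ for some unitary $U$. The converse is immediate, since $\map(X)=UXU^*$ forces identical spectra with multiplicity, whence $\Tr f(\map(X))=\Tr f(X)$.

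The main obstacle, though a mild one, is the equality analysis: one must carefully invoke the distinctness of the permuted eigenvalue vectors recorded in \eqref{BvN} to exclude $r\ge 2$, and then upgrade equality of the two multisets of eigenvalues to equality of the two decreasingly ordered spectra, from which unitary equivalence is standard.
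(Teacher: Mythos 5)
Your proposal is correct and follows essentially the same route as the paper: both substitute the representation \eqref{BvN} from Lemma \ref{lemma:maj2} into $\Tr f(\map(X))$, apply Jensen's inequality termwise, and in the equality case use strict convexity together with the pairwise distinctness of the vectors $\bigl(\lambda_{\pi_k(i)}^{\downarrow}(X)\bigr)_{i=1}^d$ to force $r=1$, whence equal spectra and unitary equivalence. Your write-up merely makes explicit two steps the paper leaves implicit (that distinctness of the $r$ vectors excludes $r\ge2$, and that uniqueness of the non-increasing rearrangement upgrades $\lambda_i^{\downarrow}(\map(X))=\lambda_{\pi_1(i)}^{\downarrow}(X)$ to $\lambda_i^{\downarrow}(\map(X))=\lambda_i^{\downarrow}(X)$), which is a welcome clarification but not a different argument.
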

\begin{proof}
By \eqref{BvN}, we have
\begin{align}
\Tr f(\map(X))&=\sum_{i=1}^d f\bz\sum_{k=1}^r p_k\lambda_{\pi_k(i)}^{\downarrow}(X)\jz 
\le
\sum_{i=1}^d \sum_{k=1}^r p_k f\bz\lambda_{\pi_k(i)}^{\downarrow}(X)\jz \label{sand rev ineq}\\
&=
\sum_{k=1}^r p_k \sum_{i=1}^d f\bz\lambda_{\pi_k(i)}^{\downarrow}(X)\jz 
=
\Tr f(X).\nonumber
\end{align}
Moreover, if $f$ is strictly convex, then the inequality in \eqref{sand rev ineq} is strict, unless $r=1$. Hence, if equality holds in \eqref{sand rev ineq}, then $r=1$, which means
that $\lambda_i^{\downarrow}(\map(X))=\lambda_i^{\downarrow}(X)$, $1\le i\le d$. This
implies  the existence of a unitary $U$ such that $\map(X)=UXU^*$. Conversely, if
$\map(X)=UXU^*$ for some unitary $U$ then it is obvious that $\Tr f(\map(X))=\Tr f(X)$.
\end{proof}

\begin{lemma}\label{lemma:convexity equality}
Let $\map:\,\B(\hil)\to\B(\hil)$ be a bistochastic map and let $X\in\B(\hil)_{\sa}$. Then the following are equivalent:
\begin{enumerate}
\item\label{sand rev1}
$\Tr f\bz\map(X)\jz=\Tr f(X)$ for all real functions $f$ on $\spec (X)$.

\item\label{sand rev2}
$\Tr f\bz\map(X)\jz=\Tr f(X)$ for some strictly convex or strictly concave $f$ on an interval containing 
$\spec (X)$.

\item\label{sand rev2.1}
$\Tr\map(X)^2=\Tr X^2$.

\item\label{sand rev3}
$\map(X)=UXU^*$ for some unitary $U$.

\item\label{sand rev4}
$(\map^*\circ\map)(X)=X$.
\end{enumerate}
\end{lemma}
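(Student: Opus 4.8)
The plan is to prove the chain of equivalences in Lemma \ref{lemma:convexity equality} by establishing a cycle together with a few direct implications, leaning heavily on the majorization machinery already set up in Lemmas \ref{lemma:maj1}, \ref{lemma:maj2}, and \ref{lemma:bistochastic strictly convex}. The logical skeleton I would aim for is
\begin{align*}
\text{\ref{sand rev3}}\imp\text{\ref{sand rev1}}\imp\text{\ref{sand rev2}}\imp\text{\ref{sand rev3}},\qquad
\text{\ref{sand rev3}}\iff\text{\ref{sand rev2.1}},\qquad
\text{\ref{sand rev3}}\iff\text{\ref{sand rev4}}.
\end{align*}
The implication \ref{sand rev3}\,$\imp$\,\ref{sand rev1} is immediate, since $\Tr f(UXU^*)=\Tr f(X)$ for any function $f$ and any unitary $U$, because $UXU^*$ and $X$ have identical spectra (with multiplicity). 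The implication \ref{sand rev1}\,$\imp$\,\ref{sand rev2} is trivial: just specialize to any strictly convex $f$, e.g.\ $f(t)=t^2$.

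The heart of the matter is \ref{sand rev2}\,$\imp$\,\ref{sand rev3}, and this is exactly the content of Lemma \ref{lemma:bistochastic strictly convex}: if $f$ is strictly convex (the strictly concave case follows by applying the strictly convex case to $-f$) and $\Tr f(\map(X))=\Tr f(X)$, then the equality analysis in that lemma forces $r=1$ in the Birkhoff--von Neumann type decomposition \eqref{BvN}, whence $\lambda_i^{\downarrow}(\map(X))=\lambda_i^{\downarrow}(X)$ for all $i$, and so $\map(X)$ and $X$ are unitarily equivalent. Thus the bulk of the work is already done upstream, and this step is a direct citation. For \ref{sand rev3}\,$\iff$\,\ref{sand rev2.1}, the forward direction is a special case of \ref{sand rev3}\,$\imp$\,\ref{sand rev1} with $f(t)=t^2$, and the reverse direction \ref{sand rev2.1}\,$\imp$\,\ref{sand rev3} is the strictly convex case of Lemma \ref{lemma:bistochastic strictly convex} applied to $f(t)=t^2$. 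This shows \ref{sand rev2.1} is in fact just a convenient concrete instance of \ref{sand rev2}.

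The remaining equivalence \ref{sand rev3}\,$\iff$\,\ref{sand rev4} is the one genuinely new ingredient, and I expect it to be the main (though modest) obstacle, since it ties the unitary equivalence to a fixed-point condition for $\map^*\circ\map$. For \ref{sand rev4}\,$\imp$\,\ref{sand rev2.1}, I would use that $\map^*$ is also bistochastic (as $\map$ is), apply the Hilbert--Schmidt adjoint identity $\Tr\map(X)^2=\inner{\map(X)}{\map(X)}_{\HS}=\inner{X}{(\map^*\circ\map)(X)}_{\HS}$, and observe that $(\map^*\circ\map)(X)=X$ immediately gives $\Tr\map(X)^2=\Tr X^2$; combined with the already-proved \ref{sand rev2.1}\,$\imp$\,\ref{sand rev3}, this yields \ref{sand rev4}\,$\imp$\,\ref{sand rev3}. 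For the converse \ref{sand rev3}\,$\imp$\,\ref{sand rev4}, the natural route is to interpret $\map^*\circ\map$ as a bistochastic map on the Hilbert space $\B(\hil)$ (with the Hilbert--Schmidt inner product) that is a contraction in the Hilbert--Schmidt norm, and then use the equality case: since $\|\map(X)\|_{\HS}=\|X\|_{\HS}$ follows from $\map(X)=UXU^*$, the Cauchy--Schwarz / contraction argument $\inner{X}{(\map^*\circ\map)(X)}_{\HS}=\|\map(X)\|_{\HS}^2=\|X\|_{\HS}^2$ together with $\|(\map^*\circ\map)(X)\|_{\HS}\le\|X\|_{\HS}$ forces $(\map^*\circ\map)(X)=X$ by the equality condition in Cauchy--Schwarz. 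The only point requiring care is justifying that $\map^*$ is bistochastic and hence that $\map^*\circ\map$ is a Hilbert--Schmidt contraction on self-adjoint elements; this follows from the fact that $\map$ unital and trace-preserving implies $\map^*$ trace-preserving and unital, and bistochastic maps are Hilbert--Schmidt contractions. This closes all the stated equivalences.
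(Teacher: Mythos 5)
Your proof is correct, and its core coincides with the paper's: the trivial implications and the use of Lemma \ref{lemma:bistochastic strictly convex} for \ref{sand rev2}\,$\imp$\,\ref{sand rev3} are exactly the paper's argument. Where you genuinely diverge is in how the fixed-point condition \ref{sand rev4} is tied in. The paper never routes through the unitary: it proves \ref{sand rev2.1}\,$\iff$\,\ref{sand rev4} directly, by noting that Lemma \ref{lemma:bistochastic strictly convex} with $f(t)=t^2$ gives
\begin{align*}
0\le \Tr X^2-\Tr\map(X)^2=\inner{X}{(I-\map^*\circ\map)X}_{\HS}
\end{align*}
for every self-adjoint $X$, so that $\map^*\circ\map\le I$ as a positive semidefinite operator on the Hilbert space $\bz\B(\hil),\inner{\cdot}{\cdot}_{\HS}\jz$, and then writing the deficit as $\norm{(I-\map^*\circ\map)^{1/2}X}_{\HS}^2$, whose vanishing is equivalent to $(\map^*\circ\map)(X)=X$. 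You instead prove \ref{sand rev4}\,$\imp$\,\ref{sand rev2.1} via the adjoint identity and \ref{sand rev3}\,$\imp$\,\ref{sand rev4} via the equality case of Cauchy--Schwarz; this works, but it rests on the asserted fact that bistochastic maps are Hilbert--Schmidt contractions, which you leave unproved. Be careful here: unitality and trace preservation alone do \emph{not} give HS contractivity (one can rescale the traceless part of a unital trace-preserving map to expand the HS norm); positivity must enter, and in this paper it enters precisely through the majorization of Lemma \ref{lemma:maj2}, equivalently through Lemma \ref{lemma:bistochastic strictly convex} with $f(t)=t^2$. Once you cite that, your argument is complete, and the two proofs amount to the same operator-theoretic fact ($\map^*\circ\map$ is a positive semidefinite HS contraction) packaged differently: the paper's square-root factorization yields the slightly stronger and cleaner statement that preservation of $\Tr X^2$ alone forces $X\in\fix{\map^*\circ\map}$ without ever invoking the unitary, while your Cauchy--Schwarz route avoids the operator square root at the cost of also needing $\map^*$ to be a contraction.
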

\begin{proof}
\ref{sand rev1}\,$\imp$\,\ref{sand rev2.1}\,$\imp$\,\ref{sand rev2} is trivial, \ref{sand rev2}\,$\imp$\,\ref{sand rev3} follows from Lemma \ref{lemma:bistochastic strictly convex}, and
\ref{sand rev3}\,$\imp$\,\ref{sand rev1} is obvious.
Hence, it is enough to show \ref{sand rev2.1}\,$\iff$\,\ref{sand rev4}.
Note that for any $X\in\B(\hil)_{\sa}$,
\begin{align*}
0&\le \Tr X^2-\Tr\map(X)^2
=\inner{X}{X}_{\HS}-\inner{\map(X)}{\map(X)}_{\HS}
=\inner{X}{(I-(\map^*\circ\map))X}_{\HS},
\end{align*}
where the first inequality is due to Lemma \ref{lemma:bistochastic strictly convex}.
Note that $\map^*\circ\map$ is positive semidefinite with respect to the Hilbert-Schmidt inner product, and 
the above inequality shows that $\map^*\circ\map\le I$ ($=I_{\BH}$).
Hence, $I-(\map^*\circ\map)$ is positive semidefinite, and thus the above can be written as
\begin{align*}
0&\le \Tr X^2-\Tr\map(X)^2=
\norm{\bz I-(\map^*\circ\map)\jz^{1/2}X}_{\HS}^2.
\end{align*}
Thus,
\begin{align*}
\Tr X^2=\Tr\map(X)^2\ds\iff\ds \bz I-(\map^*\circ\map)\jz^{1/2}X=0\ds\iff\ds
X=(\map^*\circ\map)X.
\end{align*}
\end{proof}

\begin{cor}\label{adjoint fixpoint}
For any bistochastic map $\map:\,\B(\hil)\to\B(\hil)$,
\begin{align*}
\fix{\map}=\fix{\map^*}.
\end{align*}
\end{cor}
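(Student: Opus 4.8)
The plan is to deduce the statement from the equivalence \ref{sand rev2.1}\,$\iff$\,\ref{sand rev4} of Lemma \ref{lemma:convexity equality}, which for self-adjoint $X$ reads $\Tr\map(X)^2=\Tr X^2\iff(\map^*\circ\map)(X)=X$. First I would record the elementary observation that $\map^*$ is again bistochastic: since $\map$ is positive, so is $\map^*$; and since $\map$ is trace-preserving and unital, $\map^*$ is unital and trace-preserving, respectively, by the duality between these properties recalled in the subsection on positive maps. Thus the roles of $\map$ and $\map^*$ are symmetric, and it suffices to prove one inclusion, say $\fix{\map}\subseteq\fix{\map^*}$, and then apply the result to $\map^*$ in place of $\map$ (using $(\map^*)^*=\map$) to obtain the reverse inclusion.

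To prove $\fix{\map}\subseteq\fix{\map^*}$, I would first treat the self-adjoint case. If $X\in\B(\hil)_{\sa}$ satisfies $\map(X)=X$, then trivially $\Tr\map(X)^2=\Tr X^2$, so Lemma \ref{lemma:convexity equality} yields $(\map^*\circ\map)(X)=X$; substituting $\map(X)=X$ gives $\map^*(X)=X$, i.e.\ $X\in\fix{\map^*}$.

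For a general $X\in\fix{\map}$, I would pass to the Hermitian decomposition $X=X_1+iX_2$ with $X_1:=(X+X^*)/2$ and $X_2:=(X-X^*)/(2i)$ self-adjoint. Since $\map$ is positive it is Hermiticity-preserving, so $\map(X^*)=\map(X)^*=X^*$, whence $X^*\in\fix{\map}$ and therefore $X_1,X_2\in\fix{\map}$. By the self-adjoint case just established, $X_1,X_2\in\fix{\map^*}$, and by linearity $X=X_1+iX_2\in\fix{\map^*}$. This completes the inclusion, and the symmetric argument with $\map$ and $\map^*$ interchanged gives $\fix{\map^*}\subseteq\fix{\map}$.

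The computation here is entirely routine; the only point requiring a little care — and the place where bistochasticity is genuinely used twice — is the symmetric step, which hinges on $\map^*$ also being bistochastic so that Lemma \ref{lemma:convexity equality} applies equally to it. I do not anticipate any real obstacle beyond keeping track of this symmetry and of the reduction to self-adjoint arguments.
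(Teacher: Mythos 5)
Your proof is correct and follows essentially the same route as the paper's: both reduce to the self-adjoint case via the Hermitian decomposition (using that positivity makes $\map$ Hermiticity-preserving), invoke Lemma \ref{lemma:convexity equality} to conclude $(\map^*\circ\map)(X)=X$ and hence $\map^*(X)=X$, and obtain the reverse inclusion from $(\map^*)^*=\map$ with $\map^*$ bistochastic. The only cosmetic difference is the entry point into the lemma: you use \ref{sand rev2.1} (the trivial equality $\Tr\map(X)^2=\Tr X^2$), while the paper uses \ref{sand rev3}\,$\imp$\,\ref{sand rev4} with $U=I$; these are interchangeable.
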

\begin{proof}
We show that $\fix{\map}\subseteq\fix{\map^*}$, which implies $\fix{\map}=\fix{\map^*}$
since $(\map^*)^*=\map$. Assume that $X\in\fix{\map}$; then $X_1:=\half(X+X^*)$ and
$X_2:=\frac{1}{2i}(X-X^*)$ are also in $\fix{\map}$, and hence we can assume without loss
of generality that $X^*=X\in\fix{\map}$. Then, by  \ref{sand rev3}\,$\imp$\,\ref{sand rev4} of
Lemma \ref{lemma:convexity equality}, we get that $X=(\map^*\circ\map)(X)=\map^*(X)$, i.e.,
$X\in\fix{\map^*}$.
\end{proof}

When $\map$ is $2$-positive, the unitary in \ref{sand rev3} of Lemma \ref{lemma:convexity equality} can be chosen independently of $X$:
\begin{lemma}\label{lemma:convexity equality2}
Let $\map:\,\B(\hil)\to\B(\hil)$ be a $2$-positive bistochastic map. Then there exists a unitary $U$ such that 
\begin{align*}
\map(X)=UXU^*,\ds\ds\ds \map^*(X)=U^*XU,\ds\ds\ds X\in\fix{\map^*\circ\map}.
\end{align*}
\end{lemma}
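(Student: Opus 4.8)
The plan is to exploit that a $2$-positive bistochastic map $\map$ and its adjoint $\map^*$ are both unital Schwarz contractions, and to run the fixed-point/multiplicative-domain machinery already developed. Write $\Psi:=\map^*\circ\map$ and $\F:=\fix{\map^*\circ\map}$. As in the proof of Lemma~\ref{lemma:convexity equality}, $\Psi$ is a positive contraction for the Hilbert--Schmidt inner product and $\F=\ker(I_{\BH}-\Psi)$ is its top eigenspace, which contains $I$. Since $\Psi$ is a Schwarz contraction with $\fix{\Psi^*}=\fix{\Psi}\ni I$, Lemma~\ref{lemma:fix-mult} shows that $\F$ is a $C^*$-subalgebra of $\BH$.

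The first step is to prove that $\F\subseteq\M_{\map}$ and that $\map|_{\F}$ is a unital injective $^*$-homomorphism. For self-adjoint $X\in\F$ the identity $\Tr\map(X)^2=\Tr X^2$ holds (this is the equivalence \ref{sand rev2.1}\,$\iff$\,\ref{sand rev4} of Lemma~\ref{lemma:convexity equality}), while the Schwarz inequality gives $\map(X)^2\le\map(X^2)$; since $\map$ is trace-preserving, $\Tr(\map(X^2)-\map(X)^2)=0$, so the positive operator $\map(X^2)-\map(X)^2$ vanishes, i.e.\ $\map(X^2)=\map(X)^2$. By \eqref{multdom2} this means $X\in\M_{\map}$, and since $\F$ is a $^*$-algebra and $\M_{\map}$ is an algebra, $\F\subseteq\M_{\map}$ follows for all (not necessarily self-adjoint) $X\in\F$. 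Consequently $\map|_{\F}$ is a unital $^*$-homomorphism, and it is injective because $\Tr\map(X)^2=\Tr X^2$ forces $X=0$ when $\map(X)=0$.

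Next I would produce the single unitary $U$. The inclusion $\iota\colon\F\hookrightarrow\BH$ and $\map|_{\F}\colon\F\to\BH$ are two unital $^*$-representations of the finite-dimensional $C^*$-algebra $\F$ on the same space $\hil$. Because $\map$ is trace-preserving they have the same character, $\Tr\iota(X)=\Tr X=\Tr\map(X)$ for $X\in\F$; as the character of a finite-dimensional representation determines the multiplicities of its irreducible components, the two representations are unitarily equivalent, which yields a unitary $U$ with $\map(X)=UXU^{*}$ for all $X\in\F$. For the second identity I would run the same argument for $\map^{*}$ on $\fix{\map\circ\map^{*}}$, and then align the unitaries: from $\map^{*}\circ\map=\id$ on $\F$ one gets $\map^{*}(UXU^{*})=X=U^{*}(UXU^{*})U$, so $\map^{*}=\mathrm{Ad}_{U^{*}}$ on $\map(\F)$. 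To obtain both identities with one and the same $U$, the cleanest route is to invoke Theorem~\ref{thm:decomposition} with $\sigma=I$ (so that $\map_{\sigma}=\map$): it provides orthogonal product decompositions of $\hil$, block unitaries $U_{k}$ and unital channels $\eta_{k}$ satisfying \eqref{dec1}, \eqref{dec2}, \eqref{dec4} and \eqref{dec5}; setting $U:=\bigoplus_{k}(U_{k}\otimes W_{k})$ for arbitrary unitaries $W_{k}$ on the right tensor factors, the identities \eqref{dec4}--\eqref{dec5} (with $\eta_k(I)=I$ and $\eta_k^{*}(I)=I$ by unitality) collapse on the relevant fixed-point algebras to $\map(X)=UXU^{*}$ and $\map^{*}(X)=U^{*}XU$, the arbitrary $W_{k}$ washing out because elements of these algebras carry the identity on the right tensor factor.

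The main obstacle is precisely this last alignment: Lemma~\ref{lemma:convexity equality} only gives, for each individual $X$, some unitary $U_X$ implementing $\map(X)=U_{X}XU_{X}^{*}$, and upgrading to a single $X$-independent $U$ — and one whose adjoint simultaneously implements $\map^{*}$ — is where $2$-positivity is essential (through the multiplicative-domain structure, equivalently through the sharpened decomposition of Theorem~\ref{thm:decomposition}). The character computation handles the $X$-independence, while the shared block unitaries $U_{k}$ of Theorem~\ref{thm:decomposition} are what force the adjoint identity to hold with the same $U$.
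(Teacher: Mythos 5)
Your second step --- invoking Theorem \ref{thm:decomposition} with $\sigma=I$ (so that $\map_\sigma=\map$), observing that the $\omega_k$ in \eqref{dec3} must then be proportional to $I_{1,k,R}$, and setting $U:=\bigoplus_k U_k\otimes W_k$ --- \emph{is} the paper's entire proof; the paper does nothing beyond this. Your first step, by contrast, is a genuinely different and more elementary route to the identity $\map(X)=UXU^*$ on $\F:=\fix{\map^*\circ\map}$: showing that $\Tr\map(X)^2=\Tr X^2$ together with the Schwarz inequality and trace preservation forces $\map(X^2)=\map(X)^2$, hence $\F\subseteq\M_{\map}$, and then noting that the two unital $^*$-representations $\iota$ and $\map|_{\F}$ of the finite-dimensional $C^*$-algebra $\F$ have equal trace characters and are therefore unitarily equivalent. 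That argument is correct and avoids the structure theorem completely; since you end up citing Theorem \ref{thm:decomposition} anyway it is logically redundant in your write-up, but it is the part with independent value, as it recovers the first identity without any decomposition machinery, whereas the paper reads it off from \eqref{dec1}, \eqref{dec3} and \eqref{dec4}.

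The adjoint identity is where your proof --- and, it must be said, the paper's --- does not deliver the statement as literally written, and your own intermediate computation shows why: from $\map^*\circ\map=\id$ on $\F$ one only gets $\map^*=\mathrm{Ad}_{U^*}$ on $\map(\F)=\fix{\map\circ\map^*}$. Appealing to Theorem \ref{thm:decomposition} does not repair this, because \eqref{dec5} expresses $\map^*$ only on operators that factorize with respect to the \emph{output-side} decomposition of $\hil$, i.e.\ again only on $\fix{\map\circ\map^*}$, which is in general a different subalgebra of $\B(\hil)$ (a different tensor factorization) than $\F$. In fact the literal statement fails: on $\hil=\bC^2\otimes\bC^2$ let $S$ be the swap unitary, $E(X):=\Tr_2(X)\otimes\frac{1}{2}I$, and $\map:=\mathrm{Ad}_S\circ E$; this is a unital CPTP map with $\map^*\circ\map=E$, so $\F=\B(\bC^2)\otimes\bC I$, yet $\map^*(A\otimes I)=\frac{\Tr A}{2}\,I$, which cannot equal $U^*(A\otimes I)U$ for any unitary $U$ once $\Tr A=0$ and $A\neq0$. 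So what you actually prove (and what the paper's argument actually proves) is: $\map(X)=UXU^*$ for $X\in\fix{\map^*\circ\map}$, and $\map^*(Y)=U^*YU$ for $Y\in\fix{\map\circ\map^*}$. This discrepancy is harmless for the paper, since only the first identity is ever used (in step (1c) of the proof of Theorem \ref{thm:az mon rev}, to pass from \ref{E2} to \ref{E5}); the defect lies in the lemma's wording, not in anything a better argument of yours could have recovered.
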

\begin{proof}
Note that with the notations of Theorem \ref{thm:decomposition}, 
we have $\map_I=\map$ and $\map_I^*=\map^*$, and hence
$\fix{\map_I^*\circ\map}=\fix{\map^*\circ\map_I}$. This in turn yields that $\omega_k=I_{1,k,R}$ for all
$k$ in the decomposition \eqref{dec3}, and unitality of $\map$ and $\map^*$ yields 
$\eta_k(I_{1,k,R})=I_{2,k,R}$ and $\eta_k^*(I_{2,k,R})=I_{1,k,R}$
for all $k$. Defining 
$U:=\bigoplus_k U_k\otimes I_{1,k,R}$ then gives the desired unitary.
\end{proof}

\begin{remark}
The statement of the above lemma may not hold when $\map$ is only assumed to be positive, as one can easily see by choosing $\map$ to be the transposition in some 
orthonormal basis.
\end{remark}
\medskip

Now we are in a position to prove Theorem \ref{thm:az mon rev}.

\medskip\noindent
{\it Proof of Theorem \ref{thm:az mon rev}.}\enspace The proof is divided into several steps.

\medskip
(1a)\enspace
Assume \ref{az mon rev(i)} first.
Since $0<{\alpha\over z}\le1$, we have $\Phi(\rho^{\alpha\over z})\le\Phi(\rho)^{\alpha\over z}$ due to 
Lemma \ref{Choi inequality}, so that
\begin{align}\label{mainthm proof0}
\sigma^{1-\alpha\over2z}\Phi(\rho^{\alpha\over z})\sigma^{1-\alpha\over2z}
\le\sigma^{1-\alpha\over2z}\Phi(\rho)^{\alpha\over z}\sigma^{1-\alpha\over2z}.
\end{align}
Hence,
\begin{align}\label{mainthm proof1}
\Tr\Bigl(\sigma^{1-\alpha\over2z}\Phi(\rho^{\alpha\over z})\sigma^{1-\alpha\over2z}\Bigr)^z
\le
\Tr\Bigl(\sigma^{1-\alpha\over2z}\Phi(\rho)^{\alpha\over z}\sigma^{1-\alpha\over2z}\Bigr)^z
=
\Tr\Bigl(\map(\sigma)^{1-\alpha\over2z}\Phi(\rho)^{\alpha\over z}\map(\sigma)^{1-\alpha\over2z}\Bigr)^z,
\end{align}
due to $\sigma\in\fix{\map}$. Using also that $\map$ is bistochastic, 
Lemma \ref{lemma:fix-mult} yields that $\sigma^{1-\alpha\over2z}\in\fix{\map}\subseteq\M_{\map}$, and hence
\begin{align}\label{mainthm proof2}
\sigma^{1-\alpha\over2z}\Phi(\rho^{\alpha\over z})\sigma^{1-\alpha\over2z}
=
\Phi\bigl(\sigma^{1-\alpha\over2z}\bigr)\Phi(\rho^{\alpha\over z})
\Phi\bigl(\sigma^{1-\alpha\over2z}\bigr)
=
\Phi\bigl(\sigma^{1-\alpha\over2z}\rho^{\alpha\over z}\sigma^{1-\alpha\over2z}\bigr).
\end{align}
Using Lemma \ref{lemma:bistochastic strictly convex}, and that $x\mapsto x^z$ is concave for $0<z\le 1$, we get
\begin{align}\label{mainthm proof3}
\Tr\bigl(\sigma^{1-\alpha\over2z}\rho^{\alpha\over z}\sigma^{1-\alpha\over2z}\bigr)^z
\le
\Tr\Bigl(\Phi\bigl(\sigma^{1-\alpha\over2z}\rho^{\alpha\over z}
\sigma^{1-\alpha\over2z}\bigr)\Bigr)^z.
\end{align}
Putting together \eqref{mainthm proof1}--\eqref{mainthm proof3}, we get the desired monotonicity inequality
\eqref{az mon}.

If \eqref{az mon} holds with equality (i.e., \ref{E1} holds), we must have
equalities in  \eqref{mainthm proof1} and \eqref{mainthm proof3}. In particular, equality
in \eqref{mainthm proof1}, together with the strict monotonicity of
$X\in\BH_+\mapsto\Tr X^z$, implies that \eqref{mainthm proof0} holds with equality.
Multiplying both sides of  \eqref{mainthm proof0} with $\sigma^{\alpha-1\over2z}$ yields
\begin{align}\label{proof (1a)}
\sigma^{0}\Phi(\rho^{\alpha\over z})\sigma^{0}
=\sigma^{0}\Phi(\rho)^{\alpha\over z}\sigma^{0}.
\end{align}

\medskip
(1b)\enspace
We assume (i) with $\rho^0\le\sigma^0$.
If \ref{E1} holds then, by the above, we have \eqref{proof (1a)}, and using that $\map(\rho)^0\le\map(\sigma)^0=\sigma^0$, we get
\begin{align*}
\Phi(\rho^{\alpha\over z})=\Phi(\rho)^{\alpha\over z}.
\end{align*}
Since this gives $\Tr\Phi(\rho)^{\alpha\over z}=\Tr\rho^{\alpha\over z}$, it follows from
Lemma \ref{lemma:convexity equality} that
\begin{align*}
\Phi^*\circ\Phi(\rho)=\rho
\end{align*}
whenever $\alpha<z$. When $\alpha=z$, since $0<\alpha<1$, equality in \eqref{mainthm proof3} implies by Lemma \ref{lemma:convexity equality} again that
$$
\Phi^*\circ\Phi\bigl(\sigma^{1-\alpha\over2\alpha}\rho\sigma^{1-\alpha\over2\alpha}\bigr)
=\sigma^{1-\alpha\over2\alpha}\rho\sigma^{1-\alpha\over2\alpha}.
$$
Since $\sigma\in\fix{\Phi}=\fix{\Phi^*}$ by Corollary \ref{adjoint fixpoint}, it follows from Lemma \ref{lemma:fix-mult} that $\sigma^{1-\alpha\over2\alpha}\in\M_\Phi\cap\M_{\Phi^*}$. Hence we have
$$
\sigma^0(\Phi^*\circ\Phi(\rho))\sigma^0=\sigma^0\rho\sigma^0,
$$
so that $\Phi^*\circ\Phi(\rho)=\rho$ since
$(\Phi^*\circ\Phi(\rho))^0\le(\Phi^*\circ\Phi(\sigma))^0=\sigma^0$. This proves \ref{E2} since
$\Phi^*\circ\Phi(\sigma)=\sigma$. In the converse direction, assume that \ref{E2} holds.
Then $\map^*(\map(\sigma))=\sigma=\map(\sigma)$. Applying the above established monotonicity
to $\map^*$ and $\map(\rho),\map(\sigma)$ in place of $\map$ and $\rho,\sigma$, we get
\begin{align*}
D_{\alpha,z}(\rho\|\sigma)&=
D_{\alpha,z}(\map^*(\map(\rho))\|\map^*(\map(\sigma)))
\le
D_{\alpha,z}(\map(\rho)\|\map(\sigma))
\le
D_{\alpha,z}(\rho\|\sigma),
\end{align*}
proving the equality in \eqref{az mon}. Hence \ref{E1}\,$\iff$\,\ref{E2}.
Finally, notice that $\sigma^{1/2}\in\fix{\map^*}\subseteq\M_{\map^*}$ implies
\begin{align*}
\map_{\sigma}^*(Y)
&=
\sigma^{1/2}\map^*\bz \map(\sigma)^{-1/2}Y\map(\sigma)^{-1/2}\jz\sigma^{1/2}
=
\map^*\bigl(\sigma^{1/2}\bigr)\map^*\bz \sigma^{-1/2}Y\sigma^{-1/2}\jz
\map^*\bigl(\sigma^{1/2}\bigr)\\
&=
\map^*\bz \sigma^{1/2}\sigma^{-1/2}Y\sigma^{-1/2}\sigma^{1/2}\jz
=
\map^*\bz\sigma^0 Y\sigma^0\jz
\end{align*}
for any $Y\in\B(\hil)$. In particular, if $\rho^0\le\sigma^0$ then $\map_{\sigma}^*(\map(\rho))=\map^*(\map(\rho))$, proving \ref{E2}\,$\iff$\,\ref{E3}.

\medskip
(1c)\enspace
In (1) we also assume that $\map$ is $2$-positive; then \ref{E2} implies \ref{E5}
by Lemma \ref{lemma:convexity equality2}. 
Obviously, \ref{E5} implies \ref{E1}, completing the proof of \ref{az mon rev(1)}.

\medskip
(2)\enspace
Assume (i) with $\sigma^0\le\rho^0$ instead of $\rho^0\le\sigma^0$, and $z\ne\alpha$. Assume that \ref{E1} holds; then 
by the argument in (1a), we have \eqref{proof (1a)}.
Multiplying with $\map(\sigma)^{\frac{z-\alpha}{2z}}$ from both sides, and taking the trace, we get
\begin{align*}
\Tr\map(\sigma)^{1-\frac{\alpha}{z}}\map(\rho)^{\alpha\over z}
=
\Tr\map(\sigma)^{1-\frac{\alpha}{z}}\map(\rho^{\alpha\over z})
=
\Tr\map(\sigma^{1-\frac{\alpha}{z}}\rho^{\alpha\over z})
=
\Tr\sigma^{1-\frac{\alpha}{z}}\rho^{\alpha\over z},
\end{align*}
where we have used again that $\sigma\in\fix{\map}\subseteq\M_{\map}$. This means that the quantum $f$-divergence $S_f$ is preserved, i.e.,
$$
S_f(\Phi(\sigma)\|\Phi(\rho))=S_f(\sigma\|\rho),
$$
where $f(x):=-x^{1-{\alpha\over z}}$ with $1-{\alpha\over z}\in(0,1)$. Hence, when $\Phi$ is $2$-positive, by Theorem \ref{T-3.12} we have \ref{E4}. 
Assume now that \ref{E4} holds, i.e., $\rho,\sigma\in\fix{\map^*_{\rho}\circ\map}$.
Using then Theorem \ref{thm:decomposition} (with the role of $\rho$ and $\sigma$ interchanged), the decomposition in 
\eqref{dec3} yields immediately \ref{E1}.

\medskip
(3) \& (4)\enspace
Assume now that \ref{az mon rev(ii)} holds.  Note that for $0<\alpha<1$, we have 
$Q_{\alpha,z}(\rho\|\sigma)=Q_{1-\alpha,z}(\sigma\|\rho)$ so that $D_{\alpha,z}(\rho\|\sigma)=D_{1-\alpha,z}(\sigma\|\rho)$ for all $z$.
Hence the claim about monotonicity follows from the case \ref{az mon rev(i)} already proved in (1a). When the monotonicity inequality holds with equality, the assertions in (3) and (4) also follow from (1) and (2) proved above, by interchanging $\rho$ and $\sigma$ together with changing $\alpha$ to $1-\alpha$.

\medskip
(5)\enspace
Next, assume \ref{az mon rev(iii)}. Since
$1\le\alpha/z\le2$, the function $x\mapsto x^{\alpha/z}$ is operator convex, and 
$x\mapsto x^z$ is convex. Thus, the inequalities in \eqref{mainthm proof0}, \eqref{mainthm proof1}, and \eqref{mainthm proof3} hold in the opposite directions, proving the monotonicity inequality \eqref{az mon} as in the proof (1a). When $\rho^0\le\sigma^0$, the proof for the equality case goes the same way as in the proofs (1b) and (1c) above, completing the proof of \ref{az mon rev(5)}.

\medskip
(6)\enspace
Finally, assume \ref{az mon rev(iv)}. Since
$$
Q_{\alpha,z}(\rho\|\sigma)=\lim_{\eps\searrow0}Q_{\alpha,z}(\rho\|\sigma+\eps I),\qquad
Q_{\alpha,z}(\Phi(\rho)\|\Phi(\sigma))
=\lim_{\eps\searrow0}Q_{\alpha,z}(\rho\|\Phi(\sigma+\eps I)),
$$
we may assume that $\sigma^0=I$, to show the monotonicity inequality \eqref{az mon}. Since
$-1\le{1-\alpha\over z}<0$, Lemma \ref{Choi inequality} yields
\begin{equation}\label{main proof ineq}
\rho^{\alpha\over2z}\Phi(\sigma)^{1-\alpha\over z}\rho^{\alpha\over2z}
\le
\rho^{\alpha\over2z}\Phi(\sigma^{1-\alpha\over z})\rho^{\alpha\over2z}.
\end{equation}
As in the proof (1a), this implies
\begin{align*}
\Tr\Bigl(\map(\rho)^{\alpha\over2z}\map(\sigma)^{1-\alpha\over z}\map(\rho)^{\alpha\over2z}\Bigr)^z
&=
\Tr\Bigl(\rho^{\alpha\over2z}\Phi(\sigma)^{1-\alpha\over z}\rho^{\alpha\over2z}\Bigr)^z
\le 
\Tr\Bigl(\rho^{\alpha\over2z}\Phi(\sigma^{1-\alpha\over z})\rho^{\alpha\over2z}\Bigr)^z\\
&=\Tr\Bigl(\Phi\bigl(\rho^{\alpha\over2z}\sigma^{1-\alpha\over z}
\rho^{\alpha\over2z}\bigr)\Bigr)^z
\le\Tr\bigl(\rho^{\alpha\over2z}\sigma^{1-\alpha\over z}\rho^{\alpha\over2z}\bigr)^z,
\end{align*}
where the equalities are due to $\rho\in\fix{\map}\subseteq\M_{\map}$, and the last inequality
follows from  Lemma \ref{lemma:bistochastic strictly convex}. This proves monotonicity
\eqref{az mon}. 

Now, if \eqref{az mon} holds with equality, then we have equality in
\eqref{main proof ineq}, and hence, 
\begin{align*}
\rho^0\Phi(\sigma)^{1-\alpha\over z}\rho^0=\rho^0\Phi(\sigma^{1-\alpha\over z})\rho^0.
\end{align*}
Therefore, similarly to the above proof of (2), we have
$$
\Tr\Phi(\rho)^{1-\frac{1-\alpha}{z}}\Phi(\sigma)^{1-\alpha\over z}
=\Tr\rho^{1-\frac{1-\alpha}{z}}\sigma^{1-\alpha\over z}.
$$
Assuming that $z\ne\alpha-1$, we have $1-\frac{1-\alpha}{z}\in(1,2)$, and the above equality
means that the $f$-divergence $S_f(\rho\|\sigma)$, corresponding to
$f(t)=t^{1-\frac{1-\alpha}{z}}$, is preserved by $\map$. Hence, by Theorem \ref{T-3.12},
we get \ref{E3}. The implication \ref{E3}\,$\imp$\,\ref{E1} 
follows by observing that \ref{E3} means $\rho,\sigma\in\fix{\map^*_{\sigma}\circ\map}$, and using 
the decomposition \eqref{dec3} in Theorem \ref{thm:decomposition}, similarly to the proof of (2) above.
\hfill\qed

\begin{remark}
We remark that the exclusion of $z=1-\alpha$ is essential in the statement
\ref{az mon rev(4)} of Theorem \ref{thm:az mon rev}. Let $\Phi:\cB(\bC^2)\to\cB(\bC^2)$ be the
diagonal pinching by regarding $\cB(\bC^2)$ as the $2\times2$ matrices. Let
$\rho:=\begin{bmatrix}1&0\\0&0\end{bmatrix}$, and for $a>b>0$ and
$0<\theta<{\pi\over2}$, let
$$
\sigma:=\begin{bmatrix}\cos\theta&-\sin\theta\\\sin\theta&\cos\theta\end{bmatrix}
\begin{bmatrix}a&0\\0&b\end{bmatrix}
\begin{bmatrix}\cos\theta&\sin\theta\\-\sin\theta&\cos\theta\end{bmatrix}
=\begin{bmatrix}a\cos^2\theta+b\sin^2\theta&(a-b)\cos\theta\sin\theta\\
(a-b)\cos\theta\sin\theta&a\sin^2\theta+b\cos^2\theta\end{bmatrix}.
$$
Then
$$
S(\rho\|\sigma)=-\Tr\rho\log\sigma=-(\log a)\cos^2\theta-(\log b)\sin^2\theta
$$
while
$$
S(\Phi(\rho)\|\Phi(\sigma))=-\Tr\rho\log\Phi(\sigma)=-\log(a\cos^2\theta+b\sin^2\theta)
<S(\rho\|\sigma),
$$
and hence $\Phi$ is not reversible for $\rho,\sigma$. However, for any $\alpha\in(0,1)$ we have
\begin{align*}
\Tr\bigl(\sigma^{1/2}\rho^{\alpha\over1-\alpha}\sigma^{1/2}\bigr)^{1-\alpha}
=(a\cos^2\theta+b\sin^2\theta)^{1-\alpha} 
=\Tr\bigl(\Phi(\sigma)^{1/2}\Phi(\rho)^{\alpha\over1-\alpha}\Phi(\sigma)^{1/2}\bigr)^{1-\alpha},
\end{align*}
so that
$D_{\alpha,z}(\Phi(\rho)\|\Phi(\sigma))=D_{\alpha,z}(\rho\|\sigma)$ for $z=1-\alpha$.
In particular when $\alpha=z=1/2$, since
$$
D_{1/2,1/2}(\rho\|\sigma)=-2\log F(\rho,\sigma)
$$
with the fidelity $F(\rho,\sigma):=\Tr|\rho^{1/2}\sigma^{1/2}|$, we notice that when
$\rho^0\le\sigma^0$ and $\rho\in\fix{\Phi}$, the equality
$F(\Phi(\rho),\Phi(\sigma))=F(\rho,\sigma)$ does not imply the reversibility of $\Phi$ on
$\rho,\sigma$ in general (cf.~also \cite[Corollary A.9]{MO13}). But it does so when
$\rho^0\le\sigma^0$ and $\sigma\in\fix{\Phi}$, as follows from \ref{az mon rev(1)} of
Theorem \ref{thm:az mon rev}.
\end{remark}

\begin{remark}
The max-relative entropy \cite{Datta} is defined as the limit of the sandwiched R\'enyi
divergences:
$$
D_{\max}(\rho\|\sigma):=\lim_{\alpha\to+\infty}D_\alpha^*(\rho\|\sigma)
=\inf\{\gamma:\rho\le e^\gamma\sigma\}.
$$
It is known \cite[Corollary A.9]{MO13} that preservation of the max-relative entropy
does not imply reversibility. Below we give an example that shows that reversibility
does not follow from the preservation of the max-relative entropy even in the very special
case where the second state is a fixed point of the map (cf.~\ref{az mon rev(1)} of Theorem
\ref{thm:az mon rev}). Consider $3\times3$ invertible density matrices
$\sigma=\diag(\mu_1,\mu_2,\mu_3)$ and
$$
\rho=\begin{bmatrix}\lambda&0&0\\0&a&c\\0&\overline c&b\end{bmatrix},\quad
\lambda,a,b>0,\ \ c\ne0,\ \ |c|^2<ab,\ \ \lambda+a+b=1.
$$
Let $\Phi:\bM_3\to\bM_3$ be the diagonal pinching, i.e., $\Phi(X)$ is the diagonal part of
$X$. Then
$$
\Tr\Phi(\rho)^2\Phi(\sigma)^{-1}={\lambda^2\over\mu_1}+{a^2\over\mu_2}+{b^2\over\mu_3},
$$
$$
\Tr\rho^2\sigma^{-1}={\lambda^2\over\mu_1}+{a^2+|c|^2\over\mu_2}+{b^2+|c|^2\over\mu_3},
$$
so that 
$$
\Tr\Phi(\rho)^2\Phi(\sigma)^{-1}<\Tr\rho^2\sigma^{-1}.
$$
By Theorem \ref{T-3.21}, this is equivalent to that
$\sigma^{-1/2}\rho\sigma^{-1/2}\not\in\cM_{\Phi_\sigma}$,
and it implies that $\Phi$ is not reversible on $\{\rho,\sigma\}$. On the other hand,
$$
\big\|\Phi(\sigma)^{-1/2}\Phi(\rho)\Phi(\sigma)^{-1/2}\big\|_\infty
=\max\biggl\{{\lambda\over\mu_1},{a\over\mu_2},{b\over\mu_3}\biggr\},
$$
\begin{align*}
\big\|\sigma^{-1/2}\rho\sigma^{-1/2}\big\|_\infty
&=\max\Biggl\{{\lambda\over\mu_1},\Bigg\|
\begin{bmatrix}\mu_2^{-1/2}&0\\0&\mu_3^{-1/2}\end{bmatrix}
\begin{bmatrix}a&c\\\overline c&b\end{bmatrix}
\begin{bmatrix}\mu_2^{-1/2}&0\\0&\mu_3^{-1/2}\end{bmatrix}\Bigg\|_\infty\Biggr\} \\
&=\max\Biggl\{{\lambda\over\mu_1},
{1\over2}\Biggl({a\over\mu_2}+{b\over\mu_3}
+\sqrt{\biggl({a\over\mu_2}-{b\over\mu_3}\biggr)^2
+{4|c|^2\over\mu_2\mu_3}}\Biggr)\Biggr\}.
\end{align*}
When the $\mu_i$'s are fixed and $\lambda\nearrow1$ (hence $a,b,|c|\searrow0$), we have
$$
\big\|\Phi(\sigma)^{-1/2}\Phi(\rho)\Phi(\sigma)^{-1/2}\big\|_\infty
={\lambda\over\mu_1}=\big\|\sigma^{-1/2}\rho\sigma^{-1/2}\big\|_\infty,
$$
which means that $D_{\max}(\Phi(\rho)\|\Phi(\sigma))=D_{\max}(\rho\|\sigma)$.
\end{remark}

\section{Closing remarks}

\begin{remark}\label{R-3.24}\rm
In this paper we treat $f$-divergences for general positive operators. 
We note that restricting to density operators would make no essential difference. Indeed,
for $\rho,\sigma\in\BH_+$, $\sigma\ne 0$, write $\rho=\alpha\rho_1$ and
$\sigma=\beta\sigma_1$, where $\rho_1,\sigma_1$ are density operators and $\alpha=\Tr \rho$,
$\beta=\Tr \sigma>0$. For any operator convex function $f$, since
$$
S_f(\rho\|\sigma)=S_{f_1}(\rho_1\|\sigma_1),\qquad
\maxdiv{f}(\rho\|\sigma)=\maxdiv{f_1}(\rho_1\|\sigma_1)
$$
with $f_1(x):=\beta f(\alpha\beta^{-1}x)$, one can easily obtain properties of $S_f$ and
$\maxdiv{f}$ for general positive operators from those restricting to density operators.
\end{remark}

\begin{remark}\label{R-3.25}\rm
We may treat trace-preserving positive linear maps $\Phi:\cA_1\to\cA_2$ between general
finite-dimensional $C^*$-algebras $\cA_1$, $\cA_2$. When $\cA_1\subseteq\BH$ and
$\cA_2\subseteq\BK$, we can extend $\Phi$ to $\widetilde\Phi:\BH\to\BK$ by
$\widetilde\Phi:=\Phi\circ\cE_{\cA_1}$, where $\cE_{\cA_1}:\BH\to\cA_1$ is the
trace-preserving conditional expectation onto $\cA_1$. Then
$\widetilde\Phi^*=\Phi^*\circ\cE_{\cA_2}$. It is straightforward to reformulate the results
of this paper for $\rho,\sigma\in\cA_1$ and $\widetilde\Phi$ into those for $\rho,\sigma$
and $\Phi$. Thus the generalization to the setting of finite-dimensional $C^*$-algebras is
automatic.
\end{remark}

\subsection*{Acknowledgments}
The work of FH\ was supported in part by Grant-in-Aid for Scientific Research 
(C)26400103 and (C)17K05266.
MM acknowledges support from 
the Spanish MINECO (Project No. FIS2013-40627-P), the Generalitat de Catalunya CIRIT (Project No. 2014 SGR 966),
the Hungarian Research Grant OTKA-NKFI K104206,
and the Technische Universit\"at M\"unchen -- Institute for Advanced Study, funded by the German Excellence Initiative and the European Union Seventh Framework Programme under grant agreement no. 291763. 
Part of this work was done while MM was with the Institute for Advanced Studies, and with the 
Zentrum Mathematik, M5, at the 
Technische Universit\"at M\"unchen.
The authors are grateful to Anna Jen\v cov\'a for advice on the extremal decomposition of POVMs, and to an anonymous referee for a 
very careful reading of the manuscript and for various suggestions that helped to improve both the content and the presentation 
of the paper.
\appendix

\section{Extension of Lemma \ref{lemma:persp properties}}
\label{sec:persp properties}

\begin{prop}\label{P-A.1}
Let $f$ be a real function on $(0,+\infty)$. The following conditions are equivalent:
\begin{itemize}
\item[(i)] $(A,B)\in\BH_{++}\times\BH_{++}\mapsto P_f(A,B)$ is jointly operator convex for any
finite-dimensional Hilbert space $\cH$;
\item[(ii)] for every $B\in\BH_{++}$, $A\in\BH_{++}\mapsto\Tr P_f(A,B)$ is convex for any
finite-dimensional Hilbert space $\cH$;
\item[(iii)] for every $A\in\BH_{++}$, $B\in\BH_{++}\mapsto\Tr P_f(A,B)$ is convex for any
finite-dimensional Hilbert space $\cH$;
\item[(iv)] $f$ is operator convex on $(0,+\infty)$;
\item[(v)] $\widetilde f$ is operator convex on $(0,+\infty)$.
\end{itemize}
\end{prop}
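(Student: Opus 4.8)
The plan is to organize the five conditions into a short cycle, isolating the two implications that carry real content, namely (ii)$\Rightarrow$(iv) and (iii)$\Rightarrow$(v), and reducing everything else to results already in hand. The backbone is the equivalence (i)$\iff$(iv), which is precisely Lemma \ref{lemma:persp properties}(1). To obtain (iv)$\iff$(v) without extra input, I would apply Lemma \ref{lemma:persp properties}(1) to the transpose $\widetilde f$ and combine it with Lemma \ref{lemma:transpose perspective}: since $P_{\widetilde f}(A,B)=P_f(B,A)$ and interchanging the two coordinates is a linear bijection of $\BH_{++}\times\BH_{++}$ that preserves joint convexity, $P_{\widetilde f}$ is jointly operator convex exactly when $P_f$ is, whence $\widetilde f$ is operator convex exactly when $f$ is. The implications (i)$\Rightarrow$(ii) and (i)$\Rightarrow$(iii) are then immediate: joint operator convexity of $P_f$ gives separate operator convexity in each argument, and applying the trace (linear and order preserving) turns the operator inequality into the scalar convexity inequality.

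The crux is (ii)$\Rightarrow$(iv), where I would exploit the freedom in the fixed variable $B$. Writing $\Tr P_f(A,B)=\Tr B\,f(B^{-1/2}AB^{-1/2})$ and substituting $X:=B^{-1/2}AB^{-1/2}$ (a linear bijection $A\mapsto X$ carrying midpoints to midpoints), condition (ii) for a fixed $B\in\BH_{++}$ becomes
\[\Tr B\Bigl[f\bigl(\tfrac{X_1+X_2}{2}\bigr)-\tfrac12 f(X_1)-\tfrac12 f(X_2)\Bigr]\le 0\]
for all $X_1,X_2\in\BH_{++}$. Since (ii) is assumed for \emph{every} $B\in\BH_{++}$, the self-adjoint operator $\Delta:=f(\frac{X_1+X_2}{2})-\frac12 f(X_1)-\frac12 f(X_2)$ satisfies $\Tr B\Delta\le0$ for all $B>0$; testing against $B=\pr{\psi}+\eps I$ and letting $\eps\searrow0$ yields $\langle\psi,\Delta\psi\rangle\le0$ for every unit vector $\psi$, i.e. $\Delta\le0$. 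This is midpoint operator convexity of $f$ on every finite-dimensional $\cH$, which upgrades to operator convexity by continuity of $f$ together with the standard fact that matrix convexity in all dimensions is equivalent to operator convexity.

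Finally, for (iii)$\Rightarrow$(v) I would transpose: by Lemma \ref{lemma:transpose perspective}, $\Tr P_f(A,B)=\Tr P_{\widetilde f}(B,A)$, so convexity of $B\mapsto\Tr P_f(A,B)$ is exactly condition (ii) for the function $\widetilde f$, with $B$ in the role of the first argument. Applying the step just proved to $\widetilde f$ shows $\widetilde f$ is operator convex, which is (v); together with (iv)$\iff$(v) this closes the cycle. The main obstacle I anticipate is exactly the passage from the scalar inequalities in (ii) back to the operator inequality defining operator convexity: convexity of a single trace functional is far too weak on its own, and the argument works only because the fixed argument $B$ may be chosen arbitrarily, so that self-duality of the positive cone recovers the full operator ordering. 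The only routine care needed is the dimension bookkeeping, i.e. deducing operator convexity (stated for arbitrary, possibly infinite-dimensional $\cH$) from matrix convexity in every finite dimension.
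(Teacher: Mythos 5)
Your proof is correct and follows essentially the same route as the paper's: the trivial implications out of (i), the substitution $A\mapsto B^{1/2}XB^{1/2}$ reducing (ii) to convexity of $X\mapsto\Tr Bf(X)$ for every $B\in\BH_{++}$ (the paper states this identity and leaves the positive-cone self-duality step implicit, which you spell out), transposition via Lemma \ref{lemma:transpose perspective} for (iii)$\Rightarrow$(v), and the Effros--Ebadian et al.\ result (Lemma \ref{lemma:persp properties}(1)) for (iv)$\Rightarrow$(i). One minor remark: since condition (ii) asserts convexity for all $t\in[0,1]$, you get $f(tX_1+(1-t)X_2)\le tf(X_1)+(1-t)f(X_2)$ directly and can skip the midpoint-plus-continuity detour; the appeal to continuity is in any case harmless, as the one-dimensional instance of (ii) already makes $f$ convex, hence continuous, on $(0,+\infty)$.
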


\begin{proof}
That (i) implies (ii) and (iii) is trivial. Since $\Tr P_f(B^{1/2}AB^{1/2},B)=
\Tr B^{1/2}f(A)B^{1/2}$ for $A,B\in\BH_{++}$, we have (ii)\,$\imp$\,(iv). The proof of
(iv)\,$\imp$\,(i) is in \cite{ENG} or \cite{EH}. Apply these to $\widetilde f$ and use Lemma
\ref{lemma:transpose perspective} to prove that (iii)\,$\imp$\,(v)\,$\imp$\,(i).

Although the equivalence of (iv) and (v) has already been shown, it may be worth giving a
different proof based on Kraus' characterization of operator convex functions, see
\cite[Corollary 2.7.8]{Hiai_book}. Indeed, since
$$
{\widetilde f(x)-\widetilde f(1)\over x-1}=-{f(x^{-1})-f(1)\over x^{-1}-1}+f(1),
$$
the operator convexity of $\widetilde f$ follows from that of $f$ and vice versa by Kraus'
theorem.
\end{proof}

\begin{prop}\label{P-A.2}
Let $h$ be a real function on $(0,+\infty)$. The following conditions are equivalent
\begin{itemize}
\item[(i)] $(A,B)\in\BH_{++}\times\BH_{++}\mapsto P_h(A,B)$ is jointly operator monotone
increasing for any finite-dimensional Hilbert space $\cH$;
\item[(ii)] $(A,B)\in\BH_{++}\times\BH_{++}\mapsto\Tr P_h(A,B)$ is jointly monotone increasing
for any finite-dimensional Hilbert space $\cH$;
\item[(iii)] $h$ is non-negative and operator monotone on $(0,+\infty)$;
\item[(iv)] $h$ is operator monotone on $(0,+\infty)$ and $\widetilde h$ is numerically
increasing on $(0,+\infty)$;
\item[(v)] $\widetilde h$ is non-negative and operator monotone on $(0,+\infty)$;
\item[(vi)] $\widetilde h$ is operator monotone on $(0,+\infty)$ and $h$ is numerically
increasing on $(0,+\infty)$.
\end{itemize}
\end{prop}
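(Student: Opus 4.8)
The plan is to run the same strategy as in the proof of Proposition \ref{P-A.1}, exploiting the already-available Lemma \ref{lemma:persp properties}\,(2) together with the transpose identity $P_{\widetilde h}(A,B)=P_h(B,A)$ of Lemma \ref{lemma:transpose perspective}. The equivalence (i)\,$\iff$\,(iii) is nothing but Lemma \ref{lemma:persp properties}\,(2). For (iii)\,$\iff$\,(v) I would observe that joint monotonicity of $P_h$ on every $\BH_{++}\times\BH_{++}$ is, by Lemma \ref{lemma:transpose perspective}, the same as joint monotonicity of $P_{\widetilde h}$ (relabelling the two arguments preserves joint monotonicity); applying Lemma \ref{lemma:persp properties}\,(2) to $\widetilde h$ then identifies this with $\widetilde h$ being non-negative operator monotone. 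Hence (i), (iii) and (v) are already equivalent, and in particular (i) holds for $h$ iff it holds for $\widetilde h$.

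The genuinely new content is the passage to the ``numerically increasing'' conditions (iv) and (vi), and here the key reparametrization is $\widetilde h(x)=x\,h(1/x)=u(1/x)$ with $u(x):=h(x)/x$, so that $\widetilde h$ is numerically increasing if and only if $u$ is numerically non-increasing. For (iii)\,$\Rightarrow$\,(iv) I would use the integral representation \eqref{F-2.2}: writing $h(x)/x=a/x+b+\int_{(0,+\infty)}\frac{1+s}{x+s}\,d\nu_h(s)$ with $a=h(0^+)\ge0$ and $b=h'(+\infty)\ge0$ shows termwise that $u(x)=h(x)/x$ is non-increasing, hence $\widetilde h$ is numerically increasing; operator monotonicity of $h$ is retained from (iii). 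For the converse (iv)\,$\Rightarrow$\,(iii), $\widetilde h$ numerically increasing gives that $u(x)=h(x)/x$ is non-increasing on $(0,+\infty)$. Since $h$ is operator monotone it is concave and numerically increasing, so $u(x)$ converges as $x\to+\infty$ to $h'(+\infty)$; moreover $h(x)\ge h(1)$ for $x\ge1$ forces $\liminf_{x\to\infty}h(x)/x\ge0$, i.e.\ $h'(+\infty)\ge0$. A non-increasing function whose limit at $+\infty$ is $h'(+\infty)\ge0$ is everywhere $\ge h'(+\infty)\ge0$, so $h(x)=x\,u(x)\ge0$; together with operator monotonicity this is exactly (iii). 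The equivalence (v)\,$\iff$\,(vi) then follows by applying (iii)\,$\iff$\,(iv) to $\widetilde h$ in place of $h$, using $\widetilde{\widetilde h}=h$.

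It remains to incorporate the trace condition (ii). The implication (i)\,$\Rightarrow$\,(ii) is immediate, since operator monotonicity of $P_h$ implies monotonicity of $\Tr P_h$. For (ii)\,$\Rightarrow$\,(iii) I would argue as in Proposition \ref{P-A.1}: fixing $B\in\BH_{++}$ and substituting $A\mapsto B^{1/2}AB^{1/2}$ (a congruence, hence an order isomorphism of $\BH_{++}$) into the monotone map $A\mapsto\Tr P_h(A,B)$ yields that $A\mapsto\Tr B^{1/2}h(A)B^{1/2}=\Tr B\,h(A)$ is monotone increasing for every fixed $B\in\BH_{++}$; letting $B$ range over all positive operators gives that $A\mapsto h(A)$ is operator monotone. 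By the trace transpose identity $\Tr P_h(A,B)=\Tr P_{\widetilde h}(B,A)$, the functional $\Tr P_{\widetilde h}$ is jointly monotone as well, so the same argument produces operator monotonicity of $\widetilde h$, in particular numerical monotonicity; thus (iv) holds, and hence (iii). This closes the cycle and establishes all six equivalences.

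I expect the main obstacle to be the implication (iv)\,$\Rightarrow$\,(iii): recognizing that the apparently weak hypothesis of mere numerical monotonicity of $\widetilde h$ pins down the sign of $h$, via the reparametrization $\widetilde h(x)=u(1/x)$ and the boundary value $h'(+\infty)\ge0$. Everything else reduces mechanically to Lemma \ref{lemma:persp properties}\,(2), Lemma \ref{lemma:transpose perspective}, and the representation \eqref{F-2.2}.
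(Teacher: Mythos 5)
Your argument has one genuine structural flaw: the implication (iii)\,$\imp$\,(i) — the only place where Lemma \ref{lemma:persp properties}\,(2) does real work in your scheme, since you derive (i)\,$\imp$\,(iii) independently via (i)\,$\imp$\,(ii)\,$\imp$\,(iii) — is obtained by quoting that lemma, and this is circular. Lemma \ref{lemma:persp properties}\,(2) is stated in Section 2.3 \emph{without proof}, and the paper explicitly defers its justification to Appendix \ref{sec:persp properties} (``We give a small extension of the next lemma in Appendix \ref{sec:persp properties}''); Proposition \ref{P-A.2} \emph{is} that extension, so it cannot invoke the lemma it is supposed to establish. The same objection applies to your derivation of (iii)\,$\iff$\,(v). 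The repair is short and is what the paper does: first obtain (iii)\,$\iff$\,(v) from the standard fact that the transpose of a non-negative operator monotone function is again non-negative operator monotone (citable to \cite[Corollary 2.5.6]{Hiai_book}, not to the lemma); then, knowing that \emph{both} $h$ and $\widetilde h$ are operator monotone, prove (iii)\,$\imp$\,(i) by the two-step chain
\begin{align*}
P_h(A_1,B_1)&=B_1^{1/2}h\bigl(B_1^{-1/2}A_1B_1^{-1/2}\bigr)B_1^{1/2}
\le B_1^{1/2}h\bigl(B_1^{-1/2}A_2B_1^{-1/2}\bigr)B_1^{1/2}=P_h(A_2,B_1)\\
&=P_{\widetilde h}(B_1,A_2)
\le P_{\widetilde h}(B_2,A_2)=P_h(A_2,B_2),
\end{align*}
where the middle equality is Lemma \ref{lemma:transpose perspective} and the second inequality uses operator monotonicity of $\widetilde h$. (Alternatively, this direction is classical Kubo--Ando theory \cite{KA}, which would also be a legitimate citation — but the lemma itself is not.)

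Everything else in your proposal is correct, and parts of it improve on the paper. Your (ii)\,$\imp$\,(iii)/(iv) substitution argument ($A\mapsto B^{1/2}AB^{1/2}$, then transpose and repeat) coincides with the paper's. Your treatment of (iii)\,$\iff$\,(iv) is genuinely different: the paper isolates this as Lemma \ref{L-A.3} and proves it via the integral representation of general (possibly sign-changing) operator monotone functions from \cite{FHR}, with a case analysis on whether $\int s^{-1}\,d\mu(s)$ is finite; your route — (iii)\,$\imp$\,(iv) termwise from \eqref{F-2.2}, and (iv)\,$\imp$\,(iii) from the elementary observation that $u(x)=h(x)/x$ is non-increasing with $\liminf_{x\to+\infty}u(x)\ge0$, forcing $u\ge0$ everywhere — avoids the heavier representation entirely in the converse direction and is correct as stated. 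With the circular step replaced as above, your cycle of implications closes and the proof is complete.
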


To prove the proposition, we give a lemma.

\begin{lemma}\label{L-A.3}
Let $h$ be an operator monotone function on $(0,+\infty)$. Then $h$ is non-negative on
$(0,+\infty)$ if and only if $\widetilde h$ is numerically increasing on $(0,+\infty)$.
\end{lemma}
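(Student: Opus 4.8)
The plan is to prove Lemma \ref{L-A.3}: for operator monotone $h$ on $(0,+\infty)$, we have $h\ge 0$ if and only if $\widetilde h(y)=y\,h(1/y)$ is numerically increasing. First I would recall the integral representation that every operator monotone function $h$ on $(0,+\infty)$ admits, namely
\begin{align*}
h(x)=\gamma+\delta x+\int_{(0,+\infty)}\frac{x(1+s)}{x+s}\,d\mu(s),
\end{align*}
where $\delta\ge 0$ and $\mu$ is a positive measure, but with $\gamma\in\bR$ an arbitrary real constant when we do not assume non-negativity (the representation \eqref{F-2.2} is precisely this with $\gamma=h(0^+)=a\ge 0$ in the non-negative case). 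The key structural fact is that $h$ is non-negative on $(0,+\infty)$ if and only if $\gamma=\lim_{x\searrow 0}h(x)=h(0^+)\ge 0$, since the integral term and $\delta x$ are both non-negative and tend to $0$ as $x\searrow 0$, and $h$ is increasing.

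The main computation is to transfer this to a monotonicity statement about $\widetilde h$. Computing the transpose termwise, one finds
\begin{align*}
\widetilde h(y)=y\,h(1/y)=\gamma y+\delta+\int_{(0,+\infty)}\frac{(1+s)}{1+sy}\,d\mu(s),
\end{align*}
using $y\cdot\frac{(1/y)(1+s)}{(1/y)+s}=\frac{(1+s)}{1+sy}$. The integral term is a decreasing function of $y$ (each integrand $\frac{1+s}{1+sy}$ decreases in $y$ for $s>0$), the constant $\delta$ contributes nothing to monotonicity, and the term $\gamma y$ is increasing precisely when $\gamma\ge 0$. Thus I would argue: if $\gamma\ge 0$ then $\widetilde h$ need not be monotone in an obvious way from this alone — so the cleaner route is to observe that $\widetilde h$ is itself operator monotone (by Proposition \ref{P-A.1}'s analogue for operator monotone functions, or directly since the transpose of an operator monotone function is operator monotone, see Propositions \ref{P-A.1}, \ref{P-A.2}), hence in particular numerically increasing, if and only if its own representing leading constant is non-negative. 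The symmetry $\widetilde{\widetilde h}=h$ together with the identities \eqref{F-2.1}, $\widetilde h(0^+)=h'(+\infty)=\delta$ and $\widetilde h{}'(+\infty)=h(0^+)=\gamma$, is the bridge.

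The clean argument is therefore: since $\widetilde h$ is operator monotone (transpose of operator monotone is operator monotone), it is numerically increasing if and only if it is non-negative, which by the same $0^+$-limit criterion applied to $\widetilde h$ holds if and only if $\widetilde h(0^+)=\delta\ge 0$ — but that is automatic. This suggests the correct reading is subtler, so instead I would directly compute $\widetilde h{}'(y)=\gamma-\int_{(0,+\infty)}\frac{s(1+s)}{(1+sy)^2}\,d\mu(s)$ and examine its sign as $y\to+\infty$: the integral tends to $0$, so $\lim_{y\to+\infty}\widetilde h{}'(y)=\gamma$. Hence if $\widetilde h$ is increasing everywhere then $\gamma=\widetilde h{}'(+\infty)\ge 0$, giving $h(0^+)=\gamma\ge 0$ and thus $h\ge 0$. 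Conversely, the harder direction is showing $h\ge 0$ forces $\widetilde h$ increasing; here I expect the main obstacle, since $\widetilde h{}'$ can a priori be negative for small $y$ even when $\gamma\ge 0$. I would resolve this by using the non-negativity of $h$ more globally: $\widetilde h(y)=y h(1/y)\ge 0$ for all $y$, and combined with operator monotonicity of $\widetilde h$ and its convexity/concavity structure one pins down monotonicity. The cleanest finish is to invoke that an operator monotone function with non-negative values must itself be numerically increasing, applied to $\widetilde h$ after establishing $\widetilde h\ge 0$ from $h\ge 0$ via $\widetilde h(y)=yh(1/y)\ge 0$.
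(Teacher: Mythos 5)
Your overall strategy (pass to an integral representation, read off non-negativity of $h$ from the constant term, then analyze the monotonicity of $\widetilde h$ termwise) is the same general strategy as the paper's, but three concrete problems break the execution. First, the representation you start from is false for general operator monotone functions: any function of the form $\gamma+\delta x+\int_{(0,+\infty)}\frac{x(1+s)}{x+s}\,d\mu(s)$ necessarily has $\mu$ finite (the kernel tends to the positive constants $x$ and $1$ as $s\to+\infty$ and $s\searrow0$, respectively), and then $h(0^+)=\gamma$ is finite, because the kernel is dominated by $1$ for $x\le1$ and tends to $0$ pointwise as $x\searrow0$. So $h(x)=\log x$, which is operator monotone with $h(0^+)=-\infty$, admits no such representation. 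This is exactly where your converse direction has a hole: ``$\widetilde h$ increasing $\Rightarrow h\ge0$'' quantifies over all operator monotone $h$, and the functions your representation misses (those with $h(0^+)=-\infty$) must be excluded by a separate argument --- e.g.\ by noting that $\widetilde h(y)=y\,h(1/y)\to-\infty$ as $y\to+\infty$, so $\widetilde h$ cannot be increasing; you never do this. The paper avoids the problem by using the representation $h(x)=h(1)+\gamma(x-1)+\int_{[0,+\infty)}\frac{x-1}{x+s}\,d\mu(s)$ of \cite[Theorem 1.9]{FHR}, which is valid for every operator monotone function (only $\int_{[0,+\infty)}(1+s)^{-1}d\mu(s)<+\infty$ is required, not finiteness of $\mu$); there the bad case shows up as $\int_{[0,+\infty)} s^{-1}\,d\mu(s)=+\infty$ and is dispatched by precisely this ``$\widetilde h\to-\infty$'' argument.

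Second, your computation of the transpose drops a factor of $y$: $y\cdot\frac{(1/y)(1+s)}{(1/y)+s}=\frac{y(1+s)}{1+sy}$, not $\frac{1+s}{1+sy}$. The correct integral term equals $\int\frac{1+s}{(1/y)+s}\,d\mu(s)$ and is \emph{increasing} in $y$, so (once the representation is available, i.e.\ once $h(0^+)$ is known to be finite) the direction you label ``the harder direction'' is immediate: for $\gamma\ge0$, $\widetilde h(y)=\gamma y+\delta+\int\frac{y(1+s)}{1+sy}\,d\mu(s)$ is a sum of increasing functions. Your worry that $\widetilde h{}'$ could be negative for small $y$ even when $\gamma\ge0$ is purely an artifact of this sign error: correctly, $\widetilde h{}'(y)=\gamma+\int\frac{1+s}{(1+sy)^2}\,d\mu(s)\ge\gamma$, and $\widetilde h{}'(y)\to\gamma$ as $y\to+\infty$, which also yields the converse in this case (increasing forces $\gamma=h(0^+)\ge0$, hence $h\ge0$ since $h$ is increasing). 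Third, the detour in your middle paragraph rests on the claim that the transpose of an arbitrary operator monotone function is operator monotone; this is false --- it holds only for non-negative $h$, cf.\ Proposition \ref{P-A.2}, and if it were true unconditionally then Lemma \ref{L-A.3} would force every operator monotone function to be non-negative, contradicted again by $\log$. Consequently your closing argument is circular: applying ``operator monotone $\Rightarrow$ numerically increasing'' to $\widetilde h$ presupposes that $\widetilde h$ is operator monotone, which is available only after $h\ge0$ is known. That appeal can serve as an alternative proof of the forward direction (via the well-known fact for non-negative $h$), but it says nothing about the converse, which is where the content of the lemma lies.
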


\begin{proof}
According to \cite[Theorem 1.9]{FHR}, $h$ admits the integral representation
$$
h(x)=h(1)+\gamma(x-1)+\int_{[0,+\infty)}{x-1\over x+s}\,d\mu(s),
\qquad x\in(0,+\infty),
$$
with a constant $\gamma\ge0$ and a positive measure $\mu$ on $[0,+\infty)$ such that
$\int_{[0,+\infty)}(1+s)^{-1}\,d\mu(s)<+\infty$. Note that
$$
{d^2\over dx^2}\biggl({x-1\over x+s}\biggr)={1+s\over(x+s)^2}\ge0.
$$
Since $(1-x)/(x+s)\nearrow1/s$ for $s\ge0$ (where $1/0=+\infty$) as $x\searrow0$, the monotone
convergence theorem yields that
$$
\lim_{x\searrow0}\int_{[0,+\infty)}{1-x\over x+s}\,d\mu(s)
=\int_{[0,+\infty)}{1\over s}\,d\mu(s),
$$
so that
$$
h(0+)=\lim_{x\searrow0}h(x)=h(1)-\gamma-\int_{[0,+\infty)}{1\over s}\,d\mu(s)
\in[-\infty,+\infty).
$$
This implies that $h$ is non-negative on $(0,+\infty)$, i.e., $h(0+)\ge0$ if and only if
\begin{equation}\label{F-A.1}
\int_{[0,+\infty)}s^{-1}\,d\mu(s)<+\infty\quad\mbox{and}\quad
h(1)-\gamma-\int_{[0,+\infty)}{1\over s}\,d\mu(s)\ge0.
\end{equation}
Moreover, note that 
$$
\widetilde h(x)=xh(x^{-1})=h(1)x+\gamma(1-x)+\int_{[0,+\infty)}{1-x\over x^{-1}+s}\,d\mu(s).
$$
If \eqref{F-A.1} is satisfied, then
$$
\widetilde h(x)=\gamma+\biggl(h(1)-\gamma
-\int_{[0,+\infty)}{1\over s}\,d\mu(s)\biggr)x
+\int_{[0,+\infty)}{1+s\over s(x^{-1}+s)}\,d\mu(s)
$$
is obviously increasing on $(0,+\infty)$.

Conversely, assume that \eqref{F-A.1} is not satisfied. If
$\int_{[0,+\infty)}s^{-1}\,d\mu(s)=+\infty$, then
$$
\widetilde h(x)=h(1)x+\gamma(1-x)
-(x-1)\int_{[0,+\infty)}{1\over x^{-1}+s}\,d\mu(s)\longrightarrow-\infty
$$
as $x\to+\infty$, so obviously $\widetilde h$ is not increasing on $(0,+\infty)$. If
$\int_{[0,+\infty)}s^{-1}\,d\mu(s)<+\infty$ but
$h(1)-\gamma-\int_{[0,+\infty)}s^{-1}\,d\mu(s)<0$,
then
$$
\widetilde h(x)=\gamma
+\biggl(h(1)-\gamma-\int_{[0,+\infty)}{1\over x^{-1}+s}\,d\mu(s)\biggr)x
+\int_{[0,+\infty)}{1\over x^{-1}+s}\,d\mu(s)
$$
is not increasing on $(0,+\infty)$.
\end{proof}

\noindent{\it Proof of Proposition \ref{P-A.2}.}\enspace
It is trivial that (i) implies (ii). If (ii) holds, then as in the proof of Proposition
\ref{P-A.1}, one can see that $h$ and $\widetilde h$ are operator monotone on $(0,+\infty)$, so
(iv) follows. By Lemma \ref{L-A.3}, (iii)\,$\iff$\,(iv) and (v)\,$\iff$\,(vi) are obvious.
(iii)\,$\iff$\,(v) is well-known, see, e.g., \cite[Corollary 2.5.6]{Hiai_book}. Although
(iii)\,$\imp$\,(i) is also well-known in the theory of operator means \cite{KA}, we give a proof
for convenience. Since (iii)\,$\iff$\,(v), (iii) means that $h$ and $\widetilde h$ are operator
monotone on $(0,+\infty)$. If $A_1,A_2,B_1,B_2\in\BH_{++}$ are such that $A_1\le A_2$ and
$B_1\le B_2$, then
\begin{align*}
P_h(A_1,B_1)&=B_1^{1/2}h(B_1^{-1/2}A_1B_1^{-1/2})B^{1/2}
\le B_1^{1/2}h(B_1^{-1/2}A_2B_1^{-1/2})B^{1/2} \\
&=A_2^{1/2}\widetilde h(A_2^{-1/2}B_1A_2^{-1/2})A_2^{1/2}
\le A_2^{1/2}\widetilde h(A_2^{-1/2}B_2A_2^{-1/2})A_2^{1/2}=P_h(A_2,B_2)
\end{align*}
by Lemma \ref{lemma:transpose perspective} for the second equality.\qed

\begin{remark}
Note that neither Proposition \ref{P-A.2} nor Lemma \ref{L-A.3} hold when ``operator monotone'' is replaced with ``numerically increasing''. Indeed, for $f(t):=t^2+1$ we have $f\ge 0$ and $f$ is numerically increasing, but 
$\wtilde f(s)=\frac{1}{s}+s$ is neither increasing nor decreasing, and so $P_f$ is neither increasing nor decreasing in its second variable.
\end{remark}

\section{Examples for $\fix{\map}$ and $\M_\map$}
\label{sec:examples}

\begin{example}
Let $U$ be a unitary on $\hil$, which determines the bistochastic map $\map(\cdot)=U(\cdot)U^*$. Then it is trivial to check that
\begin{align*}
\fix{\map}=\{X\in\B(\hil):\,UX=XU\}\subseteq \B(\hil)=\M_{\map},
\end{align*}
and the inclusion is strict unless $U\in\bC I$.
\end{example}

\begin{prop}\label{prop:classical mult domain}
Let $e_1,\ldots,e_d$ be an orthonormal basis in a Hilbert space $\hil$, and let $T\in\bR_+^{d\times d}$ be a stochastic matrix, i.e.,
$\sum_{y=1}^d T_{xy}=1$, $1\le x\le d$. Define 
\begin{align*}
\map(X):=\sum_{x=1}^d\pr{e_x}\sum_{y=1}^d T_{xy}\inner{e_y}{Xe_y},\ds\ds\ds X\in\B(\hil).
\end{align*}
Then $\map$ is a unital CP map, which is trace-preserving if and only if $T$ is bistochastic, i.e., $\sum_{x=1}^dT_{xy}=1$, $1\le y\le d$, as well. If none of the columns of $T$ are zero then
\begin{align*}
\M_{\map}=\big\{X\in\B(\hil):\,&\inner{e_x}{Xe_y}=0,\s x\ne y,\text{ and }\\
&\inner{e_y}{Xe_y}=\inner{e_z}{Xe_z}\s\text{if}\s T_{xy}T_{xz}>0\s\text{for some}\s x\big\}.
\end{align*}
In particular, if $T$ has a strictly positive row then $\M_{\map}=\bC I$.
\end{prop}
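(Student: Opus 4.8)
The plan is to dispatch the first two assertions quickly and then concentrate on identifying $\M_{\map}$. For complete positivity I would record the Kraus form $\map(X)=\sum_{x,y}T_{xy}\,\diad{e_x}{e_y}X\diad{e_y}{e_x}$, which follows from the identity $\diad{e_x}{e_y}X\diad{e_y}{e_x}=\inner{e_y}{Xe_y}\pr{e_x}$; the Kraus operators are $\sqrt{T_{xy}}\,\diad{e_x}{e_y}$. Unitality is immediate from $\sum_y T_{xy}=1$, and since $\Tr\map(X)=\sum_y\bigl(\sum_x T_{xy}\bigr)\inner{e_y}{Xe_y}$, trace preservation for all $X$ holds exactly when every column sum $\sum_x T_{xy}$ equals $1$, i.e.\ when $T$ is bistochastic.

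Because $\map$ is unital and completely positive it is a Schwarz contraction, so its multiplicative domain is given by \eqref{multdom2}. The main computation exploits that every output is diagonal and depends only on the diagonal entries $\lambda_y:=\inner{e_y}{Xe_y}$, namely $\map(X)=\sum_x d_x\pr{e_x}$ with $d_x=\sum_y T_{xy}\lambda_y$. Writing the Schwarz identity $\map(X^*X)=\map(X)^*\map(X)$ entrywise, the $x$-th diagonal entry of the left side is $\sum_y T_{xy}\|Xe_y\|^2$, while the right side is $|d_x|^2$. Splitting $\|Xe_y\|^2=|\lambda_y|^2+\sum_{k\ne y}|\inner{e_k}{Xe_y}|^2$ and applying Jensen's inequality to $|\cdot|^2$ with the probability weights $(T_{xy})_y$, the difference of the two sides is a sum of two nonnegative defects: an off-diagonal defect $\sum_y T_{xy}\sum_{k\ne y}|\inner{e_k}{Xe_y}|^2$ and a variance defect $\sum_y T_{xy}|\lambda_y-d_x|^2$.

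For the inclusion $\M_{\map}\subseteq(\text{RHS})$, membership forces both defects to vanish for every $x$. Vanishing of the off-diagonal defect, together with the hypothesis that no column of $T$ is zero (so each index $y$ lies in the support of some row), yields $\inner{e_k}{Xe_y}=0$ for all $k\ne y$, i.e.\ $X$ is diagonal; vanishing of the variance defect is the equality case of Jensen, giving $\lambda_y=\lambda_z$ whenever $T_{xy}T_{xz}>0$. For the reverse inclusion I would verify directly that such an $X$ satisfies \eqref{multdom2}: the constancy condition lets one write $d_x=\mu_x$, the common value of $\lambda_y$ on the support of row $x$, after which both Schwarz identities collapse to $|\mu_x|^2=|\mu_x|^2$. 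The final claim is then an immediate corollary: a strictly positive row $x_0$ has $T_{x_0y}T_{x_0z}>0$ for all $y,z$, forcing all $\lambda_y$ to coincide, so $X\in\bC I$.

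The main difficulty is organizational rather than conceptual: one must cleanly isolate the two nonnegative defects in the Schwarz identity and notice that the no-zero-column hypothesis is precisely what upgrades the columnwise vanishing of off-diagonal entries to full diagonality. Some care is also needed in the complex equality case of Jensen, to confirm that $\sum_y T_{xy}|\lambda_y|^2=|\sum_y T_{xy}\lambda_y|^2$ genuinely forces $\lambda_y$ to be constant on the support of row $x$.
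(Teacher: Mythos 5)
Your proof is correct, but it takes a somewhat different route from the paper's. The paper splits the identification of $\M_{\map}$ into two separate mechanisms: it first proves diagonality purely algebraically, by testing the multiplicative property from \eqref{multdom} against the off-diagonal rank-one operators $\diad{e_b}{e_a}$ (which satisfy $\map(\diad{e_b}{e_a})=0$ for $b\ne a$, so $X\in\M_{\map}$ forces $\map(X\diad{e_b}{e_a})=0$, and the explicit formula $\map(X\diad{e_b}{e_a})=\sum_x \pr{e_x}T_{xa}\inner{e_a}{Xe_b}$ then kills the off-diagonal entries whenever column $a$ is nonzero); only afterwards does it compute the Schwarz defect, and only for diagonal elements $F$, obtaining the pairwise form $\half\sum_{y\ne z}T_{xy}T_{xz}\abs{f(y)-f(z)}^2$. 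You instead extract both conclusions from the single entrywise Schwarz identity for a general $X$, decomposing the defect $\sum_y T_{xy}\norm{Xe_y}^2-\abs{d_x}^2$ into the nonnegative off-diagonal part $\sum_y T_{xy}\sum_{k\ne y}\abs{\inner{e_k}{Xe_y}}^2$ plus the nonnegative variance part $\sum_y T_{xy}\abs{\lambda_y-d_x}^2$ (which equals the paper's pairwise expression). Both arguments lean on the characterization \eqref{multdom2} of the multiplicative domain of a Schwarz contraction, and both handle the reverse inclusion identically, by noting that a diagonal $X$ with the constancy property makes the two Schwarz equalities collapse. What your version buys is economy and self-containedness: one identity does all the work, and the role of the no-zero-column hypothesis (each $y$ must appear in the support of some row so the off-diagonal defect can see it) is displayed transparently; what the paper's version buys is that its diagonality step is purely algebraic, needing no positivity or variance argument, and it exhibits directly how $\M_{\map}$ interacts with products against arbitrary operators rather than only with $X^*X$.
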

\begin{proof}
It is clear from the definition that $\map$ is CPTP.
For every $a,b\in\{1,\ldots,d\}$ and $X\in\B(\hil)$, we have
\begin{align*}
\map(X\diad{e_b}{e_a})&=\map(\diad{Xe_b}{e_a})=\sum_{x=1}^d\pr{e_x} T_{xa}\inner{e_a}{Xe_b}.
\end{align*}
Now, if $a\ne b$ then $\map(\diad{e_a}{e_b})=0$, and if $X\in\M_{\map}$, we get
\begin{align*}
\map(X\diad{e_b}{e_a})&=\map(X)\map(\diad{e_a}{e_b})=0.
\end{align*}
Thus, if the $a$-th column of $T$ is not zero, then we get $\inner{e_a}{Xe_b}=0$ for all
$b\ne a$. In particular,  if none of the columns of $T$ are zero, then all elements in
$\M_{\map}$ are diagonal in the basis $\{e_x\}_{x=1}^d$.

Now, let $F\in\B(\hil)$ be diagonal in the given basis, so that it can be written as $F=\sum_{x=1}^d f(x)\pr{e_x}$. Then
\begin{align*}
\inner{e_x}{\left[\map(F^*F)-\map(F)^*\map(F))\right]e_x}&=
\sum_{y}T_{xy}|f(y)|^2-\sum_{y,z}T_{xy}\ol f(y)T_{xz}f(z)\\
&=
\sum_{y}(T_{xy}-T_{xy}^2)|f(y)|^2-\sum_{y,z:\,y\ne z}T_{xy}\ol f(y)T_{xz}f(z).
\end{align*}
Note that $T_{xy}-T_{xy}^2=T_{xy}(1-T_{xy})=T_{xy}\sum_{z:\,z\ne y}T_{xz}$, and hence the first term above is
\begin{align*}
\sum_{y}(T_{xy}-T_{xy}^2)|f(y)|^2
=
\sum_{y,z:\,y\ne z}T_{xy}T_{xz}|f(y)|^2
=
\half\sum_{y,z:\,y\ne z}T_{xy}T_{xz}(|f(y)|^2+|f(z)|^2).
\end{align*}
The second term can be written as 
\begin{align*}
\sum_{y,z:\,y\ne z}T_{xy}\ol f(y)T_{xz}f(z)
=
\half\sum_{y,z:\,y\ne z}T_{xy}T_{xz}(\ol f(y)f(z)+f(y)\ol f(z)).
\end{align*}
Thus,
\begin{align*}
\inner{e_x}{\left[\map(F^*F)-\map(F)^*\map(F))\right]e_x}&=
\half\sum_{y,z:\,y\ne z}T_{xy}T_{xz}\left[|f(y)|^2+|f(z)|^2-\ol f(y)f(z)-f(y)\ol f(z)\right]\\
&=
\half\sum_{y,z:\,y\ne z}T_{xy}T_{xz}|f(y)-f(z)|^2,
\end{align*}
from which the remaining assertions follow.
\end{proof}

\begin{example}\label{ex:reversed inclusion}
In the setting of Proposition \ref{prop:classical mult domain}, let 
$\map$ be given by the stochastic matrix
\begin{align*}
T:=\begin{bmatrix}
1 &  & & \\
1/4 & 1/4 &1/4 &1/4 \\
1/4 & 1/4 &1/4 &1/4 \\
 & & & 1
\end{bmatrix}\in\bR_+^{4\times 4}.
\end{align*}
By Proposition \ref{prop:classical mult domain}, we see that $\M_{\map}=\bC\egy$. On the
other hand, it is easy to see that both $I$ and $A:=\pr{e_1}+2\pr{e_2}+2\pr{e_3}+3\pr{e_4}$
are fixed points of $\map$, and hence
\begin{align*}
\M_{\map}\subsetneqq\fix{\map}.
\end{align*}
Moreover, $A^2$ is not a fixed point of $\map$, and hence $\fix{\map}$ is not an algebra.

By Lemma \ref{lemma:fix-mult}, $\map$ cannot have a faithful invariant state. Indeed, 
it is easy to see that $\map^*(\rho)=\rho$ if and only if $\rho$ is diagonal in the given
basis, and $\inner{e_2}{\rho e_2}=\inner{e_3}{\rho e_3}=0$.
\end{example}

\section{Example for $\widetilde S_f(\rho\|\sigma)$}
\label{sec:counterex}

For any positive definite $\rho,\sigma\in\B(\hil)$, and any function $f:\,(0,+\infty)\to\bR$, let 
\begin{align*}
\widetilde S_f(\rho\|\sigma):=\Tr\sigma f(\rho^{1/2}\sigma^{-1}\rho^{1/2})
\end{align*}
as in \eqref{bad fdiv def}. In this section we show that there exists a non-linear operator convex function $f$ such that 
$\widetilde S_f$ is neither monotone increasing, nor monotone decreasing under CPTP maps. 

To this end, let 
\begin{align*}
f_{\delta}(x):=1-x+\delta(1-x)^2,\ds\ds\ds x\in[0,+\infty),\s \delta>0.
\end{align*}
For every $\delta>0$, $f_{\delta}$ is a non-linear operator convex function on $(0,+\infty)$; in particular, it is strictly convex. According to the theory of classical $f$-divergences, if $\rho,\sigma\in\B(\hil)_{++}$ commute, and $\map$ is a trace-preserving positive map such that 
$\map(\rho)$ and $\map(\sigma)$ commute, then 
\begin{align}\label{bad fdiv mon} 
\widetilde S_{f_\delta}(\map(\rho)\|\map(\sigma))\le\widetilde S_{f_\delta}(\rho\|\sigma),
\end{align}
and the inequality is in general strict; see, e.g., \cite[Proposition A.3]{HMPB} for details. 

For every $\ep>0$ and every $t\in(0,1)$, let 
\begin{align*}
\rho_0:=\begin{bmatrix}1/2&1/2\\1/2&1/2\end{bmatrix},\ds\ds
\rho_{\ep}:=\frac{1}{1+2\ep}(\rho_0+\ep I),\ds\ds\text{and}\ds\ds
\sigma_t:=\begin{bmatrix}t&0\\0&1-t\end{bmatrix}.
\end{align*}
Then 
\begin{align*}
\widetilde S_{f_\delta}(\rho_{\ep}\|\sigma_t)
&=
\Tr\sigma_t\left[I-\rho_{\ep}^{1/2}\sigma_t\inv\rho_{\ep}^{1/2}+\delta(I-\rho_{\ep}^{1/2}\sigma_t\inv\rho_{\ep}^{1/2})^2\right]\\
&\xrightarrow[\ep,\delta\searrow 0]{}
\Tr\sigma_t\left[I-\rho_0^{1/2}\sigma_t\inv\rho_0^{1/2}\right]=
1-\frac{1}{4t(1-t)}<0,\ds\ds\ds t\in(0,1)\setminus\{1/2\},
\end{align*}
with equality in the last inequality for $t=1/2$.
Taking $\map$ to be the diagonal pinching, we have $\map(\rho_{\ep})=\half I$, $\map(\sigma_t)=\sigma_t$, and 
\begin{align*}
\widetilde S_{f_\delta}(\Phi(\rho_{\ep})\|\Phi(\sigma_t))
&=
tf_{\delta}\bz \frac{1/2}{t}\jz+(1-t)f_{\delta}\bz \frac{1/2}{1-t}\jz \\
&\xrightarrow[\delta\searrow 0]{}t\bz 1-\frac{1}{2t}\jz+(1-t)\bz 1-\frac{1}{2(1-t)}\jz=0.
\end{align*}
Thus, for every $t\in(0,1)\setminus\{1/2\}$, there exist $\ep_t,\delta_t>0$ such that for all 
$0<\ep<\ep_t$ and $0<\delta<\delta_t$, 
\begin{align}\label{opposite-mono}
\widetilde S_{f_{\delta}}(\map(\rho_{\ep})\|\map(\sigma_t))>\widetilde S_{f_{\delta}}(\rho_{\ep}\|\sigma_t).
\end{align}
Together with \eqref{bad fdiv mon}, this shows that
$\widetilde S_{f_{\delta}}$ is neither increasing nor decreasing under CPTP maps. 

\begin{remark}
The above also implies that $\rho\mapsto\widetilde S_{f_\delta}(\rho\|\sigma_t)$ is not convex on invertible density operators
for any $\delta\in(0,\delta_t)$.
Indeed, suppose that $\widetilde S_{f_\delta}(\rho\|\sigma_t)$ is convex in $\rho$. Since
$\Phi(\rho_\eps)=\int_\Gamma U\rho_\eps U^*\,dU$, where $\Gamma$ is the group of diagonal
$2\times2$ unitaries and $dU$ is the Haar probability measure on $\Gamma$, one has a
contradiction to \eqref{opposite-mono} as follows:
$$
\widetilde S_{f_\delta}(\Phi(\rho_\eps)\|\sigma_t)
\le\int_\Gamma\widetilde S_{f_\delta}(U\rho_\eps U^*\|\sigma_t)\,dU
=\int_\Gamma\widetilde S_{f_\delta}(U\rho_\eps U^*\|U\sigma_tU^*)\,dU
=\widetilde S_{f_\delta}(\rho_\eps\|\sigma_t),
$$
where the unitary invariance $\widetilde S_{f_\delta}(U\rho_\eps U^*\|U\sigma_tU^*)=
\widetilde S_{f_\delta}(\rho_\eps\|\sigma_t)$ is obvious.
\end{remark}

\section{Continuity properties of the standard $f$-divergences}
\label{sec:cont}

\noindent\textit{Proof of Proposition \ref{prop:standard fdiv cont}:}

\ref{fdiv cont i}\enspace
Arrange the eigenvalues of $\rho$ and $\sigma$ in decreasing order, counted with multiplicities, as
\begin{align*}
&a_1=\dots=a_{i_1}>a_{i_1+1}=\dots=a_{i_2}>\dots>a_{i_{s-1}+1}=\dots=a_{i_s},\\
&b_1=\dots=b_{j_1}>b_{j_1+1}=\dots=b_{j_2}>\dots>b_{j_{r-1}+1}=\dots=b_{j_r},
\end{align*}
where $i_s=j_r=d=\dim\hil$.
Let $P_l$ be the spectral projection of $\rho$ corresponding to $a_{i_l}$, and 
$Q_k$ be the spectral projection of
$\sigma$ corresponding to the eigenvalue $b_{j_k}$.
Let $(\rho_n)_{n\in\bN}$ and $(\sigma_n)_{n\in\bN}$ be two sequences in $\BH_+$ such that
$\rho_n\to\rho$, $\sigma_n\to\sigma$ as $n\to+\infty$. Choose spectral decompositions 
$\rho_n=\sum_{i=1}^da_i^{(n)}P_i^{(n)}$ with eigenvalues
$a_1^{(n)}\ge\dots\ge a_d^{(n)}$ and orthogonal rank one projections $P_i^{(n)}$, and similarly, 
$\sigma_n=\sum_{j=1}^db_j^{(n)}Q_j^{(n)}$
with eigenvalues $b_1^{(n)}\ge\dots\ge b_d^{(n)}$ and orthogonal rank one projections
$Q_j^{(n)}$. Then, as $n\to\infty$,
$a_i^{(n)}\to a_i$ and $\sum_{i=i_{l-1}+1}^{i_l}P_i^{(n)}\to P_l$ for every $l=1,\dots,s$, 
and similarly, $b_j^{(n)}\to b_j$ and $\sum_{j=j_{k-1}+1}^{j_k}Q_j^{(n)}\to Q_k$
for every $k=1,\dots,r$, where $i_0:=j_0:=0$.
Then one has
\begin{align}\label{cont proof1}
S_f(\rho_n\|\sigma_n)=\sum_{i,j=1}^d P_f\bz a_i^{(n)},b_j^{(n)}\jz\Tr P_i^{(n)}Q_j^{(n)}.
\end{align}
Under the assumption that both $f(0^+)$ and $f'(+\infty)$ are finite, 
the perspective function $P_f$ is continuous on $[0,+\infty)\times[0,+\infty)$, and hence
\eqref{cont proof1} converges to the expression in \eqref{standard fdiv1}.

\ref{fdiv cont ii}\enspace
Assume that $f$ is operator convex on $(0,+\infty)$, and let $L_n$ be given as stated.
First, by the joint convexity in Proposition \ref{P-3.6} and Remark \ref{rem:subadd} we see that
$$
S_f(\rho+L_n\|\sigma+L_n)\le S_f(\rho\|\sigma)+S_f(L_n\|L_n)
=S_f(\rho\|\sigma)+f(1)\Tr L_n
$$
so that $\limsup_{n\to\infty}S_f(\rho+L_n\|\sigma+L_n)\le S_f(\rho\|\sigma)$. Thus, to obtain the
result, it remains to prove that
$\liminf_{n\to\infty}S_f(\rho+L_n\|\sigma+L_n)\ge S_f(\rho\|\sigma)$. To do this, we use the
integral expression \eqref{F-2.3} that we rewrite as
\begin{align}\label{Eq-1}
f(x)=a+bx+cx^2+dx^{-1}+\int_{(0,+\infty)}\psi_s(x)\,d\lambda(s),\qquad x\in(0,+\infty),
\end{align}
where $a,b\in\bR$, $c,d\ge0$, and
$$
\psi_s(x):={(x-1)^2\over x+s},\qquad s>0.
$$
Now we consider the functions $f_1(x):=a+bx$, $f_2(x):=x^2$, $f_{-1}(x):=x^{-1}$, and $\psi_s(x)$ for
$s>0$, separately. For $f_1$ and $\psi_s$ we have by the previous point
\begin{align}
\lim_{n\to\infty}S_{f_1}(\rho+L_n\|\sigma+L_n)
&=S_{f_1}(\rho\|\sigma), \label{Eq-2}\\
\lim_{n\to\infty}S_{\psi_s}(\rho+L_n\|\sigma+L_n)
&=S_{\psi_s}(\rho\|\sigma), \label{Eq-3}
\end{align}
where \eqref{Eq-2} is also obvious since $S_{f_1}(\rho\|\sigma)=a\Tr \sigma+b\Tr\rho$.
For $f_2$ and $f_{-1}$ we prove the following:
\begin{align}
\liminf_{n\to\infty}S_{f_2}(\rho+L_n\|\sigma+L_n)
&\ge S_{f_2}(\rho\|\sigma), \label{Eq-4}\\
\liminf_{n\to\infty}S_{f_{-1}}(\rho+L_n\|\sigma+L_n)
&\ge S_{f_{-1}}(\rho\|\sigma). \label{Eq-5}
\end{align}
When these have been proved, combining \eqref{Eq-2}--\eqref{Eq-5} yields that
\begin{align*}
\liminf_{n\to\infty}S_f(\rho+L_n\|\sigma+L_n)
&\ge 
S_{f_1}(\rho\|\sigma)+cS_{f_2}(\rho\|\sigma)+dS_{f_{-1}}(\rho\|\sigma) \\
&\ds+\liminf_{n\to\infty}\int_{(0,+\infty)}S_{\psi_s}(\rho+L_n\|\sigma+L_n)\,d\lambda(s) \\
&\ge S_{f_1}(\rho\|\sigma)+cS_{f_2}(\rho\|\sigma)+dS_{f_{-1}}(\rho\|\sigma)
+\int_{(0,+\infty)}S_{\psi_s}(\rho\|\sigma)\,d\lambda(s) \\
&=S_f(\rho\|\sigma),
\end{align*}
where the second inequality in the above is due to Fatou's lemma since
$S_{\psi_s}(\rho+L_n\|\sigma+L_n)\ge0$ for all $s\in(0,+\infty)$ and $n\in\bN$. Thus
\eqref{Eq-1} follows. Hence, we are left to prove \eqref{Eq-4} and \eqref{Eq-5}.

Proof of \eqref{Eq-4}:\enspace
By (3.10) and Corollary \ref{cor:fdiv infty}, note that
$$
S_{f_2}(\rho\|\sigma)=\begin{cases}\Tr\rho^2\sigma^{-1} & \text{if $\rho^0\le\sigma^0$}, \\
+\infty & \text{if $\rho^0\not\le\sigma^0$},\end{cases}
$$
and
$$
S_{f_2}(\rho+L_n\|\sigma+L_n)=\Tr(\rho+L_n)^2(\sigma+L_n)^{-1}.
$$
Assume that $\rho^0\le\sigma^0$. Apply the monotonicity property (Proposition \ref{P-3.8}) to the
pinching $\Phi(X):=\sigma^0X\sigma^0+(I-\sigma^0)X(I-\sigma^0)$ to obtain
\begin{align*}
&S_{f_2}(\rho+L_n\|\sigma+L_n) \\
&\quad\ge S_{f_2}(\Phi(\rho+L_n)\|\Phi(\sigma+L_n)) \\
&\quad=S_{f_2}\bigl((\rho+\sigma^0L_n\sigma^0)+(I-\sigma^0)L_n(I-\sigma^0)\|
(\sigma+\sigma^0L_n\sigma^0)+(I-\sigma^0)L_n(I-\sigma^0)\bigr) \\
&\quad=\Tr(\rho+\sigma^0L_n\sigma^0)^2(\sigma+\sigma^0L_n\sigma^0)^{-1}
+\Tr(I-\sigma^0)L_n(I-\sigma^0).
\end{align*}
Since $\Tr(\rho+\sigma^0L_n\sigma^0)^2(\sigma+\sigma^0L_n\sigma^0)^{-1}\to\Tr\rho^2\sigma^{-1}$
and $\Tr(I-\sigma^0)L_n(I-\sigma^0)\to0$ as $n\to\infty$,  \eqref{Eq-4} holds in the case
$\rho^0\le\sigma^0$.

Next, assume that $\rho^0\not\le\sigma^0$ (hence $L_n\ne0$ for all $n$). Since
$$
(\sigma+L_n)^{-1}\ge(\sigma+\|L_n\|_\infty I)^{-1}\ge\|L_n\|_\infty^{-1}(I-\sigma^0)
$$
with the operator norm $\|L_n\|_\infty$, one has
$$
\Tr(\rho+L_n)^2(\sigma+L_n)^{-1}
\ge\|L_n\|_\infty^{-1}\Tr(\rho^2+\rho L_n+L_n\rho+L_n^2)(I-\sigma^0),
$$
which implies \eqref{Eq-4} in this case too, by noting that $\|L_n\|_\infty^{-1}\to+\infty$,
$\Tr\rho^2(I-\sigma^0)>0$ due to $\rho^0\not\le\sigma^0$, and $\rho L_n+L_n\rho+L_n^2\to0$.

Proof of \eqref{Eq-5}:\enspace
The proof is immediate from the above and Proposition \ref{P-3.4}, since
$f_{-1}=\widetilde f_2$.
\hfill\qed
\medskip

Unlike in the classical case, the continuity property stated in \ref{fdiv cont ii} of Proposition \ref{prop:standard fdiv cont} may not hold when $f$ is only assumed to be convex, as the following example shows.

\begin{example}\label{ex:cont counter}
Here we give an example where $\rho=\sigma$ and $\rho_\eps=\rho+\eps K$, $\sigma_\eps=\sigma+\eps L$ such that $K,L\ge0$ and $\rho+K,\sigma+L>0$, but 
$\lim_{\ep\searrow 0}S_f(\rho+\eps K\|\sigma+\eps L)\ne S_f(\rho\|\sigma)$ for a convex but not operator convex $f$.

Let
$\rho=\sigma=\begin{bmatrix}1&0\\0&0\end{bmatrix}$ and $K=\begin{bmatrix}0&0\\0&1\end{bmatrix}$,
$L=\begin{bmatrix}1/2&1/2\\1/2&1/2\end{bmatrix}$ so that $\rho+K,\sigma+L>0$. 
 The eigenvalues of $\sigma+\eps L$
are
$$
b_1^{(\eps)}={1+\eps+\sqrt{1+\eps^2}\over2},\qquad
b_2^{(\eps)}={1+\eps-\sqrt{1+\eps^2}\over2},
$$
whose unit eigen-vectors are
$$
y_1^{(\eps)}=\begin{bmatrix}{\eps\over\bigl[2(1+\eps^2-\sqrt{1+\eps^2})\bigr]^{1/2}}
\\{\sqrt{1+\eps^2}-1\over\bigl[2(1+\eps^2-\sqrt{1+\eps^2})\bigr]^{1/2}}\end{bmatrix},\qquad
y_2^{(\eps)}=\begin{bmatrix}{\eps\over\bigl[2(1+\eps^2+\sqrt{1+\eps^2})\bigr]^{1/2}}
\\-{\sqrt{1+\eps^2}+1\over\bigl[2(1+\eps^2+\sqrt{1+\eps^2}\bigr]^{1/2}}\end{bmatrix},
$$
respectively. Therefore, with $P_1=\begin{bmatrix}1&0\\0&0\end{bmatrix}$,
$P_2=\begin{bmatrix}0&0\\0&1\end{bmatrix}$, $Q_1^{(\eps)}=|y_1^{(\eps)}\>\<y_1^{(\eps)}|$ and
$Q_2^{(\eps)}=|y_2^{(\eps)}\>\<y_2^{(\eps)}|$, we have
\begin{align}\label{cont counterex1}
S_f(\rho+\eps K\|\sigma+\eps L)
&=b_1^{(\eps)}f(1/b_1^{(\eps)})\Tr P_1Q_1^{(\eps)}
+b_2^{(\eps)}f(1/b_2^\eps)\Tr P_1Q_2^{(\eps)} \nonumber\\
&\quad+b_1^{(\eps)}f(\eps/b_1^{(\eps)})\Tr P_2Q_1^{(\eps)}
+b_2^{(\eps)}f(\eps/b_2^{(\eps)})\Tr P_2Q_2^{(\eps)}.
\end{align}

Since $b_1^{(\eps)}\to1$ and $y_1^{(\eps)}\to\begin{bmatrix}1\\0\end{bmatrix}$ so that
$Q_1^{(\eps)}\to P_1$, we have as $\eps\searrow0$,
$$
b_1^{(\eps)}f(1/b_1^{(\eps)})\Tr P_1Q_1^{(\eps)}\ \longrightarrow\ f(1)=S_f(\rho\|\sigma).
$$
On the other hand,
\begin{align*}
b_2^{(\eps)}f(1/b_2^{(\eps)})\Tr P_1Q_2^{(\eps)}
&={1+\eps-\sqrt{1+\eps^2}\over2}\,f\biggl({2\over1+\eps-\sqrt{1+\eps^2}}\biggr)
{\eps^2\over2(1+\eps^2+\sqrt{1+\eps^2})} \\
&={\eps^3\over2(1+\eps+\sqrt{1+\eps^2})(1+\eps^2+\sqrt{1+\eps^2})}
\,f\biggl({1+\eps+\sqrt{1+\eps^2}\over\eps}\biggr).
\end{align*}
For example, when $f(x)=x^\alpha$ where $\alpha>0$, we find that
$$
b_2f(1/b_2^\eps)\Tr P_1Q_2^{(\eps)}=
{\eps^{3-\alpha}(1+\eps+\sqrt{1+\eps^2})^{\alpha-1}\over2(1+\eps^2+\sqrt{1+\eps^2})}
\ \longrightarrow\ \begin{cases}0 & \text{if $0<\alpha<3$} \\ 1 & \text{if $\alpha=3$} \\
+\infty & \text{if $\alpha>3$}
\end{cases}
$$
Since the other two terms in \eqref{cont counterex1} are non-negative, we have 
\begin{align*}
\lim_{\ep\searrow 0}S_f(\rho+\eps K\|\sigma+\eps L)=+\infty\ne S_f(\rho\|\sigma)
\end{align*}
for $f(x)=x^\alpha$ with $\alpha>3$.
\end{example}
\medskip

We also have the following one-sided continuity result:
\begin{prop}\label{prop:onesided cont}
Let $\rho,\sigma\in\B(\hil)_+$ and let $f$ be an operator convex function on $(0,+\infty)$.
\begin{enumerate}
\item\label{onesided cont1}
If $f(0^+)<+\infty$ and $\rho^0\le\sigma^0$, then
$$
S_f(\rho\|\sigma)=\lim_{n\to\infty}S_f(\rho\|\sigma+L_n)
$$
for any sequence $L_n\in\BH_+$ with $L_n\to0$.
\item\label{onesided cont2}
If $f'(+\infty)<+\infty$ and $\sigma^0\le\rho^0$, then
$$
S_f(\rho\|\sigma)=\lim_{n\to\infty}S_f(\rho+K_n\|\sigma)
$$
for any sequence $K_n\in\BH_+$ with $K_n\to0$.
\end{enumerate}
\end{prop}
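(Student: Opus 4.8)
The plan is to establish \ref{onesided cont1} and then obtain \ref{onesided cont2} from it by transposition. Indeed, by Proposition \ref{P-3.4} one has $S_f(\rho\|\sigma)=S_{\widetilde f}(\sigma\|\rho)$ and $S_f(\rho+K_n\|\sigma)=S_{\widetilde f}(\sigma\|\rho+K_n)$, where $\widetilde f$ is again operator convex by Proposition \ref{P-A.1} and satisfies $\widetilde f(0^+)=f'(+\infty)<+\infty$ by \eqref{F-2.1}. Thus \ref{onesided cont1}, applied to $\widetilde f$ with first argument $\sigma$ and second argument $\rho$ perturbed by $K_n$ (and with the support condition $\sigma^0\le\rho^0$), gives precisely \ref{onesided cont2}. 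So the whole task reduces to proving \ref{onesided cont1}, and I would do this by bounding $\limsup$ and $\liminf$ separately.

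For the upper bound I would exploit the joint subadditivity of Remark \ref{rem:subadd}: writing the second argument as $\sigma+L_n$ and the first as $\rho+0$ yields $S_f(\rho\|\sigma+L_n)\le S_f(\rho\|\sigma)+S_f(0\|L_n)$, and from \eqref{F-3.7} one reads off $S_f(0\|L_n)=f(0^+)\Tr L_n\to0$ since $f(0^+)<+\infty$ and $L_n\to0$. This already gives $\limsup_n S_f(\rho\|\sigma+L_n)\le S_f(\rho\|\sigma)$ with no continuity input. The advantage of getting the upper bound this cheaply is that it leaves only a \emph{lower} bound to prove, for which mere lower semicontinuity will suffice.

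For the lower bound I would use the representation \eqref{F-2.3}. Since $f(0^+)<+\infty$ we have $\lambda(\{0\})=0$ and $\int_{(0,\infty)}s^{-1}\,d\lambda(s)<+\infty$, so $f=p+\int_{(0,\infty)}\psi_s\,d\lambda(s)$ with $p(x):=f(1)+f'(1)(x-1)+c(x-1)^2$ and $\psi_s(x):=(x-1)^2/(x+s)\ge0$. Because $\rho^0\le\sigma^0\le(\sigma+L_n)^0$, the term $f'(+\infty)\Tr\rho(I-\sigma^0)$ in \eqref{F-3.7} vanishes for every member of the sequence; substituting the decomposition of $f$ into \eqref{F-3.7} and applying Tonelli (valid as $\psi_s\ge0$ and $\psi_s(0^+)=1/s$ is $\lambda$-integrable) gives the exact splitting
\[
S_f(\rho\|\tau)=S_p(\rho\|\tau)+\int_{(0,\infty)}S_{\psi_s}(\rho\|\tau)\,d\lambda(s),\qquad \tau\in\{\sigma,\,\sigma+L_n\}.
\]
Here $S_p(\rho\|\tau)=f(1)\Tr\tau+f'(1)(\Tr\rho-\Tr\tau)+c(\Tr\rho^2\tau^{-1}-2\Tr\rho+\Tr\tau)$, in which every term is continuous along $\tau=\sigma+L_n\to\sigma$ except $\Tr\rho^2\tau^{-1}$, which I would only need to be lower semicontinuous; this I would verify by writing it as $\sup_m\Tr\rho^2(\tau+m^{-1}I)^{-1}$, a supremum of functions continuous in $\tau$. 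For each fixed $s>0$ the function $\psi_s$ has finite limits $\psi_s(0^+)=1/s$ and $\psi_s'(+\infty)=1$, so Proposition \ref{prop:standard fdiv cont}\ref{fdiv cont i} yields $S_{\psi_s}(\rho\|\sigma+L_n)\to S_{\psi_s}(\rho\|\sigma)$, and since $S_{\psi_s}\ge0$, Fatou's lemma gives $\liminf_n\int S_{\psi_s}(\rho\|\sigma+L_n)\,d\lambda(s)\ge\int S_{\psi_s}(\rho\|\sigma)\,d\lambda(s)$. Combining the continuity/lower semicontinuity of $S_p$ with this Fatou estimate, and using that the $\liminf$ of a sum dominates the sum of the $\liminf$s, produces $\liminf_n S_f(\rho\|\sigma+L_n)\ge S_f(\rho\|\sigma)$, which together with the subadditivity upper bound closes the argument.

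The hard part is the interchange of the limit with the $\lambda$-integral in the presence of a singular second argument, where joint continuity of $S_f$ genuinely fails. The two devices that overcome this are the choice of the nonnegative representation \eqref{F-2.3}, so that Fatou applies in the right direction on the integral piece, and the extraction of the matching upper bound from subadditivity rather than continuity, so that only lower semicontinuity of the single quadratic term $\Tr\rho^2\tau^{-1}$ is required.
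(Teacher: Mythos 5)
Your proof is correct and follows essentially the same route as the paper's: part \ref{onesided cont2} by transposition via Proposition \ref{P-3.4}, the upper bound from the subadditivity of Remark \ref{rem:subadd} giving the error term $f(0^+)\Tr L_n$, and the lower bound by splitting $f$ through the representation \eqref{F-2.3} (with no $x^{-1}$ term since $f(0^+)<+\infty$), invoking Proposition \ref{prop:standard fdiv cont}\,\ref{fdiv cont i} for each $\psi_s$, and closing the integral term with Fatou's lemma.

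The single point where you diverge is the quadratic term: the paper handles $\liminf_n S_{f_2}(\rho\|\sigma+L_n)\ge S_{f_2}(\rho\|\sigma)$ by applying monotonicity (Proposition \ref{P-3.8}) under the pinching $X\mapsto\sigma^0X\sigma^0+(I-\sigma^0)X(I-\sigma^0)$, which fixes $\rho$ and block-diagonalizes $\sigma+L_n$, and then passing to the limit on the $\sigma^0$-block, exactly as in the proof of \eqref{Eq-4}. You instead observe that $\tau\mapsto\Tr\rho^2\tau^{-1}=\sup_m\Tr\rho^2(\tau+m^{-1}I)^{-1}$ is a supremum of functions continuous in $\tau$, hence lower semicontinuous, which is all the matching upper bound leaves you to prove. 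Both arguments are sound; yours is slightly more elementary (it needs only operator antitonicity of the inverse and the identification of the supremum with the generalized-inverse expression, rather than the monotonicity theorem for $S_{f_2}$), while the paper's pinching trick has the advantage of reusing a computation already written out in the proof of Proposition \ref{prop:standard fdiv cont}\,\ref{fdiv cont ii}.
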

\begin{proof}
By Remark \ref{rem:subadd},
$$
S_f(\rho\|\sigma+L_n)\le S_f(\rho\|\sigma)+S_f(0\|L_n)=S_f(\rho\|\sigma)+f(0^+)\Tr L_n,
$$
so it is enough to prove that $\liminf_{n\to\infty}S_f(\rho\|\sigma+L_n)\ge S_f(\rho\|\sigma)$.
The assumption $f(0^+)<+\infty$ guarantees that the $f_{-1}$ term does not appear in the integral representation \eqref{Eq-1}.
Moreover, by \ref{fdiv cont i} of Proposition \ref{prop:standard fdiv cont}, we have
\begin{align}
\lim_{n\to\infty}S_{f_1}(\rho\|\sigma+L_n)
&=S_{f_1}(\rho\|\sigma),\ds\ds\ds
\lim_{n\to\infty}S_{\psi_s}(\rho\|\sigma+L_n)
=S_{\psi_s}(\rho\|\sigma).
\end{align}
Thus, it is enough to show show that $\liminf_{n\to\infty}S_{f_2}(\rho\|\sigma+L_n)\ge S_{f_2}(\rho\|\sigma)$.
This is easy as in the above proof
of \eqref{Eq-4}:
\begin{align*}
S_{f_2}(\rho\|\sigma+L_n)
&\ge S_{f_2}\bigl(\rho\|(\sigma+\sigma^0L_n\sigma^0)+(I-\sigma^0)L_n(I-\sigma^0)\bigr) \\
&=\Tr\rho^2(\sigma+\sigma^0L_n\sigma^0)^{-1}
\ \longrightarrow\ \Tr\rho^2\sigma^{-1}=S_{f_2}(\rho\|\sigma),
\end{align*}
where the convergence holds due to the assumption $\rho^0\le\sigma^0$.
This proves \ref{onesided cont1}, and \ref{onesided cont2} follows by using 
$S_f(\rho\|\sigma)=S_{\wtilde f}(\sigma\|\rho)$.
\end{proof}

Again, assuming only convexity of $f$ is not sufficient for the above proposition.

\begin{example}
Let $\rho,\sigma$ and $L$ as in Example \ref{ex:cont counter}, and $f(x):=x^{\alpha}$ with some fixed $\alpha>3$.
Then $f(0^+)=0$, $\rho^0\le\sigma^0$, and the same calculation as in Example \ref{ex:cont counter} shows that 
\begin{align*}
\lim_{\ep\searrow 0}S_f(\rho\|\sigma+\eps L)=+\infty\ne S_f(\rho\|\sigma)
\end{align*}
for $f(x)=x^\alpha$ with $\alpha>3$.
\end{example}


\section{Proof of Proposition \ref{prop:persp extension2}}
\label{sec:extension proof}

\noindent
{\it Proof of \ref{op conv cond4}.}\enspace
Since $f$ is operator convex, $g(x):=(f(x)-f(1))/(x-1)$ where $g(1):=f'(1)$ is an operator
monotone function on $(0,+\infty)$, and $f(0^+)<+\infty$ implies that $g(0):=g(0^+)$ is finite.
Thus, $h(x):=g(x)-g(0^+)$ is a non-negative operator monotone function on $[0,+\infty)$, so that
$$
f(x)=\alpha+\beta x+(x-1)h(x),\qquad x\in(0,\infty),
$$
where $\alpha,\beta\in\bR$. If $h(1)=0$ then $h$ is identically zero, and the assertion is
trivial, so for the rest we can assume that $h(1)=1$, by possibly replacing $h$ with $h/h(1)$.
Then we can write
\begin{align*}
&P_f(\rho+K_n,\sigma+K_n) \\
&\quad=\alpha(\sigma+K_n)+\beta(\rho+K_n) \\
&\qquad+(\sigma+K_n)^{1/2}\bigl[(\sigma+K_n)^{-1/2}(\rho+K_n)
(\sigma+K_n)^{-1/2}-I\bigr] \\
&\qquad\qquad\times h\bigl((\sigma+K_n)^{-1/2}(\rho+K_n)
(\sigma+K_n)^{1/2}\bigr)(\sigma+K_n)^{1/2} \\
&\quad=\alpha(\sigma+K_n)+\beta(\rho+K_n)
+\bigl[(\rho+K_n)(\sigma+K_n)^{-1}-I\bigr]\bigl[(\sigma+K_n)\,\tau_h\,(\rho+K_n)\bigr].
\end{align*}
On the other hand, we write
\begin{align*}
\sigma^{1/2}f(\sigma^{-1/2}\rho\sigma^{-1/2})\sigma^{1/2}
&=\alpha\sigma+\beta\rho+\sigma^{1/2}(\sigma^{-1/2}\rho\sigma^{-1/2}-P)
h(\sigma^{-1/2}\rho\sigma^{-1/2})\sigma^{1/2} \\
&=\alpha\sigma+\beta\rho+(\rho\sigma^{-1}-P)(\sigma\,\tau_h\,\rho),
\end{align*}
where $P:=\sigma^0$ and the operator mean $\sigma\,\tau_h\,\rho$ is defined as an operator in
$\cB(P\cH)$. Set
\begin{equation}\label{F-B.1}
Y_n:=\bigl[(\rho+K_n)(\sigma+K_n)^{-1}-I\bigr]\bigl[(\sigma+K_n)\,\tau_h\,(\rho+K_n)\bigr],
\end{equation}
and write $Y_n$ in the form of $2\times2$ block matrices under the decomposition
$\cH=P\cH\oplus(I-P)\cH$ as
$$
Y_n=\begin{bmatrix}Y_{11}^{(n)}&Y_{12}^{(n)}\\Y_{21}^{(n)}&Y_{22}^{(n)}\end{bmatrix},
\qquad(Y_{12}^{(n)})^*=Y_{21}^{(n)}.
$$
What we need to prove is that, as $n\to\infty$,
\begin{equation}\label{F-B.2}
Y_{11}^{(n)}\longrightarrow(\rho\sigma^{-1}-P)(\sigma\,\tau_h\,\rho),\quad
Y_{12}^{(n)}\longrightarrow0\ \ (\mbox{hence}\ \ Y_{21}^{(n)}\longrightarrow0),\quad
Y_{22}^{(n)}\longrightarrow0.
\end{equation}
We also write
$$
K_n:=\begin{bmatrix}K_{11}^{(n)}&K_{12}^{(n)}\\K_{21}^{(n)}&K_{22}^{(n)}\end{bmatrix},
\qquad(K_{12}^{(n)})^*=K_{21}^{(n)}.
$$
Since $K_n\ge0$, we note (see, e.g., \cite[Proposition 1.3.2]{Bhatia2}) that
\begin{equation}\label{F-B.3}
K_{12}^{(n)}=(K_{11}^{(n)})^{1/2}W_n(K_{22}^{(n)})^{1/2},\quad
K_{21}^{(n)}=(K_{22}^{(n)})^{1/2}W_n^*(K_{11}^{(n)})^{1/2}\quad
\mbox{with}\quad\|W_n\|\le1,
\end{equation}
and $K_n\to0$ means that $K_{ij}^{(n)}\to0$ ($i,j=1,2$) as $n\to\infty$. Furthermore, we write
\begin{equation}\label{F-B.4}
(\rho+K_n)(\sigma+K_n)^{-1}
=\begin{bmatrix}X_{11}^{(n)}&X_{12}^{(n)}\\X_{21}^{(n)}&X_{22}^{(n)}\end{bmatrix}.
\end{equation}

We will perform the following computations, for the sake of brevity, with disregarding the
superscript $^{(n)}$. Since
$$
\begin{bmatrix}\rho+K_{11}&K_{12}\\K_{21}&K_{22}\end{bmatrix}
=\begin{bmatrix}X_{11}&X_{12}\\X_{21}&X_{22}\end{bmatrix}
\begin{bmatrix}\sigma+K_{11}&K_{12}\\K_{21}&K_{22}\end{bmatrix},
$$
we have
\begin{align}
\rho+K_{11}&=X_{11}(\sigma+K_{11})+X_{12}K_{21}, \label{F-B.5}\\
K_{12}&=X_{11}K_{12}+X_{12}K_{22}, \label{F-B.6}\\
K_{21}&=X_{21}(\sigma+K_{11})+X_{22}K_{21}, \label{F-B.7}\\
K_{22}&=X_{21}K_{12}+X_{22}K_{22}. \label{F-B.8}
\end{align}
Note that $K_{22}>0$ since $\sigma+K_n>0$. By \eqref{F-B.3} and \eqref{F-B.6} one finds
\begin{align*}
X_{12}K_{21}&=X_{12}K_{22}^{1/2}W_n^*K_{11}^{1/2}
=(I_1-X_{11})K_{12}K_{22}^{-1/2}W_n^*K_{11}^{1/2} \\
&=(I_1-X_{11})K_{11}^{1/2}W_nW_n^*K_{11}^{1/2}.
\end{align*}
Therefore, by \eqref{F-B.5},
$$
\rho+K_{11}=X_{11}(\sigma+K_{11})+(I_1-X_{11})K_{11}^{1/2}W_nW_n^*K_{11}^{1/2}
$$
so that
\begin{equation}\label{F-B.9}
X_{11}=\Bigl(\rho+K_{11}-K_{11}^{1/2}W_nW_n^*K_{11}^{1/2}\Bigr)
\Bigl(\sigma+K_{11}-K_{11}^{1/2}W_nW_n^*K_{11}^{1/2}\Bigr)^{-1}
\end{equation}
for sufficiently large $n$. Here, note that the operator in the second bracket above is
invertible for large $n$. By \eqref{F-B.7}, \eqref{F-B.3} and \eqref{F-B.8} one further finds
\begin{align*}
X_{21}(\sigma+K_{11})
&=K_{21}-X_{22}K_{21}=K_{21}-X_{22}K_{22}^{1/2}W_n^*K_{11}^{1/2} \\
&=K_{21}-(K_{22}-X_{21}K_{12})K_{22}^{-1/2}W_n^*K_{11}^{1/2} \\
&=K_{21}-K_{22}^{1/2}W_n^*K_{11}^{1/2}+X_{21}K_{11}^{1/2}W_nW_n^*K_{11}^{1/2},
\end{align*}
which implies that
\begin{equation}\label{F-B.10}
X_{21}=\Bigl(K_{21}-K_{22}^{1/2}W_n^*K_{11}^{1/2}\Bigr)
\Big(\sigma+K_{11}-K_{11}^{1/2}W_nW_n^*K_{11}^{1/2}\Bigr)^{-1}
\end{equation}
for sufficiently large $n$. Furthermore, by \eqref{F-B.6} and \eqref{F-B.3} one has
\begin{equation}\label{F-B.11}
X_{12}K_{22}^{1/2}=(I_1-X_{11})K_{12}K_{22}^{-1/2}=(I_1-X_{11})K_{11}^{1/2}W_n.
\end{equation}

Now, via \eqref{F-B.9}, \eqref{F-B.10}, \eqref{F-B.11} and \eqref{F-B.8} we obtain the convergences
\begin{equation}\label{F-B.12}
X_{11}^{(n)}\longrightarrow\rho\sigma^{-1},\quad
X_{21}^{(n)}\longrightarrow0,\quad
X_{12}^{(n)}(K_{22}^{(n)})^{1/2}\longrightarrow0,\quad
X_{22}^{(n)}K_{22}^{(n)}\longrightarrow0.
\end{equation}
We next write
\begin{equation}\label{F-B.13}
(\sigma+K_n)\,\tau_h\,(\rho+K_n)
=\begin{bmatrix}A_{11}^{(n)}&A_{12}^{(n)}\\A_{21}^{(n)}&A_{22}^{(n)}\end{bmatrix},
\qquad (A_{12}^{(n)})^*=A_{21}^{(n)},
\end{equation}
and note as in \eqref{F-B.3} that
\begin{equation}\label{F-B.14}
A_{21}^{(n)}=(A_{22}^{(n)})^{1/2}V_n(A_{11}^{(n)})^{1/2}\quad\mbox{with}\quad\|V_n\|\le1.
\end{equation}
Since $(\sigma+K_n)\,\tau_h\,(\rho+K_n)\to\sigma\,\tau_h\,\rho$, we have
\begin{equation}\label{F-B.15}
A_{11}^{(n)}\longrightarrow\sigma\,\tau_h\,\rho,\qquad
A_{12}^{(n)}\longrightarrow0,\qquad A_{22}^{(n)}\longrightarrow0.
\end{equation}
Moreover, since
$$
(\sigma+K_n)\,\tau_h\,(\rho+K_n)\ge K_n\,\tau_h\,K_n=K_n,
$$
we have $A_{22}^{(n)}\ge K_{22}^{(n)}$. On the other hand, by the transformer inequality,
\begin{align*}
A_{22}^{(n)}&=P^\perp\bigl[(\sigma+K_n)\,\tau_h\,(\rho+K_n)\bigr]P^\perp \\
&\le\bigl(P^\perp(\sigma+K_n)P^\perp\bigr)\,\tau_h\,\bigl(P^\perp(\rho+K_n)P^\perp\bigr) \\
&=(P^\perp K_nP^\perp)\,\tau_h\,(P^\perp K_nP^\perp)=K_{22}^{(n)}.
\end{align*}
Therefore,
\begin{equation}\label{F-B.16}
A_{22}^{(n)}=K_{22}^{(n)}.
\end{equation}

Finally, since \eqref{F-B.1}, \eqref{F-B.4} and \eqref{F-B.13} give
\begin{align*}
Y_n=\begin{bmatrix}X_{11}^{(n)}-I_1&X_{12}^{(n)}\\X_{21}^{(n)}&X_{22}^{(n)}-I_2\end{bmatrix}
\begin{bmatrix}A_{11}^{(n)}&A_{12}^{(n)}\\A_{21}^{(n)}&A_{22}^{(n)}\end{bmatrix},
\end{align*}
we have, by \eqref{F-B.14} and \eqref{F-B.16},
\begin{align*}
Y_{11}^{(n)}&=(X_{11}^{(n)}-I_1)A_{11}^{(n)}
+X_{12}^{(n)}(K_{22}^{(n)})^{1/2}V_n(A_{11}^{(n)})^{1/2}, \\
Y_{12}^{(n)}&=(X_{11}^{(n)}-I_1)A_{12}^{(n)}+X_{12}^{(n)}K_{22}^{(n)}, \\
Y_{22}^{(n)}&=X_{21}^{(n)}A_{12}^{(n)}+(X_{22}^{(n)}-I_2)K_{22}^{(n)}.
\end{align*}
Together with \eqref{F-B.12} and \eqref{F-B.15} these yield the required convergences in
\eqref{F-B.2}.

\medskip\noindent
{\it Proof of \ref{op conv cond5}.}\enspace
This follows from \ref{op conv cond4} by Lemma \ref{lemma:transpose perspective}.

\medskip\noindent
{\it Proof of (iii).}\enspace
From the integral expression \eqref{F-2.3} we define
\begin{align*}
f_0(x)&:=f(1)+f'(1)(x-1)+c(x-1)^2+\int_{[1,+\infty)}{(x-1)^2\over x+s}\,d\lambda(s), \\
f_1(x)&:=\int_{[0,1)}{(x-1)^2\over x+s}\,d\lambda(s),\qquad x\in(0,+\infty).
\end{align*}
Then $f_0$ and $f_1$ are operator convex functions on $(0,+\infty)$ such that $f=f_0+f_1$. Since
$\int_{[0,+\infty)}(1+s)^{-1}\,d\lambda(s)<+\infty$, note that
$$
\int_{[1,+\infty)}{1\over s}\,d\lambda(s)<+\infty,\qquad
\int_{[0,1)}\,d\lambda(s)<+\infty.
$$
Hence it is easy to see that $f_0(0^+)<+\infty$ and $f_1'(+\infty)<+\infty$. So one can apply (i)
to $f_0$ and (ii) to $f_1$ to obtain
\begin{align}
\lim_{n\to\infty}\per{f_0}(\rho+K_n,\sigma+K_n)
&=\sigma^{1/2}f_0\bz \sigma^{-1/2}\rho\sigma^{-1/2}\jz\sigma^{1/2} \label{F-B.17}\\
\lim_{n\to\infty}\per{f_1}(\rho+K_n,\sigma+K_n)
&=\rho^{1/2}\widetilde f_1\bz \rho^{-1/2}\sigma\rho^{-1/2}\jz\rho^{1/2}. \label{F-B.18}
\end{align}
Since the assumption $\rho^0=\sigma^0$ gives
$$
\sigma^{1/2}f_k\bz \sigma^{-1/2}\rho\sigma^{-1/2}\jz\sigma^{1/2}
=\rho^{1/2}\widetilde f_k\bz \rho^{-1/2}\sigma\rho^{-1/2}\jz\rho^{1/2},\qquad k=0,1,
$$
the conclusion of (iii) follows by adding \eqref{F-B.17} and \eqref{F-B.18} together.\qed

\bibliography{bibliography160615}
\end{document}